\newcommand{\hyp}{\babelhyphen{hard}}
  \newlength{\nodedistance}
\newif\ifappendixproofs
\newcommand{\apxref}[1]{\ifappendixproofs Appendix~\ref{#1}\else the technical report~\cite{viaduct-formal-tr}\fi
}
\providecommand\authorname[1]{\IEEEauthorblockN{#1}}
\providecommand\affiliation[1]{\IEEEauthorblockA{#1}}
\providecommand\orcid[1]{\relax}
\providecommand\institution[1]{#1}
\providecommand\department[1]{\relax}
\providecommand\city[1]{\\#1, }
\providecommand\state[1]{#1, }
\providecommand\postcode[1]{\relax}
\providecommand\country[1]{#1}
\providecommand\email[1]{\\\texttt{#1}}
\newcommand{\cornell}[1]{\affiliation{\institution{Cornell University}\department{Department of Computer Science}\city{Ithaca}\state{NY}\postcode{14850}\country{USA}\email{#1}}
}
\newcommand{\cmu}[1]{\affiliation{\institution{Carnegie Mellon University}\department{Department of Computer Science}\city{Pittsburgh}\state{PA}\postcode{15213}\country{USA}\email{#1}}
}
\author{}
\author{\authorname{Co\c{s}ku Acay}
  \orcid{0000-0002-0487-1167}
  \cornell{cacay@cs.cornell.edu}
  \and
  \authorname{Joshua Gancher}
  \cmu{jgancher@andrew.cmu.edu}
  \and
  \authorname{Rolph Recto}
  \cornell{rr729@cornell.edu}
  \and
  \authorname{Andrew C. Myers}
  \orcid{0000-0001-5819-7588}
  \cornell{andru@cs.cornell.edu}
}
  \title{
  Secure Synthesis of Distributed Cryptographic Applications
\ifreport
  {\\ \LARGE (Technical Report)}
\fi
}
\begin{document}
\maketitle
\begin{abstract}
Developing secure distributed systems is difficult, and
even harder when advanced cryptography must be used
to achieve security goals.
Following prior work, we advocate using \emph{secure program partitioning} to
synthesize cryptographic applications:
instead of implementing a system of communicating processes,
the programmer implements a \emph{centralized, sequential} program,
which is automatically compiled into a secure distributed version
that uses cryptography.

While this approach is promising, formal results for the security of such compilers
are limited in scope. In particular, no security proof yet simultaneously
addresses subtleties essential for robust, efficient applications:
multiple cryptographic mechanisms, malicious
corruption, and asynchronous communication.

In this work, we develop a compiler security proof that handles these subtleties.
Our proof relies on a novel unification of
simulation-based security, information-flow control, choreographic programming,
and sequentialization techniques for concurrent programs.
While our proof targets hybrid protocols, which abstract cryptographic
mechanisms as idealized functionalities, our approach offers a clear path toward
leveraging Universal Composability to obtain end-to-end, modular security
results with fully instantiated cryptographic mechanisms.

Finally, following prior observations about simulation-based security,
we prove that our result guarantees \emph{robust hyperproperty preservation},
an important criterion for compiler correctness that preserves
all source-level security properties in target programs.
 \end{abstract}

\ifblinded
  \bstctlcite{IEEEtran:override}
\fi

\section{Introduction}
\label{sec:intro}

Ensuring security for modern distributed applications remains a difficult
challenge, as such systems can cross administrative boundaries and involve
parties that do not fully trust each other.
To defend their security policies, some applications employ sophisticated
mechanisms such as complex distributed protocols~\cite{paxos,pbft},
trusted hardware~\cite{intelsgx1, dflate, sanctum},
and advanced uses of cryptography.
These technologies add significant complexity to software
development and require expertise to use
successfully~\cite{Egele:2013, georgievIJABS12, decker2014bitcoin}.

To ease the development of secure distributed applications,
prior work leverages compilers that translate high-level programs into
distributed protocols that employ advanced security mechanisms.
Unfortunately, most compilers only target a single mechanism---such as multiparty computation~\cite{fairplay,oblivm,ScaleMamba,wysteria},
zero-knowledge proofs~\cite{pinocchio,geppetto,buffet,xjsnark},
or homomorphic encryption~\cite{CHET,EVA,porcupine}---and thus do not support
secure combinations of mechanisms.
On the other hand, compilers that perform
\emph{secure program partitioning}~\cite{zznm02,zcmz03,swift07,fr-popl08,fgr09,viaduct-pldi21}
do combine mechanisms, but come with limited or informal correctness guarantees.

In this work, we give the first formal security result for program partitioning
that targets multiple cryptographic mechanisms, arbitrary corruption, and
adversarially controlled scheduling.
Our work proves the correctness of a reasonably faithful model of the compilation process used in the Viaduct compiler~\cite{viaduct-pldi21}.
We formalize our result in the \emph{simulation-based} security framework, which establishes a modular foundation for cryptographic protocol security~\cite{UC}.
Our security proof is primarily concerned with program partitioning itself, and
thus does not reason about the concrete instantiation of cryptographic mechanisms;
however, we discuss how to extend our results to reason about concrete
mechanisms.

Our security proof incorporates multiple techniques for
simulation\hyp{}based security:
information\hyp{}flow type systems~\cite{sm-jsac} to define the security policy
and to guide partitioning,
choreographies~\cite{Montesi23} to define global programs for
distributed executions,
and a novel information\hyp{}flow guided technique for concurrent program
sequentialization~\cite{canonical-sequentialization}.
\begin{itemize}
  \item
    We formalize a variant of Simplified Universal Composability (SUC)~\cite{SUC}
    enriched with information flow, allowing us to capture distributed protocols
    in the presence of adversarial scheduling and corruption.

  \item
    We show how to model secure program partitioning using
    security-typed choreographies~\cite{Montesi23}.
    The input to program partitioning is a sequential program
    representing an idealized execution on a single, trusted security domain,
    while the output is a distributed protocol with message-passing concurrency
    between mutually distrusting agents.
  \item
    We prove simulation-based security for our model of program partitioning.
    Informally, we show that any adversary interacting with the
    compiled distributed program is no more powerful than a corresponding adversary
    (a \emph{simulator}) interacting with the source program.
  \item
    We show that, in our setting, simulation implies
    \emph{robust hyperproperty preservation}~\cite{abateB0HPT19},
    a strong criterion for compiler correctness that ensures security conditions
    of source programs are preserved in target programs.
\end{itemize}

\ifreport
This technical report is an expanded version of the corresponding
conference paper~\cite{viaduct-formal}. In particular, it includes
the proof details that had to be omitted for reasons of space, found
in Appendices A--G.
\fi
 
\begin{figure*}
  \definecolor{lightGray}{RGB}{180,180,180}
  \Crefformat{section}{\S#2#1#3}
\begin{tikzpicture}[
  auto,
  thick,
  node distance=\nodedistance,
  every node/.style={align=center},
  program/.style={draw, rounded corners, text width=9em, minimum height=8ex},
  simulation/.style={},
  transformation/.style={->, >=stealth', rounded corners=0.5mm},
]
  \node[program] (1)
    {Sequential Source \\ $\source{\process}$, $\idealstepsto{}$};
  \node[simulation] (12) [right=of 1]
    {\Cref{sec:correctness-of-synthesis} \\ $\simulatedBy$};
  \node[program] (2) [right=of 12]
    {Choreography \\ $\corrupt{\process}$, $\realstepsto!{}$};
  \node[simulation] (23) [right=of 2]
    {\Cref{sec:endpoint-projection} \\ $\simulatedBy$};
  \node[program] (3) [right=of 23]
    {Hybrid Distributed \\ $\corrupt{\partition{\process}}$, $\realstepsto{}$};
  \node[simulation] (34) [right=of 3]
    {\Cref{sec:uc} \\ {\color{lightGray} $\simulatedBy$}};
  \node[program] (4) [dashed, right=of 34]
    {Concrete Distributed};

  \pgfmathsetmacro{\arrowshift}{0.7em}

  \draw[transformation]
    (1.north)
    -- ++(0, \nodedistance)
    -| node [pos=0.25, above] {Protocol Synthesis} ([xshift=-\arrowshift]2.north);

  \draw[transformation]
    ([xshift=\arrowshift]2.north)
    -- ++(0, \nodedistance)
    -| node [pos=0.25, above] {Endpoint Projection} ([xshift=-\arrowshift]3.north);

  \draw[transformation, dashed]
    ([xshift=\arrowshift]3.north)
    -- ++(0, \nodedistance)
    -| node [pos=0.25, above] {Cryptographic Instantiation} (4.north);
\end{tikzpicture}
   \caption{
    Overview of compilation and the correctness proof.
Left-to-right arrows are compilation steps; $\simulatedBy$ are proof steps.
Term \process is a choreography,
    $\partition{\cdot}$ is endpoint projection,
    $\source{\cdot}$ is the inverse of protocol synthesis,
    and $\corrupt{\cdot}$ models corruption.
    Dashed components represent proof sketches.
  }
  \label{fig:compilation-overview}
\end{figure*}

\begin{figure}
  \recovercaptionskip \small \begin{subfigure}{\linewidth}{Source program with \iflabel{information-flow labels}.}
  \(
    \begin{program}
      \slethead{a : \lalice}{\einputh{\alice}}
      \\
      \slethead{b : \lbob}{\einputh{\bob}}
      \\
      \slethead{x}{
        \edeclassify
          {\dendorse{a} < \dendorse{b}}
          {\\ \hspace{9em} \iflabel{A \wedge B}}
          {\iflabel{A \meet B}}}
      \\
      \eoutputh{x}{\alice}
      \\
      \eoutputh{x}{\bob}
    \end{program}
  \)
  \label{fig:example-source}
\end{subfigure}

\begin{subfigure}{\linewidth}{Choreography with explicit communication and synchronization.}
  \(
    \begin{program}
      \slethhead{a}{\alice}{\einput{\alice}}
      \\
      \smovehead{\alice}{a}{\MPCalicebob}{a'}
      \\
      \slethhead{b}{\bob}{\einput{\bob}}
      \\
      \smovehead{\bob}{b}{\MPCalicebob}{b'}
      \\
      \slethhead{x}{\MPCalicebob}{\\ \hspace{4em}\ddeclassify{(\dendorse{a'} < \dendorse{b'})}}
      \\
      \smovehead{\MPCalicebob}{x}{\alice}{x_1}
      \\
      \smovehead{\MPCalicebob}{x}{\bob}{x_2}
      \\
      \eoutputh{x_1}{\alice}
      \\
      \smovehead{\alice}{\vunit}{\bob}{\_}~\scomment{Sync outputs}
      \\
      \eoutputh{x_2}{\bob}
    \end{program}
  \)
  \label{fig:example-choreography}
\end{subfigure}

\begin{subfigure}{\linewidth}{Hybrid distributed program derived by projecting choreography.}
  \centering
  \(
    \begin{program}
      \scomment{\alice}
      \\
      \slethead{a}{\einput{\alice}}
      \\
      \esend{a}{\MPCalicebob}
      \\
      \slethead{x_1}{\ereceive{\MPCshort}}
      \\
      \eoutput{x_1}{\alice}
      \\
      \esend{\vunit}{\bob}~\scomment{Sync}
    \end{program}
  \)
  \hfill
  \(
    \begin{program}
      \scomment{\bob}
      \\
      \slethead{b}{\einput{\bob}}
      \\
      \esend{b}{\MPCalicebob}
      \\
      \slethead{x_2}{\ereceive{\MPCshort}}
      \\
      \slethead{\_}{\ereceive{\alice}}~\scomment{Sync}
      \\
      \eoutput{x_2}{\bob}
    \end{program}
  \)

  \addvspace{\belowdisplayskip}
  \(
    \begin{program}
      \scomment{\MPCalicebob}
      \\
      \slethead{a'}{\ereceive{\alice}}
      \\
      \slethead{b'}{\ereceive{\bob}}
      \\
      \slethead{x}{\ddeclassify{(\dendorse{a'} < \dendorse{b'})}}
      \\
      \esend{x}{\alice}
      \\
      \esend{x}{\bob}
    \end{program}
  \)
  \label{fig:example-target}
\end{subfigure}

\begin{subfigure}{\linewidth}{Concrete distributed program derived by instantiating idealized hosts.}
  \(
    \begin{program}
      \scomment{\alice}
      \\
      \slethead{a}{\einput{\alice}}
      \\
      \scomment{Calls to MPC library}
      \\
      \eoutput{x_1}{\alice}
      \\
      \esend{\vunit}{\bob}~\scomment{Sync}
    \end{program}
  \)
  \hfill
  \(
    \begin{program}
      \scomment{\bob}
      \\
      \slethead{b}{\einput{\bob}}
      \\
      \scomment{Calls to MPC library}
      \\
      \slethead{\_}{\ereceive{\alice}}~\scomment{Sync}
      \\
      \eoutput{x_2}{\bob}
    \end{program}
  \)
  \label{fig:example-concrete}
\end{subfigure}
 \caption{Compiling the Millionaires' Problem}
  \label{fig:example}
\end{figure}

\section{Overview}

We illustrate compilation via the classic Millionaires' Problem~\cite{yao82},
expressed as the source program in \cref{fig:example-source}.
Here, \alice and \bob learn who is richer without revealing their net
worth to each other.
To do so, the program collects inputs from \alice and \bob
representing their net worth (lines~1 and~2);
compares these (line~3),
and outputs the result to \alice and \bob (lines~4 and~5).

\subsection{Information Flow Control}

Source programs prevent insecure information flows using a security type
system~\cite{sm-jsac}, which assigns a \emph{label} to every variable.
Labels track the \emph{confidentiality} and \emph{integrity} of data.
Our security type system follows prior work~\cite{msz04,nmifc}
in using \emph{downgrading mechanisms}---\kdeclassify and \kendorse
expressions---to
selectively allow information flows that would otherwise be deemed insecure.
As in prior work~\cite{nmifc},
the type system constrains these downgrading mechanisms
to prevent improper usage. These constraints turn out to be crucial.

In \cref{fig:example-source}, the \kdeclassify expression explicitly allows revealing the
result of the comparison $a < b$ to \alice and \bob, which is by default disallowed
since the computation reads secrets from both parties.
Dually, the \kendorse expressions allow untrusted data coming from
\alice and \bob to influence the output from the comparison,
which must be trusted since the value is output to both parties.

Downgrading require explicit \emph{source} and \emph{target} labels.
\Cref{fig:example-source} suppresses these labels for the \kendorse
expressions, but shows the \kdeclassify expression that moves from
$\iflabel{A \wedge B}$ to $\iflabel{A \sqcap B}$.
Label $\iflabel{A \wedge B}$ is too secret for either \alice or \bob
to see the value; label $\iflabel{A \sqcap B}$ allows \emph{both} parties
to see it.

\subsection{Compilation}

Source programs serve as \emph{specifications} of intended behavior,
and correspond to \emph{ideal functionalities} from Universal Composability~\cite{UC}.
Source programs act as trusted third parties, perfectly and securely executing
the program on behalf of the involved hosts.
The source language is high-level by design, and its simple, sequential
semantics facilitate reasoning.
Our compiler generates a concurrent distributed program that correctly
implements the same behavior.

\Cref{fig:compilation-overview} gives an overview of the compilation pipeline.
First, \emph{protocol synthesis} compiles the source program into
a \emph{choreography}, a single, centralized program that represents a
distributed computation between many hosts.
In addition to \emph{parties} such as \alice and \bob,
choreographies may mention \emph{idealized hosts} such as \MPCalicebob,
which represents a maliciously secure multiparty computation protocol between
\alice and \bob.
Idealized hosts can perform computations that require high confidentiality
or integrity.
Next, \emph{endpoint projection},
a standard procedure in choreographic programming~\cite{Montesi13, FilipeM20},
partitions the choreography into a distributed program,
where hosts run in parallel and interact via message passing.
The distributed program still contains idealized hosts,
so it corresponds to a \emph{hybrid program} in UC.
Finally, \emph{cryptographic instantiation} replaces idealized hosts with
concrete cryptographic algorithms.

\Cref{fig:example-choreography} shows a choreography where
\alice and \bob perform their respective \kinput and \koutput statements,
while \MPCalicebob does the comparison.
Explicit communication statements move data between hosts.
Choreographies have \emph{concurrent} semantics, so
statements at different hosts may step out of program order.
The penultimate line
has \emph{synchronization} between \alice and \bob: \bob must
wait on an input from \alice before delivering output.
This synchronization step is necessary for the distributed program to match the
sequential source program, in which \bob's output happens after \alice's.
\Cref{fig:example-target} shows the distributed program obtained by
projecting the choreography in \cref{fig:example-choreography}.
Endpoint projection converts communication statements
to \ksend/\kreceive pairs,
and projects local computations to their corresponding hosts.
Finally, \cref{fig:example-concrete} shows the result of cryptographic
instantiation.
Each \ksend/\kreceive statement that interacts with \MPCalicebob is
replaced with a call to a cryptographic library.

\subsection{Defining Correctness}
\label{sec:overview-threat-model}

Our main contribution is a proof that compilation is correct.
A correct compiler preserves properties of source programs in generated
target programs.
For generality, we demand that the compiler
preserve all \emph{hyperproperties}~\cite{cs08} guaranteed by the source program.
Hyperproperties capture many common notions of security,
including secure information flow.

Formally, preserving all hyperproperties is defined via
\emph{robust hyperproperty preservation} (RHP)~\cite{abateB0HPT19}.
Following prior observations~\cite{rhp-uc, rhp-uc-extended},
we do not prove RHP directly, but instead prove \emph{simulation},
which we show implies RHP in our framework.
Simulation requires every attack by an adversary against the
target program to be possible against the source program.
Ideally, the source program is ``obviously secure,''
meaning it has straightforward, sufficiently abstract semantics
and a narrow attack surface.
In contrast, the target program faithfully models
real code and has a realistic attack surface.

As with UC, our framework abstracts concrete cryptographic
mechanisms as idealized hosts, yielding an \emph{extensible}
proof that is generic over the set of supported cryptographic mechanisms.
Indeed, we sketch how our framework may be embedded into UC
to leverage the UC composition theorem~\cite{UC};
using the composition theorem, we can instantiate idealized hosts with
cryptographic mechanisms proven secure separately.

\paragraph{Threat Model}
Simulation demands we carefully define the capabilities of adversaries for
each language in the pipeline.

Adversaries are characterized by two sets of labels \public and \untrusted
representing \emph{public} and \emph{untrusted} labels.
These label sets induce a partitioning of hosts into three sets:
\emph{honest} (secret and trusted),
\emph{semi-honest} (public and trusted),
and \emph{malicious} (public and untrusted).
Intuitively, malicious hosts are fully controlled by the adversary,
and semi-honest hosts follow the protocol
but leak all their data to the adversary~\cite{Lindell17}.

We say a host is \emph{dishonest} if it is semi-honest or malicious,
and \emph{nonmalicious} if it is honest or semi-honest.

Source programs are fully sequential, so adversaries do not control scheduling.
Moreover, adversaries cannot read or change intermediate data within a source
program.
However, adversaries can read messages from \kinput/\koutput expressions
involving dishonest hosts,
read the results of \kdeclassify expressions with public target labels,
and change the results of \kendorse expressions with untrusted source labels.

Choreographies and hybrid distributed programs have the same semantics,
so their adversaries have equal power.
These programs are concurrent and the adversary controls all scheduling.
The adversary also fully controls malicious hosts
and can read messages involving at least one semi-honest host.
However, the adversary cannot carry out computational attacks,
since cryptographic mechanisms are modeled as idealized hosts.

The adversary can view all message \emph{headers} (source and destination),
but not necessarily message \emph{content}.
The adversary may not drop, duplicate, or modify messages.
This abstraction of secure channels can be realized by standard techniques,
such as TLS~\cite{TLS}.
In our model as in most models of cryptographic protocols~\cite{UC, RSIM},
the adversary can exploit \emph{timing} and \emph{progress} channels
since it controls scheduling.
These channels make secret control flow insecure:
any discrepancy in timing or progress behavior between different
control-flow paths can be detected by the adversary.
Therefore, we only prove security for programs that make control flow
decisions based on public, trusted data.

\subsection{Roadmap of Correctness Proof}

To define and prove our compilation pipeline secure, we make use of
simulation (\cref{sec:simulation}), which defines a relation $\simulates$
between semantic configurations. Intuitively, $\config_1 \leq \config_2$
means that any adversary against configuration $\config_1$ is no more powerful than
an equivalent adversary against $\config_2$. Crucially, the attacker often has more
choices in $\config_1$ (e.g., low-level scheduling decisions); the role of
simulation is to show that these extra choices are benign, and thus $\config_1$
is as secure as $\config_2$.

Our proof exploits the transitivity of simulation, using
multiple intermediate simulations depicted in
\Cref{fig:compilation-overview}. (To follow the flow of compilation, the figure
uses $\simulatedBy$, which is defined as expected.)
\paragraph{Correctness of Protocol Synthesis}
We first prove in \cref{sec:correctness-of-synthesis} that
protocol synthesis is correct:
sequential source programs (e.g., \cref{fig:example-source})
are simulated by their choreographies (e.g., \cref{fig:example-choreography}).

There is a wide semantic gap between source programs and choreographies:
while source programs are sequential and use \kdeclassify/\kendorse expressions
to interface with the adversary, choreographies are concurrent and allow the
adversary to read and corrupt data controlled by dishonest hosts.
To bridge this gap, we break the protocol synthesis proof itself into
three steps (\cref{fig:proof-overview}).
These steps allow us to reason separately about the
semantic features of choreographies.

Aside from employing transitivity, each proof step $\config_1 \simulates \config_2$
requires us to define an appropriate \emph{simulator} $\simulator(\cdot)$ such
that for any adversary $\adv$, $\config_1$ running alongside $\adv$ is identical
in behavior to $\simulator(\adv)$ running alongside $\config_2$.
Thus, after defining the simulator as a labeled transition system,
we show that the two resulting configurations are bisimilar using standard
information-flow arguments (e.g., by defining an appropriate notion of
\emph{low-equivalence} between the configurations).

\paragraph{Correctness of Protocol Synthesis}
Second, we prove in \cref{sec:correctness-of-projection} that choreographies are
simulated by their corresponding distributed programs (e.g.,
\cref{fig:example-concrete}). Our proof (\cref{thm:correctness-of-projection}) largely follows
the choreographic programming literature~\cite{Montesi23, Montesi13, FilipeM20,
FilipeMP22, HirschG22},
but deals with extra complications arising from an adversary who may corrupt hosts.

\paragraph{Cryptographic Instantiation}
Finally, we sketch in \cref{sec:uc}
how hybrid distributed programs are simulated by concrete
distributed programs, which make use of actual cryptographic mechanisms.
In particular,
we show how to embed our framework in the SUC~\cite{SUC} framework,
which in turn embeds into the full UC framework.
We can then leverage the UC composition theorem to instantiate idealized hosts
one at a time, appealing to existing correctness proofs for cryptographic
mechanisms.
 
\section{Semantic Framework}
\label{sec:simulation}

\begin{figure}
  \begin{syntax}
  \categoryFromSet[Hosts]{\h}{\h!}
  \categoryFromSet[Endpoints]{\ch}{\ch! = \set{\advprot, \envprot} \cup \h!}
  \categoryFromSet[Values]{\val}{\val! = \set{\vfalse, \dots}}

  \category[Messages]{\msg \in \msg!}
  \alternative{\mglobal{\ch_1}{\ch_2}{\val}}

  \category[Actions]{\action \in \action!}
  \alternative{\ainput{\msg}}
  \alternative{\aoutput{\msg}}
\end{syntax}

\begin{rules}{\actor{\action} = \ch}
  \actor{\aglobalreceive{\ch_1}{\ch_2}{\val}}
  =
  \ch_2

  \actor{\aglobalsend{\ch_1}{\ch_2}{\val}}
  =
  \ch_1
\end{rules}
   \caption{Syntax of messages and actions.}
  \label{fig:action-syntax}
\end{figure}

We capture the semantics of programs using labeled transition systems (LTSs),
where labels are \emph{actions} \action drawn from the grammar in
\cref{fig:action-syntax}.
An action \action is either the input $\ainput{\msg}$ or the
output $\aoutput{\msg}$ of a message \msg.
A message \msg specifies the endpoints $\ch_1$ and $\ch_2$ of communication
and carries a value \val.
An endpoint is either a host \h, the adversary \advprot, or the external environment
\envprot.
Values are drawn from an arbitrary set \val!, which we assume contains
at least $\vfalse$.
We define $\actor{\action}$ as the host performing \action:
the sender performs output actions and the receiver performs input actions.
Internal steps are represented as self-communication $\aglobalinternal{\h}$,
which identifies a host \h making progress without requiring a new syntactic
form.

\paragraph{Configurations and Parallel Composition}

\begin{figure}
  \begin{syntax}
  \abstractCategory[Processes]{\process}

  \category[Configurations]{\config}
  \alternative{\process_1 \parallel \dots \parallel \process_n}
\end{syntax}
 
  \bigskip
  \begin{rules}{\config \stepsto{\action} \config'}
\inferrule
    [\named{\config-Input}{config:input}]
    {
      \forall i \centerdot
        \process_i \stepsto{\ainput{\msg}} \process_i'
    }
    {
      \process_1 \parallel \dots \parallel \process_n
      \stepsto{\ainput{\msg}}
      \process_1' \parallel \dots \parallel \process_n'
    }

\inferrule
    [\named{\config-Output}{config:output}]
    {
      \process_i \stepsto{\aoutput{\msg}} \process_i'
      \\
      \forall j \neq i \centerdot
        \process_j \stepsto{\ainput{\msg}} \process_j'
    }
    {
      \process_1 \parallel \dots \parallel \process_n
      \stepsto{\aoutput{\msg}}
      \process_1' \parallel \dots \parallel \process_n'
    }
\end{rules}
   \caption{Syntax and semantics of configurations.}
  \label{fig:parallel}
\end{figure}

A configuration, \config, is the parallel composition of a finite set of
processes $\process_i$, which are arbitrary LTSs.
Following prior work~\cite{LynchT87,rafnssonS14}, processes must be \emph{input-total}:
for every state $\process$ and input message $\ainput{\msg}$,
there exists a state $\process'$ such that
$\process \stepsto{\ainput{\msg}} \process'$.
\Cref{fig:parallel} gives the semantics of configurations.
A configuration \config steps with an input $\ainput{\msg}$ if all processes
in \config do, and steps with an output $\aoutput{\msg}$ if one of the
processes outputs \msg, and the rest input \msg.

\paragraph{Adversaries}

As with processes, an adversary \adv or \simulator is an arbitrary LTS.
The rules for running an adversary in parallel with a configuration
are the same as in \cref{fig:parallel}.
In contrast to processes, adversaries are \emph{not} input-total,
which enables adversaries to control scheduling:
to schedule an endpoint $\ch_1$, \adv refuses to step with actions of the form
$\aglobalreceive{\ch_1'}{\ch_2}{\msg}$ where $\ch_1' \neq \ch_1$,
but accepts actions $\aglobalreceive{\ch_1}{\ch_2}{\msg}$.

Due to the definition of parallel composition,
a copy of every message from the configuration and the environment is delivered
to the adversary;
and any output of the adversary is delivered to the configuration and the
environment.

However, the adversary can only read a message if at least one endpoint is
dishonest, and can only forge messages from malicious hosts.

\begin{definition}[Adversary Interface]
  \label{def:adversary-interface}
  For all \adv:
  \begin{enumerate}
    \item
      If \( \ch_1 \) and \( \ch_2 \) are honest,
      then
      \( \adv \stepsto{\aglobalreceive{\ch_1}{\ch_2}{\val_1}} \adv' \)
      if and only if
      \( \adv \stepsto{\aglobalreceive{\ch_1}{\ch_2}{\val_2}} \adv' \)
      for all \( \val_1 \) and \( \val_2 \).

    \item
      If
      \( \adv \stepsto{\aglobalsend{\ch_1}{\ch_2}{\val}} \),
      then either \( \ch_1 = \advprot \) or \( \ch_1 \) is malicious.
  \end{enumerate}
\end{definition}

\paragraph{Determinism}

To match UC, the adversary must resolve all nondeterminism,
so that $\adv \parallel \config$ is deterministic.
We ensure determinism with the following restrictions.
\begin{itemize}
  \item
    Configurations and adversaries are \emph{internally deterministic}:
    if
    \( \process \stepsto{\action} \process_1 \)
    and
    \( \process \stepsto{\action} \process_2 \),
    then
    \( \process_1 = \process_2 \).

  \item
    Adversaries are \emph{output deterministic}:
    if
    \( \adv \stepsto{\aoutput{\msg_1}} \)
    and
    \( \adv \stepsto{\aoutput{\msg_2}} \),
    then
    \( \msg_1 = \msg_2 \).

  \item
    Configurations are \emph{output deterministic per channel}:
    if
    \( \process \stepsto{\aoutput{\msg_1}} \),
    \( \process \stepsto{\aoutput{\msg_2}} \),
    and
    \( \actor{\aoutput{\msg_1}} = \actor{\aoutput{\msg_2}} \),
    then
    \( \msg_1 = \msg_2 \).

  \item
    Adversaries are \emph{channel selective}:
    if
    \( \process \stepsto{\ainput{\msg_1}} \)
    and
    \( \process \stepsto{\ainput{\msg_2}} \),
    then
    \( \actor{\ainput{\msg_1}} = \actor{\ainput{\msg_2}} \).
\end{itemize}

\paragraph{Simulation}

Simulation determines when a configuration $\config_2$ securely realizes
configuration $\config_1$:
that is, if \emph{every} adversary \adv interacting with
$\config_2$ can be simulated by another adversary \simulator (with the same
interface) running against $\config_1$~\cite{UC}.
The latter adversary is called a \emph{simulator}.

\begin{definition}[Simulation]
  \label{def:simulation}
  \( \config_1 \) is simulated by \( \config_2 \),
  written \( \config_1 \simulatedBy \config_2 \),
when the two systems are indistinguishable to
\emph{any} external environment:
  \[
    \forall \adv \centerdot
    \exists \simulator \centerdot
      \envtraces{\simulator \parallel \config_1}
      =
      \envtraces{\adv \parallel \config_2}
  \]
\end{definition}

Here, \( \envtraces{\config} \) is the set of \emph{traces} of \config but
containing only the actions that communicate with the environment.
Given trace
$\tr = \action_1, \dots, \action_n$, we have
\(
  \envtraces{\config}
  =
  \setdef{\restrict{\tr}{\envprot}}{\config \stepsto{\tr}}
\),
where restriction $\restrict{\tr}{\envprot}$ removes all actions in \tr where
neither the source nor the destination is \envprot.

Our definition of simulation guarantees \emph{perfect}
(i.e., information\hyp{}theoretic) security.
In \cref{sec:uc}, we discuss how to transfer our results to the SUC
framework, which is based on computational security.

\section{Specifying Security Policies}
\label{sec:labels}

To succinctly capture both security policies and the adversary's power,
we use a label model that can describe confidentiality
and integrity simultaneously~\cite{ml-tosem, dclabels, flam}.

A security label $\lbl \in \lbl!$ is a pair of the form $\pair{\cc}{\ic}$
where \cc and \ic are elements of an arbitrary bounded distributive lattice \p!.
Here, \cc describes confidentiality and \ic describes integrity.
Elements of \p! are called \emph{principals}.
Principals can be thought of as negation-free boolean formulas over
a set $\set{\palice, \pbob, \pchuck, \dots}$ of \emph{atomic principals}.

The \emph{acts-for} relation ($\actsfor$) orders principals by authority,
and coincides with logical implication: for example,
$\p \wedge \q \actsfor \p$ and $\q \actsfor \p \vee \q$.
The most powerful principal is $\strongest$ and the least powerful,
$\weakest$, so we have $\strongest \actsfor \p \actsfor \weakest$
for any principal $\p$.

We lift $\wedge$, $\vee$, and $\actsfor$ to labels in the obvious pointwise manner.
Whenever appropriate, we write $\p$ for the security label $\pair{\p}{\p}$.
Confidentiality and integrity projections
$\confidentiality{\lbl}$ and $\integrity{\lbl}$
completely weaken the other component of a label:
$
  \confidentiality{\pair{\cc}{\ic}}
  =
  \pair{\cc}{\weakest}
$
and
$
  \integrity{\pair{\cc}{\ic}}
  =
  \pair{\weakest}{\ic}
  .
$

As in FLAM~\cite{flam},
the authority ordering on principals defines secure information flow.
Flow policies become more restrictive as they become
\emph{more} secret and \emph{less} trusted:
\begin{align*}
  \pair{\cc_1}{\ic_1}
  \flowsto
  \pair{\cc_2}{\ic_2}
  &\iff
  \cc_2 \actsfor \cc_1
  \text{ and }
  \ic_1 \actsfor \ic_2
  \\
  \pair{\cc_1}{\ic_1}
  \join
  \pair{\cc_2}{\ic_2}
  &=
  \pair{\cc_1 \wedge \cc_2}{\ic_1 \vee \ic_2}
  \\
  \pair{\cc_1}{\ic_1}
  \meet
  \pair{\cc_2}{\ic_2}
  &=
  \pair{\cc_1 \vee \cc_2}{\ic_1 \wedge \ic_2}
\end{align*}
The least restrictive information flow policy is $\ifbottom$ (``public
trusted''), describing information that can be used anywhere, while
the most restrictive is $\iftop$ (``secret untrusted'').

\begin{toappendix}
  \label{sec:ifc-details}

  Formally, an attack is specified by picking two sets of \emph{principals}:
  public principals $\p* \subseteq \p!$ and
  untrusted principals $\q* \subseteq \p!$.
Some common-sense conditions must hold on the sets \p* and \q*~\cite{nmifc}.
We state the conditions for \p* but they apply equally to \q*.
The adversary always controls the weakest principal,
  but never controls the strongest:
  $\weakest \in \p*$ and $\strongest \not\in \p*$.
If the adversary controls a principal, then it controls all weaker principals:
  if $\p \in \p*$ and $\p \actsfor \q$, then $\q \in \p*$.
Attacking principals may collude:
  if $\p, \q \in \p*$, then $\p \wedge \q \in \p*$.
Combining secret/trusted principals leads to secret/trusted principals:
  if $\p \vee \q \in \p*$, then either $\p \in \p*$ or $\q \in \p*$.
  
Together, these conditions imply that \p* and \q* are sensible truth assignments
  to elements \p!: sensible in the sense that they play nicely with $\wedge$ and
  $\vee$.\footnote{For those familiar with order theory,
    \p* and \q* must be \emph{prime filters} of \p!.}

  We can derive the set of public/secret and trusted/untrusted \emph{labels}
  from \p* and \q*:
  \begin{mathpar}
    \public
    =
    \setdef
      {\pair{\cc}{\ic} \in \lbl!}
      {\cc \in \p*}

    \secret = \lbl! \setminus \public

    \trusted
    =
    \setdef
      {\pair{\cc}{\ic} \in \lbl!}
      {\ic \not\in \q*}

    \untrusted = \lbl! \setminus \trusted
    .
  \end{mathpar}

  Additionally, we require that attacks compromise at least as much confidentiality
  as integrity.

  \begin{definition}[Valid Attack]
    \label{def:valid-attack}
    Attack \( \pair{\p*}{\q*} \) is valid if all untrusted principals are public:
    \( \q* \subseteq \p* \).
  \end{definition}
\end{toappendix}

\subsection{Authority of Hosts}
\label{sec:host-auth}

Protocol synthesis places computations on hosts that have enough authority
to securely execute them.
The authority of each host \h is captured with a label $\labelof{\h}$~\cite{zznm02}.
For our example, we take $\labelof{\alice} = \iflabel{A}$
and $\labelof{\bob} = \iflabel{B}$.
Following an insight from Viaduct~\cite{viaduct-pldi21},
idealized hosts like $\MPCalicebob$ have a derived label that
conservatively approximates the
security guarantees of the cryptographic mechanism.
For maliciously secure MPC, a reasonable label is
\(
  \labelof{\MPCalicebob}
  =
  \labelof{\alice} \wedge \labelof{\bob}
  =
  \iflabel{A \wedge B}
\),
meaning that \MPCalicebob may view secrets of \alice and \bob,
and also has enough integrity to compute values for them.

\subsection{Capturing Attacks with Labels}
\label{sec:labels-attack}

The power of the adversary is determined by partitioning labels \lbl! across
the two axes: public/secret and trusted/untrusted;
we denote these sets as \public/\secret and \trusted/\untrusted, respectively.
We only consider sets that form \emph{valid attacks}~\cite{nmifc}.
Intuitively, a valid attack requires that all untrusted labels are public, so
that the adversary cannot modify secret data;
we define valid attacks formally in \apxref{sec:ifc-details}.
The rest of our development is parameterized over a valid partitioning of labels.

Recalling the threat model from \cref{sec:overview-threat-model},
an honest host has a secret, trusted label ($\honest{\h}$);
a semi-honest host has a public, trusted label ($\semihonest{\h}$);
and a malicious host has a public, untrusted label ($\malicious{\h}$).
A host with a secret, untrusted label does not make any sense:
an untrusted host is fully controlled by the adversary,
so it cannot hide information from the adversary.
We rule out such corruptions by requiring all host labels to be
\emph{uncompromised}~\cite{zsm19}.
A valid partitioning never classifies an uncompromised label as secret and
untrusted.

\begin{definition}[Uncompromised Label]
  Label \label{def:uncompromised-label}
  \( \lbl = \pair{\cc}{\ic} \) is uncompromised,
  written \( \uncompromised{\lbl} \),
  if it is at least as trusted as it is secret:
  \( \ic \actsfor \cc \).
\end{definition}

\begin{theoremrep}
  Under a valid attack, if \( \uncompromised{\lbl} \),
  then we have \( \lbl \not\in \secret \cap \untrusted \).
\end{theoremrep}
\begin{proof}
  Let
  \( \pair{\p*}{\q*} \) be a valid attack
  and
  \( \lbl = \pair{\p}{\q} \).
Assume
  \( \uncompromised{\lbl} \)
  and
  \( \lbl \in \untrusted \).
By definition, we have \( \q \in \q* \).
Unfolding \( \uncompromised{\lbl} \),
  we have \( \q \actsfor \p \), and since \q* is upward closed,
  we have \( \p \in \q* \).
Finally, \( \p \in \q* \subseteq \p* \),
  so \( \lbl \in \public \).
\end{proof}
 
\section{Protocol Synthesis}

The first program transformation, protocol synthesis, takes a sequential
source program to a choreography.

\subsection{Source Language}

\begin{figure}
  \begin{syntax}
  \groupleft{
    \categoryFromSet[Variables]{\x}{\x!}
    \hspace{1em}\hfill
    \categoryFromSet[Labels]{\lbl}{\lbl!}
    \hspace{1em}\hfill
    \categoryFromSet[Operators]{\f}{\f!}
  }

  \separate

  \category[Atomic Expr.]{\aexp}
  \alternative{\val}
  \alternative{\x}

  \category[Expressions]{\e}
  \alternative{\eapplyop{\f}{\manyargs{\aexp}{n}}}
  \\
  \alternative{\edeclassify{\aexp}{\fromlbl}{\tolbl}}
  \\
  \alternative{\eendorse{\aexp}{\fromlbl}{\tolbl}}
  \\
  \alternative{\einput{\h}}
  \alternative{\eoutput{\aexp}{\h}}

  \separate

  \category[Statements]{\s}
  \alternative{\sleth{\x}{\h}{\e}{\s}}
  \\
  \alternative{\sifh{\aexp}{\h}{\s_1}{\s_2}}
  \\
  \alternative{\sskip}

  \separate

  \categoryFromSet[Buffers]{\buffer}{\ch! \times \ch! \to \val!^*}

  \category[Processes]{\process}
  \alternative{\proc{\h!}{\buffer}{\s}}
\end{syntax}
   \caption{Syntax of the source language.}
  \label{fig:source-syntax}
\end{figure}

\Cref{fig:source-syntax} gives the syntax of source programs.
The language supports an abstract set of operators \f over values.
We distinguish pure, atomic expressions \aexp
from expressions \e that may have side effects.
The \kdeclassify expression marks locations where
private data is explicitly allowed to flow to public data.
Similarly, \kendorse marks where untrustworthy data may
influence trusted data.
The \kinput/\koutput expressions allow programs to interact with the external
environment~\cite{oneillCC06, clarkH08}.

Statement $\sleth{\x}{\h}{\e}{\s}$ performs the local
computation \e at host \h, binds the result to variable \x,
and continues as \s.
In source programs, \h is only relevant for \kinput and \koutput expressions:
we write
\(\dlethhead{\x}{\h}{\einput{\h}}\)
and
\(\dlethhead{\x}{\h}{\eoutput{\aexp}{\h}}\)
in contrast to our example in \cref{fig:example-source},
where we write
\(\dlethead{\x}{\einputh{\h}}\)
and
\(\dlethead{\x}{\eoutputh{\aexp}{\h}}\).
For all other expressions, \h is \idealprot,
a single fully trusted host.
Representing source programs with host annotations
allows smoothly extending the language later on.

A source-program configuration is a logically centralized process
$\proc{\h!}{\buffer}{\s}$.
The component \h! indicates the process acts for all hosts.
The second component, \buffer, is a \emph{buffer} mapping pairs of endpoints to
first-in-first-out queues of values.
Processes buffer input so that output on other processes
(the adversary and the environment) is nonblocking.

\subsection{Choreography Language}

\begin{figure}
  \begin{syntax}
  \category[Expressions]{\e}
  \alternative{\dots}
  \alternative{\ereceive{\h}}
  \alternative{\esend{\aexp}{\h}}

  \category[Statements]{\s}
  \alternative{\dots}
  \alternative{\smove{\h_1}{\aexp}{\h_2}{\x}{\s}}
  \alternative{\sselect{\h_1}{\val}{\h_2}{\s}}

  \category[Processes]{\process}
  \alternative{\proc{\h* \subseteq \h!}{\buffer}{\s}}
\end{syntax}
   \caption{
    Syntax of choreographies as an extension to source syntax
    (\cref{fig:source-syntax}).
The $\ksend/\kreceive$ expressions are only relevant for the
    security proof.}
  \label{fig:choreography-syntax}
\end{figure}

Choreographies are centralized representations of distributed computations.
Unlike source programs, they make explicit the location of
computations, storage, and communication.

\Cref{fig:choreography-syntax} gives the syntax of choreographies,
which we present as an extension of the source syntax.
Host annotations on \klet and \kif statements are no longer restricted to
\idealprot and can be any host $\h \in \h!$.
The \emph{global communication} statement $\smove{\h_1}{\aexp}{\h_2}{\x}{\s}$
represents host $\h_1$ sending the value of \aexp to $\h_2$,
which stores it in variable \x.
The \emph{selection} statement $\sselect{\h_1}{\val}{\h_2}{\s}$
communicates control flow decisions,
and is used to establish \emph{knowledge of choice}~\cite{Montesi23, Montesi13}.
We extend expressions with \ksend and \kreceive,
however, the compiler never generates choreographies with these expressions;
they are only used in proofs to model malicious corruption
(\cref{sec:modeling-dishonesty}).

As for source programs, configurations are single
processes $\proc{\h*}{\buffer}{\s}$, but \h* may be a strict subset of
\h! and does not contain the ideal process \idealprot.

\subsection{Operational Semantics of Choreographies}
\label{sec:operational-semantics}

\begin{figure}
  \begin{rules}{\h \says \e \idealstepsto{\action} \val}
\inferrule
    {
      \issecret{\fromlbl}
      \\
      \ispublic{\tolbl}
    }
    {
      \h \says
      \edeclassify{\val}{\fromlbl}{\tolbl}
      \idealstepsto{\aleak{\h}{\val}}
      \val
    }

\inferrule
    {
      \isuntrusted{\fromlbl}
      \\
      \istrusted{\tolbl}
    }
    {
      \h \says
      \eendorse{\val}{\fromlbl}{\tolbl}
      \idealstepsto{\amaul{\h}{\val'}}
      \val'
    }

\inferrule
    {
      \nonmalicious{\h}
    }
    {
      \h \says
      \einput{\h}
      \idealstepsto{\aglobalreceive{\envprot}{\h}{\val}}
      \val
    }

\inferrule
    {
      \nonmalicious{\h}
    }
    {
      \h \says
      \eoutput{\val}{\h}
      \idealstepsto{\aglobalsend{\h}{\envprot}{\val}}
      \vunit
    }
\end{rules}
 
  \bigskip
  \begin{judgments}
  \judgment{\s \realstepsto{\action} \s'}
\end{judgments}
\begin{mathpar}
\smove{\h_1}{\val}{\h_2}{\x}{\s}
  \realstepsto{\aglobalsend{\h_1}{\h_2}{\val}}
  \subst{\x}{\val}{\s}

\sselect{\h_1}{\val}{\h_2}{\s}
  \realstepsto{\aglobalsend{\h_1}{\h_2}{\val}}
  \s
\end{mathpar}
 
  \smallskip
  \begin{rules}{\s \polystepsto!{\action} \s'}
\inferrule
    {
      \s \polystepsto!{\action} \s'
      \\
      \actor{\action} \notin \hosts{\ectx}
    }
    {
      \sleth{\x}{\h}{\e}{\s}
      \polystepsto!{\action}
      \sleth{\x}{\h}{\e}{\s'}
    }
\end{rules}
 \caption{Select ideal, real, and concurrent stepping rules.}
  \label{fig:select-stepping}
\end{figure}

Following \cref{sec:simulation}, we give operational semantics to programs
using labeled transition systems.
Since choreographies strictly extend the syntax of source programs,
it suffices to define a semantics for choreographies.

Following \cref{fig:compilation-overview}, we define two transition relations:
\emph{ideal} stepping $\idealstepsto{}$ gives meaning to source programs
and to idealized choreographies (an intermediate language for our simulation proof),
and \emph{real} stepping $\realstepsto{}$ gives meaning to choreographies
and distributed programs.
Additionally, we lift ideal and real stepping to concurrent versions,
written $\idealstepsto!{}$ and $\realstepsto!{}$,
to capture the concurrent semantics of choreographies.
\Cref{fig:select-stepping} gives a selection of key rules;
we defer full definitions to \apxref{sec:operational-semantics-details}.

\subsubsection{Ideal Semantics}

We write $\h \says \e \idealstepsto{\action} \val$ to mean expression \e evaluates
to value \val at host \h with action \action.
We assume operators are total;
partial operators (like division) can be made total using defaults.
Formally, we give meaning to operator application assuming a denotation function
$\evaluate : \f! \times \val!^{*} \to \val!$.
We model \kdeclassify and \kendorse expressions as \emph{interactions} with
the adversary endpoint \advprot.
When a value is declassified from a secret label to a public one,
the program \emph{outputs} the value to \advprot.
Dually, when a value is endorsed from an untrusted label to a trusted one,
the program takes \emph{input} from \advprot, and uses that value instead.
When the confidentiality/integrity of the value does not change,
these expressions act as the identity function and take internal steps.
The \kinput/\koutput expressions communicate with the environment endpoint \envprot,
except on malicious hosts;
there, they step internally and always return \vunit.
In source programs and idealized choreographies,
\kreceive/\ksend expressions only communicate with malicious hosts;
we suppress them (they take internal steps and always return \vunit)
to give less power to the adversary in the ideal setting.

For statements,
we write $\s \idealstepsto{\action} \s'$ to mean statement \s
steps to $\s'$ with action \action.
Statement stepping rules are as expected:
\klet statements step using substitution,
\kif statements pick a branch based on their conditional,
and communication and selection statements step internally,
naming the ``sending host'' as the host performing the action.

\subsubsection{Real Semantics}

Real stepping rules modify ideal stepping rules.
The \kdeclassify/\kendorse expressions always step internally instead of
communicating with \advprot.
The \kreceive/\ksend expressions communicate a value with the specified host.
Finally, communication and selection statements step with a visible action
instead of internally.

\subsubsection{Concurrent Lifting for Choreographies}
\label{sec:concurrent-lifting}

Concurrent stepping rules, written $\s \polystepsto!{\action} \s'$,
lift an underlying statement-stepping judgment
($\idealstepsto{}$ or $\realstepsto{}$).
Concurrent stepping allows choreographies to step statements at different hosts
out of program order as long as there are no dependencies between the hosts,
and is the standard way choreographies model the behavior of a
distributed system~\cite{Montesi23}.
The key rule allows skipping over \klet statements to step a statement
in the middle of a program.
This rule requires the actor of the performed action to be different from the
hosts of the statements being skipped over,
matching the behavior of target programs where code running on a single
host is single-threaded.

\paragraph{Synchronous vs. Asynchronous Choreographies}

When skipping over \klet statements,
requiring only the \emph{actor} to be missing from the context
leads to an \emph{asynchronous} semantics~\cite{async-choreo}.
In a \emph{synchronous} setting, the side condition would require both
endpoints to be missing:
$\hosts{\action} \cap \hosts{\ectx} = \emptyset$.
Consider the following program:
\begin{program}
  \slethhead{\x_1}{\alice}{\einput{\alice}}
  \\
  \smovehead{\bob}{\vunit}{\alice}{\x_2}
\end{program}
\alice is waiting on an input,
so is not ready to receive from \bob.
In a synchronous setting, these statements must execute in program order
since \bob can only send if \alice is ready to receive.
In an asynchronous setting, sends are nonblocking, so the second statement
can execute first.

\subsubsection{Processes}
\label{sec:process-semantics}

A buffer \buffer behaves as a FIFO queue for each channel:
it can input a message by appending the received value at the end of the
corresponding queue,
and can output the value at the beginning of any queue.
Buffers guarantee in-order delivery within a single channel $\ch_1 \ch_2$,
but messages across different channels may be reordered.
A process \process forwards its input to its buffer if the message is
addressed to a relevant host; otherwise, \process discards the message.
A process takes an internal step when its buffer delivers a message to its statement,
and an output step when its statement outputs.

\subsection{Compiling to Choreographies}
\label{sec:compiling-to-choreographies}

\begin{figure}
  \begin{minipage}{\linewidth}
    
\begingroup \newcommand{\sourceH}[1]{{\source{#1}}}\begin{judgments}
  \judgment{\source{\s} = \s'}
\end{judgments}
\vspace{-0.5em}
\begin{function}{\sourceH}
\case{\sleth{\x}{\h}{\e}{\s}}
  \begin{cases}
    \sleth{\x}{\h}{\e}{\sourceH{\s}}
      & \phantom{\neg}\io{\e}
    \\
    \sleth{\x}{\idealprot}{\e}{\sourceH{\s}}
      & \neg\io{\e}
  \end{cases}

\case{\smove{\h_1}{\aexp}{\h_2}{\x}{\s}}
  \subst{\x}{\aexp}{\source{\s}}

\case{\sselect{\h_1}{\val}{\h_2}{\s}}
  \sourceH{\s}

\case{\sifh{\aexp}{\h}{\s_1}{\s_2}}
  \sifh{\aexp}{\idealprot}{\sourceH{\s_1}}{\sourceH{\s_2}}

\case{\sskip}
  \sskip
\end{function}
\[
  \io{\e}
  =
  (\e = \einput{\h})
  \vee
  (\exists \aexp \centerdot \e = \eoutput{\aexp}{\h})
\]
\endgroup
   \end{minipage}
  \caption{Canonical source program from a choreography.}
  \label{fig:choreography-to-source}
\end{figure}

Instead of committing to a specific algorithm,
we give validity criteria for the output of protocol synthesis,
which generalizes our results to and beyond prior
work~\cite{zcmz03,viaduct-pldi21,efficient-mpc}.
Because a source program can be realized as many different choreographies,
protocol synthesis cannot be modeled as a function from source programs to
choreographies.
Instead, we capture a valid protocol synthesis as a
mapping from choreographies to source programs.

\begin{definition}[Valid Protocol Synthesis]
  \label{def:valid-protocol-synthesis}
  Choreography \( \s' \) is a valid result of protocol synthesis on
  source program \( \s \) if
  \( \source{\s'} = \s \),
  \( \ssecure[\emptylist]{\s'} \),
  and
  \( \ssynched[\sctx]{\s'} \)
  for some \sctx.
  
\end{definition}

\Cref{fig:choreography-to-source} defines the function $\source{\cdot}$,
which maps a choreography to its canonical source program
by removing communication and selection statements
and replacing all host annotations with \idealprot
(except those associated with \kinput and \koutput).
The judgment $\ssecure{\s}$ denotes that choreography \s has secure information
flows, and $\ssynched{\s}$ denotes \s is well-synchronized.
We define these judgments next.

\subsubsection{Information-Flow Type System}
\label{sec:if-checking}

\begin{figure*}
  \begin{minipage}{\linewidth}
    \begin{judgments}
  \judgment{\aesecure{\aexp}{\lbl}}
  \and
  \judgment{\esecure{\e}{\lbl}}
\end{judgments}
\begin{mathpar}
\inferrule
    [\named{\lbl-Value}{aesecure:value}]
    { }
    {
      \aesecure{\val}{\lbl}
    }

\inferrule
    [\named{\lbl-Variable}{aesecure:variable}]
    {
      \lbl' \flowsto \lbl
    }
    {
      \aesecure[\ctx \munion \csing{\x}{\h}{\lbl'}]{\x}{\lbl}
    }

\inferrule
    [\named{\lbl-Operator}{esecure:operator}]
    {
      \forall i \centerdot
      \aesecure{\aexp_i}{\lbl}
    }
    {
      \esecure{\eapplyop{\f}{\manyargs{\aexp}{n}}}{\lbl}
    }

\inferrule
    [\named{\lbl-Declassify}{esecure:declassify}]
    {
      \aesecure{\aexp}{\fromlbl}
      \\
      \integrity{\fromlbl} = \integrity{\tolbl}
      \\\\
      \uncompromised{\fromlbl}
      \\
      \uncompromised{\tolbl}
      \\
      \tolbl \flowsto \lbl
    }
    {
      \esecure{\edeclassify{\aexp}{\fromlbl}{\tolbl}}{\lbl}
    }

\inferrule
    [\named{\lbl-Endorse}{esecure:endorse}]
    {
      \aesecure{\aexp}{\fromlbl}
      \\
      \confidentiality{\fromlbl} = \confidentiality{\tolbl}
      \\\\
      \uncompromised{\fromlbl}
      \\
      \uncompromised{\tolbl}
      \\
      \tolbl \flowsto \lbl
    }
    {
      \esecure{\eendorse{\aexp}{\fromlbl}{\tolbl}}{\lbl}
    }

\inferrule
    [\named{\lbl-Input}{esecure:input}]
    {
      \labelof{\h} \flowsto \lbl
    }
    {
      \esecure{\einput{\h}}{\lbl}
    }

\inferrule
    [\named{\lbl-Output}{esecure:output}]
    {
      \aesecure{\aexp}{\labelof{\h}}
    }
    {
      \esecure{\eoutput{\aexp}{\h}}{\lbl}
    }

\inferrule
    [\named{\lbl-Receive}{esecure:receive}]
    {
      \integrity{\labelof{\h'}} \flowsto \lbl
    }
    {
      \esecure{\ereceive{\h'}}{\lbl}
    }

\inferrule
    [\named{\lbl-Send}{esecure:send}]
    {
      \aesecure{\aexp}{\confidentiality{\labelof{\h'}}}
    }
    {
      \esecure{\esend{\aexp}{\h'}}{\lbl}
    }
\end{mathpar}
 
    \bigskip
    \begin{judgments}
  \judgment{\ssecure{\s}}
\end{judgments}
\begin{mathpar}
\inferrule
    [\named{\lbl-Let}{ssecure:let}]
    {
      \esecure{\e}{\lbl}
      \\
      \labelof{\h} \actsfor \lbl
      \\\\
      \ssecure[\ctx \munion \csing{\x}{\h}{\lbl}]{\s}
    }
    {
      \ssecure{\sleth{\x}{\h}{\e}{\s}}
    }

\inferrule
    [\named{\lbl-Communicate}{ssecure:move}]
    {
      \esecure{\aexp}{\lbl}[\h_1]
      \\
      \labelof{\h_2} \actsfor \lbl
      \\\\
      \ssecure[\ctx \munion \csing{\x}{\h_2}{\lbl}]{\s}
    }
    {
      \ssecure{\smove{\h_1}{\aexp}{\h_2}{\x}{\s}}
    }

\inferrule
    [\named{\lbl-Select}{ssecure:select}]
    {
      \integrity{\labelof{\h_1}} \flowsto \integrity{\labelof{\h_2}}
      \\
      \ssecure{\s}
    }
    {
      \ssecure{\sselect{\h_1}{\val}{\h_2}{\s}}
    }

\inferrule
    [\named{\lbl-If}{ssecure:if}]
    {
      \aesecure{\aexp}{\ifbottom}[\h]
      \\\\
      \ssecure{\s_1}
      \\
      \ssecure{\s_2}
    }
    {
      \ssecure{\sifh{\aexp}{\h}{\s_1}{\s_2}}
    }

\inferrule
    [\named{\lbl-Skip}{ssecure:skip}]
    { }
    {
      \ssecure{\sskip}
    }
\end{mathpar}
   \end{minipage}
  \caption{Information-flow typing rules for expressions and
  statements in choreographies.}
  \label{fig:if-checking}
\end{figure*}

First, we give a type system for choreographies based on information-flow
control~\cite{zznm02,zcmz03,swift07,viaduct-pldi21} which validates that
hosts have enough authority to execute their assigned statements.

\Cref{fig:if-checking} gives the typing rules.
A label context \ctx maps a variable to its host and label, as a
set of bindings $\csing{\x}{\h}{\lbl}$.
The judgment $\esecure{\e}{\lbl}$,
means that \e at host \h has label \lbl in the context \ctx.
\Cref{aesecure:variable} ensures hosts only use variables they own.
\Cref{esecure:declassify,esecure:endorse} enforce nonmalleable information
flow control (NMIFC)~\cite{nmifc} by requiring source and target labels
to be uncompromised~\cite{zsm19, nmifc}.
NMIFC requires declassified data to be trusted, enforcing \emph{robust declassification},
and endorsed data to be public, enforcing \emph{transparent endorsement}.
These restrictions prevent the adversary from exploiting downgrades.
Enforcing NMIFC is crucial for our simulation result, which we discuss in
\cref{sec:correctness-of-ideal-execution}.

In choreographies, \kreceive/\ksend expressions model communication with
malicious hosts.
Choreographies exclude code for malicious hosts, which
exhibit arbitrary behavior;
thus, labels for \kreceive/\ksend expressions must be approximated.

\Cref{esecure:receive} ensures data coming from malicious hosts is considered
untrusted; it treats the data as fully public since we do not care about
preserving the confidentiality of malicious hosts.
\Cref{esecure:send} ensures secret data is not sent to malicious hosts;
it ignores integrity since malicious hosts are untrusted.

Statement checking rules have the form $\ssecure{\s}$;
they are largely standard~\cite{sm-jsac},
but do not track program counter labels since we require programs to only
branch on public, trusted values.
\Cref{ssecure:let,ssecure:move} check that the host storing a variable has
enough authority to do so.

This is the key condition governing secure host selection and prevents,
for example, \bob's secret data being placed on \alice,
or high-integrity data being placed on an untrusted host.
\Cref{ssecure:select} ensures that if host $\h_1$ informs $\h_2$ of a
branch being taken, then $\h_1$ has at least as much integrity as $\h_2$.
So malicious hosts cannot influence control flow on nonmalicious hosts.
Finally, \cref{ssecure:if} requires control flow to be public and trusted.

\subsubsection{Synchronization Checking}
\label{sec:synchronization}

\begin{figure*}
  \begin{minipage}{\linewidth}
    \begin{rules} {\ssynched{\s}}
\inferrule
    [\named{Sync-External}{ssynched:external}]
    {
      \external{\e}
      \\
      \ssynched[\hreset{\h}]{\s}
      \\
      \outputting{\e} \implies \hissynched{\h}
    }
    {
      \ssynched{\sleth{\x}{\h}{\e}{\s}}
    }

\inferrule
    [\named{Sync-Internal}{ssynched:internal}]
    {
      \internal{\e}
      \\
      \ssynched{\s}
    }
    {
      \ssynched{\sleth{\x}{\h}{\e}{\s}}
    }

\inferrule
    [\named{Sync-Communicate}{ssynched:move}]
    {
      \ssynched[\hsync{\h_1}{\h_2}]{\s}
    }
    {
      \ssynched{\smove{\h_1}{\aexp}{\h_2}{\x}{\s}}
    }

\inferrule
    [\named{Sync-Select}{ssynched:selection}]
    {
      \ssynched[\hsync{\h_1}{\h_2}]{\s}
    }
    {
      \ssynched{\sselect{\h_1}{\val}{\h_2}{\s}}
    }

\inferrule
    [\named{Sync-If}{ssynched:if}]
    {
      \ssynched{\s_1}
      \\
      \ssynched{\s_2}
    }
    {
      \ssynched{\sifh{\aexp}{\h}{\s_1}{\s_2}}
    }

\inferrule
    [\named{Sync-Skip}{ssynched:skip}]
    { }
    {
      \ssynched{\sskip}
    }
\end{rules}

\bigskip
\begin{judgments}
  \judgment{\hissynched{\h}}
  \and
  \judgment{\hreset{\h} = \sctx'}
  \and
  \judgment{\hsync{\h_1}{\h_2} = \sctx'}
\end{judgments}
\begin{mathpar}
  \hissynched{\h}
  =
  \forall \h' \centerdot
    \sctx(\h', \h) \flowsto \labelof{\h'} \vee \labelof{\h}

  \hreset{\h}
  =
  \mupdate{\h, \h}{\labelof{\h}}
    {\mupdate{\h, *}{\weakest}{\sctx}}

  \hsync{\h_1}{\h_2}
  =
  \mupdate{*, \h_2}
    {\sctx(*, \h_2) \wedge (\sctx(*, \h_1) \vee \labelof{\h_2})}
    {\sctx}
\end{mathpar}
   \end{minipage}
  \caption{Checking that a concurrent choreography has sequential behavior.}
  \label{fig:synchronization-checking}
\end{figure*}

Next, we define a novel \emph{synchronization} judgment, $\ssynched{\s}$,
which guarantees that all external actions in $\s$ happen in sequential
program order.
For example, any $\kendorse$ statement that happens after a $\kdeclassify$
must logically \emph{depend} on the $\kdeclassify$. Since these statements
may be run on different hosts, the $\kdeclassify$ could happen before the
$\kendorse$, violating program order. To prevent this, we require that
the host running the $\kendorse$ \emph{synchronizes} with the host running
the $\kdeclassify$.

Synchronization becomes more complex with \emph{corruption}.
For example, if \alice and \bob synchronize through another host \h
($\alice \communicate \h \communicate \bob$)
and \h is malicious, \h might give \bob the go-ahead before confirming with
\alice.
We use integrity labels to ensure synchronization even under corruption.

\Cref{fig:synchronization-checking} defines the synchronization-checking
judgment $\ssynched{\s}$.
Intuitively, a choreography is well-synchronized when for any
\emph{external} (input or output) expression \e,
a high-integrity communication path exists from \e to all
\emph{output} expressions following \e in the program order.\footnote{Input expressions are \kinput and \kendorse;
output expressions are \koutput and \kdeclassify.}
Integrity of a communication path
$\h_1 \communicate \cdots \communicate \h_n$
is determined by the hosts in the path:
\[
 \labelof{\h_1 \communicate \cdots \communicate \h_n}
 =
 \integrity{\labelof{\h_1}}
 \vee \dots \vee
 \integrity{\labelof{\h_n}}
\]
Hosts can be malicious, so each host on the path weakens integrity,
which is captured by disjunction ($\vee$).
Multiple paths between the same hosts increase integrity,
which we capture by taking the conjunction ($\wedge$) of path labels:
\[
  \labelof{\paths{\h_1}{\h_2}}
  =
  \bigwedge_{\cpath \in \paths{\h_1}{\h_2}}{
    \labelof{\cpath}
  }
\]
We track the integrity of paths using the context \sctx, which maps pairs of hosts
$\sctx(\h_1, \h_2)$ to the integrity label $\labelof{\paths{\h_1}{\h_2}}$.

\Cref{ssynched:external} checks a \klet statement that executes an external
expression \e on \h.
The continuation is checked under a context where the label of all paths
\emph{from} \h to any other host are set to $\weakest$
(this corresponds to removing the paths),
since these hosts now need to synchronize with \h.
Also, if \e is an output expression, \h must be synchronized
with all hosts through the following condition, which ensures that if
neither $\h_1$ nor $\h_2$ is malicious, a communication path exists from
$\h_1$ to $\h_2$ that could not have been influenced by the adversary:
\begin{equation}
  \label{def:is-synched}
  \labelof{\paths{\h_1}{\h_2}}
  \flowsto
  \labelof{\h_1} \vee \labelof{\h_2}
\end{equation}

\Cref{ssynched:move,ssynched:selection} update \sctx using
$\hsync{\h_1}{\h_2}$.
The function captures that a path from some $\h$ to $\h_1$ implies
there is a path from $\h$ to $\h_2$ that goes through $\h_1$.
Further, all existing paths are still valid.

\subsection{Modeling Malicious Corruption}
\label{sec:modeling-dishonesty}

\begin{figure}
\small
  \begingroup \newcommand{\corruptH}[1]{{\corrupt{#1}}}\newcommand{\ow}{\text{o/w}}\renewcommand{\nonmalicious}[1]{\istrusted{#1}}
\begin{judgments}
  \judgment{\corrupt{\s} = \s'}
\end{judgments}
\begin{function}{\corruptH}
\case{\sleth{\x}{\h}{\e}{\s}}
  \begin{cases}
    \sleth{\x}{\h}{\e}{\corruptH{\s}}
      & \nonmalicious{\h}
    \\
    \corruptH{\s}
      & \ow
  \end{cases}

\case{\smove{\h_1}{\aexp}{\h_2}{\x}{\s}}
  \begin{cases}
    \smove{\h_1}{\aexp}{\h_2}{\x}{\corruptH{\s}}
      & \nonmalicious{\h_1, \h_2}
    \\
    \sleth{\_}{\h_1}{\esend{\aexp}{\h_2}}{\corruptH{\s}}
      & \nonmalicious{\h_1}
    \\
    \sleth{\x}{\h_2}{\ereceive{\h_1}}{\corruptH{\s}}
      & \nonmalicious{\h_2}
    \\
    \corruptH{\s}
      & \ow
  \end{cases}

\case{\sselect{\h_1}{\val}{\h_2}{\s}}
  \begin{cases}
    \sselect{\h_1}{\val}{\h_2}{\corruptH{\s}}
      & \nonmalicious{\h_1, \h_2}
    \\
    \sleth{\_}{\h_1}{\esend{\val}{\h_2}}{\corruptH{\s}}
      & \nonmalicious{\h_1}
    \\
    \corruptH{\s}
      & \ow
  \end{cases}

\case{\sifh{\aexp}{\h}{\s_1}{\s_2}}
  \begin{cases}
    \sifh{\aexp}{\h}{\corruptH{\s_1}}{\corruptH{\s_2}}
      & \nonmalicious{\h}
    \\
    \bot
      & \ow
  \end{cases}

\case{\sskip}
  \sskip
\end{function}

\begin{rules}{ \corrupt{\process} = \process' }
  \corrupt{\proc{\h*}{\buffer}{\s}} =
    \proc{\setdef{\h \in \h*}{\nonmalicious{\h}}}{\buffer}{\corrupt{\s}}
\end{rules}
\endgroup
   \caption{Modeling malicious corruption.
    We write \( \istrusted{\h} \) for \( \nonmalicious{\h} \).}
  \label{fig:malicious-corruption}
\end{figure}

Malicious hosts are fully controlled by the adversary.
Following UC~\cite{UC}, we entirely remove processes that correspond to
malicious hosts in hybrid distributed programs,
and allow the adversary to forge arbitrary messages in their stead.
This is reflected in choreographies by rewriting them to elide statements
that involve malicious hosts.

\Cref{fig:malicious-corruption} defines the \emph{corruption} $\corrupt{\s}$
of a choreography \s.
The operation considers each statement in turn.
If \emph{all} hosts involved in a statement are nonmalicious,
the statement stays as is.
If \emph{all} hosts involved in a statement are malicious,
the statement is removed entirely.
Otherwise, only \emph{some} involved hosts are malicious,
and we rewrite the statement.
Communication statements become either a \ksend or a \kreceive,
depending on whether the sending or the receiving host is nonmalicious.
Selection statements are similar,
except we cannot have a malicious sending host and a nonmalicious receiving
host (\cref{ssecure:select}).
Similarly, \kif statements cannot be at a malicious host since we require
trusted control flow (\cref{ssecure:if}).
   \nosectionappendix
  \begin{toappendix}
    \appendixfor{sec:operational-semantics}
    \label{sec:operational-semantics-details}
    \begin{figure*}
  \newcommand{\figskip}{\vspace*{1.5em}}\begin{subfigure}{\linewidth}{Ideal stepping rules for expressions and statements.}
    \begin{judgments}
  \judgment{\h \says \e \idealstepsto{\action} \val}
\end{judgments}
\begin{mathpar}
\inferrule
    [\named{\e-Operator}{expression:operator}]
    {
      \val = \evaluate(\f, \manyargs{\val}{n})
    }
    {
      \h \says
      \eapplyop{\f}{\manyargs{\val}{n}}
      \idealstepsto{\aglobalinternal{\h}}
      \val
    }

\inferrule
    [\named{\e-Declassify}{expression:declassify}]
    {
      \issecret{\fromlbl}
      \\
      \ispublic{\tolbl}
    }
    {
      \h \says
      \edeclassify{\val}{\fromlbl}{\tolbl}
      \idealstepsto{\aleak{\h}{\val}}
      \val
    }

\inferrule
    [\named{\e-Declassify-Skip}{expression:declassify-skip}]
    {
      \ispublic{\fromlbl}
      \vee
      \issecret{\tolbl}
    }
    {
      \h \says
      \edeclassify{\val}{\fromlbl}{\tolbl}
      \idealstepsto{\aglobalinternal{\h}}
      \val
    }

\inferrule
    [\named{\e-Endorse}{expression:endorse}]
    {
      \isuntrusted{\fromlbl}
      \\
      \istrusted{\tolbl}
    }
    {
      \h \says
      \eendorse{\val}{\fromlbl}{\tolbl}
      \idealstepsto{\amaul{\h}{\val'}}
      \val'
    }

\inferrule
    [\named{\e-Endorse-Skip}{expression:endorse-skip}]
    {
      \istrusted{\fromlbl}
      \vee
      \isuntrusted{\tolbl}
    }
    {
      \h \says
      \eendorse{\val}{\fromlbl}{\tolbl}
      \idealstepsto{\aglobalinternal{\h}}
      \val
    }

\inferrule
    [\named{\e-Input}{expression:input}]
    {
      \nonmalicious{\h}
    }
    {
      \h \says
      \einput{\h}
      \idealstepsto{\aglobalreceive{\envprot}{\h}{\val}}
      \val
    }

\inferrule
    [\named{\e-Input-Malicious}{expression:input-malicious}]
    {
      \malicious{\h}
    }
    {
      \h \says
      \einput{\h}
      \idealstepsto{\aglobalinternal{\h}}
      \vunit
    }

\inferrule
    [\named{\e-Output}{expression:output}]
    {
      \nonmalicious{\h}
    }
    {
      \h \says
      \eoutput{\val}{\h}
      \idealstepsto{\aglobalsend{\h}{\envprot}{\val}}
      \vunit
    }

\inferrule
    [\named{\e-Output-Malicious}{expression:output-malicious}]
    {
      \malicious{\h}
    }
    {
      \h \says
      \eoutput{\val}{\h}
      \idealstepsto{\aglobalinternal{\h}}
      \vunit
    }

\inferrule
    [\named{\e-Receive}{expression:receive}]
    {}
    {
      \h \says
      \ereceive{\h'}
      \idealstepsto{\aglobalinternal{\h}}
      \vunit
    }

\inferrule
    [\named{\e-Send}{expression:send}]
    {}
    {
      \h \says
      \esend{\val}{\h'}
      \idealstepsto{\aglobalinternal{\h}}
      \vunit
    }
\end{mathpar}
 
    \bigskip
    \begin{judgments}
  \judgment{\s \idealstepsto{\action} \s'}
\end{judgments}
\begin{mathpar}
\inferrule
    [\named{\s-Let}{statement:let}]
    {
      \h \says \e \idealstepsto{\action} \val
    }
    {
      \sleth{\x}{\h}{\e}{\s}
      \idealstepsto{\action}
      \subst{\x}{\val}{\s}
    }

\inferrule
    [\named{\s-Communicate}{statement:move}]
    {}
    {
      \smove{\h_1}{\val}{\h_2}{\x}{\s}
      \idealstepsto{\aglobalinternal{\h_1}}
      \subst{\x}{\val}{\s}
    }

\inferrule
    [\named{\s-Select}{statement:select}]
    {}
    {
      \sselect{\h_1}{\val}{\h_2}{\s}
      \idealstepsto{\aglobalinternal{\h_1}}
      \s
    }

\inferrule
    [\named{\s-If}{statement:if}]
    {
      i =
      \text{ if } \val \neq \vfalse
      \text{ then }
      1
      \text{ else }
      2
    }
    {
      \sifh{\val}{\h}{\s_1}{\s_2}
      \idealstepsto{\aglobalinternal{\h}}
      \s_i
    }
\end{mathpar}
 \label{fig:ideal-stepping}
  \end{subfigure}

  \figskip
  \begin{subfigure}{\linewidth}{
    Real stepping rules for expressions and statements.
    These override the rules in \cref{fig:ideal-stepping}.
  }
    \begin{judgments}
  \judgment{\h \says \e \realstepsto{\action} \val}
\end{judgments}
\begin{mathpar}
\inferrule
    [\named{\e-Declassify-Real}{expression:declassify-real}]
    {
      \issecret{\fromlbl}
      \\
      \ispublic{\tolbl}
    }
    {
      \h \says
      \edeclassify{\val}{\fromlbl}{\tolbl}
      \realstepsto{\aglobalinternal{\h}}
      \val
    }

\inferrule
    [\named{\e-Endorse-Real}{expression:endorse-real}]
    {
      \isuntrusted{\fromlbl}
      \\
      \istrusted{\tolbl}
    }
    {
      \h \says
      \eendorse{\val}{\fromlbl}{\tolbl}
      \realstepsto{\aglobalinternal{\h}}
      \val
    }

\inferrule
    [\named{\e-Receive-Real}{expression:receive-real}]
    {}
    {
      \h \says
      \ereceive{\h'}
      \realstepsto{\aglobalreceive{\h'}{\h}{\val}}
      \val
    }

\inferrule
    [\named{\e-Send-Real}{expression:send-real}]
    {}
    {
      \h \says
      \esend{\val}{\h'}
      \realstepsto{\aglobalsend{\h}{\h'}{\val}}
      \vunit
    }
\end{mathpar}
 
    \bigskip
    \begin{judgments}
  \judgment{\s \realstepsto{\action} \s'}
\end{judgments}
\begin{mathpar}
\inferrule
    [\named{\s-Communicate-Real}{statement:move-real}]
    {}
    {
      \smove{\h_1}{\val}{\h_2}{\x}{\s}
      \realstepsto{\aglobalsend{\h_1}{\h_2}{\val}}
      \subst{\x}{\val}{\s}
    }

\inferrule
    [\named{\s-Select-Real}{statement:select-real}]
    {}
    {
      \sselect{\h_1}{\val}{\h_2}{\s}
      \realstepsto{\aglobalsend{\h_1}{\h_2}{\val}}
      \s
    }

\inferrule
    [\named{\s-Case}{statement:case}]
    {}
    {
      \scase{\h_1}{\h_2}{\set{\val \mapsto \s, \dots}}
      \realstepsto{\aglobalreceive{\h_1}{\h_2}{\val}}
      \s
    }
\end{mathpar}
 \label{fig:real-stepping}
  \end{subfigure}

  \figskip
  \begin{subfigure}{\linewidth}{Concurrent lifting of ideal/real stepping rules.}
    \begin{judgments}
  \judgment{\s \polystepsto!{\action} \s'}
\end{judgments}
\begin{mathpar}
\inferrule
    [\named{\s-Sequential}{statement:sequential}]
    {
      \s \polystepsto{\action} \s'
    }
    {
      \s
      \polystepsto!{\action}
      \s'
    }

\inferrule
    [\named{\s-Delay}{statement:delay}]
    {
      \s \polystepsto!{\action} \s'
      \\
      \actor{\action} \notin \hosts{\ectx}
    }
    {
      \ectxof{\s}
      \polystepsto!{\action}
      \ectxof{\s'}
    }

\inferrule
    [\named{\s-If-Delay}{statement:if-delay}]
    {
      \s_1 \polystepsto!{\action} \s_1'
      \\
      \s_2 \polystepsto!{\action} \s_2'
      \\
      \actor{\action} \neq \h
    }
    {
      \sifh{\aexp}{\h}{\s_1}{\s_2}
      \polystepsto!{\action}
      \sifh{\aexp}{\h}{\s_1'}{\s_2'}
    }
\end{mathpar}
 
    \bigskip
    \begin{judgments}
      \judgment{\ectx}
      \and
      \judgment{\hosts{\ectx} = \h*}
    \end{judgments}
    \begin{syntax}
  \category[Evaluation Contexts]{\ectx}
  \alternative{\sleth{\x}{\h}{\e}{\hole}}
  \alternative{\smove{\h_1}{\aexp}{\h_2}{\x}{\hole}}
  \alternative{\sselect{\h_1}{\val}{\h_2}{\hole}}
\end{syntax}
 \begin{mathpar}
  \hosts{\sleth{\x}{\h}{\e}{\hole}}
  =
  \set{\h}

  \hosts{\smove{\h_1}{\aexp}{\h_2}{\x}{\hole}}
  =
  \set{\h_1, \h_2}

  \hosts{\sselect{\h_1}{\val}{\h_2}{\hole}}
  =
  \set{\h_1, \h_2}
\end{mathpar}
 \label{fig:concurrent-stepping}
  \end{subfigure}
  \caption{Ideal, real, and concurrent stepping rules for expressions
    and statements.}
  \label{fig:choreography-semantics}
\end{figure*}

\begin{figure*}
  \begin{minipage}{\linewidth}
    \begin{judgments}
  \judgment{\buffer \stepsto{\action} \buffer'}
\end{judgments}
\begin{mathpar}
\inferrule
    [\named{\buffer-Input}{buffer:input}]
    {}
    {
      \mupdate{\ch_1 \ch_2}{\val*}{\buffer}
      \stepsto{\aglobalreceive{\ch_1}{\ch_2}{\val}}
      \mupdate{\ch_1 \ch_2}{\val* \append \val}{\buffer}
    }

\inferrule
    [\named{\buffer-Output}{buffer:output}]
    {}
    {
      \mupdate{\ch_1 \ch_2}{\val \append \val*}{\buffer}
      \stepsto{\aglobalsend{\ch_1}{\ch_2}{\val}}
      \mupdate{\ch_1 \ch_2}{\val*}{\buffer}
    }
\end{mathpar}
 
    \bigskip
    \begin{judgments}
  \judgment{\process \stepsto{\action} \process'}
\end{judgments}
\begin{mathpar}
\inferrule
    [\named{\process-Input}{process:input}]
    {
      \ch_1 \not\in \h*
      \\
      \ch_2 \in \h*
      \\\\
      \buffer
      \stepsto{\aglobalreceive{\ch_1}{\ch_2}{\val}}
      \buffer'
    }
    {
      \proc{\h*}{\buffer}{\s}
      \stepsto{\aglobalreceive{\ch_1}{\ch_2}{\val}}
      \proc{\h*}{\buffer'}{\s}
    }

\inferrule
    [\named{\process-Discard}{process:discard}]
    {
      \ch_1 \in \h*
      \vee
      \ch_2 \not\in \h*
    }
    {
      \proc{\h*}{\buffer}{\s}
      \stepsto{\aglobalreceive{\ch_1}{\ch_2}{\val}}
      \proc{\h*}{\buffer}{\s}
    }

\inferrule
    [\named{\process-Internal}{process:internal}]
    {
      \buffer \stepsto{\aglobalsend{\ch_1}{\ch_2}{\val}} \buffer'
      \\
      \s \stepsto{\aglobalreceive{\ch_1}{\ch_2}{\val}} \s'
    }
    {
      \proc{\h*}{\buffer}{\s}
      \stepsto{\aglobalinternal{\ch_2}}
      \proc{\h*}{\buffer'}{\s'}
    }

\inferrule
    [\named{\process-Output}{process:output}]
    {
      \s \stepsto{\aoutput{\msg}} \s'
    }
    {
      \proc{\h*}{\buffer}{\s}
      \stepsto{\aoutput{\msg}}
      \proc{\h*}{\buffer}{\s'}
    }
\end{mathpar}
   \end{minipage}
  \caption{Stepping rules for buffers and processes.}
  \label{fig:process-stepping}
\end{figure*}

\Cref{fig:choreography-semantics} gives the full set ideal, real, and concurrent
stepping rules for expressions and statements.
\Cref{fig:process-stepping} gives buffer and process stepping rules.
Concurrent stepping rules refer to \emph{evaluation contexts}---statements
containing a single hole---and the function $\hosts{\cdot}$,
which returns the set of hosts that appear in an evaluation context.
\Cref{statement:delay} allows skipping over \klet, communication, and selection
statements to step a statement in the middle of a program.
\Cref{statement:if-delay} allows stepping the body of an \kif statement without
resolving the conditional as long as both branches step with the same action.
Both rules require the actor of the performed action to be different from the
hosts of the statements being skipped over,
matching the behavior of target programs where code running on a single
host is single-threaded.
   \end{toappendix}

  \nosectionappendix
  \begin{toappendix}
    \section{Properties of the Choreography Language}

\subsection{Typing and Synchronization}

Typing ensures robust declassification and transparent endorsement,
which guarantee that declassified values are always trusted,
and that endorsed values are always public.

\begin{lemma}[Robust Declassification]
  \label{thm:robust-declassification}
  If
  \(\esecure{\edeclassify{\aexp}{\fromlbl}{\tolbl}}{\lbl}\),
  \(\issecret{\fromlbl}\),
  and
  \(\ispublic{\tolbl}\),
  then
  \(\istrusted{\fromlbl}\).
\end{lemma}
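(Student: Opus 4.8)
The plan is to prove the lemma by inverting the information-flow typing derivation for the \kdeclassify expression and then invoking the earlier theorem that no uncompromised label is simultaneously secret and untrusted.

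First I would observe that \cref{esecure:declassify} is the only typing rule whose conclusion has the shape $\esecure{\edeclassify{\aexp}{\fromlbl}{\tolbl}}{\lbl}$, so inversion of the hypothesis is unambiguous. Applying it yields, among its premises, the judgment $\uncompromised{\fromlbl}$: the source label of any well-typed declassification is required to be uncompromised (at least as trusted as it is secret). This is the only fact I need from the typing assumption. The remaining premises ($\aesecure{\aexp}{\fromlbl}$, the integrity-preservation constraint $\integrity{\fromlbl} = \integrity{\tolbl}$, $\uncompromised{\tolbl}$, and $\tolbl \flowsto \lbl$) are not needed in this direction; in particular the hypothesis $\ispublic{\tolbl}$ plays no role here.

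Next I would apply the preceding theorem (under a valid attack, $\uncompromised{\lbl}$ implies $\lbl \notin \secret \cap \untrusted$) to the label $\fromlbl$, discharging its premise with the just-obtained $\uncompromised{\fromlbl}$. This gives $\fromlbl \notin \secret \cap \untrusted$, i.e., $\fromlbl$ is public or trusted. Combining this with the remaining hypothesis $\issecret{\fromlbl}$—which says precisely $\fromlbl \notin \public$—rules out the public disjunct and forces $\fromlbl \notin \untrusted$, which is exactly the goal $\istrusted{\fromlbl}$.

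I expect no substantial obstacle: the real content already lives in the prior theorem, whose proof combines uncompromisedness ($\ic \actsfor \cc$), the upward-closure of the public principal set, and the valid-attack inclusion $\q* \subseteq \p*$. The only points requiring care are that the inversion step is licensed by the syntactic uniqueness of the declassification rule, and that the disjunction ``public or trusted'' is correctly discharged against the secrecy hypothesis so as to conclude \emph{trusted} rather than merely reproving ``public or trusted.''
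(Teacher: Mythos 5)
Your proof is correct and is exactly the intended argument: the paper states \cref{thm:robust-declassification} without an explicit proof, and inverting the syntax-directed rule \cref{esecure:declassify} to extract $\uncompromised{\fromlbl}$ and then applying the valid-attack theorem (no uncompromised label is both secret and untrusted) against the hypothesis $\issecret{\fromlbl}$ is precisely how the result follows. Your observations that $\ispublic{\tolbl}$ is not needed and that the disjunction must be discharged against secrecy to land on \emph{trusted} are both accurate.
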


\begin{lemma}[Transparent Endorsement]
  \label{thm:transparent-endorsement}
  If
  \(\esecure{\eendorse{\aexp}{\fromlbl}{\tolbl}}{\lbl}\),
  \(\isuntrusted{\fromlbl}\),
  and
  \(\istrusted{\tolbl}\),
  then
  \(\ispublic{\fromlbl}\).
\end{lemma}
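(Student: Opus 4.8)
The plan is to recognize that \Cref{thm:transparent-endorsement} is essentially a direct reading of the earlier result that, under a valid attack, an uncompromised label cannot be simultaneously secret and untrusted ($\uncompromised{\lbl} \implies \lbl \notin \secret \cap \untrusted$). Almost all the work is in extracting the right premise from the typing derivation; the lattice reasoning has already been done once and for all in that theorem.

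First I would invert the typing judgment $\esecure{\eendorse{\aexp}{\fromlbl}{\tolbl}}{\lbl}$. Since \Cref{esecure:endorse} is the only rule whose conclusion matches an \kendorse expression, its premises must hold; in particular $\uncompromised{\fromlbl}$. I would also recall that the whole development is parameterized over a valid attack (\Cref{def:valid-attack}), so the earlier theorem applies to an arbitrary label, and I would instantiate it with $\fromlbl$ to conclude $\fromlbl \notin \secret \cap \untrusted$.

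It then remains to combine this with the hypothesis $\isuntrusted{\fromlbl}$, i.e.\ $\fromlbl \in \untrusted$. Since $\fromlbl$ lies in \untrusted but not in $\secret \cap \untrusted$, it cannot lie in \secret, so $\fromlbl \in \public$, which is exactly $\ispublic{\fromlbl}$. Equivalently, unfolding definitions at the level of principals with $\fromlbl = \pair{\cc}{\ic}$: the hypothesis gives $\ic \in \q*$; validity of the attack gives $\q* \subseteq \p*$, hence $\ic \in \p*$; and since $\uncompromised{\fromlbl}$ states $\ic \actsfor \cc$ and $\p*$ contains every principal weaker than one of its members, we obtain $\cc \in \p*$, i.e.\ $\ispublic{\fromlbl}$. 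Robust Declassification (\Cref{thm:robust-declassification}) follows by the dual reading of the very same theorem: $\uncompromised{\fromlbl}$ together with $\issecret{\fromlbl}$ forces $\istrusted{\fromlbl}$.

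I do not expect a genuine obstacle here. The one point worth flagging is that the remaining premises of the endorse rule---$\confidentiality{\fromlbl} = \confidentiality{\tolbl}$, $\uncompromised{\tolbl}$, and the hypothesis $\istrusted{\tolbl}$---play no role in this argument; they merely witness that the expression performs a genuine integrity upgrade rather than acting as the identity, and can be ignored for this conclusion. The only care needed is confirming that the typing rule is invertible on \kendorse expressions and that we are indeed operating under a valid attack, both of which hold by construction.
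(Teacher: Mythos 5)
Your proposal is correct and matches the paper's intent: the paper states this lemma without an explicit proof precisely because it is an immediate consequence of inverting \cref{esecure:endorse} to obtain $\uncompromised{\fromlbl}$ and then applying the earlier theorem that, under a valid attack, an uncompromised label cannot be both secret and untrusted---whose proof in the paper is exactly the principal-level argument you reproduce. Your observation that the remaining premises of the rule are irrelevant here, and that Robust Declassification is the dual reading of the same theorem, is also accurate.
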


Typing has standard properties.

\begin{definition}[Refinement]
  \label{def:refinement}
  Define
  \begin{itemize}
    \item
      \( \ctx_1 \flowsto \ctx_2 \)
      if
      \( (\csing{\x}{\h}{\lbl_2}) \in \ctx_2 \)
      implies
      \( (\csing{\x}{\h}{\lbl_1}) \in \ctx_1 \)
      for some \( \lbl_1 \) such that
      \( \lbl_1 \flowsto \lbl_2 \).

    \item
      \( \sctx_1 \flowsto \sctx_2 \)
      if
      \( \sctx_1(\h_1 h_2) \flowsto \sctx_2(h_1 h_2) \)
      for all \( h_1, \h_2 \in \h* \).
  \end{itemize}
\end{definition}

\begin{lemma}[Subsumption]
  \label{thm:subsumption}
  We have
  \begin{enumerate}
    \item
      If
      \( \esecure[\ctx]{\e}{\lbl} \),
      \( \ctx' \flowsto \ctx \),
      and
      \( \lbl \flowsto \lbl' \),
      then
      \( \esecure[\ctx']{\e}{\lbl'} \).

    \item
      If
      \( \ssecure[\ctx]{\s} \)
      and
      \( \ctx' \flowsto \ctx \),
      then
      \( \ssecure[\ctx']{\s} \).

    \item
      If
      \( \ssynched[\sctx]{\s} \)
      and
      \( \sctx' \flowsto \sctx \),
      then
      \( \ssynched[\ctx']{\s} \).
  \end{enumerate}
\end{lemma}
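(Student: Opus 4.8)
The plan is to prove each of the three parts by structural induction on the relevant typing (resp.\ synchronization) derivation, using throughout that $\flowsto$ is a partial order on labels---in particular reflexive and transitive, being induced from the bounded distributive lattice of principals. Parts~(1) and~(2) are the routine covariant-weakening arguments familiar from security-typed languages; the real work is in part~(3), where the context $\sctx$ is itself transformed by the rules, so I must show those transformations are monotone with respect to refinement.

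For part~(1) I would induct on the derivation of $\esecure[\ctx]{\e}{\lbl}$. The result label $\lbl$ occurs in the premises only positively, through side conditions of the form ``$\cdots \flowsto \lbl$'' (as in \cref{esecure:declassify,esecure:endorse,esecure:input,esecure:receive} and \cref{aesecure:variable}); in each such case one composes with the hypothesis $\lbl \flowsto \lbl'$ by transitivity to re-establish the premise at $\lbl'$. The context refinement $\ctx' \flowsto \ctx$ is consumed only at \cref{aesecure:variable}: if $\x$ is bound to $\lbl_x$ in $\ctx$ with $\lbl_x \flowsto \lbl$, then by \cref{def:refinement} the context $\ctx'$ binds $\x$ to some $\lbl_x'$ with $\lbl_x' \flowsto \lbl_x$, and the chain $\lbl_x' \flowsto \lbl_x \flowsto \lbl \flowsto \lbl'$ lets me re-apply the variable rule. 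The remaining rules (operator, output, send) do not constrain $\lbl$ and simply thread $\ctx'$ through their subderivations via the induction hypothesis.

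Part~(2) is an induction on $\ssecure[\ctx]{\s}$ that defers to part~(1) for every expression premise, instantiated at the same label under $\ctx' \flowsto \ctx$. The only subtlety arises in \cref{ssecure:let} and \cref{ssecure:move}, whose continuations are checked in an extended context $\ctx \munion \csing{\x}{\h}{\lbl}$: there I apply the induction hypothesis using $\ctx' \munion \csing{\x}{\h}{\lbl} \flowsto \ctx \munion \csing{\x}{\h}{\lbl}$, which holds because the newly added binding refines itself ($\lbl \flowsto \lbl$) while the rest follows from $\ctx' \flowsto \ctx$. The host-authority side conditions such as $\labelof{\h} \actsfor \lbl$, together with the select, if, and skip cases, are independent of the context and carry over unchanged.

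Part~(3) is the crux. Inducting on $\ssynched[\sctx]{\s}$, I need two monotonicity facts. First, the output side condition of \cref{ssynched:external}, namely $\sctx(\h', \h) \flowsto \labelof{\h'} \vee \labelof{\h}$, is preserved under refinement: from $\sctx'(\h', \h) \flowsto \sctx(\h', \h)$ and the assumed bound, transitivity delivers the same bound for $\sctx'$. Second---and this is the main obstacle---the context transformers $\hreset{\h}$ and $\hsync{\h_1}{\h_2}$ must preserve refinement, so that $\sctx' \flowsto \sctx$ implies the updated contexts remain related and the induction hypothesis applies in \cref{ssynched:move,ssynched:selection}. For $\hreset{\h}$ this is immediate: it overwrites the affected entries with the fixed values $\weakest$ and $\labelof{\h}$ (so refinement holds reflexively there) and leaves all other entries untouched. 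For $\hsync{\h_1}{\h_2}$ each rewritten entry has the form $\sctx(\h, \h_2) \wedge (\sctx(\h, \h_1) \vee \labelof{\h_2})$, so I must show this expression is monotone in both of its $\sctx$-arguments with respect to $\flowsto$. This reduces to the observation that the pointwise label operations $\wedge$ and $\vee$ are each monotone for $\flowsto$---confidentiality varies contravariantly and integrity covariantly, and in each component $\wedge$ and $\vee$ respect $\actsfor$---after which monotonicity of the compound expression follows by composition. Establishing this monotonicity of $\hsync{\h_1}{\h_2}$ is where essentially all of the content of \cref{thm:subsumption} resides.
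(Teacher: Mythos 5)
The paper states this lemma without proof, listing it among the ``standard properties'' of the type system, so there is no paper proof to diverge from; your structural-induction argument is the expected one and is correct. You also correctly isolate the only non-routine content---that $\hreset{\h}$ and $\hsync{\h_1}{\h_2}$ are monotone with respect to context refinement, which reduces to monotonicity of the lattice operations under $\flowsto$---with the sole cosmetic quibble that in the operator case of part~(1) the premises \emph{are} at label $\lbl$ and must be widened to $\lbl'$ via the induction hypothesis, not merely rethreaded with $\ctx'$.
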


\begin{lemma}[Substitution]
  \label{thm:substitution}
  Substitution preserves typing:
  \begin{enumerate}
    \item
      If
      \( \esecure[(\ctx \munion \csing{\x}{\h'}{\lbl'})]{\e}{\lbl} \),
      then
      \( \esecure{\subst{\x}{\val}{\e}}{\lbl} \).

    \item
      If
      \( \ssecure[(\ctx \munion \csing{\x}{\h}{\lbl})]{\s} \),
      then
      \( \ssecure[\ctx]{\subst{\x}{\val}{\s}} \).

\end{enumerate}
\end{lemma}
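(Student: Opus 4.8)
The plan is to prove both parts by ordinary structural arguments that bottom out in a substitution lemma for atomic expressions, which is where the only genuine subtlety lives. So I would first establish the atomic version: if $\aesecure[\ctx \munion \csing{\x}{\h'}{\lbl'}]{\aexp}{\lbl}$, then $\aesecure[\ctx]{\subst{\x}{\val}{\aexp}}{\lbl}$. This is a case analysis on $\aexp$. When $\aexp$ is a value, substitution is the identity and \cref{aesecure:value} retypes it at $\lbl$. When $\aexp$ is a variable $\y \neq \x$, the binding for $\y$ comes from $\ctx$, so the same instance of \cref{aesecure:variable} applies with $\ctx$ in place of the augmented context. The interesting case is $\y = \x$: here $\subst{\x}{\val}{\x} = \val$, and although the original derivation only recorded $\lbl' \flowsto \lbl$, the substituted value is typable at exactly $\lbl$ by \cref{aesecure:value}, so the goal holds regardless of $\lbl'$.

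Next I would prove part~1 by case analysis on the rule deriving $\esecure[\ctx \munion \csing{\x}{\h'}{\lbl'}]{\e}{\lbl}$. The expression rules are non-recursive in $\e$: each of \cref{esecure:operator,esecure:declassify,esecure:endorse,esecure:input,esecure:output,esecure:receive,esecure:send} has premises that are either atomic-expression typings---discharged by the atomic lemma above---or side conditions on labels and host authorities that never mention the context's $\x$ binding and therefore carry over verbatim. Since substitution commutes with each syntactic constructor, I reassemble the conclusion using the same rule with $\ctx$.

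Part~2 proceeds by induction on the derivation of $\ssecure[\ctx \munion \csing{\x}{\h}{\lbl}]{\s}$. The \cref{ssecure:skip} case is immediate, and \cref{ssecure:select,ssecure:if} follow from the induction hypotheses on the subderivations together with part~1 applied to their guard and selection expressions (noting substitution leaves host annotations, and the selection's literal $\val$, untouched). The load-bearing cases are \cref{ssecure:let,ssecure:move}, where a fresh binding $\csing{\y}{\h''}{\lbl''}$ is pushed into the context before the continuation is checked. Under the usual freshness convention $\y \neq \x$ (so substitution descends into both the bound expression and the body without capture), I apply part~1 to the expression premise and the induction hypothesis to the body, after rewriting $\ctx \munion \csing{\x}{\h}{\lbl} \munion \csing{\y}{\h''}{\lbl''}$ as $(\ctx \munion \csing{\y}{\h''}{\lbl''}) \munion \csing{\x}{\h}{\lbl}$---these are literally equal contexts, since $\munion$ is union of a set of bindings---so that the $\x$ binding sits in the position the hypothesis expects. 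The authority side conditions such as $\labelof{\h''} \actsfor \lbl''$ are unaffected by substitution.

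The main obstacle is conceptual rather than computational: getting the atomic variable case right. Substitution must preserve typability at the \emph{demanded} label $\lbl$, not merely at the variable's recorded label $\lbl'$, and this works precisely because values type at every label via \cref{aesecure:value}, so the $\flowsto$ constraint introduced by \cref{aesecure:variable} is automatically subsumed. Everything else is bookkeeping---context exchange and the freshness assumption for binders---and requires no appeal beyond the typing rules themselves (with \cref{thm:subsumption} available should any context weakening be needed).
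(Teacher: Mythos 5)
Your proof is correct, and it follows the standard structural-induction argument that the paper implicitly relies on: the paper states this lemma without proof (filed under ``typing has standard properties''), so there is no written proof to diverge from. You correctly isolate the one substantive point---that \cref{aesecure:value} types a value at \emph{any} label, so the $\flowsto$ premise of \cref{aesecure:variable} is absorbed when the variable is replaced---and the remaining bookkeeping (context exchange via set union, freshness at binders) is handled appropriately.
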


A well-typed program remains well typed under execution and all corruption.

\begin{lemma}[Type Preservation]
  \label{thm:ssecure-preservation}
  If
  \( \ssecure{\s} \)
  and
  \( \s \stepsto{\action} \s' \),
  then
  \( \ssecure{\s'} \).
\end{lemma}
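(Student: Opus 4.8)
The plan is to argue by induction on the derivation of \( \s \stepsto{\action} \s' \), inverting the typing derivation \( \ssecure{\s} \) and analyzing the last stepping rule applied. The underlying statement rules (\cref{statement:let,statement:move,statement:select,statement:if}) form the base cases, and the concurrent-lifting rules (\cref{statement:sequential,statement:delay,statement:if-delay}) are handled inductively. Since the concurrent relation is built on top of the base statement relation, this case split covers every way \s can step, whether under ideal or real semantics.

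The two interesting base cases are those that consume a binder. If \( \s = \sleth{\x}{\h}{\e}{\s_0} \) steps to \( \subst{\x}{\val}{\s_0} \) via \cref{statement:let}, inverting \cref{ssecure:let} yields \( \ssecure[\ctx \munion \csing{\x}{\h}{\lbl}]{\s_0} \); applying the Substitution Lemma (\cref{thm:substitution}) discharges the binding and gives \( \ssecure{\subst{\x}{\val}{\s_0}} \). No hypothesis about \val is needed, since values type at any label by \cref{aesecure:value}. The communication case \( \smove{\h_1}{\val}{\h_2}{\x}{\s_0} \stepsto{\action} \subst{\x}{\val}{\s_0} \) is identical, inverting \cref{ssecure:move} and again appealing to \cref{thm:substitution}. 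The remaining base cases are immediate from inversion: a selection step \( \sselect{\h_1}{\val}{\h_2}{\s_0} \stepsto{\action} \s_0 \) is covered because \cref{ssecure:select} already provides \( \ssecure{\s_0} \), and an \kif step \( \sifh{\val}{\h}{\s_1}{\s_2} \stepsto{\action} \s_i \) is covered because \cref{ssecure:if} provides both \( \ssecure{\s_1} \) and \( \ssecure{\s_2} \).

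For the concurrent-lifting cases, \cref{statement:sequential} reduces directly to the base-case analysis. For \cref{statement:delay}, the term has the shape \( \ectxof{\s_0} \) for an evaluation context \ectx (a \klet, communication, or selection frame) with \( \s_0 \stepsto{\action} \s_0' \); I would invert the typing of \( \ectxof{\s_0} \) to recover both the typing of the frame head and \( \ssecure[\ctx']{\s_0} \) for the appropriately extended context \( \ctx' \), apply the induction hypothesis to obtain \( \ssecure[\ctx']{\s_0'} \), and reassemble with the same typing rule, since the frame head and its premises are untouched by the step. The \cref{statement:if-delay} case is analogous but steps both branches: inverting \cref{ssecure:if} gives \( \ssecure{\s_1} \) and \( \ssecure{\s_2} \), the induction hypothesis yields \( \ssecure{\s_1'} \) and \( \ssecure{\s_2'} \), and \cref{ssecure:if} recombines them, with the guard typing \( \aesecure{\aexp}{\ifbottom}[\h] \) carried over unchanged.

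I expect the only mild subtlety to be threading the label context through the delay cases: a frame extends \ctx (e.g., with \( \csing{\x}{\h}{\lbl} \) for a \klet frame), so the induction hypothesis must be invoked at the extended context \( \ctx' \) rather than at \ctx, and the reconstruction must re-use exactly the same extension. Because the frame head---the bound expression or the communicated value---does not itself step, its typing premises carry over verbatim and no weakening is required; should any residual context mismatch arise, the Subsumption Lemma (\cref{thm:subsumption}) closes the gap. Everything else is routine inversion and reassembly.
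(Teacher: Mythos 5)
Your proof is correct and matches the intended argument: the paper states this lemma in the appendix without proof, but the Substitution and Subsumption lemmas it establishes immediately beforehand are precisely the infrastructure your induction uses, so induction on the step with inversion plus \cref{thm:substitution} for the two binding cases, and routine reassembly for the delay rules, is the expected route. The only case you do not mention is \cref{statement:case} under the real semantics, but it is vacuous because the typing judgment has no rule for \kcase statements (the paper itself dismisses this case as ``impossible by inversion'' in the proof of \cref{thm:equivalence-preservation}).
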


\begin{lemma}[Robust Typing]
  \label{thm:ssecure-robustness}
  If
  \( \ssecure{\s} \),
  then
  \( \ssecure{\corrupt{\s}} \).
\end{lemma}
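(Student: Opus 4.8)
The plan is to induct on the derivation of \( \ssecure{\s} \) (equivalently, on the structure of \s), showing in each case that the rewritten statement \( \corrupt{\s} \) from \cref{fig:malicious-corruption} retypes. Since corruption deletes \klet bindings living on malicious hosts---and since the \ksend rewrite replaces a binding owned by a malicious receiver with a fresh dummy on the nonmalicious sender---I would strengthen the induction hypothesis to carry an arbitrary context \ctx and prove \( \ssecure[\ctx]{\s} \) implies \( \ssecure[\ctx]{\corrupt{\s}} \). To discharge the removed bindings I would invoke a routine \emph{strengthening} lemma: a binding \( \csing{\x}{\h}{\lbl} \) may be dropped whenever \x does not occur in the statement. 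This applies exactly to malicious-host bindings, because \cref{aesecure:variable} forces every use of \x onto its owner \h, and corruption erases every statement on a malicious host, so \( \corrupt{\s} \) never mentions \x.

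The routine cases follow quickly. \sskip is immediate. For \( \sifh{\aexp}{\h}{\s_1}{\s_2} \) the host \h is nonmalicious (the malicious case is \(\bot\), ruled out by \cref{ssecure:if}), so the condition premise \( \aesecure{\aexp}{\ifbottom}[\h] \) is unchanged and the branches follow by the hypothesis. For a \klet or \sselect{\h_1}{\val}{\h_2}{\s} with all hosts nonmalicious, and for \( \smove{\h_1}{\aexp}{\h_2}{\x}{\s} \) with both endpoints nonmalicious, the statement is preserved verbatim, the premises carry over directly, and the continuation is handled inductively. The all-malicious cases delete the statement, reducing to the hypothesis plus the strengthening lemma. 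The \ksend boundary cases---a \kmove or \sselect{\h_1}{\val}{\h_2}{\s} whose peer \( \h_2 \) is malicious while \( \h_1 \) is not---are handled by \cref{thm:subsumption}: from \( \esecure{\aexp}{\lbl}[\h_1] \) and \( \labelof{\h_2} \actsfor \lbl \) I obtain \( \lbl \flowsto \confidentiality{\labelof{\h_2}} \) (the integrity side of the flow is vacuous, the confidentiality side comes from \( \labelof{\h_2} \actsfor \lbl \)), hence \( \esecure{\aexp}{\confidentiality{\labelof{\h_2}}}[\h_1] \), which is precisely the premise of \cref{esecure:send}. I then wrap the \esend in a \klet binding the dummy at label \( \labelof{\h_1} \), whose authority side-condition \( \labelof{\h_1} \actsfor \labelof{\h_1} \) holds reflexively, adding the unused dummy binding via Subsumption.

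\textbf{Main obstacle.} The hard case is the \kreceive rewrite: \( \smove{\h_1}{\aexp}{\h_2}{\x}{\s} \) with \( \h_1 \) malicious and \( \h_2 \) nonmalicious becomes \( \sleth{\x}{\h_2}{\ereceive{\h_1}}{\corrupt{\s}} \). The hypothesis types the continuation with \x bound at the \emph{original} label \lbl, so I must type \( \ereceive{\h_1} \) at that same \lbl. By \cref{esecure:receive} this demands \( \integrity{\labelof{\h_1}} \flowsto \lbl \), i.e., the transmitted data carries no more integrity than its untrusted sender. This is the exact dual of the confidentiality argument above, but it is \emph{not} supplied by \cref{ssecure:move}, which constrains only the receiver's authority \( \labelof{\h_2} \actsfor \lbl \). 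I therefore anticipate needing an auxiliary \emph{sender-integrity} invariant---every expression well-typed on \h has label integrity bounded by \h---from which \( \integrity{\labelof{\h_1}} \flowsto \integrity{\lbl} \flowsto \lbl \) closes the case.

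The genuinely delicate point, which I expect to be the crux of the whole lemma, is the literal-value subcase: \cref{aesecure:value} imposes no host constraint, so a bare constant could a priori be typed at high integrity even on a malicious \( \h_1 \), after which the corrupted continuation---in particular any \kif that branches on \x---would fail to retype. Resolving this requires showing that the authority side-conditions threaded through the variable-binding rules (\cref{ssecure:let,ssecure:move}) propagate to the sender-integrity invariant, so that a malicious host can never be typed as the trusted origin of high-integrity data reaching a nonmalicious host. This is exactly the robustness the NMIFC discipline and the uncompromised host-label assumption are meant to guarantee, and it is where I would concentrate the argument.
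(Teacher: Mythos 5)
The paper states \cref{thm:ssecure-robustness} without any accompanying proof, so there is no official argument to compare against; judged on its own terms, your induction is the right skeleton, and every case except the one you flag does go through (the \ksend rewrites follow from $\labelof{\h_2} \actsfor \lbl$ essentially as you say). The gap is the one you identify, and it is real rather than merely delicate. For $\smove{\h_1}{\aexp}{\h_2}{\x}{\s}$ with $\malicious{\h_1}$ and $\nonmalicious{\h_2}$, retyping $\ereceive{\h_1}$ at the original binding label $\lbl$ needs $\integrity{\labelof{\h_1}} \flowsto \lbl$, which forces $\lbl$ to be untrusted. When $\aexp$ is a variable, this follows from the host-authority invariant you propose: every binding $\csing{\x'}{\h_1}{\lbl'}$ arising in a derivation satisfies $\labelof{\h_1} \actsfor \lbl'$, because both \cref{ssecure:let} and \cref{ssecure:move} carry that premise, and $\lbl' \flowsto \lbl$ then transfers the untrusted integrity of $\h_1$ to $\lbl$; so that subcase closes. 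But when $\aexp$ is a literal, \cref{aesecure:value} types $\val$ at \emph{any} $\lbl$, and \cref{ssecure:move} constrains $\lbl$ only by the \emph{receiver's} authority $\labelof{\h_2} \actsfor \lbl$; nothing ties $\lbl$ to $\h_1$. Hence $\smove{\h_1}{\val}{\h_2}{\x}{\s}$ with $\lbl$ trusted (even $\lbl = \ifbottom$, with $\x$ later guarding an \kif at $\h_2$) is derivable, yet its corruption is not: the receive forces $\x$ untrusted, and \cref{thm:subsumption} only lets you \emph{strengthen} a context ($\ctx' \flowsto \ctx$), never weaken the binding the continuation was checked under. Contrast \cref{ssecure:select}, which does carry the sender-side premise $\integrity{\labelof{\h_1}} \flowsto \integrity{\labelof{\h_2}}$ --- precisely why $\corrupt{\cdot}$ can omit the malicious-sender select case entirely.

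So the ``sender-integrity invariant'' you hope to extract does not exist for literals under the rules as written, and concentrating the argument there will not produce a proof without changing something. The options are: (i) strengthen \cref{ssecure:move} with a sender-side premise such as $\integrity{\labelof{\h_1}} \flowsto \lbl$ (or $\labelof{\h_1} \actsfor \lbl$), mirroring \cref{ssecure:select}, after which your induction closes uniformly; or (ii) add the hypothesis that the choreography never communicates bare literals at labels exceeding the sender's integrity (plausible for compiler-generated code, but not stated in the lemma). You should make this explicit rather than presenting it as a case you expect to discharge from the existing rules.
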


A well-synchronized program remains well synchronized under execution and
all corruption.

\begin{lemma}[Synchrony Preservation]
  \label{thm:ssynched-preservation}
  If
  \( \ssynched{\s} \)
  and
  \( \s \stepsto{\action} \s' \),
  then
  \( \ssynched{\s'} \).
\end{lemma}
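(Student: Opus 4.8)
The plan is to prove a slightly generalized statement by induction on the derivation of the step $\s \realstepsto!{\action} \s'$ (the concurrent real stepping that interprets choreographies): for every context $\sctx$ with $\ssynched[\sctx]{\s}$, there is a context $\sctx'$ with $\ssynched[\sctx']{\s'}$. Reading the bare judgment $\ssynched{\s}$ as ``$\ssynched[\sctx]{\s}$ for some $\sctx$'' then yields the lemma. Before the induction I would record two auxiliary facts. First, a \emph{substitution lemma for synchronization} (analogous to \cref{thm:substitution}): the rules of \cref{fig:synchronization-checking} inspect only host annotations and whether each expression is internal, external, or outputting---never values or variables---so $\ssynched[\sctx]{\s}$ implies $\ssynched[\sctx]{\subst{\x}{\val}{\s}}$. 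This discharges the substitutions introduced when \klet and communication statements step. Second, both context operators $\hreset{\h}$ and $\hsync{\h_1}{\h_2}$ are monotone in $\sctx$ with respect to $\flowsto$; combined with \cref{thm:subsumption}(3), this lets me replace the context required by a rule with any context that is at least as trusted.

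For the base cases (rule \cref{statement:sequential}) I invert the synchronization derivation according to the head statement consumed and read off $\sctx'$ directly. For an internal \klet (\cref{ssynched:internal}) and for \kif (\cref{ssynched:if}) the continuation is already checked under $\sctx$, so $\sctx' = \sctx$ after applying the substitution lemma. For an external \klet at $\h$ (\cref{ssynched:external}) I take $\sctx'$ to be the reset $\hreset{\h}$ of $\sctx$; for communication and selection from $\h_1$ to $\h_2$ (\cref{ssynched:move,ssynched:selection}) I take $\sctx'$ to be the update $\hsync{\h_1}{\h_2}$ of $\sctx$. In each case the premise of the inverted rule is exactly $\ssynched[\sctx']{\s_0}$ for the continuation $\s_0$, and the substitution lemma finishes the case; note that the consumed head statement disappears, so its side conditions (such as $\hissynched{\h}$) need not be re-established.

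The inductive cases (\cref{statement:delay,statement:if-delay}) are the crux. Consider $\s = \ectxof{\s_0} \realstepsto!{\action} \ectxof{\s_0'}$ with $\s_0 \realstepsto!{\action} \s_0'$ and $\actor{\action} \notin \hosts{\ectx}$. Inverting the rule for the head of $\ectx$ produces a premise $\ssynched[\sctx_0]{\s_0}$, where $\sctx_0$ is obtained from $\sctx$ by applying the head operator $T$ (the identity for an internal \klet, $\hreset{\h}$ for an external \klet, or $\hsync{\h_1}{\h_2}$ for communication/selection). The induction hypothesis gives $\ssynched[\sctx_0']{\s_0'}$, where $\sctx_0' = U(T(\sctx))$ and $U$ is the update associated with the statement the inner step consumes (again the identity, a reset, or an $\hsync$, determined by $\action$). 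To rebuild the head rule around $\s_0'$ I choose the output context $\sctx'$ to be $U$ applied to the \emph{original} $\sctx$, so that the head rule then demands $\ssynched[T(\sctx')]{\s_0'}$, i.e.\ $\ssynched[T(U(\sctx))]{\s_0'}$, together with the head's side conditions. The obligation thus reduces to comparing $T(U(\sctx))$ with the known context $\sctx_0' = U(T(\sctx))$, and to re-establishing $\hissynched{\h}$ in the external-output case (which follows because $\hsync{\cdot}{\cdot}$ only increases path integrity). Because $\action$ is a real action it records both endpoints of the consumed statement, so under \cref{statement:if-delay} both branches undergo the \emph{same} update $U$ and hence become synchronized under a common $\sctx'$, allowing \cref{ssynched:if} to recombine the conditional.

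I expect the comparison of $T(U(\sctx))$ with $U(T(\sctx))$ to be the main obstacle, and I would isolate it as a standalone algebraic lemma about $\hreset{\cdot}$ and $\hsync{\cdot}{\cdot}$: whenever $\actor{\action}$ (the \emph{sender} of the consumed statement, which the side condition keeps out of $\hosts{\ectx}$) is disjoint from the hosts touched by $T$, reordering the two operators yields a context $\flowsto$-below the original, i.e.\ at least as trusted, so that \cref{thm:subsumption}(3) closes the gap. Concretely, $\hreset{\h}$ and $\hsync{\h_1}{\h_2}$ commute \emph{exactly} when $\h$ differs from the sender (the relevant row is reset to $\weakest$ before the $\hsync$ reads it, and $\weakest \wedge x = \weakest$), whereas two $\hsync$ operators commute only up to $\flowsto$ when the receiver of the inner update coincides with an endpoint of the head---there the reordered context carries strictly more path integrity because the head's strengthening is folded into the column the inner update reads. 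Verifying that this $\flowsto$ direction is uniformly favorable across all operator pairs, by unfolding the definitions and using monotonicity of $\wedge$ and $\vee$, is the one place demanding care; the remaining cases (\kif, internal \klet, and the \cref{statement:if-delay} recombination) are routine given the substitution lemma and the shared-action observation.
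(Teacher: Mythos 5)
The paper states this lemma without proof, so there is nothing to compare against; judging your plan on its own merits, the overall shape (induction on the concurrent step with a strengthened statement tracking the context update, a substitution lemma, and a commutation analysis of $\hreset{\cdot}$ and $\hsync{\cdot}{\cdot}$) is reasonable, and your base cases and the $\hsync$/$\hsync$ and $\hsync$/$\hreset$ commutations check out. But there is a genuine gap in the inductive case. Your recipe sets $\sctx'$ to $U$ applied to $\sctx$, where $U$ is the update of the \emph{consumed} statement --- including a reset of row $\h_0$ when the consumed statement is an external \klet at $\h_0$. You then claim the side condition $\hissynched{\h}$ of a skipped outputting frame at $\h$ survives ``because $\hsync{\cdot}{\cdot}$ only increases path integrity,'' but that justification covers only $U = \hsync{\h_1}{\h_2}$. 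A reset \emph{decreases} path integrity: it maps the entry $(\h_0, \h)$ to $\weakest$, so re-establishing \cref{ssynched:external}'s side condition demands $\weakest \flowsto \labelof{\h_0} \vee \labelof{\h}$, which fails whenever both hosts are trusted. Concretely, for honest $\h \neq \h_0$ the program $\sleth{\x}{\h}{\eoutput{\val}{\h}}{\sleth{\y}{\h_0}{\einput{\h_0}}{\sskip}}$ is well-synchronized (the input imposes no $\hissynched$ obligation), the asynchronous semantics lets the input at $\h_0$ step past the pending output at $\h$, and the residual program is \emph{not} well-synchronized under the reset context your recipe produces.

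The lemma is still true; the fix is to not propagate the reset. For a consumed external \klet the right choice is $\sctx' = \sctx$ itself (or anything $\flowsto$-below it): the continuation was checked under the reset of the accumulated context, and since a reset only \emph{weakens} integrity (every label flows to $\weakest$), \cref{thm:subsumption}(3) lifts that derivation back to the un-reset context --- modulo the diagonal entry $(\h_0,\h_0)$, for which you need the easy invariant $\sctx(\h_0,\h_0) \actsfor \labelof{\h_0}$, preserved by both operators and installable up front by one application of subsumption. Only consumed communication/selection statements genuinely force the $\hsync$ update into $\sctx'$, and for those your commutation argument goes through. Two smaller points to tighten: under \cref{statement:if-delay} the two branches may consume syntactically different statements that happen to share an action label, so ``same $U$'' needs an argument (or a common $\flowsto$-lower bound plus subsumption); and $\weakest \wedge x = x$, not $\weakest$ --- the row-reset computation in your commutation case actually bottoms out via $\weakest \vee \labelof{\h_2} = \weakest$.
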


\begin{lemma}[Robust Synchrony]
  \label{thm:ssynched-robustness}
  If
  \( \ssynched{\s} \),
  then
  \( \ssynched{\corrupt{\s}} \).
\end{lemma}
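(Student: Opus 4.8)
The plan is to proceed by structural induction on the derivation of $\ssynched{\s}$ (equivalently, on $\s$), generalizing the statement over the synchronization context $\sctx$: I would show that $\ssynched[\sctx]{\s}$ implies $\ssynched[\sctx]{\corrupt{\s}}$ for every $\sctx$ satisfying the invariant that \emph{paths into a malicious host carry no integrity}, i.e. $\sctx(\h', \h) = \weakest$ whenever $\malicious{\h}$. This invariant holds of the canonical initial context (off-diagonal entries $\weakest$, diagonal $\labelof{\h}$, which is untrusted for malicious hosts), and I would first check that it is preserved by both context operations: $\hreset{\h}$ only sets an entire row to $\weakest$, and $\hsync{\h_1}{\h_2}$ either leaves a malicious target column at $\weakest$ or updates only nonmalicious columns.

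First I would dispatch the routine cases. Sync-Skip is immediate; an internal \klet at any host either keeps the binding or is dropped, both without changing the context, so the induction hypothesis applies. An external \klet at a \emph{nonmalicious} host keeps the binding, inherits the side condition $\outputting{\e} \implies \hissynched{\h}$ verbatim, and closes via the induction hypothesis on the continuation (checked under $\hreset{\h}$, which preserves the invariant). A communication or selection statement between two nonmalicious hosts is kept, and the induction hypothesis applies under the same updated context $\hsync{\h_1}{\h_2}$. For \kif, corruption requires the guarding host to be nonmalicious (otherwise $\corrupt{\s} = \bot$); this holds whenever $\s$ is additionally well-typed, since \cref{ssecure:if} forces public, trusted control flow, and I would note this mild assumption explicitly.

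The interesting cases are those where corruption rewrites or deletes a statement because one participant is malicious, creating a mismatch between the context the original derivation threads into the continuation and the context available to the corrupted program. For an external \klet at a \emph{malicious} host, the binding is deleted: the original derivation supplies $\ssynched[\hreset{\h}]{\s}$, whereas I need $\ssynched[\sctx]{\corrupt{\s}}$. Here I would invoke subsumption (\cref{thm:subsumption}), observing that $\sctx \flowsto \hreset{\h}$ for malicious $\h$ because $\hreset{\h}$ only raises row $\h$ to the flow-top label $\weakest$ (and $\labelof{\h}$ on the diagonal is itself untrusted). For a communication or selection with exactly one malicious participant, corruption turns the statement into an internal \ksend or \kreceive, so the corrupted continuation is checked under the \emph{unchanged} context $\sctx$, while the induction hypothesis delivers it under $\hsync{\h_1}{\h_2}$; the crux is to show these coincide. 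When the target $\h_2$ is malicious, the invariant pins column $\h_2$ at $\weakest$ and a short computation keeps it there; when the source $\h_1$ is malicious, the invariant gives $\sctx(\cdot, \h_1) = \weakest$, which collapses $\sctx(\cdot,\h_1) \vee \labelof{\h_2}$ to $\weakest$ and makes the meet update the identity. The fully deleted case (both participants malicious) is the same computation on a malicious target column.

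The main obstacle is precisely this last family: when a \emph{malicious sender} feeds a synchronization path into a nonmalicious target, the original choreography is permitted to ``bounce off'' the malicious host to establish synchronization, whereas the corrupted program---modeling the adversary's real power---cannot. The invariant that paths into malicious hosts are untrusted is exactly what exposes such a credited path as carrying no integrity, so the original derivation never truly relied on it and the $\hsync{\h_1}{\h_2}$ update degenerates. Getting this invariant right, confirming it survives $\hreset{\h}$ and $\hsync{\h_1}{\h_2}$, and verifying the degeneracy via the principal identities $x \wedge \weakest = x$ and $x \vee \weakest = \weakest$ on integrity labels, is where the real work lies; the remaining bookkeeping then follows from subsumption and the induction hypothesis.
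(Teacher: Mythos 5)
The paper states this lemma without proof in the appendix, so I am judging your argument on its own terms. Your architecture---induction on the derivation, generalizing over $\sctx$ with an invariant on malicious columns, and isolating the malicious-sender communication as the crux---is the right shape, and you correctly locate where the content of the lemma lives. But the invariant you propose, $\sctx(\h',\h) = \weakest$ whenever $\malicious{\h}$, is not preserved, and this failure is fatal to your key case. By your own identity $x \wedge \weakest = x$, applying $\hsync{\h_1}{\h_2}$ to a malicious target $\h_2$ rewrites column $\h_2$ from $\weakest$ to $\weakest \wedge (\sctx(*,\h_1) \vee \labelof{\h_2}) = \sctx(*,\h_1) \vee \labelof{\h_2}$, which is \emph{not} $\weakest$: already the diagonal entry $\sctx(\h_1,\h_1) = \labelof{\h_1}$ makes $\sctx(\h_1,\h_2)$ become $\labelof{\h_1} \vee \labelof{\h_2}$. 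So any communication from a nonmalicious host \emph{into} a malicious one (a completely ordinary event) breaks the invariant. A later communication \emph{out of} that malicious host into a nonmalicious $\h_3$ then performs $\hsync{\h_2}{\h_3}$, which genuinely strengthens the nonmalicious column $\h_3$ by conjoining a path label of the form $q \vee \labelof{\h_2}$; the corrupted program, where this statement has become an internal \kreceive, leaves $\sctx$ untouched. At that point your argument says nothing: the induction hypothesis yields $\ssynched{\corrupt{\s}}$ only under the strengthened context, and subsumption (\cref{thm:subsumption}) runs in the wrong direction.

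The repair is not mere bookkeeping. The invariant must be weakened to something like ``every entry of a malicious column $\h$ carries the taint $\vee\,\integrity{\labelof{\h}}$ of its untrusted endpoint,'' and you must then prove that conjoining such tainted path labels into a nonmalicious column never changes the outcome of any subsequent $\hissynched{\h}$ check or downstream $\hsync{\h_3}{\h_4}$ update. That is a genuine lattice-theoretic claim---roughly, that $x \wedge (q \vee u) \actsfor \labelof{\h'} \vee \labelof{\h}$ with $u$ untrusted and both endpoints trusted already forces $x$ alone to suffice---which needs the primality and closure conditions on the untrusted-principal set from \apxref{sec:ifc-details} (and possibly the uncompromised-label conditions on hosts), and does not follow from the identities $x \wedge \weakest = x$ and $x \vee \weakest = \weakest$. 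This is precisely the ``bounce off a malicious host'' scenario you flag as the main obstacle; your invariant does not actually neutralize it.
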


A host can only output if it is synchronized with all previous external actions.

\begin{lemma}[Output Synchronization]
  \label{thm:ssynched-blocked-output}
  If
  \( \ssynched[\sctx]{\s} \)
  and
  \( \s \stepsto{\aoutput{\msg}} \)
  with \( \msg \) external, then
  \( \hissynched[\sctx]{\h} \).
\end{lemma}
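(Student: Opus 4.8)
The plan is to induct on the derivation of the concurrent output step $\s \polystepsto!{\aoutput{\msg}} \s'$, writing $\h = \actor{\aoutput{\msg}}$ for the outputting host. The first observation is that only a \klet whose expression is an output expression ($\koutput$ or $\kdeclassify$) can emit an \emph{external} output, since communication and selection statements send only between hosts and \kif steps internally; this pins down the shape of $\s$ in each rule of the concurrent semantics. In the base case \cref{statement:sequential}, the head of $\s$ is exactly such a \klet at $\h$, so inverting $\ssynched[\sctx]{\s}$ must use \cref{ssynched:external}, whose premise $\outputting{\e} \implies \hissynched[\sctx]{\h}$ discharges the goal immediately.

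For the inductive step through \cref{statement:delay}, we have $\s = \ectxof{\s_0}$ with $\s_0 \polystepsto!{\aoutput{\msg}} \s_0'$ and side condition $\h \notin \hosts{\ectx}$. Inverting $\ssynched[\sctx]{\ectxof{\s_0}}$ on the head of $\ectx$ produces $\ssynched[\sctx']{\s_0}$ for some updated context $\sctx'$, so the induction hypothesis supplies $\hissynched[\sctx']{\h}$; the work is to transfer this back to the original $\sctx$. When $\ectx$ is an internal \klet (\cref{ssynched:internal}) we have $\sctx' = \sctx$ and there is nothing to do. When $\ectx$ is a communication or selection (\cref{ssynched:move,ssynched:selection}) we have $\sctx' = \hsync{\h_1}{\h_2}$ with $\h \neq \h_2$; since $\hsync{\h_1}{\h_2}$ rewrites only entries whose second component is $\h_2$, every entry $\sctx(\h', \h)$ is left untouched, so $\hissynched[\sctx']{\h}$ and $\hissynched[\sctx]{\h}$ are literally the same statement. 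The case \cref{statement:if-delay} is analogous: inverting \cref{ssynched:if} keeps the context fixed across both branches, so applying the induction hypothesis to the stepping of one branch yields $\hissynched[\sctx]{\h}$.

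The delicate case is when $\ectx$ is an \emph{external} \klet (\cref{ssynched:external}), where $\sctx' = \hreset{\h''}$ for the host $\h''$ of the skipped \klet and the side condition gives $\h \neq \h''$. Because $\hreset{\h''}$ rewrites only entries whose \emph{first} component is $\h''$, the induction hypothesis transfers verbatim for every $\h' \neq \h''$, and the single entry that changes is $\sctx(\h'', \h)$, which $\hreset{\h''}$ sets to $\weakest$. The crux is that the $\h' = \h''$ clause of $\hissynched[\hreset{\h''}]{\h}$ then asserts $\weakest \flowsto \labelof{\h''} \vee \labelof{\h}$; as $\weakest$ is the bottom of the authority order, this can hold only when $\integrity{\labelof{\h''} \vee \labelof{\h}} = \weakest$, and in that situation the corresponding clause over the original context, $\sctx(\h'', \h) \flowsto \labelof{\h''} \vee \labelof{\h}$, holds for free because every integrity acts for $\weakest$. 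This recovers $\hissynched[\sctx]{\h}$. Intuitively, resetting the path from $\h''$ to $\weakest$ is consistent with a later output at $\h$ only when that output was already permitted to be unsynchronized with $\h''$, so no real synchronization is lost across the skip. I expect this lattice step---reinterpreting the apparent erasure of $\sctx(\h'', \h)$ as a vacuous proof obligation---to be the main obstacle; the remaining bookkeeping on $\sctx$ is routine and of the same flavor as Subsumption (\cref{thm:subsumption}).
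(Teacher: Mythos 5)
The paper states this lemma in the appendix without giving a proof, so there is no official argument to compare against; judged on its own, your proof is correct and complete. The induction on the stepping derivation, the observation that an external output can only arise from a \klet running \koutput or \kdeclassify (so that the base case inverts to \cref{ssynched:external} with \(\outputting{\e}\) true), and the use of the side condition \(\h \notin \hosts{\ectx}\) to show that \(\hsync{\h_1}{\h_2}\) and the \kif/internal cases leave the entries \(\sctx(\cdot, \h)\) untouched are all exactly the needed steps. Your handling of the one nontrivial case is also sound: \(\hreset{\h''}\) only makes the entry \((\h'', \h)\) \emph{more} restrictive (it becomes \(\weakest\)), so the corresponding clause of the induction hypothesis either forces \(\integrity{\labelof{\h''} \vee \labelof{\h}}\) to be \(\weakest\) --- in which case the clause over the original \(\sctx\) holds vacuously --- or is unchanged. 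A slightly slicker packaging of that last step: for every context update \(\sctx'\) encountered while descending past the head, \(\sctx(\h', \h) \flowsto \sctx'(\h', \h)\) for all \(\h'\), and \(\hissynched[\cdot]{\h}\) is antitone in the context under \(\flowsto\) by transitivity; this is the same antitonicity that underlies \cref{thm:subsumption}, and it avoids the case split on whether the join of integrities is \(\weakest\). Either way, the argument goes through.
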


\subsection{Operational Semantics}

Below, we write $\stepsto{}$ to stand for any of
$\idealstepsto{}$, $\realstepsto{}$, $\idealstepsto!{}$, or $\realstepsto!{}$.

Processes never refuse input.

\begin{lemma}[Input Totality]
  \label{thm:input-totality}
  For all \( \process \) and \( \msg \),
  there exists \( \process' \) such that
  \( \process \stepsto{\ainput{\msg}} \process' \).
\end{lemma}

The stepping judgments are nondeterministic since inputs are externally
controlled (different input values lead to different states),
and, for concurrent judgments,
outputs and internal actions are independent across hosts.
However, processes are fully deterministic when the action is fixed.

\begin{lemma}[Internal Determinism]
  \label{thm:internal-determinism}
  If
  \( \process \stepsto{\action} \process_1 \)
  and
  \( \process \stepsto{\action} \process_2 \),
  then
  \( \process_1 = \process_2 \).
\end{lemma}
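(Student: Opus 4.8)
The plan is to prove internal determinism by reducing it to the determinism of each component relation---expression stepping, buffer stepping, and statement stepping---and then lifting to the process level. Throughout, $\stepsto{}$ ranges uniformly over all four judgments, and I proceed by induction on the derivation of $\process \stepsto{\action} \process_1$ with case analysis on the rule deriving $\process \stepsto{\action} \process_2$ for the same fixed $\action$. The guiding observation is that once $\action$ is fixed, every source of nondeterminism noted earlier (distinct input values yielding distinct states, and independence of actions across hosts) is pinned down, so what remains is to check that no two distinct rules, and no two distinct instances of one rule, can share a label while producing different results.

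First I would establish determinism of the auxiliary relations. For expression stepping $\h \says \e \stepsto{\action} \val$, the returned value is either computed deterministically by $\evaluate$ (operators) or is exactly the value carried inside $\action$ (the \kinput, \kdeclassify, \kendorse, and \kreceive cases), so a fixed $\action$ forces $\val$, and \cref{statement:let} then yields a unique continuation by substitution. Buffers are deterministic because \cref{buffer:input} appends a fixed value to a fixed queue and \cref{buffer:output} removes the head of a fixed queue, both determined once the channel and value in $\action$ are known. Sequential statement stepping follows by case analysis on the head statement, with \cref{statement:if} selecting its branch from the already-evaluated guard. The delicate part is the concurrent lifting: I would argue that \cref{statement:sequential}, \cref{statement:delay}, and \cref{statement:if-delay} are pairwise mutually exclusive for a fixed $\action$, because \cref{statement:delay} and \cref{statement:if-delay} carry the side conditions $\actor{\action} \notin \hosts{\ectx}$ and $\actor{\action} \neq \h$, respectively, whereas any head step placing the action at the front has $\actor{\action}$ inside the skipped context. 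Since the outermost evaluation context $\ectx$ is uniquely fixed by the head constructor, \cref{statement:delay} recurses on a unique subterm and is deterministic by the induction hypothesis, and \cref{statement:if-delay} recurses on both branches and is likewise deterministic.

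Finally I would lift to processes by case analysis on the shape of $\action$. For an input $\ainput{\msg}$, \cref{process:input} and \cref{process:discard} have exactly complementary side conditions on channel membership in $\h*$, so precisely one fires and its result (a deterministic append, or no change) is forced. For an output, \cref{process:output} delegates to statement determinism with the buffer unchanged. The main obstacle is \cref{process:internal}: the visible label $\aglobalinternal{\ch_2}$ hides both the source endpoint $\ch_1$ and the transmitted value $\val$, so a priori two derivations could select different $(\ch_1, \val)$. I would close this by showing both are forced. The value $\val$ is the head of queue $\ch_1 \ch_2$ and is fixed by the buffer once $\ch_1$ is known (statements being input-total, \cref{thm:input-totality}, they accept whatever value the buffer supplies), so it suffices to show the statement admits a receive at actor $\ch_2$ from a unique source $\ch_1$. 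This is exactly where the $\actor{\action} \notin \hosts{\ectx}$ condition pays off a second time: it serializes the actions of a single host, so no deeper \kreceive at $\ch_2$ can fire while an earlier statement at $\ch_2$ is pending, and the program-order-first receive at $\ch_2$ fixes $\ch_1$. Hence $\buffer'$ and $\s'$ coincide across the two derivations, giving $\process_1 = \process_2$.
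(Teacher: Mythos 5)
Your proof is essentially correct, and it fills in an argument the paper itself leaves implicit: the paper states \cref{thm:internal-determinism} without a proof body, treating it as routine, so there is no official proof to compare against. Your elaboration identifies the one genuinely non-obvious point --- that \cref{process:internal} exposes only $\aglobalinternal{\ch_2}$ while existentially choosing the source $\ch_1$ and value $\val$ --- and resolves it correctly via the serialization imposed by the side conditions of the delay rules: only the program-order-first statement involving $\ch_2$ can act with actor $\ch_2$, and that statement syntactically fixes the source endpoint, after which the FIFO buffer fixes the value. Two small points worth tightening. First, your parenthetical appeal to \cref{thm:input-totality} is misplaced: input totality is a property of \emph{processes} (courtesy of the buffer and \cref{process:discard}), not of statements --- a \kcase statement, for instance, rejects values outside its branch map; but this does not matter for determinism, since the buffer head either is accepted (yielding a forced result) or the rule simply does not fire. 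Second, because internal actions are encoded as self-sends, \cref{process:output} and \cref{process:internal} can in principle both carry the label $\aglobalinternal{\ch_2}$, so you should say explicitly that they cannot both apply to the same process: the first statement involving $\ch_2$ admits either an internal/output action at $\ch_2$ or a receive at $\ch_2$, never both, which is the same serialization argument you already make --- it just deserves to be stated as covering this rule overlap as well.
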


\begin{lemma}[Output Determinism]
  \label{thm:output-determinism}
  If
  \( \process \stepsto{\aoutput{\msg_1}} \process_1 \),
  \( \process \stepsto{\aoutput{\msg_2}} \process_2 \),
  and
  \( \actor{\aoutput{\msg_1}} = \actor{\aoutput{\msg_2}} \),
  then
  \( \msg_1 = \msg_2 \).
\end{lemma}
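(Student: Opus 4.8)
The plan is to reduce process-level output determinism to the statement level and then induct on statement structure. First I would observe that the only process rule emitting an output action is \cref{process:output}, which leaves the buffer \buffer untouched and merely lifts a statement step $\s \stepsto{\aoutput{\msg}} \s'$. Hence from the two hypotheses $\proc{\h*}{\buffer}{\s} \stepsto{\aoutput{\msg_1}} \process_1$ and $\proc{\h*}{\buffer}{\s} \stepsto{\aoutput{\msg_2}} \process_2$ I obtain $\s \stepsto{\aoutput{\msg_1}}$ and $\s \stepsto{\aoutput{\msg_2}}$, so it suffices to prove $\msg_1 = \msg_2$ at the statement level under the shared-actor assumption $\actor{\aoutput{\msg_1}} = \actor{\aoutput{\msg_2}}$.

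The base layer is expression evaluation $\h \says \e \stepsto{\aoutput{\msg}} \val$. Here I would argue that for every expression form at most one rule produces an output: the internal variants of \kdeclassify, \kendorse, \kinput, and \koutput carry side conditions mutually exclusive with their output/communication counterparts (e.g., $\nonmalicious{\h}$ versus $\malicious{\h}$ for \koutput, and $\issecret{\fromlbl} \wedge \ispublic{\tolbl}$ versus its negation for \kdeclassify), while \kendorse and \kinput never emit an output at all. In each remaining case the transmitted value appears syntactically in the expression (the literal $\val$ in $\eoutput{\val}{\h}$, $\esend{\val}{\h'}$, or $\edeclassify{\val}{\fromlbl}{\tolbl}$), so the emitted message is uniquely determined; moreover its actor is always the evaluating host $\h$. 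This yields expression-level output determinism outright.

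For statements I would proceed by structural induction on $\s$, treating the concurrent relations $\idealstepsto!{}$ and $\realstepsto!{}$ (the sequential relations are immediate, since each statement head admits a single rule whose value is fixed as above). The crucial point is that the shared-actor hypothesis, combined with the side conditions of \cref{statement:delay,statement:if-delay}, forbids one output coming from the statement head while the other comes from the continuation. Concretely, for $\s = \sleth{\x}{\h}{\e}{\s_0}$ an output either fires the head via \cref{statement:sequential} (actor $\h$) or steps $\s_0$ via \cref{statement:delay} (whose premise forces the actor to differ from $\h$); since the two steps share an actor they must use the same rule, whence $\msg_1 = \msg_2$ either by expression-level determinism (head case) or by the induction hypothesis on $\s_0$ (delay case). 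The communication and selection cases are analogous, using $\hosts{\ectx} = \set{\h_1, \h_2}$ so that a head send (actor $\h_1$) and a delayed continuation step (actor outside $\set{\h_1, \h_2}$) again cannot share an actor. For $\s = \sifh{\aexp}{\h}{\s_1}{\s_2}$ the head rule is internal, so any output must arrive through \cref{statement:if-delay}; applying the induction hypothesis to $\s_1$, whose emitted action both branches share, yields agreement.

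I expect the concurrent case to be the only real obstacle, and specifically the bookkeeping that an output's actor pins down exactly which sub-statement produced it. This is precisely where the hypothesis $\actor{\aoutput{\msg_1}} = \actor{\aoutput{\msg_2}}$ earns its keep: without it, a single concurrent step genuinely offers distinct outputs emitted by distinct hosts (e.g., two adjacent \koutput statements on different hosts), so the conclusion would fail. Fixing the actor collapses this nondeterminism to a single host, whose code is effectively single-threaded, and the remaining work is the routine rule-by-rule disjointness check sketched above.
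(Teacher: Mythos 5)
The paper states this lemma without proof, so your attempt can only be judged on its own terms. The statement-level half of your argument---structural induction, with the shared-actor hypothesis and the side conditions of \cref{statement:delay,statement:if-delay} pinning any output to the unique head statement of the given host---is the right idea and is essentially forced. The gap is in your very first reduction: it is not true that \cref{process:output} is the only process rule emitting an output action. Internal actions are represented as self-communication and hence as outputs, and \cref{process:internal} emits \( \aglobalinternal{\ch_2} \); indeed the paper applies this very lemma to a pair of \emph{internal} actions in the proof of \cref{thm:joinable-after-internal} (see the footnote there), so the internal case cannot be defined away.

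Your case analysis therefore misses two configurations. First, both steps by \cref{process:internal} at the same host, which is harmless since both actions are the fixed self-communication message \( \aglobalinternal{\h} \). Second, one step by \cref{process:internal} and one by \cref{process:output} with the same actor \( \h \), which must be ruled out by arguing that the unique head statement for \( \h \) cannot simultaneously accept a buffered message (a receive-shaped statement step, as \cref{process:internal} requires) and emit an output whose actor is \( \h \). That argument is available---it is the same single-threadedness observation you already use at the statement level, combined with the fact that each head expression or statement form admits either a receiving rule or an emitting rule under its side conditions, but never both---yet it is absent from your proof as written. A smaller inaccuracy of the same flavor: \kinput \emph{does} emit an output action at a malicious host (\cref{expression:input-malicious} steps with \( \aglobalinternal{\h} \)), as do the skip variants of \kdeclassify and \kendorse; this does not endanger determinism, since the relevant side conditions are mutually exclusive, but the blanket claim that \kinput and \kendorse never output is false under the paper's encoding of internal steps.
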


These results lift to configurations \config as long as the configuration does
not contain duplicate hosts.
   \end{toappendix}

\section{Correctness of Protocol Synthesis}
\label{sec:correctness-of-synthesis}

\begin{figure*}
  \centering
  \begin{tikzpicture}[
  auto,
  thick,
  node distance=\nodedistance,
  every node/.style={align=center},
  program/.style={draw, rounded corners, text width=8.5em, minimum height=8ex},
  simulation/.style={},
  transformation/.style={->, >=stealth', rounded corners=0.5mm},
]
  \node[program] (1)
    {Sequential Source \\ $\source{\process}$, $\idealstepsto{}$};
  \node[simulation] (12) [right=of 1]
    {\ref{thm:correctness-of-host-selection} \\ $\simulatedBy$};
  \node[program] (2) [right=of 12]
    {Sequential Choreography \\ $\corrupt{\process}$, $\idealstepsto{}$};
  \node[simulation] (23) [right=of 2]
    {\ref{thm:correctness-of-sequentialization} \\ $\simulatedBy$};
  \node[program] (3) [right=of 23]
    {Ideal Choreography \\ $\corrupt{\process}$, $\idealstepsto!{}$};
  \node[simulation] (34) [right=of 3]
    {\ref{thm:correctness-of-ideal-execution} \\ $\simulatedBy$};
  \node[program] (4) [right=of 34]
    {Choreography \\ $\corrupt{\process}$, $\realstepsto!{}$};

  \pgfmathsetmacro{\arrowshift}{0.7em}

  \draw[transformation]
    (1.north)
    -- ++(0, \nodedistance)
    -| node [pos=0.25, above] {Protocol Synthesis
    (\ref{thm:correctness-of-partitioning})} (4.north);
\end{tikzpicture}
   \caption{
    Intermediate simulation steps for proving the correctness of protocol
    synthesis.
  }
  \label{fig:proof-overview}
\end{figure*}
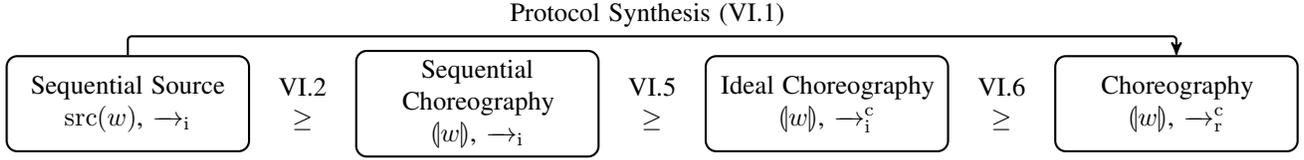

We prove the correctness of protocol synthesis by demonstrating
a simulation between source programs and their corresponding choreographies.
For $\process = \proc{\h*}{\buffer}{\s}$,
we write $\ssecure{\process}$ and $\ssynched{\process}$ if
$\ssecure{\s}$ and $\ssynched{\s}$, respectively,
and define $\source{\process} = \proc{\h*}{\buffer}{\source{\s}}$.

\begin{theorem}
  \label{thm:correctness-of-partitioning}
  If
  \( \ssecure[\emptylist]{\process} \),
  and
  \( \ssynched[\sctx]{\process} \) for some \( \sctx \),
  then
  \(
    \using{\source{\process}}{\idealstepsto{}}
    \simulatedBy
    \using{\corrupt{\process}}{\realstepsto{}}
  \).
\end{theorem}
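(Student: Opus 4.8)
The plan is to exploit the transitivity of $\simulatedBy$ and factor the claimed simulation through the three intermediate configurations of \Cref{fig:proof-overview}, each isolating a single semantic feature. Writing $\process$ for the given well-typed, well-synchronized choreography process, I would establish the chain
\begin{align*}
  \using{\source{\process}}{\idealstepsto{}}
  &\simulatedBy
  \using{\corrupt{\process}}{\idealstepsto{}} \\
  &\simulatedBy
  \using{\corrupt{\process}}{\idealstepsto!{}} \\
  &\simulatedBy
  \using{\corrupt{\process}}{\realstepsto!{}}
\end{align*}
whose three links are exactly \Cref{thm:correctness-of-host-selection} (host selection), \Cref{thm:correctness-of-sequentialization} (sequentialization), and \Cref{thm:correctness-of-ideal-execution} (ideal execution); transitivity then yields the theorem. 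For each link $\config_1 \simulatedBy \config_2$ I would follow the recipe of \cref{sec:simulation}: take an arbitrary adversary $\adv$ against the more concrete configuration $\config_2$, define a simulator $\simulator$ (as an LTS) running against the more abstract configuration $\config_1$, and prove $\envtraces{\simulator \parallel \config_1} = \envtraces{\adv \parallel \config_2}$ by constructing a bisimulation between the two parallel compositions, anchored in a suitable notion of low-equivalence on configurations.

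The first and third links are driven by the static guarantees already in hand. For host selection, $\source{\cdot}$ exactly erases the placement data the choreography adds---communication and selection statements become internal steps, and non-I/O host annotations collapse to \idealprot---while $\ssecure{\process}$ certifies that each host had the authority to run its assigned computation; since both sides step sequentially, the simulator essentially forwards adversary interactions and the bisimulation relates each source configuration to the choreographies that project onto it. For ideal execution, the two configurations are syntactically identical but differ in semantics: ideal stepping routes \kdeclassify/\kendorse through \advprot as leak/maul actions and suppresses \ksend/\kreceive, whereas real stepping makes the downgrades internal and lets \ksend/\kreceive genuinely exchange messages with malicious hosts. The simulator must therefore mediate between the real adversary's message traffic with malicious hosts and the ideal leak/maul interface exposed by $\config_1$; soundness of this translation rests on the NMIFC guarantees---robust declassification (\Cref{thm:robust-declassification}) and transparent endorsement (\Cref{thm:transparent-endorsement})---together with \Cref{esecure:send,esecure:receive}, which ensure that anything sent to a malicious host is public (hence safely leaked) and anything received is untrusted (hence expressible as an endorsement input). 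This is precisely the step that \cref{sec:correctness-of-ideal-execution} flags as relying crucially on NMIFC.

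I expect the sequentialization link to be the main obstacle, since it is where the concurrent semantics first hands the adversary control over scheduling and this power must be shown benign. Here $\config_2$ steps with $\idealstepsto!{}$, so $\adv$ may fire statements on independent hosts out of program order; the simulator must replay these choices while driving the sequential $\idealstepsto{}$ execution so that every environment-visible \kinput/\koutput action appears with the same value and in the same order. The technical heart is a commutation (diamond) property showing that independent concurrent steps can be reordered without affecting observable behavior, combined with the synchronization invariant $\ssynched{\process}$ of \Cref{fig:synchronization-checking}: by \Cref{thm:ssynched-blocked-output} no output can fire until a high-integrity communication path connects it to every preceding external action, and by \Cref{thm:ssynched-preservation,thm:ssynched-robustness} this property survives both execution and corruption. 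Consequently the adversary cannot coax the concurrent system into emitting outputs in an order inconsistent with the source program, and the bisimulation relates each reachable concurrent configuration to the sequential one obtained by letting all pending internal reorderings run to completion. Chaining the three bisimulations by transitivity of $\simulatedBy$ completes the proof.
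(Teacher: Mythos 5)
Your proposal is correct and takes essentially the same route as the paper: the paper proves the theorem by exactly this three-link chain (host selection, sequentialization, ideal execution) discharged by transitivity of simulation, with each link established via a simulator and a bisimulation anchored in low-equivalence, a diamond/confluence argument plus the synchronization judgment for the sequentialization step, and robust declassification/transparent endorsement for the ideal-execution step. The only cosmetic difference is that the paper further factors the first link into two sub-lemmas (adding host annotations and explicit communication, then adding corruption), which your sketch folds into a single step.
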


We prove \cref{thm:correctness-of-partitioning} through a series of intermediate
simulations, following \cref{fig:proof-overview} from left to right.
First, in \cref{sec:correctness-of-host-selection},
we show idealized, sequential choreographies simulate their canonical source
programs.
Then, we show in \cref{sec:correctness-of-sequentialization} that our synchronization
judgment ensures all externally visible actions happen in program order.
Finally, in \cref{sec:correctness-of-ideal-execution}, we move from
the ideal semantics $\idealstepsto!{}$ to the real semantics $\realstepsto!{}$.

For each simulation, we define a simulator that emulates the adversary
``in its head''.
We ensure that the emulated adversary's view is the same as the real adversary
even though the simulator only has access to public information.
Concretely, we establish a (weak) bisimulation relation~\cite{NainV07, HennessyM85}
between the ideal world (simulator running against ideal configuration)
and the real world (adversary running against real configuration).

\subsection{Correctness of Host Selection}
\label{sec:correctness-of-host-selection}

First,
we show that the original source program is simulated by the sequential
choreography:

\begin{theorem}
  \label{thm:correctness-of-host-selection}
  If
  \( \ssecure[\emptylist]{\process} \),
  then
  \(
    \using{\source{\process}}{\idealstepsto{}}
    \simulatedBy
    \using{\corrupt{\process}}{\idealstepsto{}}
  \).
\end{theorem}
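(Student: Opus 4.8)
The plan is to read off \cref{def:simulation} in the correct direction: the claim is $\using{\source{\process}}{\idealstepsto{}} \simulatedBy \using{\corrupt{\process}}{\idealstepsto{}}$, so I fix an \emph{arbitrary} adversary $\adv$ running against the choreography side $\corrupt{\process}$ and must exhibit a simulator $\simulator$ running against the source side $\source{\process}$ with $\envtraces{\simulator \parallel \source{\process}} = \envtraces{\adv \parallel \corrupt{\process}}$. Because both configurations use the \emph{sequential} ideal relation $\idealstepsto{}$, there is no scheduling to control, and the adversary interface reduces to \kdeclassify leaks (outputs to $\advprot$) and \kendorse mauls (inputs from $\advprot$), with all environment-visible behavior confined to \kinput/\koutput on nonmalicious hosts. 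I would build $\simulator$ to run a private copy of $\adv$ and behave as a relabeling mediator: it forwards each leak it receives from $\source{\process}$ to its internal $\adv$ after rewriting the actor from $\idealprot$ to the host $\h$ that performed the corresponding \kdeclassify in $\corrupt{\s}$, and it forwards each maul emitted by $\adv$ to $\source{\process}$ after rewriting its target from $\h$ back to $\idealprot$; environment inputs and outputs pass through unchanged. A key observation that makes this mediator total is that typing forces every adversary-facing statement onto a nonmalicious host: a leaking \kdeclassify has a trusted source by robust declassification (\cref{thm:robust-declassification}) and a mauling \kendorse produces trusted data, while an untrusted (malicious) host cannot hold trusted data; hence such statements are retained by $\corrupt{\cdot}$ on some nonmalicious $\h$ and appear on $\idealprot$ after $\source{\cdot}$, so the actor correspondence $\idealprot \leftrightarrow \h$ is well-defined for every leak and maul.

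I would then establish a weak bisimulation between $\simulator \parallel \source{\process}$ and $\adv \parallel \corrupt{\process}$, organized around a low-equivalence on \emph{integrity}. Two reachable configurations are related when their control positions correspond under $\source{\cdot}$ and $\corrupt{\cdot}$, the internal state of $\adv$ is identical, and every \emph{trusted} variable holds the same value on both sides, while \emph{untrusted} variables are allowed to disagree. The disagreement is unavoidable and harmless: $\corrupt{\cdot}$ deletes malicious-host code and rewrites a move out of a malicious host into a \kreceive, which under $\idealstepsto{}$ returns $\vunit$, whereas $\source{\cdot}$ substitutes the genuine value computed on $\idealprot$; these placeholder $\vunit$ values are exactly the untrusted ones. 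The two extra classes of internal steps---communication, selection, and suppressed \ksend/\kreceive present only in $\corrupt{\s}$, and malicious-host \kinput/\koutput present only in $\source{\s}$---are absorbed as stutter on the appropriate side, which is sound because internal actions never mention $\envprot$.

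The crux, and the step I expect to be the main obstacle, is proving the relation inductive, and here the information-flow discipline is indispensable. I would show that trusted values are closed under the operations that generate them: operator application consumes trusted arguments that agree by hypothesis, \kinput on a nonmalicious host reads the same environment value, a \kreceive from a nonmalicious host copies an agreeing value, and a \kendorse overwrites its possibly-disagreeing untrusted argument with the value chosen by $\adv$, which is identical in both worlds and is now trusted---so agreement is re-established precisely at the point an untrusted value is promoted. Dually, I must check that every adversary- or environment-visible action exposes only trusted data, so that related states offer matching actions: \koutput on a nonmalicious host requires its argument at that host's trusted label (\cref{esecure:output}), a leaking \kdeclassify has a trusted source (\cref{thm:robust-declassification}), and \kif conditionals are public and trusted (\cref{ssecure:if}), forcing both executions down the same branch and keeping them in lockstep. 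The delicate point throughout is to confine the $\vunit$-versus-real discrepancy to untrusted variables and argue it is necessarily erased by a \kendorse before it can reach $\advprot$ or $\envprot$---exactly the guarantee that nonmalleable information-flow control supplies. Once the bisimulation is shown to relate the initial configurations and to preserve every $\envprot$ action, the two $\envprot$-restricted trace sets coincide, yielding the simulation.
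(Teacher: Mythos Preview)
Your proposal is correct and rests on the same two ideas the paper uses: a host-name relabeling mediator for the $\idealprot\leftrightarrow\h$ correspondence on \kdeclassify/\kendorse actions, and an integrity-based low-equivalence that tolerates the $\vunit$-versus-real mismatch on untrusted data introduced by $\corrupt{\cdot}$. The difference is organizational. The paper factors the argument through the \emph{uncorrupted} choreography, proving two short simulations
\(\using{\source{\process}}{\idealstepsto{}}\simulatedBy\using{\process}{\idealstepsto{}}\)
(\cref{thm:choreography-to-source}) and
\(\using{\process}{\idealstepsto{}}\simulatedBy\using{\corrupt{\process}}{\idealstepsto{}}\)
(\cref{thm:remove-corruption}), and then composes. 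The first handles only relabeling (the simulator keeps a public copy of $\process$ to look up the host name) with exact value agreement; the second handles only the integrity equivalence. You instead run both mechanisms in a single bisimulation from $\source{\process}$ directly to $\corrupt{\process}$. Your invariant is therefore a bit heavier---it simultaneously tracks a control-position correspondence (to recover the host name for relabeling) and the trusted-value agreement---while the paper's factoring lets each lemma carry a lighter invariant. Conversely, your route avoids introducing and reasoning about the intermediate configuration $\process$ and its own simulator.

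One small point to tighten: your claim that ``typing forces every adversary-facing statement onto a nonmalicious host'' is fully justified when the downgrade's argument is a \emph{variable} (your chain through robust declassification and $\labelof{\h}\actsfor\lbl'$ is exactly right), but the typing rules do not exclude a leaking $\edeclassify{\val}{\fromlbl}{\tolbl}$ with a \emph{literal} argument sitting on a malicious host when the stored label is immediately weakened below $\tolbl$. The paper's sketch makes the same assertion, so this is a shared corner case rather than a flaw unique to your route; the easy fix is to have the simulator also consult the original choreography's control position and, when the current downgrade was on a malicious host (hence absent from $\corrupt{\s}$), absorb the leak or supply an arbitrary maul value---the resulting disagreement is untrusted and already covered by your invariant. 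Also, your case ``a \kreceive from a nonmalicious host copies an agreeing value'' does not arise: in $\corrupt{\s}$ every \kreceive comes from a malicious sender and, under $\idealstepsto{}$, returns $\vunit$.
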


This is shown via two simpler simulations.
First, we add host annotations and explicit communication to the source program:
\begin{lemma}
  \label{thm:choreography-to-source}
  If
  \( \ssecure[\emptylist]{\process} \),
  then
  \(
    \using{\source{\process}}{\idealstepsto{}}
    \simulatedBy
    \using{\process}{\idealstepsto{}}
  \).
\end{lemma}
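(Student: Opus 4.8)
The plan is to unfold \cref{def:simulation} and keep the roles exactly as the relation $\simulatedBy$ prescribes: to establish $\using{\source{\process}}{\idealstepsto{}} \simulatedBy \using{\process}{\idealstepsto{}}$ I fix an arbitrary adversary $\adv$ running against the choreography $\process$ (the right-hand, $\config_2$ side) and \emph{construct} a simulator $\simulator$ running against the canonical source $\source{\process}$ (the left-hand, $\config_1$ side), so that $\envtraces{\simulator \parallel \source{\process}} = \envtraces{\adv \parallel \process}$. This is the direction forced both by the formula $\envtraces{\simulator \parallel \config_1} = \envtraces{\adv \parallel \config_2}$ and by the general recipe for a step $\config_1 \simulates \config_2$, instantiated at $\process \simulates \source{\process}$. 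The only changes $\source{\cdot}$ makes are (i) erasing communication and selection statements, which under $\idealstepsto{}$ take \emph{internal} steps invisible to the environment, and (ii) rewriting every non-\kinput/\koutput host annotation to $\idealprot$, which alters only the host field of the adversary-facing $\aleak{\cdot}{\cdot}$ and $\amaul{\cdot}{\cdot}$ actions; the environment-facing \kinput/\koutput actions retain their original host annotations on both sides. Accordingly, $\simulator$ runs a private copy of $\adv$ alongside a private copy of the live choreography statement $\s$; it consults that copy to know exactly when a communication or selection step would fire and with which host, and feeds $\adv$ precisely the action sequence it would observe against $\process$: for a communication or selection it synthesizes the corresponding $\aglobalinternal{\cdot}$ action \emph{without} stepping $\source{\process}$, and for a leak, maul, \kinput, \koutput, or operator it steps $\source{\process}$, reads off the value, and relabels the host from $\idealprot$ back to the host recorded in its copy of $\s$. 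Any environment-facing message $\adv$ emits is relayed verbatim.

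I would prove correctness by exhibiting a weak bisimulation between $\adv \parallel \process$ and $\simulator \parallel \source{\process}$. Two states are related when the simulator's internal copy of the choreography equals the live $\process$-statement, its internal copy of $\adv$ equals the live $\adv$, the source statement equals $\source{\cdot}$ applied to that choreography statement, and the buffers and value stores agree. Each internal communication or selection step on the real side is matched by \emph{zero} steps of $\source{\process}$ (the simulator absorbs it); each leak/maul/\kinput/\koutput/operator step is matched by the corresponding single step, with the host relabeling making $\adv$'s two views byte-for-byte identical. Because $\adv$ therefore receives identical inputs in its real run and in its simulated run, it produces identical outputs, so every $\envprot$-action — whether produced by the configuration's \koutput or forged by $\adv$ — occurs in both worlds in the same order.

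The invariant that preserves this correspondence under stepping rests on two routine but essential lemmas. First, $\source{\cdot}$ commutes with value substitution, $\source{(\subst{\x}{\val}{\s})} = \subst{\x}{\val}{\source{\s}}$, which is what keeps a choreography communication step (performing $\subst{\x}{\val}{\s}$) in lockstep with the already-substituted source produced by the $\smove{\h_1}{\aexp}{\h_2}{\x}{\s}$ clause of $\source{\cdot}$. Second, ideal expression evaluation $\h \says \e \idealstepsto{\action} \val$ computes the same $\val$ regardless of the host annotation $\h$, since the annotation influences only the host field of $\action$; hence operators, declassifications, and endorsements yield identical values on both sides, and the endorsed values in particular agree because $\simulator$ relays $\adv$'s mauls unchanged.

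I expect the main obstacle to be the simulator's bookkeeping for the erased communication steps together with the host relabeling: $\simulator$ must retain enough of the choreography's structure to reconstruct, step by step, the exact action sequence $\adv$ expects against $\process$ — including the internal actions of communications that have \emph{no} counterpart in $\source{\process}$ — while guaranteeing that the relabeled leaks it shows $\adv$ reveal nothing $\simulator$ does not already hold. Showing that this relabeling is well-defined and information-preserving (that the source's $\idealprot$-tagged leaks carry exactly the values of the choreography's host-tagged leaks, so $\adv$ cannot distinguish) is the crux; once it is in hand, the two substitution/evaluation lemmas make the bisimulation go through, and restricting the matched traces to $\envprot$-actions yields the required equality $\envtraces{\simulator \parallel \source{\process}} = \envtraces{\adv \parallel \process}$.
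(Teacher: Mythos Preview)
Your proposal takes essentially the same approach as the paper: the simulator keeps a copy of the choreography, runs the adversary against it, skips $\source{\process}$ on communication/selection steps (which are internal under $\idealstepsto{}$), and relabels the host field on $\kdeclassify$/$\kendorse$ messages between $\idealprot$ and the actual host recorded in the copy. The paper phrases the simulator's tracked state as a \emph{public view} of $\process$ rather than an exact copy---since the simulator must satisfy the adversary interface and so cannot read values on channels between honest endpoints---but your proof would go through unchanged once the bisimulation invariant is weakened from ``equals the live $\process$-statement'' to structural/public agreement, which is all the relabeling needs.
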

\begin{proof}
  Statements removed by $\source{\cdot}$ only produce internal actions,
  which the simulator can recreate.
Host annotations affect program behavior only by changing the source
  and destination of internal actions and actions generated by
  \kdeclassify/\kendorse;
  the simulator must recover the original host names before forwarding messages
  from/to the adversary.

  The simulator maintains a public view of \process and runs the
  adversary against this view.
When the adversary steps \process, the simulator steps $\source{\process}$
  only if the statement is preserved by $\source{\cdot}$; it does nothing
  otherwise.
To handle \kdeclassify,
  whenever the simulator receives a message of the form
  $\mglobal{\idealprot}{\advprot}{\val}$,
  the simulator inspects its copy of \process to determine the sending
  host \h, and sends $\mglobal{\h}{\advprot}{\val}$ to the adversary instead.
Similarly, to handle \kendorse,
  the simulator replaces $\h$ with $\idealprot$ in messages
  $\mglobal{\advprot}{\h}{\val}$ from the adversary.

\end{proof}
Second, we add corruption, obtaining a choreography which steps sequentially:
\begin{lemma}
  \label{thm:remove-corruption}
  If
  \( \ssecure[\emptylist]{\process} \),
  then
  \(
    \using{\process}{\idealstepsto{}}
    \simulatedBy
    \using{\corrupt{\process}}{\idealstepsto{}}
  \).
\end{lemma}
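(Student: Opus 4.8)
Per \cref{def:simulation}, the goal is to fix an arbitrary adversary $\adv$ against the corrupted configuration $\corrupt{\process}$ and construct a simulator $\simulator$ against the uncorrupted $\process$ with $\envtraces{\simulator \parallel \process} = \envtraces{\adv \parallel \corrupt{\process}}$. The plan is to have $\simulator$ run a private copy of $\adv$ and mediate between it and $\process$. The key observation is that $\process$ and $\corrupt{\process}$ differ only in the code of malicious hosts: $\corrupt{\cdot}$ deletes $\klet$, $\kif$, and purely-malicious communication statements, and rewrites a communication touching one malicious host into a suppressed $\ksend$ or $\kreceive$. Under the ideal semantics these $\ksend$/$\kreceive$ step internally and return $\vunit$, so every step that $\process$ takes but $\corrupt{\process}$ does not is either internal---hence invisible in env-traces---or a $\kdeclassify$/$\kendorse$ sitting on a malicious host, which is an interaction with $\advprot$, a role $\simulator$ plays.

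First I would specify the mediation. Because \cref{ssecure:if} forces every branch to test public, trusted data, the control-flow position of $\process$ is public, so $\simulator$ can maintain a public view of $\process$'s program counter and classify each $\advprot$-interaction as originating on a malicious or a nonmalicious host. For $\kdeclassify$/$\kendorse$ on nonmalicious hosts---which occur identically in both configurations---$\simulator$ forwards each declassify leak to its internal $\adv$ and relays $\adv$'s endorse value back to $\process$ verbatim, while $\kinput$/$\koutput$ on nonmalicious hosts pass through unchanged. For the extra steps present only in $\process$, $\simulator$ absorbs any malicious-host declassify leak and answers any malicious-host endorse with a default ($\vunit$), never consulting $\adv$. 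Since the sequential ideal semantics makes both combined systems deterministic given the environment's inputs, $\simulator$ may run off these internal and malicious-host $\advprot$-steps as soon as they are enabled, keeping the two executions aligned up to silent transitions.

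Next I would exhibit a weak bisimulation between $\simulator \parallel \process$ and $\adv \parallel \corrupt{\process}$, treating all internal steps and all malicious-host $\advprot$-steps as silent. The relation pairs states in which (i) the private $\adv$ inside $\simulator$ is in the same state as the external $\adv$; (ii) applying $\corrupt{\cdot}$ to $\process$'s residual statement, after $\simulator$ has eagerly discharged its enabled malicious steps, yields exactly $\corrupt{\process}$'s residual statement; and (iii) the two states are \emph{low-equivalent}: every variable on a nonmalicious host whose value may still reach an environment output or an adversary-visible leak agrees in both worlds. The interesting content of (iii) is that variables $\process$ populates from a malicious sender---where $\corrupt{\process}$ instead holds $\vunit$---are permitted to disagree, and the proof must show this disagreement stays unobservable.

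The main obstacle, and the technical heart, is discharging (iii) across the statement cases, which is exactly where nonmalleable information-flow control is indispensable. For a communication from a malicious host to a nonmalicious host, \cref{ssecure:move} with the host-authority bound $\labelof{\h_2} \actsfor \lbl$ forces the stored label to have integrity no stronger than the malicious sender's, so by \cref{esecure:output} such a value can never be output, and (since the operator rule preserves labels) can never gain integrity except through a $\kendorse$ on the nonmalicious host. At that endorse, transparent endorsement (\cref{thm:transparent-endorsement}) makes the source public and the result equal to $\adv$'s choice---identical in both worlds because $\simulator$ relays it verbatim---which erases any prior discrepancy. Dually, robust declassification (\cref{thm:robust-declassification}) forces every declassified value to be trusted, hence untainted by malicious data, so the leaks forwarded to $\adv$ agree; and since \cref{ssecure:let,ssecure:move} force $\confidentiality{\lbl}$ public on malicious hosts, no secret escapes through the suppressed channels. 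Type preservation (\cref{thm:ssecure-preservation}) and robust typing (\cref{thm:ssecure-robustness}) keep these invariants available at every step, and closing the bisimulation shows each env action occurs in the same order carrying the same value on both sides, giving the required env-trace equality.
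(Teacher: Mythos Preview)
Your proposal is correct and follows the same skeleton as the paper: run a copy of $\adv$ inside the simulator, fast-forward $\process$ over the extra malicious-host steps that $\corrupt{\process}$ lacks, and maintain a low-equivalence invariant so that every adversary- or environment-visible action matches. Two remarks on where you diverge.

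First, the paper's invariant is simply ``$\process$ and $\corrupt{\process}$ agree on \emph{trusted} values,'' which is tighter than your forward-looking ``values that may still reach an output or leak.'' The trusted-value formulation is enough because every environment output is typed at the (trusted) host label and, by robust declassification, every firing $\kdeclassify$ reads a trusted value; you correctly identify the latter as the place robust declassification is needed. Your appeal to transparent endorsement, however, is superfluous for \emph{this} lemma: under the ideal semantics a firing $\kendorse$ discards its argument entirely and returns whatever $\advprot$ supplies, so the two worlds agree at that point simply because your internal $\adv$ is in the same state as the external one---no constraint on the source label is required. (Transparent endorsement becomes essential only later, in the ideal-versus-real step of \cref{thm:correctness-of-ideal-execution}.)

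Second, your treatment of $\kdeclassify$/$\kendorse$ on malicious hosts---have the simulator absorb the leak or supply a default---is in fact more careful than the paper's sketch, which asserts that typing forces these to step internally. That assertion is not literally forced by the rules for arbitrary well-typed choreographies (one can write a $\kdeclassify$ with a syntactically secret $\fromlbl$ on a malicious host, since \textsc{$\lbl$-Let} constrains only the result label), so your absorb/default handling is the right move.
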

\begin{proof}
  Corruption only removes statements at malicious hosts,
  however, these statements only generate internal actions:
  \kinput/\koutput expressions always step internally using ideal rules,
  and typing ensures \kdeclassify/\kendorse expressions step internally.
This means $\corrupt{\process}$ and $\process$ have the same external behavior,
  except $\process$ takes extra internal steps.
The simulator follows the control flow and acts like the adversary,
  but whenever the adversary schedules $\corrupt{\process}$,
  the simulator schedules $\process$ multiple times until the head statement
  is at a nonmalicious host; then, it schedules $\process$ again.

  A small caveat: in $\corrupt{\process}$, all data from
  malicious hosts is explicitly replaced with \vunit,
  whereas malicious hosts may store arbitrary data in $\process$.
Since data from malicious hosts is untrusted, our type system ensures
  this data does not influence trusted data,
  which includes all output messages.
Formally, we only require and maintain that $\corrupt{\process}$
  and $\process$ agree on \emph{trusted} values.
\end{proof}

\subsection{Correctness of Sequentialization}
\label{sec:correctness-of-sequentialization}

Next, we show that if the choreography is well\hyp{}synchronized,
then the choreography stepping sequentially is simulated by itself
stepping concurrently:

\begin{theorem}
  \label{thm:correctness-of-sequentialization}
  If
  \( \ssecure[\emptylist]{\process} \),
  and
  \( \ssynched[\sctx]{\process} \) for some \( \sctx \),
  then
  \(
    \using{\corrupt{\process}}{\idealstepsto{}}
    \simulatedBy
    \using{\corrupt{\process}}{\idealstepsto!{}}
  \).
\end{theorem}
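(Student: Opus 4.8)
The plan is to instantiate Definition~\ref{def:simulation} directly: given an arbitrary adversary $\adv$ driving the concurrent configuration $\config_2 = \using{\corrupt{\process}}{\idealstepsto!{}}$, I construct a simulator $\simulator$ driving the sequential configuration $\config_1 = \using{\corrupt{\process}}{\idealstepsto{}}$ so that $\envtraces{\simulator \parallel \config_1} = \envtraces{\adv \parallel \config_2}$. The two configurations are the \emph{same} corrupted program under identical ideal expression semantics, so their data values never disagree; the sole difference is that $\idealstepsto!{}$ permits $\adv$ to interleave steps at independent hosts out of program order. The governing intuition is that these extra interleavings are benign: well-synchronization confines every action crossing the $\envprot$ boundary (the $\kinput$/$\koutput$ actions) to program order, so the environment cannot detect the reordering.

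Concretely, $\simulator$ runs $\adv$ ``in its head'' against an internally maintained copy of $\config_2$, feeding that copy exactly the environment inputs arriving at $\simulator \parallel \config_1$ (parallel composition already broadcasts each input to every process, so this is automatic). The internal run reproduces $\adv$'s execution verbatim, hence its environment outputs, its leaks ($\kdeclassify$ outputs to $\advprot$), and the maul values $\adv$ supplies for each $\kendorse$. Meanwhile $\simulator$ drives the real $\config_1$ forward in program order, emitting the recorded leaks to $\adv$, consuming the recorded maul values, and relaying environment outputs as $\config_1$ produces them; where $\config_1$ lags behind the internal copy, $\simulator$ lazily advances it to the next required program point using values already fixed by the internal run. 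I would also verify that the resulting $\simulator$ respects the adversary interface (Definition~\ref{def:adversary-interface}) and the determinism constraints, which it inherits from $\adv$ together with the deterministic bookkeeping.

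Correctness is then established by a (weak) bisimulation $R$ between the reachable states of $\adv \parallel \config_2$ and $\simulator \parallel \config_1$, treating $\envprot$-actions as visible and all other actions (internal host steps, leaks, and mauls) as silent. The relation records that $\simulator$'s internal adversary and concurrent copy coincide with the actual $\adv$ and $\config_2$, and that $\config_1$ has executed exactly the program-order prefix needed to reproduce, so far, the same sequence of $\envprot$-actions, with pending leaks and maul values buffered consistently. A silent step of $\adv \parallel \config_2$ is matched by zero-or-more silent steps of $\config_1$; an $\kinput$/$\koutput$ step is matched by driving $\config_1$ through the intervening program-order statements until it performs the same external action. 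Output Synchronization (Lemma~\ref{thm:ssynched-blocked-output}) guarantees the latter is always possible: when $\config_2$ performs an external \emph{output}, its host is synchronized with every earlier external action, so all program-order-preceding externals have already occurred on both sides and $\config_1$ is exactly poised to emit the same value. Synchrony and Type Preservation (Lemmas~\ref{thm:ssynched-preservation} and~\ref{thm:ssecure-preservation}) keep $\ssynched$ and $\ssecure$ invariant along every transition, and matching the initial states closes the argument.

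The main obstacle is the commutation core underlying the claim that the $\envprot$-projection is schedule-independent. I must prove a diamond lemma for the concurrent rules: two steps whose actors are disjoint---precisely the side condition $\actor{\action} \notin \hosts{\ectx}$ that enables a delayed step---commute, so any concurrent execution can be permuted toward program order without perturbing its external subsequence. The delicate case is showing that two environment-visible outputs can never be swapped past one another; this does not follow from host-disjointness alone but rests on Output Synchronization forcing the later output to wait on a high-integrity communication chain threading all earlier externals. Under malicious corruption this chain is partly adversarial---$\corrupt{\cdot}$ has deleted malicious hosts and rewritten some links into $\ksend$/$\kreceive$---so I must argue that the \emph{surviving} high-integrity paths still certify the ordering. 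Here I would lean on Robust Synchrony (Lemma~\ref{thm:ssynched-robustness}) and the integrity discipline of the synchronization judgment, namely that each host a path traverses weakens its integrity (captured by $\vee$) whereas independent paths jointly strengthen it (captured by $\wedge$); this is exactly what prevents the adversary from reordering an output ahead of an external action on which the environment's view depends.
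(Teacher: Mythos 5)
Your proposal follows essentially the same route as the paper's proof: a simulator that emulates the adversary against a view of the concurrent process and unwinds the sequential process to match each scheduled external output, justified by a diamond/commutation lemma for actor-disjoint steps together with Output Synchronization, and packaged as a weak bisimulation with typing and synchrony preserved along transitions. The paper formalizes your ``program-order prefix'' invariant as \emph{joinability} (both processes can reach a common state via internal steps, preserved via the diamond lemma), and is careful that the simulator's internal copies are only \emph{public} views related by $\publiceq$ rather than exact copies, which is what makes the adversary interface of Definition~\ref{def:adversary-interface} go through.
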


The adversary, interacting with the concurrent version of the choreography,
can schedule a statement that is not next in program order.

If the statement produces an externally visible action,
the simulator must schedule the same statement.
Since the simulator interacts with the sequential version,
it must ``unwind'' the choreography by scheduling every statement leading up
to the desired statement.
Synchronization ensures unwinding does not fail due to a statement blocked on
input (\kinput or \kendorse), or a statement that performs a different visible
action (\koutput or \kdeclassify).

The concurrent and sequential choreographies necessarily fall out of sync
during simulation:
the adversary may schedule steps for the concurrent choreography that the
simulator cannot immediately match,
and the simulator might schedule steps for the sequential choreography while
unwinding, steps the adversary did not schedule.
Nevertheless, the two choreographies remain \emph{joinable}:
they can reach a common choreography via only internal actions.
We prove choreographies are \emph{confluent}~\cite{Newman42, CRosser36, BNipkow98},
which ensures joinable processes remain joinable throughout the simulation.

\begin{proofsketchof}{thm:correctness-of-sequentialization}
  The simulator maintains a public view of the concurrent process,
  and runs the adversary against this view.
When the adversary schedules an output, the simulator schedules
  the sequential process until it performs the same output;
  the simulator does nothing for input and internal actions.
Well-synchronization guarantees the sequential program can perform the output.
The primary invariant, that the concurrent and sequential processes
  remain joinable, is ensured by confluence.
See \apxref{sec:correctness-of-sequentialization-details} for details.
\end{proofsketchof}

\nosectionappendix
\begin{toappendix}
  \appendixfor{sec:correctness-of-sequentialization}
  \label{sec:correctness-of-sequentialization-details}
  \begin{definition}[Joinable Processes]
  \label{def:joinable}
  We write
  \( \joinable{\process_1}{\process_2} \)
  if there exist traces
  \( \tr_1 \) and \( \tr_2 \)
  containing only internal actions such that
  \( \process_1 \idealstepsto!{\tr_1} \process \)
  and
  \( \process_2 \idealstepsto!{\tr_2} \process \)
  for some \( \process \).
Diagrammatically:
  \[
    \begin{tikzcd}
  \process_1
    \sarrow![dr, dotted, "{\tr_1}"]
  &
  &
  \process_2
    \sarrow![dl, dotted, "{\tr_2}"']
  \\
  & \exists \process
\end{tikzcd}
   \]
\end{definition}

We prove confluence through a \emph{diamond} lemma,
which allows reordering \emph{independent} actions.

\begin{definition}[Independent Actions]
  \label{def:independent-actions}
  Actions \( \action_1 \) and \( \action_2 \) are independent,
  written \( \aindep{\action_1}{\action_2} \),
  if one is an input while the other is an output,
  or they are on different channels.
We write
  \( \aindep{\tr_1}{\tr_2} \)
  if
  \( \aindep{\action_1}{\action_2} \)
  for all \( \action_1 \in \tr_1 \) and \( \action_2 \in \tr_2 \).
\end{definition}

\begin{lemma}[Diamond for Processes]
  \label{thm:diamond-process}
  If
  \( \process \idealstepsto!{\tr_1} \process_1 \),
  \( \process \idealstepsto!{\tr_2} \process_2 \),
  and
  \( \aindep{\tr_1}{\tr_2} \),
  then
  \( \process_1 \idealstepsto!{\tr_2} \process' \)
  and
  \( \process_2 \idealstepsto!{\tr_1} \process' \)
  for some \( \process' \).
Diagrammatically:
  \[
    \begin{tikzcd}
  &
  \process
    \sarrow![dl, "{\tr_1}"']
    \sarrow![dr, "{\tr_2}"]
  &
  \\
  \process_1
    \sarrow![dr, dotted, "{\tr_2}"']
  & &
  \process_2
    \sarrow![dl, dotted, "{\tr_1}"]
  \\
  &
  \exists \process'
  &
\end{tikzcd}
   \]
\end{lemma}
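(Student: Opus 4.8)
The plan is to reduce the full trace diamond to a single-action diamond and then recover the general statement by a standard tiling argument. Concretely, I would first prove the \emph{single-action} diamond: if $\process \idealstepsto!{\action_1} \process_1$, $\process \idealstepsto!{\action_2} \process_2$, and $\aindep{\action_1}{\action_2}$, then there is a $\process'$ with $\process_1 \idealstepsto!{\action_2} \process'$ and $\process_2 \idealstepsto!{\action_1} \process'$. Given this, I would lift to an action-versus-trace diamond by induction on $|\tr_2|$ (applying the single-action case at each step, using that $\action_1$ is independent of every action in $\tr_2$), and finally to the trace-versus-trace statement by induction on $|\tr_1|$. Each inductive step closes one more diamond and appeals to \cref{thm:internal-determinism} to identify the common vertex, so the lifting is routine once the single-action case is in hand.

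For the single-action diamond I would decompose each process step (\cref{fig:process-stepping}) into its buffer component and its statement component: a buffer-append input (\cref{process:input,process:discard}) touches only \buffer; an output (\cref{process:output}) touches only the statement \s; and an internal step (\cref{process:internal}) is a coordinated pair consisting of a buffer output and a statement receive. I would then establish two sub-diamonds and compose them. The buffer sub-diamond is easy: appends and head-removals on distinct channels obviously commute, and even an append and a removal on the \emph{same} channel commute because FIFO order makes the head independent of the tail; this covers exactly the cases permitted by \cref{def:independent-actions}. Multiplicity lives only here---several buffer appends to the same host but different senders are simultaneously enabled by input-totality (\cref{thm:input-totality}) and commute trivially.

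The crux is the \emph{statement} sub-diamond for the concurrent judgment $\idealstepsto!{}$. The enabling structural fact is that each host is effectively single-threaded: the side condition $\actor{\action} \notin \hosts{\ectx}$ of \cref{statement:delay} forbids exposing a redex behind another redex owned by the same host, so at most one statement action per host is enabled. Consequently two distinct enabled statement actions have distinct actors---distinct outputs by output determinism (\cref{thm:output-determinism}), and two receives because a host offers only one receive redex at a time. I would then prove by induction on the derivation of $\s \idealstepsto!{\action_1} \s_1$, with a case analysis on the derivation of $\s \idealstepsto!{\action_2} \s_2$, that distinct-actor steps commute: when both redexes sit under nested \cref{statement:delay} contexts, the fact that $\actor{\action_1}$ and $\actor{\action_2}$ each lie outside the other's surrounding context ensures that performing one step leaves the other's evaluation context and side condition intact, while \cref{statement:if-delay} is discharged by applying the induction hypothesis in both branches. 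I expect this to be the main obstacle, since it is where the ``different hosts act independently'' intuition of choreographies must be made precise, and it requires carefully tracking how the substitution and context reshaping caused by one step preserve the enabledness of the other.

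Finally, I would reassemble the process diamond by matching the buffer and statement sub-diamonds along the internal-step decomposition: for a pair of internal steps, independence forces distinct channels, hence distinct buffer queues and (by single-threadedness) distinct receiving hosts, so the buffer and statement sub-diamonds apply in lockstep and yield a common $\process'$; the mixed input/internal and input/output cases follow because a buffer append never disables an already-enabled action. The common vertex produced on both sides agrees by \cref{thm:internal-determinism}, completing the single-action diamond and, via the tiling above, the lemma.
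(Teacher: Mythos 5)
Your proposal matches the paper's proof in all essentials: the paper likewise reduces the trace diamond to the single-action case and tiles, handles the buffer and statement components of each process rule separately (the mixed input/internal/output cases), and discharges the statement-level diamond by induction on the concurrent stepping derivations using the $\actor{\action} \notin \hosts{\ectx}$ side conditions (organized there as a ``half diamond'' for a sequential step against a concurrent one, then lifted). The only differences are presentational, so your argument is correct and essentially the same.
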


\Cref{thm:diamond-process} does the heavy lifting when proving multiple
confluence results below, and requires quite a bit of work to show.
We first prove a diamond lemma for statements,
and then lift it to processes.

\begin{lemma}[Half Diamond for Statements]
  \label{thm:half-diamond-statement}
  If
  \( \s \idealstepsto{\action_1} \s_1 \),
  \( \s \idealstepsto!{\action_2} \s_2 \),
  and
  \( \aindep{\action_1}{\action_2} \),
  then
  \( \s_1 \idealstepsto!{\action_2} \s' \)
  and
  \( \s_2 \idealstepsto{\action_1} \s' \)
  for some \( \s' \).
\end{lemma}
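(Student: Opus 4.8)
The plan is to case on the rule deriving the concurrent step $\s \idealstepsto!{\action_2} \s_2$, which by the concurrent lifting rules (\cref{statement:sequential,statement:delay,statement:if-delay}) takes one of three forms. Throughout, the other hypothesis $\s \idealstepsto{\action_1} \s_1$ is a \emph{sequential} step, so it reduces the \emph{head} of $\s$ through one of \cref{statement:let,statement:move,statement:select,statement:if}, and $\actor{\action_1}$ is the host of that head statement. For each form of the concurrent step I would exhibit a common successor $\s'$ and verify the two remaining transitions.

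First I would dispatch \cref{statement:sequential}, in which the concurrent step is itself a head reduction $\s \idealstepsto{\action_2} \s_2$. Here $\action_1$ and $\action_2$ both arise from reducing the same head statement, so they share an actor and lie on the same channel. Either the head reduces deterministically (operator application, communication, selection, a leaking \kdeclassify, a nonmalicious \koutput, a skipped downgrade, or a malicious \kinput/\koutput), in which case $\action_1 = \action_2$ by \cref{thm:internal-determinism}; or the head is input-like (a nonmalicious \kinput or an \kendorse), in which case $\action_1$ and $\action_2$ are both inputs on that one channel. By \cref{def:independent-actions} neither situation is independent, contradicting $\aindep{\action_1}{\action_2}$, so this case is vacuous.

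This leaves the commuting cases. For \cref{statement:delay}, $\s = \ectxof{r}$ for an evaluation frame $\ectx$ with $r \idealstepsto!{\action_2} r'$, $\actor{\action_2} \notin \hosts{\ectx}$, and $\s_2 = \ectxof{r'}$; I case on whether $\ectx$ is a \klet, a communication, or a selection frame. For a selection frame $\s_1 = r$, and choosing $\s' = r'$ gives $\s_1 \idealstepsto!{\action_2} \s'$ immediately, while $\s_2 = \sselect{\h_1}{\val}{\h_2}{r'} \idealstepsto{\action_1} r' = \s'$ replays the head selection. For a \klet or communication frame the head reduction substitutes a value, so $\s_1 = \subst{\x}{\val}{r}$; I choose $\s' = \subst{\x}{\val}{r'}$, obtain $\s_1 \idealstepsto!{\action_2} \s'$ from the substitution lemma below, and get $\s_2 \idealstepsto{\action_1} \subst{\x}{\val}{r'} = \s'$ by replaying the head reduction, whose expression the inner step leaves untouched. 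For \cref{statement:if-delay}, $\s = \sifh{\val}{\h}{r_1}{r_2}$ with $r_1 \idealstepsto!{\action_2} r_1'$, $r_2 \idealstepsto!{\action_2} r_2'$, and $\actor{\action_2} \neq \h$; the sequential step picks some branch $i$, so $\s_1 = r_i$, and $\s' = r_i'$ works since $\s_1 \idealstepsto!{\action_2} \s'$ is the matching branch premise and $\s_2 \idealstepsto{\action_1} r_i' = \s'$ replays \cref{statement:if}.

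The main obstacle is the operational substitution lemma invoked above: if $r \idealstepsto!{\action} r'$ then $\subst{\x}{\val}{r} \idealstepsto!{\action} \subst{\x}{\val}{r'}$ with the \emph{same} $\action$. I would prove it by induction on the step derivation, mutually with the analogues for sequential and for expression stepping. The delicate point is action preservation: every value carried by an \emph{already-enabled} output, communication, or leak is a closed value---a variable is never a value---so substitution cannot alter it, while for input-like steps the carried value is chosen by the environment or adversary and can simply be kept fixed; variable hygiene ($\x$ distinct from the skipped binders, $\val$ closed) then lets substitution commute past the binders that \cref{statement:delay} steps under. Crucially, substitution can only \emph{enable} further reductions, never disable the assumed one, which is precisely the direction this lemma requires.
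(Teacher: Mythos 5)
Your proposal is correct and follows essentially the same route as the paper's proof: case analysis on the derivation of the concurrent step, dismissing \cref{statement:sequential} as contradicting independence, and commuting the head reduction past the delayed step in the \cref{statement:delay} and \cref{statement:if-delay} cases. The only difference is that you make explicit the substitution-commutes-with-stepping lemma (and its action-preservation subtlety), which the paper's proof leaves implicit in the phrase ``they can be performed in sequence in either order without changing the end result''; this is a reasonable and correct refinement rather than a divergence.
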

\begin{proof}
  By case analysis on
  \( \s \idealstepsto!{\action_2} \s_2 \).
\begin{pcases}
    \prule{statement:sequential}
      Contradicts \( \aindep{\action_1}{\action_2} \).

    \prule{statement:delay}
      By case analysis on the evaluation context followed by inversion
      on \( \s \idealstepsto{\action_1} \s_1 \).
The step for $\action_1$ involves only the head statement and
      ignores all future statements, whereas the step for $\action_2$
      ignores the head statement and involves only a statement in the future.
Thus, they can be performed in sequence in either order without changing
      the end result.

    \prule{statement:if-delay}
      The step for $\action_2$ steps both branches of the \kif statement,
      whereas the step for $\action_1$ selects a branch.
They can be performed in sequence in either order.
    \qedhere
  \end{pcases}
\end{proof}

\begin{lemma}[Diamond for Statements]
  \label{thm:diamond-statement}
  If
  \( \s \idealstepsto!{\action_1} \s_1 \),
  \( \s \idealstepsto!{\action_2} \s_2 \),
  and
  \( \aindep{\action_1}{\action_2} \),
  then
  \( \s_1 \idealstepsto!{\action_2} \s' \)
  and
  \( \s_2 \idealstepsto!{\action_1} \s' \)
  for some \( \s' \).
\end{lemma}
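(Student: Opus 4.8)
The plan is to reduce the full diamond to the already\hyp{}established Half Diamond (\cref{thm:half-diamond-statement}) by an induction that peels the shared syntactic structure of \(\s\) until at least one of the two concurrent steps is revealed to be a single sequential step. I would argue by induction on the structure of \(\s\), inverting both derivations \(\s \idealstepsto!{\action_1} \s_1\) and \(\s \idealstepsto!{\action_2} \s_2\). For a fixed \(\s\), the applicable concurrent rules are constrained by its head: \cref{statement:delay} requires a \klet, communication, or selection head, \cref{statement:if-delay} requires an \kif head, whereas \cref{statement:sequential}, which merely lifts a single step \(\idealstepsto{}\), is available for any head.

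The base\hyp{}like case is when either step is derived by \cref{statement:sequential}. Since the goal and \(\aindep{\cdot}{\cdot}\) are symmetric in the two steps, I may rename so that \(\s \idealstepsto{\action_1} \s_1\) is a single step. Then \(\s \idealstepsto{\action_1} \s_1\), \(\s \idealstepsto!{\action_2} \s_2\), and \(\aindep{\action_1}{\action_2}\) are exactly the hypotheses of the Half Diamond, yielding some \(\s'\) with \(\s_1 \idealstepsto!{\action_2} \s'\) and \(\s_2 \idealstepsto{\action_1} \s'\); lifting the latter sequential step through \cref{statement:sequential} supplies the concurrent closing step \(\s_2 \idealstepsto!{\action_1} \s'\). (The Half Diamond already rules out, as a contradiction with independence, the degenerate possibility that \(\action_2\) also fires at the head.)

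Otherwise both steps are delayed, so both use the unique delay rule permitted by the head of \(\s\). If that rule is \cref{statement:delay}, then \(\s = \ectxof{\s_0}\) with \(\s_0 \idealstepsto!{\action_1} \s_0'\), \(\s_0 \idealstepsto!{\action_2} \s_0''\), and \(\actor{\action_1}, \actor{\action_2} \notin \hosts{\ectx}\); the induction hypothesis on the strictly smaller \(\s_0\) gives a common \(\s_0^*\) with \(\s_0' \idealstepsto!{\action_2} \s_0^*\) and \(\s_0'' \idealstepsto!{\action_1} \s_0^*\), and re\hyp{}wrapping with \cref{statement:delay}---whose side conditions are inherited from the original derivations---closes the diamond at \(\s' = \ectxof{\s_0^*}\). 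If instead the rule is \cref{statement:if-delay}, the argument is analogous but the induction hypothesis must be applied to \emph{both} branches; the key observation is that the two resulting joins step under the \emph{same} actions \(\action_1\) and \(\action_2\), so \cref{statement:if-delay} (using \(\actor{\action_1}, \actor{\action_2} \neq \h\)) recombines them into a single \kif statement.

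I expect the genuine difficulty to lie entirely in the Half Diamond lemma I am invoking: the atomic commutation of a head step with a delayed step, where the head step substitutes a value into the very continuation in which the delayed step fires, so that commuting the two ultimately relies on substitution commuting with stepping. For the full diamond proper, the only real care is bookkeeping---confirming that when both steps are delayed they are forced to share the same context or \kif (so the induction hypothesis applies), and that independence together with the actor side conditions propagates unchanged into the induction hypothesis and back out through the re\hyp{}wrapping. The \kif case is the most delicate, since recombination succeeds precisely because both branches reduce under identical actions.
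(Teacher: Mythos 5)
Your proposal is correct and follows essentially the same route as the paper: induct on the two derivations, dispatch the case where either step is by \cref{statement:sequential} via the Half Diamond (\cref{thm:half-diamond-statement}), and otherwise observe that both steps must use the same delay rule on the same head, so the induction hypothesis applies and the delay rule re\hyp{}wraps the joined reducts. The extra detail you supply (the symmetry renaming, lifting the sequential closing step back through \cref{statement:sequential}, and applying the induction hypothesis to both branches in the \kif case) is exactly what the paper's terser argument leaves implicit.
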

\begin{proof}
  By induction on the derivations of both stepping judgments.
If either is by \cref{statement:sequential},
  we conclude by \cref{thm:half-diamond-statement}.
Otherwise, both steps ignore the head of \s
  using the same delay rule.
We appeal to the induction hypothesis,
  and use the same delay rule to get a complete derivation.
\end{proof}

\begin{proofof}{thm:diamond-process}
  We prove the statement when $\tr_1$ and $\tr_2$ are single actions;
  the more general statement follows straightforwardly by induction on
  $\tr_1$ followed by induction on $\tr_2$.

  We proceed by case analysis on both stepping judgments.
  \begin{itemize}
    \item Both steps are input (\cref{process:input,process:discard}).
      Since $\aindep{\tr_1}{\tr_2}$,
      the input messages are added at the end of two different queues.
Both actions can be performed in either order without affecting
      the end result.

    \item One step is input, the other is by \cref{process:internal}.
      The input step adds a message to a queue,
      while \cref{process:internal} pops a message from a queue and feeds it
      to the choreography.
The queues must be different since $\aindep{\tr_1}{\tr_2}$,
      so the steps are independent.

    \item One step is input, the other is by \cref{process:output}.
      The input step only affects the queue,
      and the output step only affects the choreography,
      so the steps are independent.

    \item Both steps are output (\cref{process:internal,process:output}).
      If either step is by \cref{process:internal},
      then we use $\aindep{\tr_1}{\tr_2}$ as before to show we pull messages
      out of different queues.
This allows reordering changes to the buffer.
\Cref{thm:diamond-statement} finishes the proof.
    \qedhere
  \end{itemize}
\end{proofof}

The proof of \cref{thm:diamond-process} reasons generically about buffers,
and appeals to a diamond lemma for statements in a black-box manner.
This means we can generalize \cref{thm:diamond-process} to arbitrary
(combinations of) stepping relations without extra work
as long as a diamond property for the same relations holds for statements.

\begin{lemma}[Generalized Diamond for Processes]
  \label{thm:general-diamond-process}
  Assume the diamond property holds for statements with stepping relations
  \( \stepsto{}_1 \)
  and
  \( \stepsto{}_2 \).
If
  \( \process \stepsto{\tr_1}_1 \process_1 \),
  \( \process \stepsto{\tr_2}_2 \process_2 \),
  and
  \( \aindep{\tr_1}{\tr_2} \),
  then
  \( \process_1 \stepsto{\tr_2}_2 \process' \)
  and
  \( \process_2 \stepsto{\tr_1}_1 \process' \)
  for some \( \process' \).
\end{lemma}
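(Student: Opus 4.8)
The plan is to replay the proof of \cref{thm:diamond-process} essentially verbatim, exploiting the fact that that proof already treats the underlying statement\hyp{}stepping relation as a black box and reasons about buffers generically. The only substantive change is to invoke the \emph{assumed} statement diamond property for \( \stepsto{}_1 \) and \( \stepsto{}_2 \) wherever the original proof appealed to the concrete \cref{thm:diamond-statement}. Concretely, process steps via \( \stepsto{}_1 \) use the statement relation \( \stepsto{}_1 \), and likewise for \( \stepsto{}_2 \), while the buffer infrastructure is shared and relation\hyp{}agnostic.

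First I would reduce to the case where \( \tr_1 \) and \( \tr_2 \) are single actions, exactly as before: the general statement follows by an outer induction on \( \tr_1 \) and an inner induction on \( \tr_2 \), peeling off one action at a time and commuting it past the opposite trace using the single\hyp{}action result. Independence is preserved throughout because \( \aindep{\cdot}{\cdot} \) is defined pointwise over the actions of the two traces, so every sub\hyp{}trace encountered during the induction remains independent of its counterpart.

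For the single\hyp{}action case I would perform a case analysis on the two process steps according to the rules of \cref{fig:process-stepping}, mirroring the four cases of \cref{thm:diamond-process}: both steps inputs (\cref{process:input,process:discard}); one input paired with a \cref{process:internal} step; one input paired with a \cref{process:output} step; and both steps outputs (\cref{process:internal,process:output}). In the first three cases the argument is purely about buffers and is independent of which statement relation is used, since input only appends to a queue, \cref{process:internal} pops from a queue and feeds the statement, and \cref{process:output} touches only the statement; \( \aindep{\tr_1}{\tr_2} \) guarantees that any two queues involved are distinct, so the buffer manipulations commute and the statement is stepped by at most one of the two relations. Thus these cases transfer unchanged. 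In the fourth case, after using \( \aindep{\tr_1}{\tr_2} \) to reorder the buffer pops exactly as in \cref{thm:diamond-process}, the statement is stepped once by \( \stepsto{}_1 \) and once by \( \stepsto{}_2 \) on independent actions; here, in place of \cref{thm:diamond-statement}, I close the diamond by the assumed statement diamond property for \( \stepsto{}_1 \) and \( \stepsto{}_2 \).

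I expect no genuine obstacle, since the original proof was engineered to be modular in precisely this respect. The one point requiring care is verifying that the process\hyp{}level rules factor cleanly through the statement relation: \cref{process:input,process:discard} and the buffer rules do not mention the statement relation at all, whereas \cref{process:internal,process:output} use it only on the statement component. Confirming this uniformity---so that switching between \( \stepsto{}_1 \) and \( \stepsto{}_2 \) leaves all buffer reasoning intact and isolates every relation\hyp{}specific step inside the single appeal to the assumed statement diamond---is the crux of the argument.
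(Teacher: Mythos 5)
Your proposal is correct and matches the paper's proof, which simply states ``Same as the proof of the diamond lemma for processes,'' relying on exactly the modularity you identify: the process-level argument reasons generically about buffers and invokes the statement-level diamond property as a black box. Your more detailed walkthrough of the four cases confirms the same observation the paper makes in the preceding paragraph.
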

\begin{proof}
  Same as the proof of \cref{thm:diamond-process}.
\end{proof}

Processes remain joinable after taking internal or matching steps.

\begin{lemma}[Internal Action]
  \label{thm:joinable-after-internal}
  If
  \( \joinable{\process_1}{\process_2} \)
  and
  \( \process_1 \idealstepsto!{\action} \process_1' \)
  for \( \action \) internal,
  then
  \( \joinable{\process_1'}{\process_2} \).
Diagrammatically:
  \[
    \begin{tikzcd}
  \process_1
    \sarrow![d, "{\action}"']
    \sarrow![dr, "{\tr_1}"]
  &
  &
  \process_2
    \sarrow![dl, "{\tr_2}"']
    \arrow[d, equal]
  \\
  \process_1'
    \sarrow![dr, dotted, "{\tr_1'}"']
  &
  \process
  &
  \process_2
    \sarrow![dl, dotted, "{\tr_2'}"]
  \\
  &
  \exists \process'
  &
\end{tikzcd}
   \]
\end{lemma}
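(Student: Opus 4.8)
The goal is Lemma~\ref{thm:joinable-after-internal}: given $\joinable{\process_1}{\process_2}$ and an internal step $\process_1 \idealstepsto!{\action} \process_1'$, conclude $\joinable{\process_1'}{\process_2}$. By Definition~\ref{def:joinable}, the hypothesis gives internal traces $\tr_1, \tr_2$ with $\process_1 \idealstepsto!{\tr_1} \process$ and $\process_2 \idealstepsto!{\tr_2} \process$ for a common $\process$. The plan is to relate the single internal step $\action$ to the trace $\tr_1$ out of $\process_1$ and then re-join. The natural tool is the diamond property (Lemma~\ref{thm:diamond-process}, or its generalization Lemma~\ref{thm:general-diamond-process}), since $\action$ and $\tr_1$ both emanate from $\process_1$ and the end state $\process_1'$ must be shown joinable with $\process$ (and hence, by transitivity of the joinability relation through internal steps, with $\process_2$).

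\textbf{Main approach.}
First I would consider two cases depending on whether $\action$ is \emph{independent} of the entire trace $\tr_1$. If $\aindep{\action}{\tr_1}$, then Lemma~\ref{thm:diamond-process} applies directly: from $\process_1 \idealstepsto!{\action} \process_1'$ and $\process_1 \idealstepsto!{\tr_1} \process$, we obtain a common $\process'$ with $\process_1' \idealstepsto!{\tr_1} \process'$ and $\process \idealstepsto!{\action} \process'$. Since $\action$ is internal and $\tr_1$ is internal, both closing traces are internal; combined with $\process_2 \idealstepsto!{\tr_2} \process \idealstepsto!{\action} \process'$, this witnesses $\joinable{\process_1'}{\process_2}$ via the internal trace $\tr_2 \append \action$ on the right and $\tr_1$ on the left. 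The harder case is when $\action$ is \emph{not} independent of $\tr_1$, i.e.\ $\action$ shares a channel with some action in $\tr_1$. Here the key observation is that $\action$ and the conflicting action in $\tr_1$ cannot both be outputs on the same channel, because configurations are output deterministic per channel (and internal steps are self-communications tied to a single actor); determinism should force $\action$ to essentially \emph{coincide} with the first conflicting step of $\tr_1$.

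\textbf{Exploiting determinism.}
Concretely, I would decompose $\tr_1 = \tr_1' \append \action' \append \tr_1''$ where $\action'$ is the first action in $\tr_1$ that conflicts with $\action$, and argue that the prefix $\tr_1'$ is independent of $\action$, so the diamond lemma lets me commute $\action$ past $\tr_1'$, reducing to the situation where $\action$ conflicts immediately. At that point, internal determinism (Lemma~\ref{thm:internal-determinism}) and output determinism (Lemma~\ref{thm:output-determinism}) should force $\action = \action'$ and hence the same successor state, so that performing $\action$ lands on the same intermediate configuration from which $\tr_1''$ already drives to $\process$. This shows $\process_1' \idealstepsto!{\tr_1''} \process$ (after commuting through $\tr_1'$), placing $\process_1'$ and $\process_2$ both at $\process$ via internal traces, which is exactly joinability.

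\textbf{Anticipated obstacle.}
The main difficulty is the non-independent case: justifying that a single internal action conflicting with a trace must be matched \emph{exactly} by the first conflicting action of that trace, rather than by some later one or by a genuinely distinct step. This relies on carefully combining determinism-per-channel with the fact that internal actions are self-communications $\aglobalinternal{\h}$ whose actor is fixed; I expect to need the FIFO discipline of buffers (so that a pending message on a channel has a unique next deliverable value) to rule out spurious alternatives. Once the matching is pinned down, repeated application of the diamond lemma to commute past the independent prefix is routine, and the final appeal to joinability closes the argument by transitivity of internal reachability.
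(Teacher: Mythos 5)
Your proposal matches the paper's proof essentially step for step: the same case split on $\aindep{\action}{\tr_1}$, the same direct application of \cref{thm:diamond-process} in the independent case (joining via $\tr_1$ and $\tr_2 \append \action$), and the same decomposition $\tr_1 = \tr_1' \append \action_1 \append \tr_1''$ at the first conflicting action in the dependent case, commuting $\action$ past the independent prefix and then invoking output determinism (\cref{thm:output-determinism}, since dependent internal actions are outputs with the same actor) plus internal determinism to identify $\action$ with $\action_1$ and collapse the successor states. The appeal to FIFO buffer discipline you anticipate is not needed; per-channel output determinism already pins down the matching.
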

\begin{proof}
  Since $\process_1$ and $\process_2$ are joinable,
  there exist $\process$ and internal $\tr_1, \tr_2$ such that
  \( \process_1 \idealstepsto!{\tr_1} \process \)
  and
  \( \process_2 \idealstepsto!{\tr_2} \process \).
We case on whether
  \( \aindep{\action}{\tr_1} \).
  \begin{pcases}
    \pcase{\( \aindep{\action}{\tr_1} \)}
      \Cref{thm:diamond-process} gives $\process'$ such that
      \( \process_1' \idealstepsto!{\tr_1} \process' \)
      and
      \( \process \idealstepsto!{\action} \process' \).
Since $\action$ and $\tr_2$ are internal, so is $\tr_2 \append \action$.
      Thus,
      \( \joinable{\process_1'}{\process_2} \)
      through $\tr_1$ and $\tr_2 \append \action$.
\[
        \begin{tikzcd}
  &
  \process_1
    \sarrow![dl, "{\action}"']
    \sarrow![dr, "{\tr_1}"]
  &
  &
  \process_2
    \sarrow![dl, "{\tr_2}"]
  \\
  \process_1'
    \sarrow![dr, dotted, "{\tr_1}"']
    \arrow[rr, phantom, "\text{\Cref{thm:diamond-process}}" description]
  &
  &
  \process
    \sarrow![dl, dotted, "{\action}"]
  &
  \\
  &
  \exists \process'
  &
  &
\end{tikzcd}
       \]

    \pcase{\( \anindep{\action}{\tr_1} \)}
      Let $\action_1$ be the first action in $\tr_1$ such that
      \( \anindep{\action}{\action_1} \), that is,
      \( \tr_1 = \tr_1' \append \action_1 \append \tr_1'' \)
      with
      \( \aindep{\action}{\tr_1'} \).
We have,
      \(
        \process_1
        \idealstepsto!{\tr_1'}
        \process'
        \idealstepsto!{\action_1}
        \process''
        \idealstepsto!{\tr_1''}
        \process
      \).
\Cref{thm:diamond-process} gives $\process_1''$ such that
      \( \process_1' \idealstepsto!{\tr_1'} \process_1'' \)
      and
      \( \process' \idealstepsto!{\action} \process_1'' \).
We now have
      \( \process' \idealstepsto!{\action} \process_1'' \)
      and
      \( \process' \idealstepsto!{\action} \process'' \),
      however,
      the stepping judgment is deterministic on dependent internal actions,
      thus \cref{thm:output-determinism} gives
      \( \process_1'' = \process'' \).\footnote{
        More specifically, $\action$ and $\action_1$ are internal actions,
        which are represented as outputs.
Since
        \( \anindep{\action}{\action_1} \),
        we have
        \( \actor{\action} = \actor{\action_1} \),
        so \cref{thm:output-determinism} applies.
      }
Finally,
      \( \joinable{\process_1'}{\process_2} \)
      through $\tr_1' \append \tr_1''$ and $\tr_2$.
\[
        \begin{tikzcd}
  &
  \process_1
    \sarrow![dl, "{\action}"']
    \sarrow![dr, "{\tr_1'}"]
  &
  &
  &
  &
  &
  &
  \process_2
    \sarrow![dddlll, "{\tr_2}"']
  \\
  \process_1'
    \sarrow![dr, dotted, "{\tr_1'}"]
    \arrow[rr, phantom, "\text{\Cref{thm:diamond-process}}" description]
  &
  &
  \process'
    \sarrow![dl, dotted, "{\action}"']
    \sarrow![dr, "{\action_1}"]
  \\
  &
  \exists \process_1''
    \arrow[rr, equal, "\text{\Cref{thm:output-determinism}}"]
  &
  &
  \process''
    \sarrow![dr, "{\tr_1''}"]
  & &
  \\
  & & & &
  \process
\end{tikzcd}
       \]
  \end{pcases}
\end{proof}

\begin{lemma}[Matching Actions]
  \label{thm:joinable-after-matching}
  If
  \( \joinable{\process_1}{\process_2} \),
  \( \process_1 \idealstepsto!{\action} \process_1' \),
  and
  \( \process_2 \idealstepsto!{\action} \process_2' \),
  then
  \( \joinable{\process_1'}{\process_2'} \).
Diagrammatically:
  \[
    \begin{tikzcd}
  \process_1
    \sarrow![d, "{\action}"']
    \sarrow![dr, "{\tr_1}"]
  &
  &
  \process_2
    \sarrow![dl, "{\tr_2}"']
    \sarrow![d, "{\action}"]
  \\
  \process_1'
    \sarrow![dr, dotted, "{\tr_1'}"']
  &
  \process
  &
  \process_2'
    \sarrow![dl, dotted, "{\tr_2'}"]
  \\
  &
  \exists \process'
  &
\end{tikzcd}
   \]
\end{lemma}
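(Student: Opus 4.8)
The plan is to reuse the confluence machinery just developed. First I would unfold $\joinable{\process_1}{\process_2}$ to obtain internal traces $\tr_1$ and $\tr_2$ and a common state $\process$ with $\process_1 \idealstepsto!{\tr_1} \process$ and $\process_2 \idealstepsto!{\tr_2} \process$. Since $\process_1$ and $\process_2$ also take the \emph{same} step $\action$, the goal becomes commuting $\action$ past $\tr_1$ and $\tr_2$ so that the successors $\process_1'$ and $\process_2'$ reconverge. I would use that $\joinable{\cdot}{\cdot}$ is symmetric and case on whether $\action$ is internal.

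If $\action$ is internal, it need not be independent of $\tr_1$---for instance, $\action$ and some step of $\tr_1$ could both be self-communications of the same host---so I cannot commute it directly. Instead I would invoke \Cref{thm:joinable-after-internal} twice: from $\joinable{\process_1}{\process_2}$ and the internal step $\process_1 \idealstepsto!{\action} \process_1'$ it gives $\joinable{\process_1'}{\process_2}$, and applying it again (via symmetry) to the internal step $\process_2 \idealstepsto!{\action} \process_2'$ yields $\joinable{\process_1'}{\process_2'}$.

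If $\action$ is not internal, the key observation I would establish is that it is independent of \emph{every} internal action. Internal actions are self-communications, hence outputs; $\action$ is either an input, independent of any output by the definition of independence, or an output, which in the ideal semantics the process can only direct to $\envprot$ or $\advprot$ (moves, selections, and send/receive expressions all step internally) and hence lies on a channel distinct from any self-communication. With $\aindep{\action}{\tr_1}$ and $\aindep{\action}{\tr_2}$ in hand, I would apply \Cref{thm:diamond-process} to each divergence: from $\process_1 \idealstepsto!{\tr_1} \process$ and $\process_1 \idealstepsto!{\action} \process_1'$ it produces $\process_a$ with $\process_1' \idealstepsto!{\tr_1} \process_a$ and $\process \idealstepsto!{\action} \process_a$, and symmetrically a $\process_b$ with $\process_2' \idealstepsto!{\tr_2} \process_b$ and $\process \idealstepsto!{\action} \process_b$. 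Since $\process$ steps to both $\process_a$ and $\process_b$ under the same $\action$, internal determinism (\Cref{thm:internal-determinism}) forces $\process_a = \process_b$; calling this state $\process'$, the internal traces $\tr_1$ and $\tr_2$ witness $\joinable{\process_1'}{\process_2'}$.

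The hardest part, I expect, is getting the case split right: the clean diamond-plus-determinism argument only works when $\action$ commutes with the internal traces, which fails exactly when $\action$ is internal, forcing that case to be routed through \Cref{thm:joinable-after-internal}. The crucial enabling insight for the non-internal case is that any action the process exchanges with $\envprot$ or $\advprot$ is independent of every internal self-communication, so $\action$ can always be slid across $\tr_1$ and $\tr_2$ without interference.
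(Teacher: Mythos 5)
Your proposal is correct and follows essentially the same route as the paper's proof: unfold joinability, handle internal $\action$ by two applications of \cref{thm:joinable-after-internal}, and handle external $\action$ by observing it is independent of the internal traces, applying \cref{thm:diamond-process} twice, and closing the diagram with \cref{thm:internal-determinism}. Your explicit justification that an external action (directed at $\envprot$ or $\advprot$) lies on a different channel from any self-communication is a correct elaboration of the independence claim the paper leaves implicit.
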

\begin{proof}
  Since $\process_1$ and $\process_2$ are joinable,
  there exist $\process$ and internal $\tr_1, \tr_2$ such that
  \( \process_1 \idealstepsto!{\tr_1} \process \)
  and
  \( \process_2 \idealstepsto!{\tr_2} \process \).
If $\action$ is internal, then the result follows by two applications of
  \cref{thm:joinable-after-internal}.
Otherwise,
  \( \aindep{\action}{\tr_1} \)
  and
  \( \aindep{\action}{\tr_2} \).
The result follows from two applications of \cref{thm:diamond-process},
  and one application of \cref{thm:internal-determinism}:
\[
    \begin{tikzcd}
  &
  \process_1
    \sarrow![dl, "{\action}"']
    \sarrow![dr, "{\tr_1}"]
  &
  &
  \process_2
    \sarrow![dl, "{\tr_2}"']
    \sarrow![dr, "{\action}"]
  \\
  \process_1'
    \sarrow![dr, dotted, "{\tr_1}"']
    \arrow[rr, phantom, "\text{\Cref{thm:diamond-process}}" description]
  &
  &
  \process
    \sarrow![dl, dotted, "{\action}"]
    \sarrow![dr, dotted, "{\action}"']
  &
  &
  \process_2'
    \sarrow![dl, dotted, "{\tr_2}"]
    \arrow[ll, phantom, "\text{\Cref{thm:diamond-process}}" description]
  \\
  &
  \exists \process_1''
    \arrow[rr, equal, "\text{\Cref{thm:internal-determinism}}"]
  &
  &
  \exists \process_2''
\end{tikzcd}
   \]
\end{proof}

If a well-typed, well-synchronized program can take an output step concurrently,
then it can take the same step sequentially
(after taking the series of internal steps leading up to the output).
We write
$\process \idealstepsto*{\aoutput{\msg}} \process'$
if
$\process \idealstepsto{\tr \append \aoutput{\msg}} \process'$
for some internal \tr.

\begin{lemma}[Sequential Execution]
  \label{thm:sequential-output}
  If
  \( \ssecure[\emptylist]{\process} \),
  \( \ssynched{\process} \),
  and
  \( \process \idealstepsto!{\aoutput{\msg}} \)
  for \( \msg \) external,
  then
  \( \process \idealstepsto*{\aoutput{\msg}} \).
\end{lemma}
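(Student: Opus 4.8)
The plan is to peel the result down to the underlying statement and then induct on the structure of that statement. Inverting the process output rule, the hypothesis $\process \idealstepsto!{\aoutput{\msg}}$ comes from a concurrent statement step $\s \idealstepsto!{\aoutput{\msg}} \s'$, and I will establish the sharper claim that $\s \idealstepsto{\tr \append \aoutput{\msg}} \s''$ for some internal $\tr$, using only the \emph{sequential} relation $\idealstepsto{}$. This suffices: in the ideal semantics every internal statement action is a self\hyp{}message, so each step of this sequential run lifts back to a process step via the output rule while leaving the buffer untouched, yielding $\process \idealstepsto*{\aoutput{\msg}}$. The induction follows the rule deriving the concurrent output. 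If the output is already at the head, $\s$ performs it sequentially and we finish with an empty prefix. If instead the head is a conditional stepped without being resolved, then $\ssecure[\emptylist]{\s}$ forces its guard to be closed and hence a value, so I resolve the conditional with one internal step, recover $\ssecure[\emptylist]{\cdot}$ and $\ssynched{\cdot}$ for the taken branch by inverting the typing and synchronization judgments, and invoke the induction hypothesis on that branch (which is structurally smaller and retains its own concurrent output derivation).

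The crux is the delay case, where $\s = \ectxof{\s_0}$, the step skips the head construct $\ectx$, and $\actor{\aoutput{\msg}} = \h \notin \hosts{\ectx}$. The key sub\hyp{}claim is that the skipped head is \emph{internal} under $\idealstepsto{}$: communication and selection heads step internally, and so does a \klet head carrying an internal expression. The only dangerous head is a \klet running an external expression---an \kinput or \kendorse that would block the sequential run, or a \koutput or \kdeclassify that would emit a conflicting visible action. I rule this out by contradiction. Writing $\h'$ for the head's host, inverting $\ssynched{\s}$ at an external head gives $\ssynched[\hreset{\h'}]{\s_0}$, so \cref{thm:ssynched-blocked-output} applied to $\s_0 \idealstepsto!{\aoutput{\msg}}$ forces $\hissynched[\hreset{\h'}]{\h}$. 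But $\hreset{\h'}$ resets every path leaving $\h'$ to $\weakest$, so instantiating its universal quantifier at the head host $\h'$ demands $\weakest \flowsto \labelof{\h'} \vee \labelof{\h}$, i.e.\ $\weakest \actsfor \integrity{\labelof{\h'}} \vee \integrity{\labelof{\h}}$. Because the lemma is applied to corrupted choreographies, in which malicious hosts carry no statements, both $\h'$ and $\h$ are nonmalicious; their integrities are therefore nontrivial, the join lies strictly below $\weakest$, and the required flow fails---a contradiction.

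With the head known to be internal, I take the sequential head step $\s \idealstepsto{\action_0} \s^{\dagger}$, which is structurally smaller (substitution adds no statement constructors). Since $\action_0$ is a self\hyp{}message whose actor lies in $\hosts{\ectx}$ while the output's actor $\h$ lies outside $\hosts{\ectx}$, the two actions sit on different channels and are independent (\cref{def:independent-actions}), so the diamond lemma \cref{thm:diamond-statement} applied to the head step (lifted to $\idealstepsto!{}$) and the given $\s \idealstepsto!{\aoutput{\msg}} \s'$ yields $\s^{\dagger} \idealstepsto!{\aoutput{\msg}}$. Preservation of typing and synchronization (\cref{thm:ssecure-preservation}, \cref{thm:ssynched-preservation}) give $\ssecure[\emptylist]{\s^{\dagger}}$ and $\ssynched{\s^{\dagger}}$, so the induction hypothesis supplies a sequential internal\hyp{}then\hyp{}output trace from $\s^{\dagger}$, which I prepend with the internal action $\action_0$. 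I expect the main obstacle to be precisely the synchronization argument of the delay case: it hinges on reading off the exact effect of $\hreset$ on the path\hyp{}integrity context and confirming that \cref{thm:ssynched-blocked-output} genuinely fails once an external action has been skipped, with the degenerate case of a \kdeclassify at a malicious host excluded only because the lemma is used on corrupted choreographies.
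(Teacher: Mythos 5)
Your proposal is correct and follows essentially the same route as the paper's (quite terse) proof sketch: induction on the concurrent stepping derivation, with the delay case resolved by showing the skipped head must be internal via a contradiction from \cref{thm:ssynched-blocked-output}, then taking the sequential head step and recursing. You merely fill in details the sketch leaves implicit --- the explicit $\hreset$/flow computation for the contradiction, the use of \cref{thm:diamond-statement} to carry the concurrent output past the head step, and the preservation lemmas needed to re-establish the hypotheses for the induction --- all of which are consistent with the paper's supporting lemmas.
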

\begin{proofsketch}
  By induction on the stepping relation.
If the step is by \cref{statement:sequential},
  then $\process \idealstepsto{\aoutput{\msg}}$ and we are done.
Otherwise, it must be by \cref{statement:delay}.
We need to show that we can take a sequential internal step by casing on the
  evaluation context $\ectx$.
Note that the top statement in \ectx must be internal,
  otherwise we get a contradiction by \cref{thm:ssynched-blocked-output}.
Since $\process$ has no free variables, it can take an internal step.
  
\end{proofsketch}

\begin{lemma}[Step Over]
  \label{thm:concurrent-then-sequential}
  If
  \( \process_1 \idealstepsto!{\action_1} \process_2 \idealstepsto{\action_2} \),
  then either
  \( \process_1 \idealstepsto{\action_1} \process_2 \),
  or
  \( \process_1 \idealstepsto{\action_2} \)
  and
  \( \aindep{\action_1}{\action_2} \).
\end{lemma}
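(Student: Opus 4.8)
The plan is to analyze the concurrent step $\process_1 \idealstepsto!{\action_1} \process_2$ and show that it either already acts on the head of the program --- in which case it is a sequential step and the first disjunct holds --- or it \emph{delays} over the head, leaving the head intact; in the latter case the sequential step that $\process_2$ subsequently takes on its head could equally have been taken first by $\process_1$, yielding the second disjunct. Throughout, write $\process_1 = \proc{\h*}{\buffer_1}{\s_1}$ and $\process_2 = \proc{\h*}{\buffer_2}{\s_2}$.

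First I would dispatch the cases where $\process_1 \idealstepsto!{\action_1} \process_2$ is already sequential. If it is a buffer step (\cref{process:input} or \cref{process:discard}) it never consults the statement relation, so it is literally also a $\idealstepsto{}$ step and $\process_1 \idealstepsto{\action_1} \process_2$ holds exactly. Otherwise it is by \cref{process:internal} or \cref{process:output}, driven by a concurrent statement step $\s_1 \idealstepsto!{\action_1'} \s_2$ whose actor coincides with $\actor{\action_1}$. Inverting this step by \cref{statement:sequential,statement:delay,statement:if-delay}: in the \cref{statement:sequential} case the statement step is in fact sequential, so rebuilding the identical process step using $\idealstepsto{}$ for the statement again gives $\process_1 \idealstepsto{\action_1} \process_2$ exactly.

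The remaining case is \cref{statement:delay} or \cref{statement:if-delay}, where $\s_1$ and $\s_2$ share an unchanged head $H$ (the top of an evaluation context $\ectx$, or an \kif statement), and the side condition gives $\actor{\action_1} \notin \hosts{\ectx}$ (respectively $\actor{\action_1} \neq \h$). Now I case on the second step $\process_2 \idealstepsto{\action_2}$. If it is a buffer step, then $\action_2$ is an input while $\action_1$ is an output (both \cref{process:internal} and \cref{process:output} emit outputs, since internal actions are encoded as self-outputs), so $\aindep{\action_1}{\action_2}$ is immediate and $\process_1 \idealstepsto{\action_2}$ follows from input totality (\cref{thm:input-totality}). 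Otherwise $\action_2$ arises from a \emph{sequential} statement step, which --- having no delay rule --- must act on the head $H$. Since $H$ is identical in $\s_1$, the same head step is available there, and its actor is a host of $H$, hence lies in $\hosts{\ectx}$ (or equals $\h$). With $\actor{\action_1}$ outside $\hosts{\ectx}$ this yields $\actor{\action_1} \neq \actor{\action_2}$; since both process actions are outputs, distinct actors mean distinct senders and hence distinct channels, giving $\aindep{\action_1}{\action_2}$.

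What remains is to rebuild the process-level step $\process_1 \idealstepsto{\action_2}$, and this is the only place I expect friction. When $\action_2$ comes from \cref{process:output} (including internal self-communication) the buffer is untouched and the reconstruction is immediate from the head step. When $\action_2$ is by \cref{process:internal} the subtlety is that $\action_1$ may already have popped a message from $\buffer_1$ to produce $\buffer_2$, so I must show the message consumed by $\action_2$ is still at the front of its queue in $\buffer_1$. This holds because the two deliveries target the distinct hosts $\actor{\action_1}$ and $\actor{\action_2}$, so they pop from distinct FIFO queues, and the queue read by $\action_2$ is identical in $\buffer_1$ and $\buffer_2$. The main obstacle is thus the bookkeeping that links process-level actions to their underlying statement actions and tracks which queue each internal delivery consumes; once that is pinned down, every other step is routine inversion of the stepping rules.
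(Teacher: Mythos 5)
Your proposal is correct and follows essentially the same argument as the paper's proof: if the first step is sequential you are done, and otherwise a delay rule preserves the head statement, so the sequential second step is already available at $\process_1$, with independence coming from the delay rules' actor side conditions. The paper's version is terser and works purely at the statement level, whereas you additionally spell out the process-level bookkeeping (buffer steps, input totality, and the distinct-queue argument for \cref{process:internal}) that the paper leaves implicit; this is sound and fills a real, if routine, gap.
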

\begin{proof}
  Assume the first step cannot be done sequentially
  (otherwise we are done).
Then, the first step uses a delay rule (\cref{statement:delay,statement:if-delay}),
  which ignores the head statement of $\process_1$.
The second step is sequential so it only depends on the head statement
  of $\process_2$, which is the same as the head statement of $\process_1$,
  thus, $\process_1$ can step with $\action_2$.
We get that
  \( \aindep{\action_1}{\action_2} \)
  from the side conditions on delay rules,
  which require \( \actor{\action_1} \) to be different from
  \( \actor{\action_2} \).
\end{proof}

\begin{lemma}[Delayed Step]
  \label{thm:delayed-steps}
  If
  \( \process_1 \idealstepsto!{\action} \process_2 \idealstepsto*{\aoutput{\msg}} \)
  for \( \action \) internal,
  then
  \( \process_1 \idealstepsto*{\aoutput{\msg}} \).
\end{lemma}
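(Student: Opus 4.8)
The plan is to induct on the length of the internal trace $\tr$ witnessing $\process_2 \idealstepsto*{\aoutput{\msg}}$, peeling off the leading sequential step and commuting the concurrent internal action $\action$ past it with \cref{thm:concurrent-then-sequential}. Write the first step of the sequential execution $\process_2 \idealstepsto{\tr \append \aoutput{\msg}}$ as $\process_2 \idealstepsto{\action'} \process_3$; thus $\action' = \aoutput{\msg}$ when $\tr$ is empty, and otherwise $\action'$ is the first (internal) action of $\tr$. In either case I apply \cref{thm:concurrent-then-sequential} to $\process_1 \idealstepsto!{\action} \process_2 \idealstepsto{\action'} \process_3$.

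If the first disjunct holds, then $\process_1 \idealstepsto{\action} \process_2$ already steps sequentially; prepending $\action$ to the sequential trace out of $\process_2$ and using that $\action$ is internal yields $\process_1 \idealstepsto*{\aoutput{\msg}}$. Otherwise the second disjunct gives some $\process_3'$ with $\process_1 \idealstepsto{\action'} \process_3'$ and $\aindep{\action}{\action'}$. When $\tr$ is empty this already finishes the proof, since then $\action' = \aoutput{\msg}$ and $\process_1 \idealstepsto{\aoutput{\msg}}$ witnesses $\process_1 \idealstepsto*{\aoutput{\msg}}$ with an empty internal prefix.

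The interesting case is the second disjunct with $\tr$ nonempty, where $\action'$ is internal and I must manufacture a concurrent internal step out of $\process_3'$ to invoke the induction hypothesis. From $\process_1 \idealstepsto!{\action} \process_2$, the lift $\process_1 \idealstepsto!{\action'} \process_3'$ of the sequential step, and $\aindep{\action}{\action'}$, \cref{thm:diamond-process} yields a $\process_4$ with $\process_2 \idealstepsto!{\action'} \process_4$ and $\process_3' \idealstepsto!{\action} \process_4$. Since $\process_2 \idealstepsto{\action'} \process_3$ also lifts to the concurrent relation, \cref{thm:internal-determinism} forces $\process_4 = \process_3$, so $\process_3' \idealstepsto!{\action} \process_3$. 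The state $\process_3$ reaches $\aoutput{\msg}$ through the strictly shorter internal prefix $\tr'$, where $\tr = \action' \append \tr'$, so the induction hypothesis applied to $\process_3' \idealstepsto!{\action} \process_3 \idealstepsto*{\aoutput{\msg}}$ gives $\process_3' \idealstepsto*{\aoutput{\msg}}$. Concatenating the leading sequential step $\process_1 \idealstepsto{\action'} \process_3'$, and again using that $\action'$ is internal, yields $\process_1 \idealstepsto*{\aoutput{\msg}}$.

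I expect the main obstacle to be exactly this last case: \cref{thm:concurrent-then-sequential} only guarantees that $\process_1$ can perform $\action'$ sequentially, landing at a state $\process_3'$ that is a priori unrelated to the state $\process_3$ from which the induction hypothesis is available. The crux is the diamond-and-determinism argument that pins down the two candidate joins as equal, reconstructing precisely the concurrent internal step $\process_3' \idealstepsto!{\action} \process_3$ needed to recurse; the remaining bookkeeping is merely concatenation of internal traces, valid because $\action$ and every peeled-off action are internal.
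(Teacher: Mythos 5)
Your proposal is correct and follows essentially the same route as the paper: induct on the internal prefix, peel the leading sequential action with the step-over lemma (\cref{thm:concurrent-then-sequential}), and in the nontrivial case use a diamond-plus-internal-determinism argument to reconstruct the concurrent internal step $\process_3' \idealstepsto!{\action} \process_3$ needed to invoke the induction hypothesis, then concatenate internal traces. The only cosmetic difference is that you lift the sequential step to the concurrent relation and apply the purely concurrent \cref{thm:diamond-process}, whereas the paper invokes the mixed-relation \cref{thm:general-diamond-process} with \cref{thm:half-diamond-statement}; both yield the same commutation, and you correctly identify the diamond-and-determinism step as the crux.
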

\begin{proof}
  Unfolding the definition of $\idealstepsto*{}$ gives
  \(
    \process_2 \idealstepsto{\tr} \process_3 \idealstepsto{\aoutput{\msg}}
  \)
  for some $\tr$ and $\process_3$.
We proceed by induction on \tr.
In either case, we are done if
  \( \process_1 \idealstepsto{\action} \process_2 \),
  so assume $\process_1$ cannot perform $\action$ sequentially.
\begin{pcases}
    \pcase{$\tr = \emptylist$}
      We have
      \( \process_1 \idealstepsto!{\action} \process_2 \idealstepsto{\aoutput{\msg}} \).
\Cref{thm:concurrent-then-sequential} gives
      \( \process_1 \idealstepsto{\aoutput{\msg}} \),
      and thus
      \( \process_1 \idealstepsto*{\aoutput{\msg}} \).

    \pcase{$\tr = \action' \append \tr'$}
      We have
      \(
        \process_1
        \idealstepsto!{\action}
        \process_2
        \idealstepsto{\action'}
        \process_3
        \idealstepsto{\tr'}
        \process_4
        \idealstepsto{\aoutput{\msg}}
      \).
\Cref{thm:concurrent-then-sequential} gives
      \( \process_1 \idealstepsto{\action'} \process_2' \)
      for some $\process_2'$.
\Cref{thm:general-diamond-process} with \cref{thm:half-diamond-statement}
      then gives
      \(
        \process_1
        \idealstepsto{\action'}
        \process_2'
        \idealstepsto!{\action}
        \process_3'
      \)
      for some $\process_3'$,
      and \cref{thm:internal-determinism} shows $\process_3' = \process_3$.
We use the induction hypothesis on
      \(
        \process_2'
        \idealstepsto!{\action}
        \process_3
        \idealstepsto{\tr'}
        \process_4
        \idealstepsto{\aoutput{\msg}}
      \)
      to get
      \(
        \process_2'
        \idealstepsto*{\aoutput{\msg}}
      \),
      and combined with
      \(
        \process_1
        \idealstepsto{\action'}
        \process_2'
      \),
      we get
      \(
        \process_1
        \idealstepsto*{\aoutput{\msg}}
      \).

      Diagrammatically (we use tails to denote sequential steps):
      
      \[
        \begin{tikzcd}
  &
  &
  \process_1
    \sarrow![dl, "{\action}"']
    \sarrow[dr, dotted, "{\action'}~\text{(\Cref{thm:concurrent-then-sequential})}"]
  \\
  &
  \process_2
    \sarrow[dl, "{\action'}"']
    \sarrow[dr, "{\action'}"]
    \arrow[rr, phantom, "\text{\Cref{thm:general-diamond-process}}" description]
  &
  &
  \exists \process_2'
    \sarrow![dl, dotted, "{\action}"']
    \sarrow*[d, dotted, "{\aoutput{\msg}}~\text{(IH)}"]
  \\
  \process_3
    \sarrow[d, "{\tr'}"']
    \arrow[rr, equal, "\text{\Cref{thm:internal-determinism}}"]
  &
  &
  \exists \process_3'
  &
  \phantom{\process_3''}
  \\
  \process_4
    \sarrow[d, "{\aoutput{\msg}}"']
  \\
  \phantom{\process_5}
\end{tikzcd}
       \]
    \qedhere
  \end{pcases}
\end{proof}

If two processes are joinable and one of them can concurrently perform an
output action, then the other can perform the same action sequentially
(after unwinding).

\begin{lemma}[Matching Outputs]
  \label{thm:matching-outputs}
  Let \( \process_1 \) and \( \process_2 \) be such that
  \( \joinable{\process_1}{\process_2} \),
  \( \ssecure[\emptylist]{\process_2} \),
  and
  \( \ssynched{\process_2} \).
If
  \( \process_1 \idealstepsto!{\aoutput{\msg}} \)
  for \( \msg \) external,
  then
  \( \process_2 \idealstepsto*{\aoutput{\msg}} \).
\end{lemma}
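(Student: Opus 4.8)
The plan is to route the output capability through the common ``join point'' guaranteed by joinability, rather than trying to relate \( \process_1 \) and \( \process_2 \) directly. Unfolding \( \joinable{\process_1}{\process_2} \) yields a process \( \process \) and internal traces \( \tr_1, \tr_2 \) with \( \process_1 \idealstepsto!{\tr_1} \process \) and \( \process_2 \idealstepsto!{\tr_2} \process \). I would then proceed in three moves: (i) transport the output of \( \process_1 \) to \( \process \), establishing \( \process \idealstepsto!{\aoutput{\msg}} \); (ii) sequentialize the output at \( \process \), obtaining \( \process \idealstepsto*{\aoutput{\msg}} \); and (iii) pull this sequential output back along \( \tr_2 \) to conclude \( \process_2 \idealstepsto*{\aoutput{\msg}} \).

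For move (i), the crucial observation is that an external output commutes past every internal action. Since \( \msg \) is external, it travels on a channel of the form \( \h \envprot \), whereas a process-level internal action is a self-communication on a channel \( \h' \h' \); these channels are always distinct because \( \envprot \) is not a host, so \( \aindep{\aoutput{\msg}}{\tr_1} \) holds. Applying \cref{thm:diamond-process} to the two divergent reductions \( \process_1 \idealstepsto!{\aoutput{\msg}} \) and \( \process_1 \idealstepsto!{\tr_1} \process \) then yields a common \( \process' \) with \( \process \idealstepsto!{\aoutput{\msg}} \process' \), which is exactly the output capability at the join point. Note that this move uses nothing about \( \process_1 \) beyond joinability and the output step, and in particular does not rely on typing.

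For moves (ii) and (iii), I would first observe that \( \process \) is still well-typed and well-synchronized: starting from \( \ssecure[\emptylist]{\process_2} \) and \( \ssynched{\process_2} \) and stepping along the internal trace \( \tr_2 \), repeated use of \cref{thm:ssecure-preservation} and \cref{thm:ssynched-preservation} (which apply to concurrent stepping and preserve closedness) gives \( \ssecure[\emptylist]{\process} \) and \( \ssynched{\process} \). \Cref{thm:sequential-output} applied to \( \process \idealstepsto!{\aoutput{\msg}} \) then produces \( \process \idealstepsto*{\aoutput{\msg}} \). Finally I would induct on \( \tr_2 \): the empty case is immediate since \( \process_2 = \process \); in the inductive case \( \tr_2 = \action \append \tr_2' \) with \( \action \) internal, the induction hypothesis gives \( \idealstepsto*{\aoutput{\msg}} \) at the intermediate process, and \cref{thm:delayed-steps} absorbs the leading internal step to yield \( \process_2 \idealstepsto*{\aoutput{\msg}} \).

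The main obstacle is concentrated in move (i): one must recognize that environment-facing outputs are independent of all internal scheduling, so that the confluence machinery can slide the output across the internal gap between \( \process_1 \) and the join point. Everything after that is bookkeeping that glues together the sequentialization and delayed-step lemmas, with the one caveat that the preservation lemmas must be invoked to keep \( \process \) well-typed and well-synchronized so that \cref{thm:sequential-output} is applicable there.
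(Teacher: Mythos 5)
Your proposal is correct and follows essentially the same route as the paper: unfold joinability, use independence of the external output from the internal trace \(\tr_1\) together with \cref{thm:diamond-process} to transport the output to the join point, apply \cref{thm:sequential-output} there, and pull back along \(\tr_2\) by induction using \cref{thm:delayed-steps}. Your only divergence is that you explicitly invoke \cref{thm:ssecure-preservation} and \cref{thm:ssynched-preservation} to justify typing and synchrony at the join point, a step the paper leaves implicit but which is indeed needed.
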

\begin{proof}
  Since $\process_1$ and $\process_2$ are joinable,
  there exist $\process$ and internal $\tr_1, \tr_2$ such that
  $\process_1 \idealstepsto!{\tr_1} \process$
  and
  $\process_2 \idealstepsto!{\tr_2} \process$.
Because $\msg$ is external and $\tr_1$ is internal,
  we have $\aindep{\aoutput{\msg}}{\tr_1}$,
  so \cref{thm:diamond-process} gives
  $\process \idealstepsto!{\aoutput{\msg}}$.
Now we have
  \( \process_2 \idealstepsto!{\tr_2 \append \aoutput{\msg}} \),
  and we want to show
  \( \process_2 \idealstepsto*{\aoutput{\msg}} \).
We proceed by induction on $\tr_2$.
When $\tr_2$ is empty, \cref{thm:sequential-output} completes the proof.
When $\tr_2 = \action \append \tr_2'$,
  the induction hypothesis gives
  \( \process_2 \idealstepsto!{\action} \process_2' \idealstepsto*{\aoutput{\msg}} \),
  and \cref{thm:delayed-steps} gives the desired result.
\end{proof}

\begin{proofof}{thm:correctness-of-sequentialization}
  The simulator maintains a public view of the concurrent process,
  and runs the adversary against this view.
When the adversary schedules an output action, the simulator schedules
  the sequential process until it performs the same output;
  the simulator does nothing for input and internal actions.
\Cref{thm:matching-outputs} guarantees the sequential program can perform
  the output.
The primary invariant, that the concurrent and sequential processes
  remain joinable, is ensured by
  \cref{thm:joinable-after-internal,thm:joinable-after-matching}.

  More formally, we show UC simulation as follows.
  \begin{ucproof}
    \newcommand{\simof}[3][\adv]{\operatorname{\simulator}(#1 \parallel #2, #3)}\isimulator
      The simulator has the form
      \( \simof{\process_1'}{\process_2'} \)
      where $\process_1'$ is the public view of the concurrent process,
      and $\process_2'$ is the public view of the sequential process.
When the simulator receives an input from the environment
      or a \kdeclassify message from the sequential process,
      it feeds the message to \adv, $\process_1'$, and $\process_2'$.
When the adversary outputs a value
      (for the environment or for an \kendorse expression),
      the simulator feeds it to $\process_1'$ and $\process_2'$,
      and outputs the same value.
When the simulator receives an internal message from the sequential
      process (which indicates the sequential process has taken a step),
      it steps $\process_2'$.
The simulator only allows an output step for the sequential process
      if $\process_1'$ can perform same output.

    \irelation
      Let
      \(
        \adv \parallel \process_1
        \rel
        \simof[\adv']{\process_1'}{\process_2'} \parallel \process_2
      \)
      if:
      \begin{relations}
        \item
          \label{proof:seq-adv}
          \( \adv = \adv' \)

        \item
          \label{proof:seq-concurrent}
          \( \process_1 \publiceq \process_1' \)

        \item
          \label{proof:seq-sequential}
          \( \process_2 \publiceq \process_2' \)

        \item
          \label{proof:seq-joinable}
          \( \joinable{\process_1}{\process_2} \)

        \item
          \label{proof:seq-typed}
          \( \ssecure[\emptylist]{\process_2} \)
          and
          \( \ssynched{\process_2} \)
          for some \sctx.
      \end{relations}

    \iproof
      We claim $\rel$ is a weak bisimulation.

      \Cref{proof:seq-joinable,proof:seq-typed} ensure
      \cref{thm:matching-outputs} is applicable,
      which in turn ensures the external behavior of both systems
      is the same.

      \Cref{proof:seq-adv} is preserved since $\process_1'$ is an
      accurate public view of $\process_1$ (\cref{proof:seq-concurrent}).
\Cref{proof:seq-concurrent,proof:seq-sequential} are preserved
      since messages from \kdeclassify are sufficient to maintain a
      public view.
\Cref{thm:joinable-after-internal,thm:joinable-after-matching}
      ensure \cref{proof:seq-joinable} is preserved.
\Cref{thm:ssecure-preservation,thm:ssynched-preservation}
      ensure \cref{proof:seq-typed} is preserved.
    \qedhere
  \end{ucproof}
\end{proofof}
 \end{toappendix}

\subsection{Correctness of Ideal Execution}
\label{sec:correctness-of-ideal-execution}

Finally, we move from a concurrent choreography stepping with ideal rules
to a concurrent choreography stepping with real rules:
\begin{theorem}
  \label{thm:correctness-of-ideal-execution}
  If
  \( \ssecure[\emptylist]{\process} \),
  then
  \(
    \using{\corrupt{\process}}{\idealstepsto!{}}
    \simulatedBy
    \using{\corrupt{\process}}{\realstepsto!{}}
  \).
\end{theorem}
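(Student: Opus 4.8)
The statement unfolds, via \cref{def:simulation}, to the following: for \emph{every} adversary $\adv$ interacting with the real configuration $\using{\corrupt{\process}}{\realstepsto!{}}$, there is a simulator $\simulator$ interacting with the ideal configuration $\using{\corrupt{\process}}{\idealstepsto!{}}$ that produces the same environment traces,
\[
  \envtraces{\simulator \parallel \corrupt{\process}}_{\idealstepsto!{}}
  =
  \envtraces{\adv \parallel \corrupt{\process}}_{\realstepsto!{}}.
\]
So the plan is to take an arbitrary \emph{real}-world adversary $\adv$ and construct an \emph{ideal}-world simulator $\simulator$ that runs $\adv$ in its head. The key observation is that both sides step the \emph{same} choreography $\corrupt{\process}$, and the two semantics differ only at four constructs: \kdeclassify leaks its value to $\advprot$ under $\idealstepsto{}$ but steps internally under $\realstepsto{}$; \kendorse takes its value from $\advprot$ in the ideal world but keeps the locally computed value in the real world; and \ksend/\kreceive (which, after corruption, denote communication with a removed malicious host, i.e.\ with $\adv$) are suppressed to internal \vunit steps in the ideal world but carry genuine messages in the real world. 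Because control flow is public and trusted (\cref{ssecure:if}) and --- as the invariant below maintains --- trusted data agrees between the two executions, both processes follow identical control-flow paths, stay syntactically identical, and step in lockstep; only the emitted \emph{actions} and certain \emph{untrusted} values ever differ.

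Next I would specify $\simulator$ at each point of divergence. $\simulator$ drives the ideal choreography with the same schedule $\adv$ chooses for the real choreography, forwards all $\envprot$-facing messages between $\adv$ and the environment unchanged, and maintains a \emph{public view} reconstructing the real execution's public data (exactly the technique of \cref{thm:choreography-to-source,thm:correctness-of-sequentialization}). For a \kdeclassify, $\simulator$ receives the leak $\aleak{\h}{\val}$ from the ideal choreography, records $\val$ in its public view, and does \emph{not} forward it to $\adv$, who sees only an internal step in the real world. For a \ksend to a malicious host, the ideal choreography steps internally, but \cref{esecure:send} guarantees the value sent is public, so $\simulator$ computes it from its public view and feeds $\adv$ the expected $\aglobalsend{\h}{\h'}{\val}$. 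For a \kreceive from a malicious host, $\simulator$ consumes $\adv$'s value into its public view while the ideal step yields the suppressed \vunit. Finally, for a \kendorse, $\simulator$ itself supplies the ideal world's $\amaul{\h}{\val'}$, choosing $\val'$ to equal the value the real execution keeps; \cref{thm:transparent-endorsement} makes the endorse source public, so this value is available in the public view, and the ideal endorse rule discards its own operand in favor of $\val'$, keeping the trusted result in sync.

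I would then package this as a weak bisimulation coupling $\adv \parallel \corrupt{\process}$ (real) with $\simulator \parallel \corrupt{\process}$ (ideal) under the invariant that (i)~$\simulator$'s embedded copy of $\adv$ tracks the standalone $\adv$; (ii)~$\simulator$'s public view equals the real execution's public data; (iii)~the two process states agree on all \emph{trusted} data, a $\publiceq$-style low-equivalence restricted to trusted values, as in the caveat of \cref{thm:remove-corruption}; and (iv)~the choreography stays well typed, so that \cref{thm:ssecure-robustness,thm:ssecure-preservation} apply. NMIFC supplies exactly the facts needed to preserve this invariant stepwise: \cref{thm:robust-declassification} forces every declassified value to be trusted, hence determined by trusted data and identical across the two worlds, so the ideal leak equals the real value and feeds the public view faithfully; \cref{thm:transparent-endorsement} forces every endorse source to be public so $\simulator$ can compute the matching injection, restoring trusted agreement on the trusted endorse result; \cref{esecure:send} keeps \ksend content public so $\simulator$ can synthesize it for $\adv$; and \cref{esecure:receive} forces \kreceive content untrusted, so the ideal/real mismatch there (\vunit versus $\val$) lives entirely in untrusted data.

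The environment traces then coincide: the only actions that differ between the worlds --- declassify leaks, endorse injections, and \ksend/\kreceive with malicious hosts --- are all $\advprot$-facing or internal and never $\envprot$-facing, so they vanish under $\restrict{\cdot}{\envprot}$; the genuinely $\envprot$-facing actions are \kinput/\koutput on nonmalicious hosts and $\adv$'s own environment messages, and these agree because \koutput values are trusted (\cref{esecure:output}) and thus equal across worlds, and because the embedded $\adv$ behaves identically to the standalone $\adv$. The main obstacle will be establishing the noninterference core of invariant~(iii): showing that the untrusted divergence seeded at each \kreceive never contaminates trusted data or control flow. This is precisely where the NMIFC constraints on downgrading do the real work; I would spend most of the effort defining the trusted low-equivalence carefully and checking its preservation against every stepping rule, leaning on \cref{thm:internal-determinism} and the diamond/confluence machinery of \apxref{sec:correctness-of-sequentialization-details} to keep the lockstep coupling well defined under adversarial concurrent scheduling.
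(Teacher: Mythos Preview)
Your proposal is correct and follows essentially the paper's approach: the same simulator construction (a public view of the real process with the adversary embedded against it), the same three-part invariant (the embedded adversary equals the real one; the simulator's view is publicly equivalent to the real process; the ideal process is trusted-equivalent to the real process), and the same appeal to robust declassification and transparent endorsement to maintain those equivalences across downgrades. Two small corrections: the two semantics also differ on global communication and selection statements (ideal steps internally, real emits a visible host-to-host action), which your public-view mechanism handles exactly as it handles \ksend, so your ``only four constructs'' should be six; and the diamond/confluence machinery from the sequentialization proof is not needed here, since the simulator schedules the ideal process in strict lockstep with the adversary's real-world schedule and no reordering arises.
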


The main difference between the two semantics is the interface with the
adversary.
In the real semantics,
dishonest hosts actively leak data to the adversary
(through \ksend expressions and communication statements),
and the adversary controls all data coming from malicious hosts
(through \kreceive).
In contrast, the ideal semantics interacts with the adversary
only via \kdeclassify and \kendorse.
In effect, the ideal semantics causes leakage and corruption to become
coarse-grained.
Additionally, by eliminating all blocking \kreceive expressions
(which communicate with the adversary), the ideal semantics can make
progress in a manner independent of the adversary;
this aids the sequentialization proof in \cref{sec:correctness-of-sequentialization}.

To bridge the gap between the real and ideal semantics,
we show that the simulator can use \kdeclassify expressions
to recreate all data no longer leaked through communication,
and \kendorse expressions to corrupt all data no longer corruptible through
\kreceive expressions.
This is possible because our type system ensures secrets are not
directly sent to dishonest hosts (\cref{esecure:send,ssecure:move}),
and data from malicious hosts cannot directly influence trusted data
(\cref{esecure:receive}).

\begin{proofsketchof}{thm:correctness-of-ideal-execution}
  \newcommand{\simof}[1]{\simulator(\adv \parallel #1)}
The simulator maintains a public view of the real process,
  and runs the adversary against this view.
The simulator flips the input/output behavior of \kdeclassify and \kendorse
  expressions:
  when the ideal process \emph{outputs} data through a \kdeclassify expression,
  the simulator inputs this data instead.
Similarly, the simulator outputs data with \kendorse expressions,
  which it sends to the ideal process.
The key invariant is that the simulator's version of the process
  matches the real one on \emph{public values},
  and the ideal process matches the real one on \emph{trusted values}.

  This invariant is strong enough to witness simulatability.
Since the simulator's version of the process matches the real one on public values,
  the adversary in the real configuration has a view identical to the adversary
  running inside of the simulator
  (the adversary only sees public data).
Similarly, since the real process matches the ideal one on trusted values,
  the environment has the same view in both
  (the environment is only sent trusted data).

  Next, we argue that our simulator can preserve the invariant.
For the simulator to have an accurate view of public values in the real process,
  the ideal process must output values through \kdeclassify expressions that
  match the values in the real process.
The information-flow type system provides necessary restrictions.
\emph{Robust declassification}~\cite{nmifc} guarantees only trusted values
  are declassified.
Since the ideal process matches the real one on trusted values,
  data coming from \kdeclassify expressions is accurate.
Dually, when the simulator sends data to the ideal process
  through \kendorse expressions, the values must match the ones in the real process.
\emph{Transparent endorsement}~\cite{nmifc} guarantees only public values
  are endorsed.
Thus, the result follows from the simulator holding accurate public values.
See \apxref{sec:correctness-of-ideal-execution-details} for details.
\end{proofsketchof}

\nosectionappendix
\begin{toappendix}
  \appendixfor{sec:correctness-of-ideal-execution}
  \label{sec:correctness-of-ideal-execution-details}
  Consider choreography \process and its corruption $\corrupt{\process}$
when \alice is malicious:
\begin{mathpar}
  \begin{program}
    \scomment{\process}
    \\
    \slethhead{\x}{\alice}{\einput{\alice}}
    \\
    \smovehead{\alice}{\x}{\bob}{\y}
    \\
    \smovehead{\alice}{\x}{\chuck}{\z}
    \\
    \slethhead{\y'}{\bob}{\y + 1}
    \\
    \smovehead{\bob}{\y'}{\alice}{\x'}
  \end{program}

  \begin{program}
    \scomment{$\corrupt{\process}$}
    \\
    \\
    \slethhead{\y}{\bob}{\ereceive{\alice}}
    \\
    \slethhead{\z}{\chuck}{\ereceive{\alice}}
    \\
    \slethhead{\y'}{\bob}{\y + 1}
    \\
    \slethhead{\_}{\bob}{\esend{\y'}{\alice}}
  \end{program}
\end{mathpar}
The function $\corrupt{\cdot}$
erases all code on \alice (the first \klet statement) since a malicious
host does not follow the choreography and has arbitrary behavior.
Additionally, it replaces all communication statements involving \alice
with \kreceive/\ksend statements,
capturing the fact that \alice need not use the variables specified in the
choreography (\x and $\x'$).
In particular, even though the original choreography specifies \alice sends
\emph{the same} value to \bob and \chuck, a malicious \alice can send
\emph{different} values.
Giving \alice the power to \emph{equivocate} in this manner can compromise
security,
for instance, \alice could cause \bob and \chuck to disagree on control flow
if \x is used as a conditional guard.
Information-flow checking prevents \alice from exploiting this power.

Information-flow checking ensures untrusted data (from malicious hosts)
cannot influence trusted data (of nonmalicious hosts).
We formalize this intuition by erasing all data from malicious hosts in
the ideal semantics:
instead of receiving the value of \y from \alice (i.e., the adversary),
\bob simply assigns \vunit to \y (\chuck does the same for \z).
The adversary cannot possibly have any control over trusted data if all data
coming from the adversary is replaced with \vunit.
Note that erasing untrusted data can change the adversary's view.
In the example, \bob sends $\y' = \y + 1$ to \alice,
which is different from sending $\vunit + 1$.
The simulator can compute the correct value in this case since \y comes from the
adversary (which the simulator has access to), and $1$ is a fixed constant.
In the general case, the simulator can compute all public values, and
our type system ensures only public values are sent to dishonest hosts
(\cref{esecure:send,ssecure:move}).

In addition to preventing the adversary from corrupting trusted values,
we must prevent the adversary from learning secrets.
In the real semantics, the adversary witnesses all communication
and can read any message if at least one endpoint is dishonest.
Information-flow checking ensures the adversary does not learn anything new by
reading these messages.
We formalize this intuition by erasing communication:
in the ideal semantics, communication statements step internally.
The simulator must again recreate these hidden messages for the adversary,
which is possible since our type system ensures the messages the adversary
can read are public.

Simply discarding all untrusted data and hiding all secret data weakens
the adversary in the ideal semantics too much.
We bridge the gap between the real and ideal semantics through downgrade
expressions.
An \kendorse expression indicates that some untrusted data should be treated as
trusted, so in the ideal semantics, an \kendorse inputs data from the adversary.
Dually, a \kdeclassify expression indicates some secret data should be treated
as public, so a \kdeclassify outputs data to the adversary.
Explicit \kdeclassify/\kendorse expressions capture programmer intent.
Going back to the example,
our type system requires \bob and \chuck to \kendorse \x before using it
in a trusted context.
If \bob and \chuck separately \kendorse \x, then they might get two different
values.
If there is only one \kendorse
(e.g., a separate trusted host performs the \kendorse and shares the result),
then there can only be one value.

The core of the simulation result is showing that the simulator can
use \kdeclassify expressions to recreate all data no longer leaked
through communication,
and \kendorse expressions to influence all data no longer corruptible through
\kreceive expressions.
For this to work, we need to ensure the ideal choreography outputs the correct
value to the simulator when performing a \kdeclassify,
and we need to ensure the simulator can input the correct value to the ideal
choreography for an \kendorse.
For example, when the ideal choreography performs $\ddeclassify{\x}$,
we must ensure the value of \x is the same in the real and ideal choreographies.
This is nontrivial since the ideal semantics replaces all untrusted data with
\vunit.
\emph{Robust declassification} requires only trusted data is declassified,
and type checking ensures untrusted data does not influence trusted data.
Thus, \x is trusted and erased values cannot influence its value.
Similarly, when the ideal choreography performs $\dendorse{\x}$,
the simulator must compute the value \x would have in the real choreography
and send that to the ideal choreography.
\emph{Transparent endorsement} requires only public data is endorsed,
and the simulator can recreate all public data.

\begin{figure}
  \begin{minipage}{\linewidth}
    \begin{judgments}
  \judgment{\h \says \e \simstepsto{\action} \val}
\end{judgments}
\begin{mathpar}
\inferrule
    [\named{\e-Declassify-Simulator}{expression:declassify-sim}]
    {
      \issecret{\fromlbl}
      \\
      \ispublic{\tolbl}
    }
    {
      \h \says
      \edeclassify{\val}{\fromlbl}{\tolbl}
      \simstepsto{\aglobalreceive{\h}{\advprot}{\val'}}
      \val'
    }

\inferrule
    [\named{\e-Endorse-Simulator}{expression:endorse-sim}]
    {
      \isuntrusted{\fromlbl}
      \\
      \istrusted{\tolbl}
    }
    {
      \h \says
      \eendorse{\val}{\fromlbl}{\tolbl}
      \simstepsto{\aglobalsend{\advprot}{\h}{\val'}}
      \val
    }
\end{mathpar}
   \end{minipage}
  \caption{Stepping rules used internally by the simulator.
    These override \cref{fig:real-stepping}.}
  \label{fig:simulator-stepping}
\end{figure}

The simulator maintains a public view of the real process,
and runs the adversary against this view.
It uses the rules in \cref{fig:simulator-stepping},
which flip the roles of \kdeclassify and \kendorse.
We maintain the invariant that the simulator's version of the choreography
matches the real one on \emph{public values},
and the ideal choreography matches the real one on \emph{trusted values}.
Next, we define what it means for two terms to agree on public/trusted values.

\begin{definition}[Closing Substitution]
  A closing substitution
  \( \csecure{\cs}{\ctx} \)
  is a mapping from variables to values
  \( \cs : \dom(\ctx) \to \val! \).
\end{definition}

\begin{definition}[Channel Label]
  A channel's label derives from the labels of its endpoints:
  \begin{mathpar}
    \labelof{\mglobal{\ch_1}{\ch_2}{\val}}
    =
    \labelof{\ch_1 \ch_2}
    =
    \labelof{\ch_1} \vee \labelof{\ch_2}

    \labelof{\ainput{\msg}}
    =
    \labelof{\aoutput{\msg}}
    =
    \labelof{\msg}
    .
  \end{mathpar}

  We let
  \( \labelof{\advprot} = \ifbottom \)
  and
  \( \labelof{\envprot} = \strongest \),
  which leads to
  \(
    \labelof{\envprot\, \ch}
    =
    \labelof{\ch\, \envprot}
    =
    \labelof{\ch}
  \)
  and
  \(
    \labelof{\advprot\, \ch}
    =
    \labelof{\ch\, \advprot}
    =
    \integrity{\labelof{\ch}}
  \)
  (communication with the adversary is public,
  and is trusted only if the other endpoint is).
\end{definition}

\begin{definition}[Syntactic \lbl*-Equivalence]
  \label{def:low-equivalence}
  For a set of labels \( \lbl* \subseteq \lbl! \),
  define \( \equalsat{\lbl*} \) as follows.
  \begin{itemize}
    \item
      \(
        \mglobal{\ch_1}{\ch_2}{\val_1}
        \equalsat{\lbl*}
        \mglobal{\ch_1}{\ch_2}{\val_2}
      \)
      if
      \( \labelof{\ch_1 \ch_2} \in \lbl* \)
      implies
      \( \val_1 = \val_2 \).

    \item
      \( \ainput{\msg_1} \equalsat{\lbl*} \ainput{\msg_2} \)
      and
      \( \aoutput{\msg_1} \equalsat{\lbl*} \aoutput{\msg_2} \)
      if
      \( \msg_1 \equalsat{\lbl*} \msg_2 \).

    \item
      \( \s_1 \equalsat{\lbl*} \s_2 \)
      if there exist
      \( \ctx_1 \), \( \ctx_2 \), and \s
      with
      \( \ssecure[(\ctx_1 \munion \ctx_2)]{\s} \),
      and substitutions \( \csecure{\cs_1, \cs_2}{\ctx_2} \) such that
      \( \cs_1(\s) = \s_1 \)
      and
      \( \cs_2(\s) = \s_2 \).
Additionally, for
      \( (\csing{\x}{\lbl}{\h}) \in (\ctx_1 \munion \ctx_2) \),
      we require
      \( \labelof{\h} \actsfor \lbl \),
      and for
      \( (\csing{\x}{\lbl}{\h}) \in \ctx_2 \),
      we require
      \( \lbl \not\in \lbl* \).

    \item
      \( \buffer_1 \equalsat{\lbl*} \buffer_2 \)
      if
      \( \buffer_1(\ch_1 \ch_2) = \buffer_2(\ch_1 \ch_2) \)
      for all \( \ch_1 \) and \( \ch_2 \) such that
      \( \labelof{\ch_1 \ch_2} \in \lbl* \).

    \item
      \(
        \proc{\h*}{\buffer_1}{\s_1}
        \equalsat{\lbl*}
        \proc{\h*}{\buffer_2}{\s_2}
      \)
      if
      \( \buffer_1 \equalsat{\lbl*} \buffer_2 \)
      and
      \( \s_1 \equalsat{\lbl*} \s_2 \).
  \end{itemize}
\end{definition}

We instantiate \cref{def:low-equivalence} with $\lbl* = \public$
for agreement on public values,
and with $\lbl* = \trusted$ for agreement on trusted values.
\Cref{def:low-equivalence} requires the two terms to have the same structure,
but allows some values $\val \in \val!$ (those with labels \emph{not} in \lbl*)
to differ between them.
For example, two messages can only be equivalent if they are on the same channel.
Additionally, they must carry the same value if the channel is public and we
are considering public equality ($\publiceq$);
they are allowed to carry different values otherwise.
Action and buffer equivalence simply lift the definition for messages.
Equivalence for statements demands further explanation.

Values are fixed constants and can be assigned any label.
It is therefore not immediate which values should be allowed to differ
between statements.
For example, consider the following statements that have the
same structure but differ in the value of \x:
\begin{mathpar}
  \begin{program}
    \scomment{\( \s_1 \)}
    \\
    \slethhead{\x}{\alice}{\vfalse}
    \\
    \smovehead{\alice}{\x}{\bob}{\y}
  \end{program}

  \begin{program}
    \scomment{\( \s_2 \)}
    \\
    \slethhead{\x}{\alice}{\vtrue}
    \\
    \smovehead{\alice}{\x}{\bob}{\y}
  \end{program}
\end{mathpar}
The intuition behind \cref{def:low-equivalence} is that $\s_1$ and $\s_2$
are equivalent if \x can be treated as secret/untrusted.
To check that, \cref{def:low-equivalence} abstracts out values where the two
statement differ to find a common statement,
and type-checks the generalized statement in a context where all introduced
variables are marked as secret/untrusted.
For example, we could pick \s as follows
\begin{program}
  \scomment{\( \s \)}
  \\
  \slethhead{\x}{\alice}{\x'}
  \\
  \smovehead{\alice}{\x}{\bob}{\y}
\end{program}
along with substitutions
\( \cs_1 = \set{ \x' \mapsto \vfalse } \)
and
\( \cs_2 = \set{ \x' \mapsto \vtrue } \).
If \s can be typed under a context where $\x'$ is considered secret,
then $\s_1 \publiceq \s_2$.
However, if \bob has a public label (is dishonest), for example,
then there is no such context.

\Cref{def:low-equivalence} splits the context into $\ctx_1$ and $\ctx_2$,
with the substitutions only assigning values for variables in $\ctx_2$.
Context $\ctx_1$ is added to allow relating open terms,
which is needed for inductive cases of some proofs.

For the rest of this section,
we assume $\lbl* = \public$ or $\lbl* = \trusted$.
Moreover, whenever $\ereceive{\h}$ or $\esend{\aexp}{\h}$ appears in a
program, we assume $\malicious{\h}$
(which is ensured by $\corrupt{\cdot}$).\footnote{
  Our results hold for more general \lbl*, but we do not need this generality.
}

Equivalent choreographies remain equivalent given equivalent inputs
and after producing outputs on the same host.

\begin{lemma}[Equivalence Preservation]
  \label{thm:equivalence-preservation}
  Assume
  \( \s_1 \equalsat{\lbl*} \s_2 \)
  and
  \( \s_1 \realstepsto!{\action_1} \s_1' \)
  without using \cref{expression:declassify-real} or \cref{expression:endorse-real}.
  \begin{itemize}
    \item
      If
      \( \action_1 = \ainput{\msg_1} \),
      then
      \( \s_2 \realstepsto!{\ainput{\msg_2}} \s_2' \)
      with
      \( \s_1' \equalsat{\lbl*} \s_2' \)
      for \emph{all}
      \( \msg_2 \equalsat{\lbl*} \msg_1 \).

    \item
      If
      \( \action_1 = \aoutput{\msg_1} \),
      then
      \( \s_2 \realstepsto!{\aoutput{\msg_2}} \s_2' \)
      with
      \( \s_1' \equalsat{\lbl*} \s_2' \)
      for \emph{some} \( \msg_2 \) with
      \( \actor{\aoutput{\msg_2}} = \actor{\aoutput{\msg_1}} \).
  \end{itemize}
\end{lemma}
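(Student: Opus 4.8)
The plan is to unfold $\s_1 \equalsat{\lbl*} \s_2$ via \cref{def:low-equivalence} into a common well-typed skeleton \s with $\ssecure[\ctx_1 \munion \ctx_2]{\s}$ and closing substitutions $\cs_1,\cs_2$ over $\ctx_2$ satisfying $\cs_i(\s)=\s_i$, where every variable of $\ctx_2$ carries a label \emph{outside} \lbl*. The entire argument hinges on one order-theoretic fact about the relevant instantiations $\lbl* \in \set{\public, \trusted}$: each is \emph{downward closed} under $\flowsto$, so that $\lbl_1 \flowsto \lbl_2$ and $\lbl_2 \in \lbl*$ imply $\lbl_1 \in \lbl*$. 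This follows from the prime-filter conditions on \p* and \q* (\apxref{sec:ifc-details}), and it is exactly what guarantees that whenever a computed or communicated value has a label in \lbl*, everything it depends on also lies in \lbl*, is therefore disjoint from $\ctx_2$, and so agrees under $\cs_1$ and $\cs_2$. I would then induct on the derivation of $\s_1 \realstepsto!{\action_1} \s_1'$, following the lifting rules \cref{statement:sequential,statement:delay,statement:if-delay}; since the step may not use \cref{expression:declassify-real,expression:endorse-real}, the head being reduced is never a \kdeclassify/\kendorse binding.

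The base case \cref{statement:sequential} reduces the head of \s, and I case on that head, recalling that internal steps are self-communications and hence \emph{outputs}. For a local \klet whose step fires (so its operands are ground), if the result label $\lbl \in \lbl*$ then downward closure forces the operands, and thus the result $\val$, to coincide under $\cs_1$ and $\cs_2$, and the new skeleton is $\subst{\x}{\val}{\s'}$ under the same split; if $\lbl \notin \lbl*$, the possibly differing results are absorbed by adding $\x$ to $\ctx_2$. \kif resolution is similar, but \cref{ssecure:if} forces the guard to be public and trusted, so its label $\ifbottom \in \lbl*$ and both sides select the same branch. Communication, selection, and \ksend emit a visible output; because the output obligation only requires matching $\actor{\action_1}$---not $\msg_1 \equalsat{\lbl*} \msg_2$---the same value-agreement reasoning for the bound variable suffices. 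For \kinput and \kreceive the action is an input and I must accommodate \emph{every} $\msg_2 \equalsat{\lbl*} \msg_1$: when the channel label (the join of the endpoint labels) lies in \lbl*, message equivalence already pins the value; otherwise the typing premise relating $\labelof{\h}$ (resp.\ $\integrity{\labelof{\h'}}$) to \lbl together with downward closure puts the bound variable's label outside \lbl*, so differing inputs are again absorbed into $\ctx_2$. Each subcase rebuilds a well-typed skeleton through \cref{thm:substitution,thm:ssecure-preservation} and re-checks the $\ctx_2$ label condition.

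The inductive cases \cref{statement:delay,statement:if-delay} reduce a statement strictly inside \s. For \cref{statement:delay} I peel off the unchanged context head, route its bound variable (if any) into $\ctx_1$---which places no \lbl* constraint on its label---expose the continuation as a common skeleton for the two bodies, apply the induction hypothesis, and reassemble under the head. For \cref{statement:if-delay}, \cref{ssecure:if} again makes the guard public and trusted, hence equal under $\cs_1$ and $\cs_2$, and the induction hypothesis applied to both branches reassembles the conditional. I expect the principal obstacle to be exactly this bookkeeping---threading a well-typed skeleton and a correct $\ctx_1/\ctx_2$ split through each step, deciding for every newly bound variable whether its label falls in \lbl*, and discharging the resulting value agreements---rather than any isolated deep inequality. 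The one spot needing genuine lattice calculation is relating a channel's label (a join of host labels, possibly involving a malicious---hence public and untrusted---endpoint) to the bound variable's label, in order to confirm that the \kinput/\kreceive and communication subcases land on the expected side of \lbl*.
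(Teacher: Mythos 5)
Your proposal matches the paper's proof essentially step for step: the same unfolding of \cref{def:low-equivalence} into a common typed skeleton, the same induction on the step derivation with the dichotomy ``either the values coincide or the bound label lies outside \lbl* and is absorbed into $\ctx_2$'' (driven by downward closure of \lbl* under $\flowsto$), and the same treatment of \cref{statement:delay,statement:if-delay} by routing context-bound variables into the unconstrained $\ctx_1$. One tiny imprecision: excluding \cref{expression:declassify-real,expression:endorse-real} does not rule out a \kdeclassify/\kendorse head stepping via the skip rules, but those subcases are handled identically to your operator case, so nothing is lost.
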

\begin{proof}
  By \cref{def:low-equivalence},
  there exists \s such that
  \( \ssecure[\ctx]{\s} \),
  \( \cs_1(\s) = \s_1 \),
  and
  \( \cs_2(\s) = \s_2 \)
  for some $\ctx = (\ctx_1 \munion \ctx_2)$
  and $\csecure{\cs_1, \cs_2}{\ctx_2}$.
We proceed by induction on the stepping judgment.
In all cases, stepping on $\s_1$ forces certain atomic expressions \aexp
  to be values \val (as opposed to variables \x);
  the same expressions in $\s_2$ must also be values since $\cs_2$ substitutes
  for the same variables as $\cs_1$.
We appeal to this fact implicitly.
  \begin{pcases}
    \prule{statement:let}
      We have
      \begin{mathpar}
        \s_1
        =
        \sleth{\x}{\h}{\e_1}{\s_1''}

        \s_2
        =
        \sleth{\x}{\h}{\e_2}{\s_2''}

        \s
        =
        \sleth{\x}{\h}{\e}{\s''}
      \end{mathpar}
      and
      \begin{mathpar}
        \h \says \e_1
        \realstepsto{\action_1}
        \val_1

        \h \says \e_2
        \realstepsto{\action_2}
        \val_2
      \end{mathpar}
with $\actor{\action_1} = \actor{\action_2} = \h$.
Inversion on \( \ssecure{\s} \) gives
      \begin{mathpar}
        \inferrule
          [\cref{ssecure:let}]
          {
            \esecure{\e}{\lbl}
            \\
            \labelof{\h} \actsfor \lbl
            \\
            \ssecure[\ctx \munion \csing{\x}{\h}{\lbl}]{\s''}
          }
          {
            \ssecure{\s}
          }
      \end{mathpar}

      In each case, we either prove $\val_1 = \val_2$ or $\lbl \not\in \public$.
When $\val_1 = \val_2$, we define
      \( \s' = \subst{\x}{\val}{\s''} \).
We then have
      \( \ssecure[\ctx]{\s'} \) by \cref{thm:substitution},
      \( \cs_1(\s') = \subst{\x}{\val}{\s_1''} = \s_1' \),
      and
      \( \cs_2(\s') = \subst{\x}{\val}{\s_2''} = \s_2' \),
      so
      \( \s_1' \equalsat{\lbl*} \s_2' \).

      When $\lbl \not\in \public$, we define
      \( \ctx_2' = (\ctx_2 \munion \csing{\x}{\h}{\lbl}) \),
      \( \cs_1' = \cs_1 \cup \set{\x \mapsto \val_1} \),
      \( \cs_2' = \cs_2 \cup \set{\x \mapsto \val_2} \),
      which ensures
      \( \cs_1'(\s'') = \subst{\x}{\val_1}{\s_1''} = \s_1' \)
      and
      \( \cs_2'(\s'') = \subst{\x}{\val_2}{\s_2''} = \s_2' \).
Note that $\ctx_2$ satisfies the requirements of \cref{def:low-equivalence},
      and $\csecure{\cs_1', \cs_2'}{\ctx_2'}$,
      so
      \( \s_1' \equalsat{\lbl*} \s_2' \).

      We case on the expression stepping relation to show one of the requirements.
\begin{pcases}
        \prule{expression:operator}
          We have
          \begin{mathpar}
            \e_1
            =
            \eapplyop{\f}{\aexp_1^1, \dots, \aexp_1^n}

            \e_2
            =
            \eapplyop{\f}{\aexp_2^1, \dots, \aexp_2^n}

            \e
            =
            \eapplyop{\f}{\aexp^1, \dots, \aexp^n}
            .
          \end{mathpar}
If all $\aexp^i$ are values, then $\aexp_1^i = \aexp^i = \aexp_2^i$
          and $\val_1 = \val_2$.
Otherwise, let $\aexp^i = \x^i$.
Inversion on $\esecure{\e}{\lbl}$ gives
          \( (\csing{\x^i}{\h}{\lbl'}) \in \ctx_2 \)
          for $\lbl' \not\in \lbl*$ and $\lbl' \flowsto \lbl$,
          which implies $\lbl \not\in \lbl*$.

        \prule{expression:declassify-real}
          Deliberately excluded;
          handled by \cref{thm:matching-after-declassify}.

        \prule{expression:declassify-skip}
          Same as the case for \cref{expression:operator}.

        \prule{expression:endorse-real}
          Deliberately excluded;
          handled by \cref{thm:matching-after-endorse}.

        \prule{expression:endorse-skip}
          Same as the case for \cref{expression:operator}.

        \prule{expression:input}
          We have
          \begin{mathpar}
            \e_1
            =
            \e_2
            =
            \e
            =
            \einput{\h}
            .
          \end{mathpar}
Assume $\lbl \in \lbl*$ since we are done otherwise.
Inversion on $\esecure{\e}{\lbl}$ gives
          \( \labelof{\h} \flowsto \lbl \),
          so \( \labelof{\h} \in \lbl* \),
          which means
          \( \labelof{\envprot \h} = \labelof{\h} \in \lbl* \).
Thus,
          \(
            \aglobalreceive{\envprot}{\h}{\val_1}
            =
            \action_1
            =
            \action_2
            =
            \aglobalreceive{\envprot}{\h}{\val_2}
          \)
          and
          \( \val_1 = \val_2 \).

        \prule{expression:input-malicious}
          We have $\val_1 = \vunit = \val_2$.

        \prule{expression:output}
          We have $\val_1 = \vunit = \val_2$.

        \prule{expression:output-malicious}
          We have $\val_1 = \vunit = \val_2$.

        \prule{expression:receive-real}
          We have
          \begin{mathpar}
            \e_1
            =
            \e_2
            =
            \e
            =
            \ereceive{\h'}
            .
          \end{mathpar}
We have $\malicious{\h'}$ by assumption,
          and $\ispublic{\labelof{\h'}}$ by \cref{def:valid-attack}.

          If $\lbl* = \public$, then
          $\labelof{\h'} \in \lbl*$ so $\labelof{\h' \h} \in \lbl*$.
This gives
          \(
            \aglobalreceive{\h'}{\h}{\val_1}
            =
            \action_1
            =
            \action_2
            =
            \aglobalreceive{\h'}{\h}{\val_2}
          \),
          so $\val_1 = \val_2$.

          If $\lbl* = \trusted$, then
          inversion on $\esecure{\e}{\lbl}$ gives
          \( \integrity{\labelof{\h'}} \flowsto \lbl \).
Since $\labelof{\h'} \not\in \trusted = \lbl*$,
          we have $\lbl \not\in \lbl*$.

        \prule{expression:send-real}
          We have $\val_1 = \vunit = \val_2$.
      \end{pcases}

    \prule{statement:move-real}
      We have
      \begin{mathpar}
        \s_1
        =
        \smove{\h_1}{\val_1}{\h_2}{\x}{\s_1''}

        \s_2
        =
        \smove{\h_1}{\val_2}{\h_2}{\x}{\s_2''}

        \s
        =
        \smove{\h_1}{\aexp}{\h_2}{\x}{\s''}
      \end{mathpar}
      and
      \begin{mathpar}
        \s_1
        \realstepsto{\aglobalsend{\h_1}{\h_2}{\val_1}}
        \subst{\x}{\val_1}{\s_1''}

        \s_2
        \realstepsto{\aglobalsend{\h_1}{\h_2}{\val_2}}
        \subst{\x}{\val_2}{\s_2''}
        .
      \end{mathpar}

      By inversion on \( \ssecure{\s} \), we have
      \begin{mathpar}
        \inferrule
          [\cref{ssecure:move}]
          {
            \aesecure{\aexp}{\lbl}[\h_1]
            \\
            \labelof{\h_2} \actsfor \lbl
            \\
            \ssecure[\ctx \munion \csing{\x}{\h_2}{\lbl}]{\s''}
          }
          {
            \ssecure{\smove{\h_1}{\aexp}{\h_2}{\x}{\s''}}
          }
      \end{mathpar}
We case on \aexp.
If $\aexp = \val$ for some \val, then
      $\val_1 = \cs_1(\val) = \val = \cs_2(\val) = \val_2$.
Additionally,
      \( \ssecure[\ctx]{\subst{\x}{\val}{\s''}} \)
      by \cref{thm:substitution},
      \( \cs_1(\subst{\x}{\val}{\s''}) = \subst{\x}{\val}{\s_1''} \),
      and
      \( \cs_2(\subst{\x}{\val}{\s''}) = \subst{\x}{\val}{\s_2''} \),
      so
      \( \subst{\x}{\val_1}{\s_1''} \equalsat{\lbl*} \subst{\x}{\val_2}{\s_2''} \).

      Otherwise, $\aexp = \x'$ for some $\x' \in \dom(\ctx_2)$.
By inversion on
      \( \aesecure{\aexp}{\lbl}[\h_1] \),
      we have $(\csing{\x}{\h_1}{\lbl'}) \in \ctx_2$
      for some $\lbl' \flowsto \lbl$.
Since $\lbl' \not\in \lbl*$ and $\lbl' \flowsto \lbl$,
      we have $\lbl \not\in \lbl*$.
Define
      \( \ctx_2' = (\ctx_2 \munion \csing{\x}{\h_2}{\lbl}) \),
      \( \cs_1' = \cs_1 \cup \set{\x \mapsto \val_1} \),
      \( \cs_2' = \cs_2 \cup \set{\x \mapsto \val_2} \),
      which ensures
      \( \cs_1'(\s'') = \subst{\x}{\val_1}{\s_1''} \)
      and
      \( \cs_2'(\s'') = \subst{\x}{\val_2}{\s_2''} \).
Note that $\ctx_2$ satisfies the requirements of \cref{def:low-equivalence},
      and $\csecure{\cs_1', \cs_2'}{\ctx_2'}$,
      so
      \( \subst{\x}{\val_1}{\s_1''} \equalsat{\lbl*} \subst{\x}{\val_2}{\s_2''} \).

    \prule{statement:select-real}
      We have
      \begin{mathpar}
        \s_1
        =
        \sselect{\h_1}{\val_1}{\h_2}{\s_1'}

        \s_2
        =
        \sselect{\h_1}{\val_2}{\h_2}{\s_2'}

        \s
        =
        \sselect{\h_1}{\val}{\h_2}{\s'}
      \end{mathpar}
      and
      \begin{mathpar}
        \s_1
        \realstepsto{\aglobalsend{\h_1}{\h_2}{\val_1}}
        \s_1'

        \s_2
        \realstepsto{\aglobalsend{\h_1}{\h_2}{\val_2}}
        \s_2'
        .
      \end{mathpar}

      $\ssecure{\s'}$ (by inversion on $\ssecure{\s}$),
      $\cs_1(\s') = \s_1'$,
      and
      $\cs_2(\s') = \s_2'$,
      thus
      $\s_1' \equalsat{\lbl*} \s_2'$.

    \prule{statement:if}
      We have
      \begin{mathpar}
        \s_1
        =
        \sifh{\val_1}{\h}{\s_1^1}{\s_1^2}

        \s_2
        =
        \sifh{\val_2}{\h}{\s_2^1}{\s_2^2}

        \s
        =
        \sifh{\aexp}{\h}{\s^1}{\s^2}
      \end{mathpar}
      and
      \begin{mathpar}
        \s_1
        \realstepsto{\aglobalinternal{\h}}
        \s_1^i

        \s_2
        \realstepsto{\aglobalinternal{\h}}
        \s_2^j
        .
      \end{mathpar}

      Inversion on \( \ssecure{\s} \)
      (which must be by \cref{ssecure:if})
      gives
      \( \aesecure{\aexp}{\ifbottom}[\h] \).
Additionally,
      $\cs_1(\aexp)$ and $\cs_2(\aexp)$ are values,
      so $\freevars{\aexp} \subseteq \ctx_2$,
      meaning
      \( \aesecure[\ctx_2]{\aexp}{\ifbottom}[\h] \).
Since $\ifbottom \in \lbl*$ for all attacks
      (recall \cref{def:valid-attack}),
      and $\ctx_2$ only contains variables with labels not in \lbl*,
      $\aexp$ must be a value,
      that is, $\aexp = \val$ for some $\val$.
Then,
      $\val_1 = \cs_1(\val) = \val = \cs_2(\val) = \val_2$,
      so $i = j$.
Finally, we have
      $\s_1^i \equalsat{\lbl*} \s_2^j$
      since
      $\ssecure{\s_i}$ (by inversion on $\ssecure{\s}$),
      $\cs_1(\s^i) = \s_1^i$,
      and
      $\cs_2(\s^i) = \s_2^i = \s_2^j$.

    \prule{statement:case}
      Impossible by inversion on \( \ssecure{\s} \).

    \prule{statement:sequential}
      Immediate by the induction hypothesis.

    \prule{statement:delay}
      We have
      \begin{mathpar}
        \s_1
        =
        \ectxof[\ectx_1]{\s_1''}

        \s_2
        =
        \ectxof[\ectx_2]{\s_2''}

        \s
        =
        \ectxof[\ectx]{\s''}
      \end{mathpar}
      and
      \begin{mathpar}
        \inferrule
          {
            \s_1'' \realstepsto!{\action} \s_1'''
            \\
            \actor{\action} \notin \hosts{\ectx_1}
          }
          {
            \s_1
            \realstepsto!{\action}
            \ectxof[\ectx_1]{\s_1'''}
          }

        \inferrule
          {
            \s_2'' \realstepsto!{\action} \s_2'''
            \\
            \actor{\action} \notin \hosts{\ectx_1}
          }
          {
            \s_2
            \realstepsto!{\action}
            \ectxof[\ectx_2]{\s_2'''}
          }
      \end{mathpar}
Note that
      \( \ssecure[(\ctx_1 \munion \ctx_1' \munion \ctx_2)]{\s''} \)
      where $\ctx_1'$ are the variables defined by \ectx
      (which must be the same as the ones defined by $\ectx_1$ and $\ectx_2$).
Thus,
      \( \s_1'' \equalsat{\lbl*} \s_2'' \)
      through $\s''$, $\cs_1$, and $\cs_2$,
      and we can apply induction hypothesis to get
      \( \s_1''' \equalsat{\lbl*} \s_2''' \).
This then gives
      \(
        \s_1'
        =
        \ectxof[\ectx_1]{\s_1'''}
        \equalsat{\lbl*}
        \ectxof[\ectx_2]{\s_2'''}
        \s_2'
      \).

    \prule{statement:if-delay}
      Using the induction hypotheses similar to \cref{statement:delay}.
    \qedhere
  \end{pcases}
\end{proof}

Public-equivalent choreographies produce public-equivalent outputs.

\begin{lemma}[Public Outputs]
  \label{thm:public-outputs}
  If
  \( \s_1 \publiceq \s_2 \),
  \( \s_1 \realstepsto!{\aoutput{\msg_1}} \),
  and
  \( \s_2 \realstepsto!{\action_2} \)
  with
  \( \actor{\aoutput{\msg_1}} = \actor{\action_2} \),
  then
  \( \aoutput{\msg_1} \publiceq \action_2 \).
\end{lemma}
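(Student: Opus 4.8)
The plan is to unfold $\publiceq$ as $\equalsat{\public}$ and run the same case analysis that underlies \cref{thm:equivalence-preservation}, but tracking the emitted \emph{message} rather than the successor state. By \cref{def:low-equivalence}, $\s_1 \publiceq \s_2$ supplies a common template $\s$ with $\ssecure[\ctx_1 \munion \ctx_2]{\s}$ and closing substitutions $\cs_1, \cs_2$ that differ only on $\ctx_2$-variables, each of which carries a non-public label, with $\cs_1(\s) = \s_1$ and $\cs_2(\s) = \s_2$. Because $\s_1$ and $\s_2$ are instances of the same $\s$, a concurrent step on the shared actor $\actor{\aoutput{\msg_1}} = \actor{\action_2}$ is driven by the \emph{same} head statement in both programs: the side conditions on the delay rules forbid skipping over an earlier statement that mentions the acting host, and \cref{statement:if-delay} forces both branches of an intervening \kif to move with the same action. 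This structural agreement fixes the polarity and the channel of the action at once---in particular it rules out $\action_2$ being an input and shows $\action_2 = \aoutput{\msg_2}$ on the same channel as $\msg_1$---so the proof reduces to value agreement whenever that channel is public.

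I would then case on the head statement. The internal rules---operator evaluation, a resolved \kif, and crucially \cref{expression:declassify-real} and \cref{expression:endorse-real}---emit the self-action $\aglobalinternal{\h}$ carrying $\vunit$, so $\action_2$ is the identical internal action and equivalence is immediate; this is exactly why the exclusion of the downgrade rules in \cref{thm:equivalence-preservation} does no harm here, as those rules change only the successor state and not the emitted action. For the genuinely communicating rules (\cref{statement:move-real}, \cref{statement:select-real}, \cref{expression:send-real}, and the environment output \cref{expression:output}) the transmitted value is $\cs_i(\aexp)$ for the common expression $\aexp$: if $\aexp$ is a literal then $\val_1 = \val_2$ outright, and if $\aexp$ is a $\ctx_2$-variable then its non-public label---combined with the host-authority premises $\labelof{\h} \actsfor \lbl$ of \cref{ssecure:let}/\cref{ssecure:move} and with \cref{esecure:send}---forces the channel label out of $\public$, so no value constraint is demanded. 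Uniqueness of the output at a fixed actor (the determinism of the concurrent step, cf.\ \cref{thm:output-determinism}) confirms this is precisely the given $\action_2$.

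The main obstacle is this value-agreement step, which is more than \cref{thm:equivalence-preservation} delivers on its own: that lemma only produces \emph{some} matching output and relates successor states, whereas here I must establish $\msg_1 \publiceq \msg_2$ directly. The crux is propagating publicity from the channel to the typing label: a channel $\labelof{\ch_1 \ch_2} = \labelof{\ch_1} \vee \labelof{\ch_2}$ is public exactly when one endpoint is, and the host-authority constraint $\labelof{\ch_i} \actsfor \lbl$ together with the upward closure of the public principals then forces $\lbl \in \public$, contradicting the non-public label carried by any $\ctx_2$-variable and hence pinning $\aexp$ to a literal. The remaining care is bookkeeping: confirming the polarity argument survives the delay rules (so an input can never share an actor with $\aoutput{\msg_1}$), and checking the send-to-malicious case, where $\labelof{\h'} \in \public$ makes the channel public and \cref{esecure:send} supplies the needed bound.
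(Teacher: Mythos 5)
Your proposal is correct and follows essentially the same route as the paper's proof: unfold $\publiceq$ to obtain a common well-typed template with closing substitutions, induct on the (necessarily rule-matched) stepping derivations, dispatch the internal and downgrade cases immediately because both sides emit the identical internal action, and in the communicating cases (\cref{expression:output}, \cref{expression:send-real}, \cref{statement:move-real}, \cref{statement:select-real}) argue that either the transmitted atomic expression is a literal or the non-public label of the substituted $\ctx_2$-variable propagates through the typing premises to make the channel label non-public, rendering the value constraint vacuous. The only cosmetic difference is that you phrase the variable case as a contradiction with a public channel, whereas the paper derives directly that both endpoints, and hence the channel, are non-public.
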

\begin{proof}
  By \cref{def:low-equivalence},
  there exists \s such that
  \( \ssecure[\ctx]{\s} \),
  \( \cs_1(\s) = \s_1 \),
  and
  \( \cs_2(\s) = \s_2 \)
  for some $\ctx = (\ctx_1 \munion \ctx_2)$
  and $\csecure{\cs_1, \cs_2}{\ctx_2}$.
We proceed by induction on the two stepping judgments,
  which must be by the same rule since $\s_1$ and $\s_2$ have the same structure.

  Cases for \cref{statement:delay,statement:if-delay} follow from the
  induction hypotheses.
Cases for \cref{expression:operator,,expression:declassify-real,,expression:declassify-skip,,expression:endorse-real,,expression:endorse-skip,,expression:output-malicious,,statement:if}
  are immediate since both actions are internal, i.e.,
  \( \aoutput{\msg_1} = \aglobalinternal{\h} = \action_2 \)
  for some \h, which implies
  \( \aoutput{\msg_1} \publiceq \action_2 \).
We detail the remaining cases.
  \begin{pcases}
    \prule{statement:let}
      We have
      \begin{mathpar}
        \s_1
        =
        \sleth{\x}{\h}{\e_1}{\s_1'}

        \s_2
        =
        \sleth{\x}{\h}{\e_2}{\s_2'}

        \s
        =
        \sleth{\x}{\h}{\e}{\s'}
      \end{mathpar}
      and
      \begin{mathpar}
        \h \says \e_1
        \realstepsto{\aglobalsend{\h}{\ch}{\val_1}}

        \h \says \e_2
        \simstepsto{\aglobalsend{\h}{\ch}{\val_2}}
        .
      \end{mathpar}
Inversion on \( \ssecure{\s} \) gives
      \( \esecure{\e}{\lbl} \) for \( \labelof{\h} \actsfor \lbl \).
We case on the expression stepping relations.
\begin{pcases}
        \prule{expression:output}
          We have \( \ch = \envprot \), \( \nonmalicious{\h} \), and
          \begin{mathpar}
            \e_1
            =
            \eoutput{\val_1}{\h}

            \e_2
            =
            \eoutput{\val_2}{\h}

            \e
            =
            \eoutput{\aexp}{\h}
            .
          \end{mathpar}
If \( \labelof{\h} \not\in \public \),
          then
          \(
            \aoutput{\msg_1}
            =
            \aglobalsend{\h}{\envprot}{\val_1}
            \publiceq
            \aglobalsend{\h}{\envprot}{\val_2}
            =
            \action_2
          \)
          immediately, so assume \( \ispublic{\labelof{\h}} \).
Inversion on \( \esecure{\e}{\lbl} \) gives
          \( \aesecure{\aexp}{\labelof{\h}} \).
Since \( \labelof{\h} \in \public \),
          $\aexp = \val$ for some \val, meaning
          \( \val_1 = \cs_1(\aexp) = \val = \cs_2(\aexp) = \val_2 \),
          so
          \(
            \aoutput{\msg_1}
            =
            \aglobalsend{\h}{\envprot}{\val}
            =
            \aglobalsend{\h}{\envprot}{\val}
            =
            \action_2
          \),
          and
          \(
            \aoutput{\msg_1}
            \publiceq
            \action_2
          \).

        \prule{expression:send-real}
          We have \( \ch = \h' \) and
          \begin{mathpar}
            \e_1
            =
            \esend{\val_1}{\h'}

            \e_2
            =
            \esend{\val_2}{\h'}

            \e
            =
            \esend{\aexp}{\h'}
            .
          \end{mathpar}
If $\aexp = \val$ for some \val, then $\val_1 = \val_2$ and
          we are done, so assume $\aexp = \x'$ for some $\x'$.
Inversion on \( \esecure{\e}{\lbl} \) gives
          \( \aesecure{\aexp}{\confidentiality{\labelof{\h'}}} \).
We then have
          \( (\csing{\x'}{\lbl'}{\h}) \in \ctx_2 \)
          with
          $\labelof{\h} \actsfor \lbl'$,
          $\lbl' \not\in \public$,
          and
          $\lbl' \flowsto \confidentiality{\labelof{\h'}}$.
Then,
          \begin{align*}
            \lbl' \not\in \public
              \wedge
              \labelof{\h} \actsfor \lbl'
            &\implies
            \labelof{\h} \not\in \public
            \\
            \lbl' \flowsto \confidentiality{\labelof{\h'}}
              \wedge \lbl' \not\in \public
            \implies
            \confidentiality{\labelof{\h'}} \not\in \public
            &\implies
            \labelof{\h'} \not\in \public
            .
          \end{align*}
          Thus,
          \( \labelof{\h \h'} \not\in \public \)
          and
          \(
            \aoutput{\msg_1}
            =
            \aglobalsend{\h}{\h'}{\val_1}
            \publiceq
            \aglobalsend{\h}{\h'}{\val_2}
            =
            \action_2
          \).
      \end{pcases}

    \prule{statement:move-real}
      We have
      \begin{mathpar}
        \s_1
        =
        \smove{\h_1}{\val_1}{\h_2}{\x}{\s_1''}

        \s_2
        =
        \smove{\h_1}{\val_2}{\h_2}{\x}{\s_2''}

        \s
        =
        \smove{\h_1}{\aexp}{\h_2}{\x}{\s''}
      \end{mathpar}
      and
      \begin{mathpar}
        \s_1
        \realstepsto{\aglobalsend{\h_1}{\h_2}{\val_1}}

        \s_2
        \simstepsto{\aglobalsend{\h_1}{\h_2}{\val_2}}
        .
      \end{mathpar}

      By inversion on \( \ssecure{\s} \), we have
      \begin{mathpar}
        \inferrule
          [\cref{ssecure:move}]
          {
            \aesecure{\aexp}{\lbl}[\h_1]
            \\
            \labelof{\h_2} \actsfor \lbl
            \\
            \ssecure[\ctx \munion \csing{\x}{\h_2}{\lbl}]{\s''}
          }
          {
            \ssecure{\smove{\h_1}{\aexp}{\h_2}{\x}{\s''}}
          }
      \end{mathpar}

      We case on \aexp.
If $\aexp = \val$ for some \val, then
      $\val_1 = \cs_1(\val) = \val = \cs_2(\val) = \val_2$.
So
      $\action_1 = \aglobalsend{\h_1}{\h_2}{\val} = \action_2$
      and
      $\action_1 \publiceq \action_2$.

      Otherwise, $\aexp = \x'$ for some $\x' \in \dom(\ctx_2)$.
By inversion on
      \( \aesecure{\aexp}{\lbl}[\h_1] \),
      we have $(\csing{\x}{\h_1}{\lbl'}) \in \ctx_2$
      for some $\lbl' \flowsto \lbl$.
Then,
      \begin{align*}
        \lbl' \not\in \public
          \wedge \lbl' \flowsto \lbl
        &\implies
        \lbl \not\in \public
        \\
        \lbl' \not\in \public
          \wedge \labelof{\h_1} \actsfor \lbl'
        &\implies
        \labelof{\h_1} \not\in \public
        \\
        \lbl \not\in \public
          \wedge \labelof{\h_2} \actsfor \lbl
        &\implies
        \labelof{\h_2} \not\in \public
        \\
        \labelof{\h_1} \not\in \public
          \wedge \labelof{\h_2} \not\in \public
        &\implies
        \labelof{\h_1 h_2} \not\in \public
        .
      \end{align*}
Since $\labelof{\h_1 h_2} \not\in \public$,
      \(
        \action_1
        =
        \aglobalsend{\h_1}{\h_2}{\val_1}
        \publiceq
        \aglobalsend{\h_1}{\h_2}{\val_2}
        =
        \action_2
      \).

    \prule{statement:select-real}
      We have
      \begin{mathpar}
        \s_1
        =
        \sselect{\h_1}{\val_1}{\h_2}{\s_1'}

        \s_2
        =
        \sselect{\h_1}{\val_2}{\h_2}{\s_2'}

        \s
        =
        \sselect{\h_1}{\val}{\h_2}{\s'}
      \end{mathpar}
      and
      \begin{mathpar}
        \s_1
        \realstepsto{\aglobalsend{\h_1}{\h_2}{\val_1}}

        \s_2
        \simstepsto{\aglobalsend{\h_1}{\h_2}{\val_2}}
        .
      \end{mathpar}

      Note that selection statements do not allow variables to be communicated,
      so \s sending \val (rather than \aexp) is not a mistake.
Thus, we have
      $\val_1 = \cs_1(\val) = \val = \cs_2(\val) = \val_2$,
      which means
      $\action_1 = \aglobalsend{\h_1}{\h_2}{\val} = \action_2$,
      which in turn means
      $\action_1 \publiceq \action_2$.
    \qedhere
  \end{pcases}
\end{proof}

Trusted-equivalent choreographies produce trusted-equivalent outputs
\emph{for the environment}.
The statement does not apply to intermediate messages between hosts
because untrusted values can be sent on trusted channels
(e.g., a trusted third party can process untrusted values from other hosts).

\begin{lemma}[Trusted Outputs]
  \label{thm:trusted-outputs}
  If
  \( \s_1 \publiceq \s_2 \),
  \( \s_1 \realstepsto!{\aglobalsend{\h}{\envprot}{\val_1}} \),
  and
  \( \s_2 \realstepsto!{\action_2} \)
  with
  \( \actor{\action_2} = \h \),
  then
  \( \aglobalsend{\h}{\envprot}{\val_1} \publiceq \action_2 \).
\end{lemma}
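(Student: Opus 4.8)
The plan is to mirror the proof of \cref{thm:public-outputs}, specializing it to trusted equivalence and exploiting the hypothesis that the action of $\s_1$ targets the environment. First I would unfold the trusted-equivalence hypothesis $\s_1 \equalsat{\trusted} \s_2$ using \cref{def:low-equivalence}: this yields a common statement $\s$ with $\ssecure[\ctx]{\s}$ for some $\ctx = \ctx_1 \munion \ctx_2$, together with closing substitutions $\csecure{\cs_1, \cs_2}{\ctx_2}$ satisfying $\cs_1(\s) = \s_1$ and $\cs_2(\s) = \s_2$, where every variable recorded in $\ctx_2$ carries an \emph{untrusted} label (a label not in $\trusted$). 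Because $\s_1$ and $\s_2$ share this structure, the derivations $\s_1 \realstepsto!{\aglobalsend{\h}{\envprot}{\val_1}}$ and $\s_2 \realstepsto!{\action_2}$ (with $\actor{\action_2} = \h$) must proceed by the same rule, so I would induct on the pair of derivations simultaneously.

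The two concurrent \emph{delay} rules (\cref{statement:delay,statement:if-delay}) are discharged by the induction hypothesis, exactly as in \cref{thm:public-outputs}: peeling off a shared evaluation context leaves trusted-equivalent sub-statements (the context-bound variables are the same on both sides and can be absorbed into $\ctx_1$), and the hypothesis applies to the inner step. The only base case that can emit an \emph{output} action addressed to $\envprot$ is \cref{statement:sequential} driving a $\klet$ whose expression steps by \cref{expression:output} on a nonmalicious host $\h$; every other head statement communicates with a host (or, for $\kinput$, takes an input) rather than outputting to the environment. In this case $\action_2 = \aglobalsend{\h}{\envprot}{\val_2}$, and I would compute the channel label of both messages: since $\labelof{\envprot} = \strongest$, the definition of channel labels gives $\labelof{\h\, \envprot} = \labelof{\h}$, and because $\h$ is nonmalicious its label lies in $\trusted$. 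Thus witnessing $\aglobalsend{\h}{\envprot}{\val_1} \equalsat{\trusted} \aglobalsend{\h}{\envprot}{\val_2}$ reduces to establishing $\val_1 = \val_2$.

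The heart of the argument, and the step I expect to be the main obstacle, is showing that the output value cannot differ between $\s_1$ and $\s_2$. Writing $\eoutput{\aexp}{\h}$ for the common output expression, inversion of \cref{esecure:output} gives $\aesecure{\aexp}{\labelof{\h}}$. If $\aexp = \x'$ were a variable drawn from $\ctx_2$, then \cref{aesecure:variable} would supply a binding $(\csing{\x'}{\h'}{\lbl'}) \in \ctx_2$ with $\lbl' \flowsto \labelof{\h}$; but $\ctx_2$-variables are untrusted, and $\flowsto$ can only lower integrity, so (by the downward closure of the untrusted labels under $\actsfor$, a consequence of the valid-attack conditions) $\labelof{\h}$ would itself be forced untrusted, contradicting $\nonmalicious{\h}$. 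Hence $\aexp$ must already be a value in $\s$ — it cannot be a $\ctx_1$-variable either, since the output step requires $\aexp$ to evaluate to a value under both substitutions — and therefore $\val_1 = \cs_1(\aexp) = \aexp = \cs_2(\aexp) = \val_2$.

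Finally, I would remark on why the restriction to environment outputs is essential rather than an artifact of the proof: for a host-to-host channel $\h_1\, \h_2$, \cref{ssecure:move} only relates the transmitted value's label to $\labelof{\h_2}$, so a trusted channel may legitimately carry an untrusted value (a trusted party receiving untrusted inputs before endorsing them), and the flow argument above breaks down. It is precisely the equality $\labelof{\h\, \envprot} = \labelof{\h}$ together with the $\koutput$ typing rule tying $\aexp$'s label \emph{directly} to $\labelof{\h}$ that makes the environment case go through.
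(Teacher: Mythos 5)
Your proposal is correct and matches the paper's (much terser) proof, which likewise reduces everything to the argument of \cref{thm:public-outputs} and isolates the \cref{statement:let}/\cref{expression:output} case, justified by the observation that only trusted values can be output to nonmalicious hosts --- exactly the flow-closure argument you spell out via \cref{esecure:output} and the valid-attack conditions. Your closing remark on why the lemma is restricted to environment outputs also mirrors the paper's own explanation that host-to-host channels may legitimately carry untrusted values.
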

\begin{proof}
  By induction on the stepping relations.
Inductive cases are handled similarly to \cref{thm:public-outputs}.
The only other relevant case is under \cref{statement:let}
  with \cref{expression:output}.
The argument is similar to the case in \cref{thm:public-outputs},
  but holds because only trusted values can be \koutput to nonmalicious hosts.
\end{proof}

The simulator's view of the real choreography stays accurate on public values.

\begin{lemma}[Matching Steps for Public Equivalence]
  \label{thm:public-matching}
  Assume
  \( \s_1 \publiceq \s_2 \)
  and
  \( \s_1 \realstepsto!{\action_1} \s_1' \)
  without using \cref{expression:declassify-real} or \ref{expression:endorse-real}.
  \begin{itemize}
    \item
      If
      \( \action_1 = \ainput{\msg_1} \),
      then
      \( \s_2 \simstepsto!{\ainput{\msg_2}} \s_2' \)
      with
      \( \s_1' \publiceq \s_2' \)
      for \emph{all}
      \( \msg_2 \publiceq \msg_1 \).

    \item
      If
      \( \action_1 = \aoutput{\msg_1} \),
      then
      \( \s_2 \simstepsto!{\aoutput{\msg_2}} \s_2' \)
      with
      \( \s_1' \publiceq \s_2' \)
      for \emph{some}
      \( \msg_2 \publiceq \msg_1 \).
  \end{itemize}

  In addition, the statement holds with the roles of \( \realstepsto!{} \) and
  \( \simstepsto!{} \) reversed
  (excluding \cref{expression:declassify-sim,expression:endorse-sim} instead).
\end{lemma}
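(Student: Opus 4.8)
The plan is to derive \cref{thm:public-matching} as a corollary of \cref{thm:equivalence-preservation} instantiated at $\lbl* = \public$, exploiting the fact that the simulator semantics $\simstepsto!{}$ and the real semantics $\realstepsto!{}$ are generated by identical rules except for the active downgrade rules: \cref{fig:simulator-stepping} replaces only \cref{expression:declassify-real} with \cref{expression:declassify-sim} and \cref{expression:endorse-real} with \cref{expression:endorse-sim}, leaving every other rule (including all the delay and lifting rules) untouched. I would first isolate this as a one-line auxiliary observation, proved by a trivial induction on the stepping derivation: any derivation of $\s \realstepsto!{\action} \s'$ that avoids \cref{expression:declassify-real,expression:endorse-real} is \emph{verbatim} a derivation of $\s \simstepsto!{\action} \s'$, and symmetrically any sim derivation avoiding \cref{expression:declassify-sim,expression:endorse-sim} is a real derivation.

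For the forward direction, I would note that $\s_1 \publiceq \s_2$ exhibits both statements as instances $\cs_1(\s)$ and $\cs_2(\s)$ of a common well-typed $\s$, so they share syntactic structure and, crucially, agree on all declassify/endorse label annotations, since those come from $\s$ and are never substituted. Applying \cref{thm:equivalence-preservation} to the hypothesis $\s_1 \realstepsto!{\action_1} \s_1'$ yields a matching \emph{real} step $\s_2 \realstepsto!{\action_2} \s_2'$ with $\s_1' \publiceq \s_2'$, where $\action_2$ ranges over all $\msg_2 \publiceq \msg_1$ in the input case and over some public-equivalent $\msg_2$ on the same actor in the output case. It then remains to upgrade this real step to a simulator step: because \cref{thm:equivalence-preservation} constructs the $\s_2$ step by mirroring the rule used by $\s_1$ at the structurally corresponding position, and the $\s_1$ step avoids \cref{expression:declassify-real,expression:endorse-real} by hypothesis, the mirrored $\s_2$ step avoids them as well; the auxiliary observation then gives $\s_2 \simstepsto!{\action_2} \s_2'$.

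The reversed direction is simpler and needs no upgrade. Given $\s_1 \simstepsto!{\action_1} \s_1'$ avoiding \cref{expression:declassify-sim,expression:endorse-sim}, the auxiliary observation reinterprets it immediately as a real step $\s_1 \realstepsto!{\action_1} \s_1'$ that avoids \cref{expression:declassify-real,expression:endorse-real}, so \cref{thm:equivalence-preservation} applies directly and produces the matching real step $\s_2 \realstepsto!{\action_2} \s_2'$ demanded by the reversed statement.

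The main obstacle is the upgrade in the forward direction: rigorously justifying that the $\s_2$ step handed back by \cref{thm:equivalence-preservation} lands in the fragment shared by both semantics, especially when the active step occurs beneath \cref{statement:delay,statement:if-delay} evaluation contexts. Rather than reasoning about the internal construction of that proof, a cleaner and more robust route would be to re-run its induction verbatim with $\simstepsto!{}$ in place of $\realstepsto!{}$ on the $\s_2$ side; every case except the two active-downgrade rules is literally unchanged, and those two cases cannot arise because the stepping side excludes them and the downgrade annotations are shared through $\s$. Either way, the only real subtlety is confirming that the excluded rules coincide \emph{exactly} with the rules on which $\realstepsto!{}$ and $\simstepsto!{}$ diverge.
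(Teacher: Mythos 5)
Your overall strategy is the same as the paper's: the proof there is a one-liner that cites \cref{thm:equivalence-preservation} together with the observation that $\simstepsto!{}$ and $\realstepsto!{}$ coincide on every rule except the excluded downgrade rules. Your treatment of the ``upgrade'' from a real step to a simulator step is, if anything, more careful than the paper's, and your reversed-direction argument is fine.

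There is, however, one genuine gap. You write that applying \cref{thm:equivalence-preservation} in the output case yields a matching step on ``some public-equivalent $\msg_2$ on the same actor.'' That over-reads the lemma: in the output case \cref{thm:equivalence-preservation} only guarantees a matching output $\msg_2$ with $\actor{\aoutput{\msg_2}} = \actor{\aoutput{\msg_1}}$; it says nothing about $\msg_2 \publiceq \msg_1$. Since the conclusion of \cref{thm:public-matching} explicitly demands $\msg_2 \publiceq \msg_1$, your derivation does not establish the output case. The paper closes exactly this hole with the separate lemma \cref{thm:public-outputs} (Public Outputs), which is proved by its own induction on the two stepping derivations and shows that public-equivalent statements producing outputs on the same actor produce public-equivalent actions --- the key cases being \kw-style reasoning about \cref{expression:output}, \cref{expression:send-real}, \cref{statement:move-real}, and \cref{statement:select-real}, where the typing rules force either equal values or a non-public channel. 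You would need to invoke that lemma (or reprove its content) to finish; with it added, your argument matches the paper's.
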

\begin{proof}
  Follows immediately from
  \cref{thm:equivalence-preservation,thm:public-outputs}
  since \( \simstepsto!{} \) is equivalent to \( \realstepsto!{} \) except for
  \cref{expression:declassify-real,expression:endorse-real},
  which we exclude.
\end{proof}

The ideal choreography stays accurate to the real choreography on trusted values.

\begin{lemma}[Matching Steps for Trusted Equivalence]
  \label{thm:trusted-matching}
  Assume
  \( \s_1 \trustedeq \s_2 \)
  and
  \( \s_1 \realstepsto!{\action_1} \s_1' \)
  without using \cref{expression:declassify-real} or \ref{expression:endorse-real}.
  \begin{itemize}
    \item
      If
      \( \action_1 = \aglobalreceive{\envprot}{\h}{\val_1} \),
      then
      \( \s_2 \idealstepsto!{\action_2} \s_2' \)
      with
      \( \s_1' \trustedeq \s_2' \)
      for \emph{all}
      \( \action_2 \trustedeq \action_1 \).

    \item
      If
      \( \action_1 = \aglobalsend{\h}{\envprot}{\val_1} \),
      then
      \( \s_2 \idealstepsto!{\action_2} \s_2' \)
      with
      \( \s_1' \trustedeq \s_2' \)
      for \emph{some}
      \( \action_2 \trustedeq \action_1 \).

    \item
      Otherwise,
      \( \s_2 \idealstepsto!{\aglobalinternal{\h}} \s_2' \)
      with
      \( \s_1' \trustedeq \s_2' \)
      and
      \( \actor{\action_1} = \h \).
  \end{itemize}

  In addition, the statement holds with the roles of \( \realstepsto!{} \) and
  \( \idealstepsto!{} \) reversed
  (excluding \cref{expression:declassify,expression:endorse} instead).
\end{lemma}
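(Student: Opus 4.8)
The plan is to prove both directions by structural induction on the given concurrent step, closely following the proof of \cref{thm:equivalence-preservation} instantiated at $\lbl* = \trusted$, while reconciling the two ways in which $\realstepsto!{}$ and $\idealstepsto!{}$ disagree once the downgrades are excluded: the communication and selection statements (and the \ksend/\kreceive expressions) emit a \emph{visible} action under $\realstepsto!{}$ but an \emph{internal} action under $\idealstepsto!{}$, and the ideal \kreceive binds $\vunit$ rather than the adversary-supplied value. First I would unfold $\s_1 \trustedeq \s_2$ via \cref{def:low-equivalence} to obtain a common well-typed skeleton $\s$ with $\ssecure[\ctx]{\s}$ for $\ctx = \ctx_1 \munion \ctx_2$, together with substitutions $\csecure{\cs_1, \cs_2}{\ctx_2}$; the crucial bookkeeping fact is that every variable recorded in $\ctx_2$---exactly those whose values are allowed to differ between $\s_1$ and $\s_2$---carries a label \emph{not} in $\trusted$, i.e.\ an untrusted label.

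The case analysis then splits into three groups matching the three clauses of the statement. For the environment-interaction rules \cref{expression:input,expression:output}, which the real semantics does not override, the ideal process $\s_2$ fires the \emph{same} rule and produces a matching environment action; since \kinput and \koutput only occur on nonmalicious (hence trusted) hosts, the channel $\envprot\,\h$ has label $\labelof{\h} \in \trusted$, so $\trustedeq$ forces the carried values to agree (this is the content of \cref{thm:trusted-outputs} for outputs, and it makes the ``for all $\action_2$'' clause for inputs collapse to a single action). For the communication, selection, and \ksend steps (\cref{statement:move-real,statement:select-real,expression:send-real}), the ideal side performs the internal step $\aglobalinternal{\h}$ at the same actor $\h = \actor{\action_1}$ while applying an \emph{identical} statement transformation (substituting the communicated value, or binding $\vunit$ for \ksend); any value the two sides disagree on is read from a $\ctx_2$-variable, and chaining $\flowsto$ through \cref{ssecure:move} resp.\ \cref{esecure:send} shows the resulting binding again lands outside $\trusted$, so trusted-equivalence is preserved (the ``otherwise'' clause). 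The \kreceive step (\cref{expression:receive-real}) is the one place where the statement transformation genuinely differs---real binds the adversary value, ideal binds $\vunit$---but \cref{esecure:receive} together with $\malicious{\h'}$ and validity of the attack (\cref{def:valid-attack}) forces the bound variable's label into $\untrusted$, so the two differing bindings are absorbed into $\ctx_2$ and $\trustedeq$ survives. Finally, \cref{statement:if} is internal on both sides, and \cref{ssecure:if} pins the guard's label to $\ifbottom \in \trusted$, forcing equal guard values and hence a common branch; the structural rules \cref{statement:sequential,statement:delay,statement:if-delay} follow by the induction hypothesis after reassembling the evaluation context, exactly as in \cref{thm:equivalence-preservation}.

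The reversed direction (excluding the ideal \kdeclassify/\kendorse rules) is symmetric: now $\s_1$ steps ideally---internally for all host-to-host communication---and $\s_2$ matches with the corresponding real step, which may be visible but never touches $\envprot$ and is therefore silent for $\envtraces{\cdot}$. I expect the main obstacle to be the uniform treatment of the \kreceive, \ksend, communication, and selection cases: at \emph{every} such step one must show that the values on which the real and ideal choreographies disagree are precisely the untrusted ones, which amounts to propagating the typing constraints---$\integrity{\labelof{\h'}} \flowsto \lbl$ for \kreceive and $\lbl' \flowsto \lbl$ for a communicated variable---through the upward-closedness of the untrusted principals, all while keeping the context split $\ctx_1 \munion \ctx_2$ and the substitutions $\cs_1, \cs_2$ synchronized across the $\vunit$-for-value mismatch the ideal \kreceive introduces. \cref{thm:substitution} is what licenses re-typing the skeleton after each binding.
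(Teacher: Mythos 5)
Your proposal is correct and matches the paper's argument: the paper likewise reduces this lemma to \cref{thm:equivalence-preservation} instantiated at $\lbl* = \trusted$ together with \cref{thm:trusted-outputs}, observing that the ideal semantics only turns some visible actions into internal ones without changing the resulting statements, and isolating \kreceive (real binds the adversary's value, ideal binds $\vunit$) as the one genuine discrepancy, absorbed because \cref{esecure:receive} forces the bound variable's label to be untrusted. You simply inline the induction that the paper delegates to those two lemmas.
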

\begin{proof}
  Follows from
  \cref{thm:equivalence-preservation,thm:trusted-outputs}.
judgment \( \realstepsto!{} \) behaves the same as \( \realstepsto!{} \)
  except it replaces some output messages with internal steps.
Since this does not affect the resulting choreographies (only the actions),
  \cref{thm:equivalence-preservation} applies and shows that the resulting
  choreographies are equivalent.
The one exception to this \cref{expression:receive-real,expression:receive},
  where the ideal choreography proceeds with \vunit instead of receiving a value;
  this value is treated as untrusted so the choreographies still agree on
  trusted values as required.
\end{proof}

Assume the simulator's view agrees with the real choreography on public values,
and the ideal choreography agrees with the real choreography on trusted values.
If the ideal choreography declassifies a value and we feed that value to
the simulator, then all three choreographies remain in agreement.
Only trusted values are declassified,
so the ideal choreography outputs the correct value to the simulator.

\begin{lemma}[Equivalence After Declassify]
  \label{thm:matching-after-declassify}
  Let
  \( \s_1 \publiceq \s_2 \)
  and
  \( \s_1 \trustedeq \s_3 \).
If
  \( \s_1 \realstepsto!{\aglobalinternal{\h}} \s_1' \),
  \( \s_2 \simstepsto!{\aglobalreceive{\h}{\advprot}{\val}} \s_2' \),
  and
  \( \s_3 \idealstepsto!{\aleak{\h}{\val}} \s_3' \),
  then
  \( \s_1' \publiceq \s_2' \)
  and
  \( \s_1' \trustedeq \s_3' \).
\end{lemma}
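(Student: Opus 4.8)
The plan is to prove both conclusions by relating all three runs to a single ``logical'' declassification step, reusing the inductive scaffolding of \cref{thm:equivalence-preservation}. First I would induct on the derivation of the concurrent step $\s_1 \realstepsto!{\aglobalinternal{\h}} \s_1'$. The \cref{statement:delay} and \cref{statement:if-delay} cases are routine: since all three actions ($\aglobalinternal{\h}$, $\aglobalreceive{\h}{\advprot}{\val}$, and $\aleak{\h}{\val}$) have actor $\h$, the evaluation contexts skipped over are the same in $\s_1$, $\s_2$, and $\s_3$, so I can apply the induction hypothesis to the sub-statement and rebuild the context exactly as in \cref{thm:equivalence-preservation}. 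This reduces the proof to the base case, where the step fires on a head \klet whose expression is $\edeclassify{\aexp}{\fromlbl}{\tolbl}$ --- precisely the case deferred from \cref{thm:equivalence-preservation} (the excluded \cref{expression:declassify-real}).

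The crux is a value-matching argument showing that the declassified value is the same across all three runs. The side conditions give $\issecret{\fromlbl}$ and $\ispublic{\tolbl}$, so by \cref{thm:robust-declassification} we obtain $\istrusted{\fromlbl}$. Using the common generalization $\s$ witnessing $\s_1 \trustedeq \s_3$, I would inspect the declassified argument: by \cref{esecure:declassify} it satisfies $\aesecure{\aexp}{\fromlbl}$, so if $\aexp$ were one of the abstracted variables $\x' \in \ctx_2$ its label $\lbl'$ would satisfy $\lbl' \flowsto \fromlbl$, forcing $\lbl'$ to be trusted and contradicting the requirement that $\ctx_2$ variables be untrusted. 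Hence $\aexp$ is a literal value in $\s$, and $\cs_1$ and $\cs_3$ assign it identically; since \cref{expression:declassify} makes the ideal run leak exactly this value, the value the real run computes equals $\val$. Because the simulator (by \cref{expression:declassify-sim}) binds the value $\val$ forwarded from the ideal leak, all three runs bind $\val$ to the \klet variable $\x$.

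With the values aligned, both conclusions follow by the ``equal values'' sub-case of the \klet argument in \cref{thm:equivalence-preservation}: I substitute the common value $\val$ into the continuation of the corresponding generalization and appeal to \cref{thm:substitution} to re-type it, obtaining $\s_1' \trustedeq \s_3'$ from the trusted generalization and $\s_1' \publiceq \s_2'$ from the public one. The main obstacle is the second conclusion: public equivalence alone never pins down a \emph{secret} declassified value, so the argument must borrow the value equality derived from $\s_1 \trustedeq \s_3$. The whole proof therefore hinges on carefully threading the two equivalence relations together --- using trustedness (via robust declassification) to transport the value, and then applying it to re-establish public equivalence after the simulator substitutes the forwarded value.
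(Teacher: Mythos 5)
Your proposal is correct and follows essentially the same route as the paper's proof: induction on the stepping derivations with the delay cases discharged as in the equivalence-preservation lemma, then in the declassify base case using robust declassification to show the declassified argument cannot be one of the abstracted (untrusted) variables in the trusted generalization, so it is a literal and the real and ideal values coincide, after which both equivalences are re-established by substitution. The one point you emphasize --- that the value equality must be transported through the trusted equivalence because public equivalence cannot pin down a secret value --- is exactly the paper's reason for excluding this case from the generic preservation lemma.
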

\begin{proof}
  \newcommand{\pe}[1]{#1_{\mathrm{p}}}\newcommand{\te}[1]{#1_{\mathrm{t}}}By \cref{def:low-equivalence},
  there exist $\pe{\s}$ and $\te{\s}$ such that
  \( \ssecure[\pe{\ctx}]{\pe{\s}} \),
  \( \cs_1(\pe{\s}) = \s_1 \),
  \( \cs_2(\pe{\s}) = \s_2 \),
  and
  \( \ssecure[\te{\ctx}]{\te{\s}} \),
  \( \cs_2'(\te{\s}) = \s_2 \),
  \( \cs_3(\te{\s}) = \s_3 \)
  for
  \( \pe{\ctx} = (\ctx_1 \munion \ctx_2) \),
  \( \csecure{\cs_1, \cs_2}{\ctx_2} \),
  \( \te{\ctx} = (\ctx_3 \munion \ctx_4) \),
  and
  \( \csecure{\cs_2', \cs_3}{\ctx_4} \).

  We proceed by induction on the stepping relations.
Inductive cases (\cref{statement:delay,statement:if-delay}) are handled
  similarly to \cref{thm:public-matching}.
The only remaining case is when the steps are by
  \cref{expression:declassify-real,expression:declassify-sim,expression:declassify},
  respectively.
We have
  \begin{mathpar}
    \s_1
    =
    \sleth{\x}{\h}{\edeclassify{\val_1}{\fromlbl}{\tolbl}}{\s_1''}

    \pe{\s}
    =
    \sleth{\x}{\h}{\edeclassify{\pe{\aexp}}{\fromlbl}{\tolbl}}{\pe{\s}''}

    \s_2
    =
    \sleth{\x}{\h}{\edeclassify{\val_2}{\fromlbl}{\tolbl}}{\s_2''}

    \te{\s}
    =
    \sleth{\x}{\h}{\edeclassify{\te{\aexp}}{\fromlbl}{\tolbl}}{\te{\s}''}

    \s_3
    =
    \sleth{\x}{\h}{\edeclassify{\val}{\fromlbl}{\tolbl}}{\s_3''}
  \end{mathpar}
  where $\issecret{\fromlbl}$ and $\ispublic{\tolbl}$, and
  \begin{mathpar}
    \s_1'
    =
    \subst{\x}{\val_1}{\s_1''}

    \s_2'
    =
    \subst{\x}{\val}{\s_2''}

    \s_3'
    =
    \subst{\x}{\val}{\s_3''}
    .
  \end{mathpar}
We claim $\val_1 = \val$
  ($\val_2$ is ignored by $\s_2$, so it is irrelevant).
By \cref{thm:robust-declassification} and inversion on
  \( \ssecure[\te{\ctx}]{\te{\s}} \),
  we have $\istrusted{\fromlbl}$.
Assume for contradiction that $\te{\aexp} = \te{\x}$ for some $\te{\x}$.
Then, $(\csing{\te{\x}}{\h}{\lbl}) \in \ctx_4$ for
  $\lbl \flowsto \fromlbl$.
However, $\lbl \not\in \trusted$ so $\fromlbl \not\in \trusted$,
  which is a contradiction.
Thus, $\te{\aexp} = \te{\val}$ for some $\te{\val}$.
Then,
  \(
    \val_1
    =
    \cs_2'(\te{\aexp})
    =
    \cs_2'(\te{\val})
    =
    \te{\val}
    =
    \cs_3(\te{\val})
    =
    \cs_3(\te{\aexp})
    =
    \val
  \).

  Finally, let
  \( \pe{\s}' = \subst{\x}{\val}{\pe{\s}''} \)
  and
  \( \te{\s}' = \subst{\x}{\val}{\te{\s}''} \).
We have
  \( \s_1' \publiceq \s_2' \)
  since
  \( \ssecure[\pe{\ctx}]{\pe{\s}'} \)
  (inversion on
  \( \ssecure[\pe{\ctx}]{\pe{\s}} \)
  followed by \cref{thm:substitution}),
  \( \cs_1(\pe{\s}') = \s_1' \),
  and
  \( \cs_2(\pe{\s}') = \s_2' \);
  and we have
\( \s_2' \trustedeq \s_3' \)
  since
  \( \ssecure[\te{\ctx}]{\te{\s}'} \)
  (inversion on
  \( \ssecure[\te{\ctx}]{\te{\s}} \),
  then \cref{thm:substitution}),
  \( \cs_2'(\te{\s}') = \s_2' \),
  and
  \( \cs_3(\te{\s}') = \s_3' \).
\end{proof}

Similarly, if the simulator recreates a value that the ideal choreography
endorses, all three choreographies remain in agreement.
Only public values are endorsed,
so the simulator outputs the correct value to the ideal choreography.

\begin{lemma}[Equivalence After Endorse]
  \label{thm:matching-after-endorse}
  Let
  \( \s_1 \publiceq \s_2 \)
  and
  \( \s_1 \trustedeq \s_3 \).
If
  \( \s_1 \realstepsto!{\aglobalinternal{\h}} \s_1' \),
  \( \s_2 \simstepsto!{\aglobalsend{\advprot}{\h}{\val}} \s_2' \),
  and
  \( \s_3 \idealstepsto!{\amaul{\h}{\val}} \s_3' \),
  then
  \( \s_1' \publiceq \s_2' \)
  and
  \( \s_1' \trustedeq \s_3' \).
\end{lemma}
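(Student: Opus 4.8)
The plan is to mirror the structure of the proof of \cref{thm:matching-after-declassify}, exploiting the duality between \kdeclassify and \kendorse. As there, I would first unfold \cref{def:low-equivalence} on the two hypotheses $\s_1 \publiceq \s_2$ and $\s_1 \trustedeq \s_3$, obtaining common well-typed statements together with closing substitutions $\cs_1, \cs_2$ (recovering $\s_1, \s_2$) and $\cs_1', \cs_3$ (recovering $\s_1, \s_3$). I would then induct on the three stepping derivations, which must agree on structure since the three statements share a common skeleton; the inductive \cref{statement:delay,statement:if-delay} cases are dispatched exactly as in \cref{thm:public-matching}, so the only genuine work is the base case in which all three steps fire the endorse rules \cref{expression:endorse-real,expression:endorse-sim,expression:endorse} inside a common \cref{statement:let}.

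In that base case the configurations have the form $\sleth{\x}{\h}{\eendorse{\val_1}{\fromlbl}{\tolbl}}{\s_1''}$, $\sleth{\x}{\h}{\eendorse{\val_2}{\fromlbl}{\tolbl}}{\s_2''}$, and $\sleth{\x}{\h}{\eendorse{\val_3}{\fromlbl}{\tolbl}}{\s_3''}$, with $\isuntrusted{\fromlbl}$ and $\istrusted{\tolbl}$; stepping yields $\s_1' = \subst{\x}{\val_1}{\s_1''}$ (the real rule returns the source), $\s_2' = \subst{\x}{\val_2}{\s_2''}$ (the simulator rule also returns the source), and $\s_3' = \subst{\x}{\val}{\s_3''}$ (the ideal rule binds the value $\val$ the adversary mauls in). The crux, dual to the declassify argument, is to prove $\val_1 = \val$. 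First I would apply \cref{thm:transparent-endorsement}, together with inversion on the typing of the common public statement via \cref{ssecure:let,esecure:endorse}, to conclude $\ispublic{\fromlbl}$. Because the endorsed source carries label $\fromlbl \in \public$, it cannot be one of the ``differing'' variables: by \cref{aesecure:variable} such a variable would have some label $\lbl'$ with $\lbl' \flowsto \fromlbl$ yet $\lbl' \notin \public$, but downward closure of the public principals turns $\lbl' \flowsto \fromlbl \in \public$ into $\lbl' \in \public$, a contradiction. Hence the source is a literal and $\val_1 = \val_2$.

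It then remains to connect $\val$ to this shared value. Since the simulator transmits precisely the value it currently holds for the endorsed source, and transparent endorsement guarantees that value is public and hence accurately tracked by the public-equivalence invariant, the transmitted datum is $\val_2$, giving $\val = \val_2 = \val_1$. This is exactly where transparent endorsement does the essential work, dually to robust declassification in the declassify case. With $\val_1 = \val_2 = \val$ established, both conclusions follow by substituting the single shared value into the appropriate common statement and appealing to \cref{thm:substitution}: substituting $\x \mapsto \val_1 = \val_2$ re-establishes $\s_1' \publiceq \s_2'$, and substituting $\x \mapsto \val_1 = \val$ re-establishes $\s_1' \trustedeq \s_3'$, after re-checking the side conditions of \cref{def:low-equivalence} on the contexts extended with $\csing{\x}{\h}{\lbl}$.

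The main obstacle I anticipate is the same one the declassify proof finesses through its notation: keeping straight which pair of configurations each common statement relates, and ensuring that the value the simulator \emph{emits} is literally the source value rather than the unconstrained $\val'$ permitted by the bare \cref{expression:endorse-sim} rule. Once transparent endorsement pins the source to a public literal, this identification is forced, and the remainder is bookkeeping with the substitution lemma. A pleasant consequence is that, because $\val_1 = \val$ holds unconditionally, the argument needs no case analysis on whether $\x$'s binding label is trusted, so it applies uniformly whether or not the endorse result subsequently flows into a trusted context.
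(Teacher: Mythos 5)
Your proof is correct and matches the paper's intended argument: the paper's own proof of this lemma is a one-line appeal to duality with \cref{thm:matching-after-declassify}, replacing robust declassification with transparent endorsement and the trusted-equivalence witness with the public-equivalence one, which is precisely the dualization you carry out in detail (literal source via \cref{thm:transparent-endorsement}, hence $\val_1 = \val_2 = \val$, then \cref{thm:substitution}). Your observation that the bare \cref{expression:endorse-sim} rule leaves the emitted value unconstrained, and that the simulator must in fact emit its held --- public, hence accurately tracked --- value, correctly identifies and resolves the one point where the stated rule is looser than the argument requires.
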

\begin{proof}
  Dual to \cref{thm:matching-after-declassify},
  but focusing on \( \s_1 \publiceq \s_2 \)
  and using \cref{thm:transparent-endorsement}.
\end{proof}

\Cref{thm:public-matching,thm:trusted-matching} straightforwardly lift from
choreographies \s to processes \process.
\Cref{thm:public-matching} needs an additional condition on buffer equivalence:
for
\( \buffer_1 \publiceq \buffer_2 \),
we require
\( \sizeof{\buffer_1(\ch_1 \ch_2)} = \sizeof{\buffer_2(\ch_1 \ch_2)} \)
when
\( \labelof{\ch_1 \ch_2} \not\in \public \).
That is, the buffers must agree exactly on public channels,
and agree on the number of messages on secret channels.
This condition allows the simulator to keep track of messages on secret channels
even though it cannot read message contents.

\begin{proofsketchof}{thm:correctness-of-ideal-execution}
  We prove simulation as follows.
  \begin{ucproof}
    \newcommand{\simof}[2][\adv]{\operatorname{\simulator}(#1 \parallel #2)}\isimulator
      The simulator has the form \( \simof{\process} \)
      where \process is a public view of the real process.
The simulator runs \process against \adv for all internal messages.
The simulator forwards inputs from \envprot to \adv and \process,
      and forwards messages from \adv destined for \envprot to \envprot.
When the ideal process outputs data through a \kdeclassify expression,
      the simulator inputs this data to \process.
Similarly, when \process outputs data through an \kendorse expressions,
      the simulator forwards this data to the ideal process.\footnote{
        The simulator needs to step the ideal process an additional time
        so that the ideal process pulls the message from its buffer.
This is due to how we define operational rules for processes.
This extra step forces us to use weak bisimulation instead of strong
        bisimulation.
      }

    \irelation
      We maintain the invariant that the simulator's version of the process
      matches the real one on \emph{public values},
      and the ideal process matches the real one on \emph{trusted values}.
More concretely, we define
      \(
        \adv \parallel \process_1
        \rel
        \simof[\adv']{\process} \parallel \process_2
      \)
      if:
      \begin{relations}
        \item
          \label{proof:ideal-adv}
          \( \adv = \adv' \)

        \item
          \label{proof:ideal-public}
          \( \process_1 \publiceq \process \)

        \item
          \label{proof:ideal-trusted}
          \( \process_1 \trustedeq \process_2 \)
        .
      \end{relations}

    \iproof
      We claim $\rel$ is a weak bisimulation.

      Since the simulator's version of the process matches the real one on public
      values (\cref{proof:ideal-public}),
      the adversary in the real configuration has a view identical to the
      adversary running inside of the simulator
      (the adversary only sees public data).
Similarly, since the real process matches the ideal one on trusted values,
      the environment has the same view in both
      (the environment is only sent trusted data).

      \Cref{proof:ideal-adv} is preserved since $\process$ is an
      accurate public view of $\process_1$ (\cref{proof:ideal-public}).
When there are no downgrade actions,
      \cref{thm:public-matching} ensures \cref{proof:ideal-public} is preserved,
      and \cref{thm:trusted-matching} ensures \cref{proof:ideal-trusted} is
      preserved.
\Cref{thm:matching-after-declassify,thm:matching-after-endorse} cover
      the cases with downgrades.
    \qedhere
  \end{ucproof}
\end{proofsketchof}
 \end{toappendix}
 
\section{Endpoint Projection}
\label{sec:endpoint-projection}

The second stage of compilation, endpoint projection,
transforms a choreography into a distributed program.

\subsection{Hybrid Distributed Language}

\begin{figure}
  \begin{syntax}
  \category[Expressions]{\e}
  \alternative{\dots}
  \alternative{\ereceive{\h}}
  \alternative{\esend{\aexp}{\h}}

  \category[Statements]{\s}
  \alternative{\dots}
  \alternative{\scase{\h_1}{\h_2}{\set{\val \mapsto \s_{\val}}_{\val \in \val*}}}

  \category[Processes]{\process}
  \alternative{\proc{\{\h \in \h!\}}{\buffer}{\s}}
\end{syntax}
   \caption{Hybrid distributed syntax as an extension to source
  (\cref{fig:source-syntax}).}
  \label{fig:target-syntax}
\end{figure}

As for choreographies, \cref{fig:target-syntax} specifies the syntax of
hybrid distributed programs by extending the syntax of source programs.
Hybrid distributed programs are configurations containing multiple
processes, with each
process acting on behalf of a single host $\h \in \h!$.
These processes communicate data via \ksend/\kreceive
and agree on control flow via \ksend and \kcase.
The \kcase statement, on receiving a value on the expected channel,
steps to the specified branch.

\paragraph*{Operational Semantics}

Each process transitions using the real stepping rules ($\realstepsto{}$),
and the distributed configuration steps using the parallel composition rules in
\cref{fig:parallel}.

\subsection{Compiling to Distributed Programs}
\label{sec:definition-of-projection}

Given a choreography \s and a host \h, the endpoint projection $\project[\h]{\s}$
defines the local program that \h should run.
The distributed system $\partition{\s}$ is derived by independently projecting
onto each host in the choreography.

Our notion of endpoint projection is entirely standard,
so we defer the formal definition to
\apxref{sec:definition-of-projection-details}.
Our proof is agnostic to how endpoint projection is defined,
and only relies on its soundness and completeness,
properties extensively studied in prior work~\cite{Montesi23, Montesi13, FilipeM20, FilipeMP22, HirschG22}.

\begin{toappendix}
  \label{sec:definition-of-projection-details}

  \begin{figure}
    \begingroup \newcommand{\projectH}[1]{{\project[\h]{#1}}}\newcommand{\ow}{\text{o/w}}\begin{judgments}
  \judgment{\project[\h]{\s} = \s'}
\end{judgments}
\begin{function}{\projectH}
\case{\sleth{\x}{\h'}{\e}{\s}}
  \begin{cases}
    \sleth{\x}{\h'}{\e}{\projectH{\s}}
      & \h' = \h
    \\
    \projectH{\s}
      & \ow
  \end{cases}

\case{\smove{\h_1}{\aexp}{\h_2}{\x}{\s}}
  \begin{cases}
    \sleth{\_}{\h_1}{\esend{\aexp}{\h_2}}{\projectH{\s}}
      & \h_1 = \h
    \\
    \sleth{\x}{\h_2}{\ereceive{\h_1}}{\projectH{\s}}
      & \h_2 = \h
    \\
    \projectH{\s}
      & \ow
  \end{cases}

\case{\sselect{\h_1}{\val}{\h_2}{\s}}
  \begin{cases}
    \sleth{\_}{\h_1}{\esend{\val}{\h_2}}{\projectH{\s}}
      & \h_1 = \h
    \\
    \scase{\h_1}{\h_2}{\set{\val \mapsto \projectH{\s}}}
      & \h_2 = \h
    \\
    \projectH{\s}
      & \ow
  \end{cases}

\case{\sifh{\aexp}{\h'}{\s_1}{\s_2}}
  \begin{cases}
    \sifh{\aexp}{\h'}{\projectH{\s_1}}{\projectH{\s_2}}
      & \h' = \h
    \\
    \merge{\projectH{\s_1}}{\projectH{\s_2}}
      & \ow
  \end{cases}

\case{\sskip}
  \sskip
\end{function}

\bigskip
\newcommand{\mergeH}[2]{{\merge{#1}{#2}}}\begin{judgments}
  \judgment{\merge{\s_1}{\s_2} = \s}
\end{judgments}
\begin{function}{\mergeH}
  \case{\s_1}{\s_2}
  \scase{\h_1}{\h_2}{\set{\val \mapsto \s_{\val}}_{\val \in \val*_1 \cup \val*_2}}
    \where{
      \s_1
      =
      \scase{\h_1}{\h_2}{\set{\val \mapsto \s_{\val}}_{\val \in \val*_1}}
    }
    \where{
      \s_2
      =
      \scase{\h_1}{\h_2}{\set{\val \mapsto \s_{\val}}_{\val \in \val*_2}}
    }
    \where{\text{$\val*_1$ and $\val*_2$ disjoint}}

  \case{\s_1}{\s_2}
  \sleth{\x}{\h}{\e}{\merge{\s_1'}{\s_2'}}
    \where{\s_1 = \sleth{\x}{\h}{\e}{\s_1'}}
    \where{\s_2 = \sleth{\x}{\h}{\e}{\s_2'}}
\end{function}

\bigskip
\begin{judgments}
  \judgment{\project[\h]{\buffer} = \buffer'}
  \and
  \judgment{\project{\process} = \process'}
\end{judgments}
\begin{align*}
  \projectH{\buffer}(\ch_1 \ch_2)
  &=
  \begin{cases}
    \buffer(\ch_1 \ch_2) & \ch_1 \neq \h \wedge \ch_2 = \h
    \\
    \emptylist & \text{otherwise}
  \end{cases}
  \\
  \projectH{\proc{\h*}{\buffer}{\s}}
  &=
  \proc{\h* \cap \set{\h}}{\projectH{\buffer}}{\projectH{\s}}
\end{align*}
\endgroup

     \caption{Endpoint projection: statements, buffers, processes.}
    \label{fig:endpoint-projection}
  \end{figure}

  \Cref{fig:endpoint-projection} formalizes projecting onto a host \h.
Projection keeps \klet statements assigned to \h,
  and removes ones assigned to other hosts.
Communication statements become a \ksend or a \kreceive, or are entirely removed
  depending on whether \h is the sending host, the receiving host, or neither.
Selection statements follow the same logic,
  but are projected as a \ksend expression
  or a \kcase statement with a single branch.

  The most interesting case is \kif statements.
If the \kif statement is placed at \h,
  we perform the \kif as usual and project the branches.
Otherwise, \h does not store the conditional and cannot determine which
  branch should be taken.
In this case, the projections of the two branches must be compatible,
  formalized by a merge function.
Merging requires the two branches to have the same syntactic structure,
  but allows \kcase statements to have disjoint branches,
  which are combined into one.
We elide most cases of the merge function,
  since the proof is agnostic to the details.

  We lift projection to processes:
  projecting a buffer onto \h keeps only messages destined for \h,
  and projecting processes is done componentwise.
The \emph{projection} $\partition{\process}$ of process $\process$
  is the configuration formed by projecting onto each host in $\process$:
  \[
    \partition{\process = \proc{\h*}{\_}{\_}}
    =
    \bigparallel_{\h \in \h*}{\project[\h]{\process}}
    .
  \]
\end{toappendix}

\subsection{Correctness of Endpoint Projection}
\label{sec:correctness-of-projection}

Let $\corrupt{\config}$ remove from $\config$ all processes for
malicious hosts.

\begin{theorem}
  \label{thm:correctness-of-projection}
  If
  \( \ssecure[\emptylist]{\process} \),
  then
  \(
    \using{\corrupt{\process}}{\realstepsto!{}}
    \simulatedBy
    \using{\corrupt{\partition{\process}}}{\realstepsto{}}
  \).
\end{theorem}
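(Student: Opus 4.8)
The plan is to exhibit, for each adversary \(\adv\) against the distributed program, a simulator \(\simulator\) that runs \(\adv\) against an internally maintained copy of the distributed program while driving the centralized choreography \(\corrupt{\process}\), and then to argue the two sides are weakly bisimilar with identical \(\envprot\)-visible behavior. A useful first move is to push corruption through projection: by structural induction on \(\s\) using \cref{fig:malicious-corruption,fig:endpoint-projection}, one checks \(\project[\h]{\corrupt{\s}} = \project[\h]{\s}\) for every nonmalicious \(\h\) (the mixed-honesty cases of \(\corrupt{\cdot}\) produce exactly the bare \ksend/\kreceive that projection produces, and typing rules out the impossible \kif-at-malicious and bad-selection cases via \cref{ssecure:if,ssecure:select}), whence \(\corrupt{\partition{\process}} = \partition{\corrupt{\process}}\). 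This identifies the right-hand configuration as the endpoint projection of the very choreography on the left, reducing the theorem to an operational correspondence between a choreography and its own projection. Crucially, both sides already run under the \emph{real} semantics and share identical corruption, so---unlike \cref{thm:correctness-of-ideal-execution}---the simulator need not recreate secrets or re-corrupt data: its only task is to bridge the \emph{structural} gap between a single centralized process and a parallel composition of per-host processes, and the \emph{granularity} gap between atomic choreographic communication and buffered message passing.

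\textbf{Endpoint-projection correspondence under asynchrony.} First I would establish the standard operational correspondence relating \(\corrupt{\process}\) under \(\realstepsto!{}\) to \(\partition{\corrupt{\process}}\) under parallel composition. \emph{Completeness}: each concurrent choreography step is matched by one or more distributed steps; in particular a communication \(\smove{\h_1}{\val}{\h_2}{\x}{\s}\), which steps atomically while emitting \(\aglobalsend{\h_1}{\h_2}{\val}\) and substituting \(\val\), is matched by a send on \(\h_1\) followed by a buffered delivery on \(\h_2\). \emph{Soundness}: every distributed step can be completed to a choreography step, modulo messages still \emph{in flight} in per-host buffers. The asynchrony is handled by a correspondence relation that lets the distributed side carry pending (sent-but-undelivered) messages that the choreography has already consumed by substitution; a host's statements stay in program order on both sides (single-threaded), and cross-host reordering is constrained identically by message dependencies, so the achievable interleavings coincide. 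The non-adversarial core of this correspondence, including knowledge-of-choice handling through \kcase statements and the merge function, is standard~\cite{Montesi23, Montesi13, FilipeM20, FilipeMP22, HirschG22}; the new content is the bookkeeping for communications where one endpoint is malicious, which \(\corrupt{\cdot}\) has already turned into a bare \ksend/\kreceive against the adversary and which both sides treat identically.

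\textbf{The simulator and the bisimulation.} The simulator keeps a copy of \(\adv\) and a faithful model \(N\) of \(\partition{\corrupt{\process}}\), and runs \(\adv\) against \(N\) exactly as \(\adv\) would run against the real distributed program; because \(N\) mirrors the choreography's communications, every message \(\adv\) reads or forges against \(N\) has a dishonest endpoint or a malicious source, so \(\simulator\) inherits the adversary interface of \cref{def:adversary-interface}. Whenever \(\adv\) drives \(N\) to perform an environment \kinput/\koutput, the simulator performs the matching \(\envprot\) action against \(\corrupt{\process}\); whenever \(\adv\) drives a local or inter-host step of \(N\), the simulator advances \(\corrupt{\process}\) by the corresponding concurrent step(s), stepping a receiving host's continuation only when \(\adv\) actually delivers the buffered message (the value having been substituted already when the send half was scheduled). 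I would then define a relation \(\rel\) between \(\adv \parallel \corrupt{\partition{\process}}\) and \(\simulator \parallel \corrupt{\process}\) that holds when (i) the embedded adversary equals \(\adv\), (ii) the model \(N\) equals the real distributed state, and (iii) the choreography state is in the above correspondence with \(N\), and show \(\rel\) is a weak bisimulation carrying identical \(\envprot\)-visible actions. Here confluence (\cref{thm:diamond-process,thm:general-diamond-process}) is used to realize the adversary's chosen interleaving against the concurrent choreography, joinability preservation (\cref{thm:joinable-after-internal,thm:joinable-after-matching}) keeps the in-flight slack consistent, and type preservation (\cref{thm:ssecure-preservation}) keeps the correspondence well-formed; the determinism restrictions then collapse each world to a single run, so matching runs yields \(\envtraces{\simulator \parallel \corrupt{\process}} = \envtraces{\adv \parallel \corrupt{\partition{\process}}}\).

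\textbf{Main obstacle.} I expect the hard part to be the interaction of \emph{asynchronous buffering} with \emph{adversarial scheduling}: the adversary may deliver buffered messages out of order, interleave independent hosts arbitrarily, and delay a receive long after its send, producing distributed interleavings with no literal counterpart in the atomic, centralized choreography. The crux is to show the concurrent semantics \(\realstepsto!{}\)---through its delay rules and the established confluence---can realize every such interleaving while keeping the projection correspondence intact, so that the in-flight-message slack never becomes visible to the environment, together with the mixed-honesty case where \(\corrupt{\cdot}\) has produced a bare \ksend to or \kreceive from the adversary and the simulator must route forged and leaked messages consistently. Because the genuinely semantic gaps---sequentialization and real-versus-ideal leakage and corruption---were already discharged in \cref{sec:correctness-of-synthesis}, I anticipate this final step reduces to faithful re-scheduling and the standard endpoint-projection soundness/completeness argument, with no fresh information-flow reasoning required.
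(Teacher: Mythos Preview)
Your reduction \(\corrupt{\partition{\process}} = \partition{\corrupt{\process}}\) and the appeal to soundness/completeness of endpoint projection are exactly the paper's starting points. The substantive divergence is in how the send/receive granularity mismatch is absorbed. The paper does \emph{not} build a simulator that carries ``in-flight slack'' in the relation; instead it introduces an intermediate \emph{asynchronous choreography} language with run-time forms \(\smovepartial{\h_1}{\val}{\h_2}{\x}{\s}\) and \(\sselectpartial{\h_1}{\val}{\h_2}{\s}\) and a stepping relation \(\asyncstepsto!{}\) in which communication reduces in two steps, matching the distributed program action-for-action. With that extension, soundness/completeness give a \emph{strong} bisimulation and the simulator is literally \(\simulator=\adv\) (\cref{thm:projection-to-async}). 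A second, trivial simulation (\cref{thm:async-to-original}) then removes the run-time terms: the simulator there just swallows the extra internal step whenever the adversary schedules a partial form. Transitivity finishes. So what you frame as the hard part---reconciling atomic choreographic communication with buffered two-step delivery under adversarial scheduling---the paper dissolves syntactically rather than proves semantically.

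Your direct route (encode the slack in the bisimulation relation rather than in the syntax) is a legitimate alternative, but two aspects of your write-up are off. First, the appeals to \cref{thm:diamond-process,thm:general-diamond-process,thm:joinable-after-internal,thm:joinable-after-matching} are misplaced: those are about confluence and joinability of the \emph{ideal} concurrent semantics and power the sequentialization argument of \cref{sec:correctness-of-sequentialization}; here both sides run under the real semantics and the correspondence you need is exactly projection soundness/completeness (plus the \(\refinedby\) pruning relation for dead \kcase branches, cf.\ \cref{thm:bad-choreography-bisimulation}), not confluence. Second, your simulator is heavier than necessary: once you have an action-for-action correspondence (whether via run-time terms or via your relational slack), the identity simulator suffices, and there is no need to internally emulate \(N\) or to re-route forged/leaked messages---corruption is already identical on both sides.

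In short: your plan would likely go through, but by a longer and more delicate road; the paper's two-hop decomposition via \(\asyncstepsto!{}\) makes each hop essentially mechanical and avoids any bespoke simulator construction for this step.
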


A choreography and its endpoint projection match each other action-for-action;
once we prove this fact, showing simulation is trivial since we can pick
$\simulator = \adv$.
This perfect correspondence between a choreography and its projection is studied
extensively in the literature~\cite{Montesi23, Montesi13, FilipeM20, FilipeMP22, HirschG22},
and formalized as \emph{soundness} and \emph{completeness}
of endpoint projection.
However, the standard methods of proving soundness and completeness must
be modified to handle malicious corruption and asynchronous communication.
Existing work relates $\process$ to $\partition{\process}$,
which we must extend to relate $\corrupt{\process}$ to $\corrupt{\partition{\process}}$.
This is trivial since
$\corrupt{\partition{\process}} = \partition{\corrupt{\process}}$.

The presence of \emph{asynchrony} breaks the perfect correspondence between
the projected program and the choreography:
a \ksend/\kreceive pair reduces in two steps in a projected program,
but the corresponding communication statement reduces in
only one.
We follow prior work~\cite{async-choreo, PohjolaGSN22} and add syntactic forms
to choreographies for partially reduced \ksend/\kreceive pairs:
messages that have been sent and buffered but not yet received.
These run-time terms exist only to restore the correspondence,
and are never generated by the compiler.
An additional, simple simulation then shows that a choreography
with these run-time terms simulates one without,
removing the need to reason about run-time terms in other proof steps.
For details, see \apxref{sec:correctness-of-projection-details}.

\nosectionappendix
\begin{toappendix}
  \appendixfor{sec:correctness-of-projection}
  \label{sec:correctness-of-projection-details}

  A choreography and its endpoint projection match each other action-for-action;
once we prove this fact, showing simulation is trivial since we can pick
$\simulator = \adv$.
The choreographic programming literature~\cite{Montesi23, Montesi13, FilipeM20, FilipeMP22, HirschG22}
extensively studies this perfect correspondence between a choreography and
its projection, and formalizes the correspondence as strong bisimulation.

To prove that a choreography \process is bisimilar to its endpoint projection
$\partition{\process}$, we must define a relation $\rel$ between an arbitrary
configuration and process, $\config_1 \rel \process_2$,
and show that $\rel$ is a bisimulation.
The obvious approach is to define
\( \config_1 \rel \process_2 \)
if
\( \config_1 = \partition{\process_2} \),
but this idea fails because $\rel$ is not preserved under stepping.

\begin{lemma}
  \label{thm:bad-choreography-bisimulation}
  Define
  \( \config_1 \rel \process_2 \)
  if
  \( \config_1 = \partition{\process_2} \).
We claim \( \rel \) is \emph{not} a bisimulation.
\end{lemma}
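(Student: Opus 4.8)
The plan is to refute bisimilarity by producing a single related pair $\config_1 \rel \process_2$ together with one transition that breaks the defining closure condition. The obstruction is precisely the \emph{asynchrony} flagged above: a communication statement reduces atomically in a choreography but in two stages---send then receive---in its projection, so the projection can reach an intermediate ``message in flight'' state that is the image under $\partition{\cdot}$ of no choreography.

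Concretely, I would take the minimal choreography with a single communication, $\process_2 = \proc{\set{\h_1, \h_2}}{\buffer}{\smove{\h_1}{\vfalse}{\h_2}{\x}{\sskip}}$ with $\buffer$ empty, and form $\config_1 = \partition{\process_2}$. By the definition of projection, $\h_1$ runs $\sleth{\_}{\h_1}{\esend{\vfalse}{\h_2}}{\sskip}$ and $\h_2$ runs $\sleth{\x}{\h_2}{\ereceive{\h_1}}{\sskip}$, both over empty buffers. I would then step $\config_1$ by letting $\h_1$ perform its send: by \cref{config:output} the configuration emits the output action $\aglobalsend{\h_1}{\h_2}{\vfalse}$, advancing $\h_1$ past the send while the receiver $\h_2$ merely buffers the value (\cref{process:input}) and keeps its $\ereceive$ statement pending. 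Call the result $\config_1'$.

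To satisfy the bisimulation condition, $\process_2$ would have to answer with the same action $\aglobalsend{\h_1}{\h_2}{\vfalse}$ and land in some $\process_2'$ with $\config_1' = \partition{\process_2'}$. But the only rule firing this action on $\process_2$ is \cref{statement:move-real}, which consumes the communication in a single step, substitutes $\vfalse$ for $\x$, and yields $\process_2' = \proc{\set{\h_1, \h_2}}{\buffer}{\sskip}$; by internal determinism (\cref{thm:internal-determinism}) this $\process_2'$ is the \emph{unique} successor under that action. Comparing, $\partition{\process_2'}$ has $\h_2$ already past its receive with an empty buffer, whereas in $\config_1'$ the host $\h_2$ still carries a nonempty buffer and a pending $\ereceive$. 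Hence $\config_1' \neq \partition{\process_2'}$, the pair $(\config_1', \process_2')$ is not in $\rel$, and no other successor exists to rescue the match, so $\rel$ is not a bisimulation.

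I expect the argument itself to be short; the only care needed is the semantic bookkeeping that makes the mismatch genuine rather than cosmetic. Specifically, I must verify that the send transition really deposits the value into $\h_2$'s buffer yet leaves $\h_2$'s $\ereceive$ unfired, and invoke determinism to rule out any alternative matching step. Conceptually, the point worth stating is \emph{why} the failure is intrinsic: in every projection the two ends of a communication are synchronized (both before or both after the move), so no choreography projects to the out-of-sync, in-transit configuration $\config_1'$. This is exactly what motivates enriching choreographies with run-time terms for buffered-but-unreceived messages, as pursued in \cref{sec:correctness-of-projection}.
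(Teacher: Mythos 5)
Your counterexample is correct, but it is a genuinely different one from the paper's. The paper refutes the naive relation using the \emph{branching} obstruction: it takes a choreography whose head is an \kif at \alice followed by a selection to \bob, steps the internal \kif-reduction on both sides, and observes that \bob's projected \kcase statement retains the branch for the untaken alternative while the projection of the stepped choreography does not; this is the classic knowledge-of-choice mismatch, and it is what motivates the relaxed relation \( \config \refinedby \partition{\process} \) used in \cref{thm:projection-soundness,thm:projection-completeness}. You instead exploit the \emph{asynchrony} obstruction: after the sender's output action, the projected configuration sits in an in-flight state (value buffered at the receiver, \kreceive still pending) that is not the projection of any choreography, whereas \cref{statement:move-real} consumes the communication atomically; internal determinism then rules out any rescuing successor. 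Both obstructions are real, both are acknowledged by the paper, and each is repaired by a \emph{different} device (\( \refinedby \) for branching, the run-time terms \( \smovepartial{\h_1}{\val}{\h_2}{\x}{\s} \) for asynchrony), so for the existential claim ``\( \rel \) is not a bisimulation'' either witness suffices. The one thing worth being aware of is rhetorical rather than logical: the paper places this lemma to motivate \( \refinedby \) specifically, and your counterexample would instead motivate the run-time terms of \cref{thm:projection-to-async}; a reader following your proof would still need the paper's \kif/\kcase example later to see why \( \refinedby \) is also necessary.
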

\begin{proof}
  Consider the following choreography and its projection:
\begin{program}
    \scomment{Choreography}
    \\
    \process_2 =
    \sifh{\vtrue}{\alice}
      {\sselect{\alice}{\bob}{\vtrue}{\s_1}}
      {\sselect{\alice}{\bob}{\vfalse}{\s_2}}
\\
    \scomment{\alice}
    \\
    \project[\alice]{\process_2} =
    \sifh{\vtrue}{\alice}
      {\sseq{\esend{\vtrue}{\bob}}{\project[\alice]{\s_1}}}
      {\sseq{\esend{\vfalse}{\bob}}{\project[\alice]{\s_2}}}
\\
    \scomment{\bob}
    \\
    \project[\bob]{\process_2} =
    \scase{\alice}{\bob}{
      \set{
        \vtrue \mapsto \project[\bob]{\s_1}
        ,
        \vfalse \mapsto \project[\bob]{\s_2}
      }
    }
  \end{program}
Let $\config_1 = \partition{\process_2}$;
  we have $\config_1 \rel \process_2$.
Now, host \alice can reduce the \kif statement with an internal step in both
  $\config_1$ and $\process_2$, which gives:
  \begin{program}
    \scomment{Choreography}
    \\
    \process_2' =
    \sselect{\alice}{\bob}{\vtrue}{\s_1}
\\
    \scomment{\alice}
    \\
    \config_1'(\alice) =
    \sseq{\esend{\vtrue}{\bob}}{\project[\alice]{\s_1}}
\\
    \scomment{\bob}
    \\
    \config_1'(\bob) =
    \scase{\alice}{\bob}{
      \set{
        \vtrue \mapsto \project[\bob]{\s_1}
        ,
        \vfalse \mapsto \project[\bob]{\s_2}
      }
    }
  \end{program}
Note that the process for \bob in $\config_1'$ does not match
  $\project[\bob]{\process_2'}$, which is
  \[
    \project[\bob]{\process_2'}
    =
    \scase{\alice}{\bob}{
      \set{
        \vtrue \mapsto \project[\bob]{\s_1}
      }
    }
  \]
  (there is no case for $\vfalse$).

  Thus, we have
  \( \config_1 \rel \process_2 \),
  \( \config_1 \stepsto{\aglobalinternal{\alice}} \config_1' \),
  \( \process_2 \stepsto{\aglobalinternal{\alice}} \process_2' \),
  but it is not the case that \( \config_1' \rel \process_2' \).
\Cref{thm:internal-determinism} implies $\config_1'$ is uniquely determined,
  so there is no other $\config_1''$ related to $\process_2'$ that $\config_1$
  can step to.
Therefore, $\rel$ is not a bisimulation.
\end{proof}

Intuitively, when a choreography reduces an \kif statement,
the branch that is not taken disappears in one step for all hosts.
However, in the projected program,
each host reduces its corresponding \kcase statement separately,
which results in extraneous dead branches during simulation.
This is a known issue in the choreography literature~\cite{Montesi23},
and it does \emph{not} break bisimilarity, but
we need to be smarter about how to define $\rel$.

The solution to the issue raised by \cref{thm:bad-choreography-bisimulation}
is to ignore extraneous branches when defining $\rel$.
Even though the configuration $\config_1'$ has ``leftover'' branches
that projecting the choreography $\process_2'$ does not create,
we know that these branches will never be taken.
So we can ignore these branches when defining $\rel$.

Following \citet{Montesi23},
we define $\process_1 \refinedby \process_2$
if $\process_1$ and $\process_2$ are structurally identical,
except $\process_1$ has at least as many branches in \kcase
statements as $\process_2$.
We lift $\refinedby$ pointwise to configurations.
Now, we define
\( \config_1 \rel \process_2 \)
if
\( \config_1 \refinedby \partition{\process_2} \).
We claim $\rel$ is a bisimulation.
Further,
the proof is split into showing the \emph{soundness} and \emph{completeness}
of endpoint projection.

\begin{lemma}[Soundness of Endpoint Projection]
  \label{thm:projection-soundness}
  If
  \( \config \refinedby \partition{\process} \)
  and
  \( \process \realstepsto!{\action} \process' \),
  then
  \( \config \realstepsto{\action} \config' \)
  for some \( \config' \refinedby \partition{\process'} \).
\end{lemma}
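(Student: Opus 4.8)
The plan is to prove the statement by induction on the derivation of the concurrent step \( \process \realstepsto!{\action} \process' \). Writing \( \process = \proc{\h*}{\buffer}{\s} \), I would first peel off the process-level rule of \cref{fig:process-stepping} to expose the underlying statement step, recalling that endpoint projection acts componentwise and that \( \project[\h]{\buffer} \) retains only the messages addressed to \( \h \). The essential content then lies at the statement level: a concurrent statement step \( \s \realstepsto!{\action} \s' \) is \emph{localized} at the host \( \actor{\action} \), and I would show that projecting onto this host yields a matching sequential step \( \project[\actor{\action}]{\s} \realstepsto{\action} \project[\actor{\action}]{\s'} \), while the projection onto every other host is either unchanged or merely buffers the value that \( \action \) carries. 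Lifting this back through the parallel-composition rules---the acting host outputs, the addressed host inputs into its buffer, and the remaining hosts discard---produces the required \( \config' \).

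The base cases arise from the sequential-lifting rule \cref{statement:sequential}. For a communication \( \smove{\h_1}{\val}{\h_2}{\x}{\s} \) (and analogously for selection), the projection at \( \h_1 \) is a \ksend whose output action is exactly \( \aglobalsend{\h_1}{\h_2}{\val} \); under parallel composition this output lands in \( \h_2 \)'s buffer, so \( \h_2 \)'s local statement---the projected \kreceive or \kcase---is untouched, and the pending message is recorded by the run-time in-flight term on the choreography side. This is precisely where \emph{asynchrony} forces the run-time send/receive forms: the choreography's single communication action is matched action-for-action by the projected \ksend, and the later delivery becomes an internal step. For a resolved conditional \( \sifh{\val}{\h}{\s_1}{\s_2} \), the projection at \( \h \) steps its own \kif to \( \project[\h]{\s_i} \), whereas every other host \( \h' \) carries \( \merge{\project[\h']{\s_1}}{\project[\h']{\s_2}} \); the resolution is an internal self-action of \( \h \), so the other hosts discard it and retain their merged statements. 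These merged statements still contain the dead branch, so they only \emph{refine} \( \project[\h']{\s_i} \) rather than equal it---exactly the slack that \( \refinedby \) is designed to absorb, discharged by the defining property \( \merge{A}{B} \refinedby A \).

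For the inductive rules I would exploit how projection interacts with evaluation contexts. Under \cref{statement:delay}, the side condition \( \actor{\action} \notin \hosts{\ectx} \) means the head statement of \( \ectx \) mentions none of the acting host's code, so \( \project[\actor{\action}]{\ectxof{\s}} = \project[\actor{\action}]{\s} \) and the induction hypothesis applies directly; meanwhile each host in \( \hosts{\ectx} \) keeps its projected context statement, which commutes with the inner step because the two touch disjoint hosts. Under \cref{statement:if-delay}, both branches step with the same action, and projecting onto a non-deciding host I would rely on a companion fact that merging commutes with stepping, so that \( \merge{\project[\h]{\s_1}}{\project[\h]{\s_2}} \) steps uniformly, after which the induction hypotheses on the two branches close the case.

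The main obstacle I expect is the \cref{statement:delay} case together with the dead-branch bookkeeping: one must simultaneously argue that the acting host's local program steps \emph{independently} of the skipped statements, that the addressed host buffers correctly, and that \( \refinedby \) is preserved when merge introduces extraneous \kcase branches. Making these three concerns line up---rather than any one of them in isolation---is the crux, and it is exactly why the invariant is phrased with \( \refinedby \) rather than equality and why the asynchronous run-time terms are introduced before this lemma.
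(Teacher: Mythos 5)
The paper does not actually prove \cref{thm:projection-soundness}: it states the lemma alongside its completeness counterpart, explains the two obstructions (dead \kcase branches, handled by the pruning relation $\refinedby$, and asynchrony, handled by the run-time in-flight terms), and then defers the proof itself to the choreographic-programming literature, explicitly citing \citet{async-choreo} for the details. Your sketch is therefore not comparable to a proof in the paper so much as a reconstruction of the argument the paper delegates to a citation --- and it is a faithful one. You correctly identify the induction on the concurrent stepping derivation, the localization of each statement step at $\actor{\action}$ with the remaining hosts either buffering or discarding, the role of the run-time communication terms in matching the choreography's single send action against the projected \ksend (with delivery deferred to an internal step), and the fact that the non-deciding hosts' merged statements only \emph{refine} the projection of the chosen branch, which is exactly the slack $\refinedby$ exists to absorb. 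Your treatment of \cref{statement:delay} via the identity $\project[{\actor{\action}}]{\ectxof{\s}} = \project[{\actor{\action}}]{\s}$ is also the right observation.

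Two points deserve slightly more care than your sketch gives them. First, throughout the base cases you reason as if the acting host's component of $\config$ \emph{equals} its projection, whereas the hypothesis only gives $\config \refinedby \partition{\process}$; you need the (easy but necessary) observation that refinement preserves the head constructor except at \kcase statements, so the matching local step still exists. Second, the ``merge commutes with stepping'' fact you invoke for \cref{statement:if-delay} is a genuine auxiliary lemma that must be proved by induction on the merge; you name it but do not discharge it. Neither is a fatal gap --- both are standard in the EPP soundness proofs the paper cites --- but a complete writeup would need them stated and proved.
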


\begin{lemma}[Completeness of Endpoint Projection]
  \label{thm:projection-completeness}
  If
  \( \config \refinedby \partition{\process} \)
  and
  \( \config \realstepsto{\action} \config' \),
  then
  \( \process \realstepsto!{\action} \process' \),
  for some \( \process' \) with
  \( \config' \refinedby \partition{\process'} \).
\end{lemma}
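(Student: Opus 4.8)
The plan is to prove completeness as the converse of soundness (\cref{thm:projection-soundness}): together they give the strong bisimulation needed for \cref{thm:correctness-of-projection}, and since $\refinedby$ tolerates extra \kcase branches, we may take $\simulator = \adv$ once both directions hold. I would proceed by induction on the structure of the choreography statement underlying $\process$, with a case analysis on its head constructor and on the observed action $\action$. First I would invert the configuration step $\config \realstepsto{\action} \config'$ to extract the acting host $\h = \actor{\action}$ and to classify $\action$ as an input, an output, or an internal self-communication; the refinement hypothesis $\config \refinedby \partition{\process}$ then lets me treat each process of $\config$ as $\project[\h']{\process}$ up to dead branches, pinning down which statement of $\process$ produced $\action$.

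The first group of cases advances the choreography past statements that do not involve $\h$. If the head statement is a \klet, a communication, or a selection whose hosts exclude $\h$, then $\h$'s action must originate in the continuation, so I step the choreography with \cref{statement:delay}, whose side condition $\actor{\action} \notin \hosts{\ectx}$ is exactly the disjointness already available ($\h \neq \h'$, or $\h \notin \set{\h_1, \h_2}$); I then close the case by the induction hypothesis on the continuation, re-wrapping the resulting step under the same evaluation context and re-applying projection to recover $\refinedby$. The \kif statement sitting at a host $\h' \neq \h$ is the delicate variant of this pattern: projection has \emph{merged} the two branches for $\h$, so I invoke \cref{statement:if-delay}, which demands that both branches step with the same $\action$. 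The induction hypothesis applied to each branch supplies these two steps, and I must check that the merges of the two resulting branch projections remain compatible, so that $\config' \refinedby \partition{\process'}$ still holds.

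The second group matches an action that $\h$ itself performs. Local computation, \kinput, \koutput, and the downgrade expressions are matched directly by \cref{statement:sequential} on the identical expression, with the environment- and adversary-facing labels coinciding. Communication and selection are where asynchrony enters: a \ksend projected from a communication statement buffers its value and steps in $\config$, and I match it against the run-time ``in-transit'' term that the extended choreography language carries for sent-but-unreceived messages; dually, a \kreceive or \kcase that consumes a buffered message is matched against the choreography consuming the corresponding in-transit term. For \kcase I additionally use the refinement to argue that the selection value just received names a branch genuinely present in $\process$ (any surplus branches in $\config$ are dead), so the choreography's selection steps into the matching branch and the refinement is preserved.

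The main obstacle I anticipate is the asynchronous bookkeeping rather than any single case: a lone configuration transition that either buffers a \ksend or delivers a message from a buffer must be mirrored by \emph{exactly one} choreography transition, and this goes through only once the choreography carries run-time terms for in-transit messages. The delicate invariant to maintain is that, across the arbitrary interleavings permitted by the delay rules, the per-channel FIFO buffers of $\config$ stay in lockstep with the in-transit terms of $\process$, so that the precise value and channel the configuration delivers is always the one the choreography is poised to receive. This interaction between asynchrony and the out-of-order branching handled in the \kif-delay case is exactly where the textbook endpoint-projection completeness argument needs adaptation; corruption, by contrast, requires no separate treatment, since $\corrupt{\partition{\process}} = \partition{\corrupt{\process}}$ reduces the corrupted statement to the same analysis.
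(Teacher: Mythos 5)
Your proposal is correct and follows essentially the same route as the paper, which does not spell out the induction itself but defers to the choreographic-programming literature after isolating exactly the three ingredients you identify: the refinement relation $\refinedby$ to absorb dead \kcase{} branches left over from merging, the run-time in-transit terms that restore a one-step-to-one-step correspondence under asynchrony, and the identity $\corrupt{\partition{\process}} = \partition{\corrupt{\process}}$ that dispenses with corruption. Your case analysis (delay rules for non-acting hosts, \kif-delay plus merge compatibility, and matching sends/receives against the partially reduced communication terms) is a faithful expansion of that sketch.
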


\subsection{Handling Asynchronous Communication}

\begin{figure}
  \begin{syntax}
  \category[Statements]{\s}
  \dots
  \\
  \alternative{\smovepartial{\h_1}{\val}{\h_2}{\x}{\s}}
  \\
  \alternative{\sselectpartial{\h_1}{\val}{\h_2}{\s}}
\end{syntax}
   \caption{The syntax of asynchronous choreographies
    (extends \cref{fig:choreography-syntax}).}
  \label{fig:async-choreography}
\end{figure}

\begin{figure}
  \begin{minipage}{\linewidth}
    \begin{judgments}
  \judgment{\s \asyncstepsto{\action} \s'}
\end{judgments}
\begin{mathpar}
\inferrule
    [\named{\s-Communicate-Send}{statement:move-send}]
    {}
    {
      \smove{\h_1}{\val}{\h_2}{\x}{\s}
      \asyncstepsto{\aglobalsend{\h_1}{\h_2}{\val}}
      \smovepartial{\h_1}{\val}{\h_2}{\x}{\s}
    }

\inferrule
    [\named{\s-Communicate-Receive}{statement:move-receive}]
    {}
    {
      \smovepartial{\h_1}{\val}{\h_2}{\x}{\s}
      \asyncstepsto{\aglobalinternal{\h_2}}
      \subst{\x}{\val}{\s}
    }

\inferrule
    [\named{\s-Select-Send}{statement:select-send}]
    {}
    {
      \sselect{\h_1}{\val}{\h_2}{\s}
      \asyncstepsto{\aglobalsend{\h_1}{\h_2}{\val}}
      \sselectpartial{\h_1}{\val}{\h_2}{\s}
    }

\inferrule
    [\named{\s-Select-Receive}{statement:select-receive}]
    {}
    {
      \sselectpartial{\h_1}{\val}{\h_2}{\s}
      \asyncstepsto{\aglobalinternal{\h_2}}
      \s
    }
\end{mathpar}
   \end{minipage}
  \caption{Stepping rules for asynchronous choreographies.
    These override the rules for $\realstepsto{}$ in \cref{fig:real-stepping}.}
  \label{fig:async-stepping}
\end{figure}

We follow prior work~\cite{async-choreo, PohjolaGSN22} and add syntactic forms
to choreographies to represent partially reduced \ksend/\kreceive pairs
given in \cref{fig:async-choreography}.
We extend endpoint projection so that the new syntactic forms are projected
as a \kreceive statement and a message on the receiver's buffer.
For example, while
\( \smove{\alice}{\val}{\bob}{\x}{\s} \)
becomes a \ksend on \alice and a \kreceive on \bob,
\( \smovepartial{\alice}{\val}{\bob}{\x}{\s} \)
becomes a \kreceive on \bob and a message (from \alice) in \bob's buffer.
We update the stepping rules for communication and selection statements
so that they reduce to the corresponding run-time terms,
which in turn reduce to their continuations.
\Cref{fig:async-stepping} gives the updated rules.

These run-time terms are sufficient to restore perfect correspondence,
and make \cref{thm:projection-soundness,thm:projection-completeness} go through.
We refer to \citet{async-choreo} for details.

\begin{lemma}
  \label{thm:projection-to-async}
  If
  \( \ssecure[\emptylist]{\process} \),
  then
  \(
    \using{\corrupt{\process}}{\asyncstepsto!{}}
    \simulatedBy
    \using{\corrupt{\partition{\process}}}{\realstepsto{}}
  \).
\end{lemma}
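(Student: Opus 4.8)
The plan is to prove the stronger fact that $\corrupt{\process}$ under $\asyncstepsto!{}$ and $\corrupt{\partition{\process}}$ under $\realstepsto{}$ step in lockstep, so that simulation follows by taking the simulator to be the adversary unchanged, $\simulator = \adv$. Concretely, for any $\adv$ I would show $\envtraces{\adv \parallel \corrupt{\process}} = \envtraces{\adv \parallel \corrupt{\partition{\process}}}$ by exhibiting a bisimulation between the two configurations. I first reduce to reasoning about $\partition{\corrupt{\process}}$ rather than $\corrupt{\partition{\process}}$, using the identity $\corrupt{\partition{\process}} = \partition{\corrupt{\process}}$ noted in the main text; this lets corruption be applied to the choreography before projection, so I only need to relate a (corrupted) choreography to its projection.

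The core of the argument is the perfect, action-for-action correspondence between a choreography and its endpoint projection. Following the choreographic programming literature, I would take the bisimulation relation to be $\config \rel \process$ iff $\config \refinedby \partition{\process}$, where $\refinedby$ permits the projection to carry extra dead \kcase branches that arise when a choreographic \kif collapses in a single step but the projected branches collapse host-by-host. The two directions of the bisimulation are exactly \cref{thm:projection-soundness} (every choreography step is matched by a configuration step preserving $\refinedby$) and \cref{thm:projection-completeness} (the converse). Because the matched actions are syntactically identical on both sides, the restricted environment traces coincide, and the adversary's view is unchanged whether it runs against $\corrupt{\process}$ or $\corrupt{\partition{\process}}$; hence $\simulator = \adv$ witnesses \cref{def:simulation}.

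The delicate point is that this lockstep correspondence would fail under the ordinary semantics $\realstepsto!{}$: a choreographic communication reduces in one step, whereas its projection splits into a \ksend and a later \kreceive, two steps separated by buffering. This is precisely why the lemma is stated for the asynchronous semantics $\asyncstepsto!{}$. I would therefore rely on the run-time terms of \cref{fig:async-choreography} and the stepping rules of \cref{fig:async-stepping}, which split each communication and selection into a send step (producing a partially-reduced term together with a visible send action) and a receive step (consuming it with an internal action $\aglobalinternal{\h_2}$), mirroring the send/receive pair and intervening buffer message on the projected side. With this refinement, \cref{thm:projection-soundness,thm:projection-completeness} hold step-for-step, and the bisimulation closes.

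The main obstacle I anticipate is establishing soundness and completeness themselves in the presence of both malicious corruption and asynchrony. Corruption alters which hosts' code survives and rewrites communications involving malicious hosts into bare \ksend/\kreceive expressions, so I must check that projection commutes with this rewriting — that $\corrupt{\partition{\process}} = \partition{\corrupt{\process}}$ genuinely holds — and that the $\refinedby$ invariant is preserved across corrupted steps. Asynchrony complicates completeness in particular: a projected send may fire while the choreography's matching receive is not yet enabled, so I must use the partially-reduced run-time terms to absorb the in-flight message and show the choreography can still catch up via an internal receive step. Verifying that these features interact cleanly — that corrupted, buffered, out-of-order projected steps are always matched by the asynchronous choreography up to $\refinedby$ — is where the real work lies.
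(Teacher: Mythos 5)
Your proposal matches the paper's proof: the paper likewise takes $\simulator = \adv$ and obtains a strong bisimulation directly from \cref{thm:projection-soundness,thm:projection-completeness}, using the relation $\config \refinedby \partition{\process}$ to absorb dead \kcase branches, the identity $\corrupt{\partition{\process}} = \partition{\corrupt{\process}}$, and the run-time terms of \cref{fig:async-choreography} to restore the step-for-step correspondence under asynchrony. The ``real work'' you flag (soundness and completeness under corruption and asynchrony) is exactly where the paper also places it, deferring to the choreographic-programming literature for the details.
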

\begin{proof}
  Let \( \simulator = \adv \).
\Cref{thm:projection-soundness,thm:projection-completeness} immediately
  give a strong bisimulation.
\end{proof}

\subsection{Restoring Original Choreography Syntax}

\Cref{thm:projection-to-async} proves the correctness of endpoint projection
for \emph{extended} choreographies that have run-time terms.
Next, we show a simple simulation that a choreography
with run-time terms simulates one without,
removing the need to reason about run-time terms in other proof steps.

\begin{lemma}
  \label{thm:async-to-original}
  If
  \( \ssecure[\emptylist]{\process} \),
  then
  \(
    \using{\corrupt{\process}}{\realstepsto!{}}
    \simulatedBy
    \using{\corrupt{\process}}{\asyncstepsto!{}}
  \).
\end{lemma}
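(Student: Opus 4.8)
The plan is to unfold \cref{def:simulation} with $\config_1 = \using{\corrupt{\process}}{\realstepsto!{}}$ and $\config_2 = \using{\corrupt{\process}}{\asyncstepsto!{}}$: I must show that for \emph{every} adversary $\adv$ against the \emph{asynchronous} configuration there is a simulator $\simulator$ against the \emph{real} configuration with $\envtraces{\simulator \parallel \using{\corrupt{\process}}{\realstepsto!{}}} = \envtraces{\adv \parallel \using{\corrupt{\process}}{\asyncstepsto!{}}}$ (the same orientation as \cref{thm:choreography-to-source}, where the adversary runs against the right-hand configuration and the simulator drives the left-hand one). The key observation is that $\asyncstepsto{}$ and $\realstepsto{}$ agree on every rule except communication and selection: the asynchronous semantics splits each such statement into a \emph{send} half, e.g.\ $\smove{\h_1}{\val}{\h_2}{\x}{\s} \asyncstepsto{\aglobalsend{\h_1}{\h_2}{\val}} \smovepartial{\h_1}{\val}{\h_2}{\x}{\s}$, which emits exactly the action the atomic real rule emits, followed by a \emph{receive} half $\smovepartial{\h_1}{\val}{\h_2}{\x}{\s} \asyncstepsto{\aglobalinternal{\h_2}} \subst{\x}{\val}{\s}$, a pure \emph{internal} delivery step; the real rule instead performs the substitution atomically with the send. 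Thus the asynchronous configuration is a stuttering expansion of the real one whose only extra transitions are the internal receive steps of the partial communication and selection terms of \cref{fig:async-choreography}. Note that $\simulator = \adv$ does \emph{not} suffice here (unlike \cref{thm:projection-to-async}): an asynchronous adversary will block waiting to schedule a receive half that the real configuration never offers.

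The simulator $\simulator$ therefore runs $\adv$ against an internal view of the asynchronous configuration and drives the real configuration $\process_r$ to maintain the invariant that $\process_r$ is the \emph{completion} of that view, i.e.\ the view with every pending partial term reduced by its receive rule. Concretely, when $\adv$ schedules a send half, $\simulator$ performs the corresponding \emph{atomic} communication on $\process_r$, which emits the identical send action; when $\adv$ schedules a receive half (an internal action), $\simulator$ leaves $\process_r$ untouched (a silent stutter), since $\process_r$ has already delivered that message, and merely fabricates the expected $\aglobalinternal{\h_2}$ for $\adv$; every remaining step is mirrored directly. Because the fabricated action has both endpoints equal to $\h_2$, its payload is observable to $\adv$ only when $\semihonest{\h_2}$, in which case $\simulator$ reuses the value it already read off the matching real send (a send is readable whenever either endpoint is dishonest, in particular whenever $\semihonest{\h_2}$); when $\honest{\h_2}$ the payload is unobservable and \cref{def:adversary-interface}(1) makes $\adv$ insensitive to the placeholder $\simulator$ supplies. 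Inputs and outputs on $\envprot$ (from $\kinput$/$\koutput$ and from $\adv$'s own exchanges with $\envprot$) are relayed unchanged, so the computed environment values come from $\process_r$ and coincide with those the asynchronous configuration would produce.

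To finish I will show that the relation identifying $\adv$ with the copy $\simulator$ runs internally and relating each asynchronous configuration $\process_a$ to its completion $\process_r$ is a weak bisimulation between $\adv \parallel \using{\corrupt{\process}}{\asyncstepsto!{}}$ and $\simulator \parallel \using{\corrupt{\process}}{\realstepsto!{}}$, by case analysis on the asynchronous step: send halves are matched by one real step, receive halves by zero real steps, and all other steps by one real step, with the relation preserved and every $\envprot$-action reproduced in each case. The hypothesis $\ssecure[\emptylist]{\process}$ is needed only to guarantee $\corrupt{\process}$ is well defined---trusted control flow (\cref{ssecure:if}) prevents corruption from producing $\bot$---and it is threaded through unchanged. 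I expect the main obstacle to be the bookkeeping for the completion under the concurrent \emph{delay} rules: I must verify that completing a pending partial communication commutes with a delayed step at another host, and that the asymmetric blocking between the two semantics---a receiver stays blocked behind an undelivered partial communication in the asynchronous configuration but is already enabled in its completion---is benign, because $\simulator$ only ever takes in $\process_r$ the steps $\adv$ actually schedules, so the extra enabledness of the completed configuration is never exploited ahead of the matching receive. This commutation is of the same diamond-style character as the statement-level reasoning already developed, which \cref{thm:general-diamond-process} lets me transport to the relations $\realstepsto!{}$ and $\asyncstepsto!{}$.
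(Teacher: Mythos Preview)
Your proposal is correct and takes essentially the same approach as the paper: the simulator maintains a view of the asynchronous choreography, runs $\adv$ against it, mirrors ordinary steps on the real side, and stutters (does nothing on the real configuration) whenever $\adv$ schedules a receive-half of a run-time term. The paper's proof is three sentences; your write-up simply fleshes out the invariant (the ``completion'' relation) and anticipates the delay-rule bookkeeping, which the paper leaves implicit. One small over-complication: the fabricated receive-half action is just $\aglobalinternal{\h_2}$, a self-message used to signal progress, so there is no meaningful payload for $\adv$ to observe and your case split on $\semihonest{\h_2}$ is unnecessary.
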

\begin{proof}
  The simulator follows the control flow by maintaining a public view of
  the extended choreography
  \( \using{\corrupt{\process}}{\asyncstepsto!{}} \)
  and runs the adversary against this view.
The simulator behaves the same as the adversary,
  except when the adversary schedules a run-time term,
  the simulator takes an internal step
  (and does not schedule the original choreography).
\end{proof}

\begin{proofof}{thm:correctness-of-projection}
  By \cref{thm:projection-to-async,thm:async-to-original}
  using the transitivity of UC simulation.
\end{proofof}
 \end{toappendix}
 
\section{Cryptographic Instantiation}
\label{sec:uc}

Our simulation result is a necessary and novel first step toward constructing a verified,
secure compiler for distributed protocols that use cryptography.
We have abstracted all cryptographic mechanisms into idealized hosts
(e.g., \MPCalicebob); thus, to achieve a full end-to-end security proof,
these idealized hosts must be securely instantiated with cryptographic subprotocols (e.g.,
BGW~\cite{bgw88} for multiparty computation).
Such an instantiation would imply UC security for all compiled programs,
in contrast to existing formalization efforts for individual protocols~\cite{EasyUC,BarbosaBGKS21,gancher2023core}.

To this end, we show how the distributed protocols arising from
our compilation correspond to hybrid protocols in the Simplified Universal
Composability (SUC) framework~\cite{SUC}.
Then, we show how to take advantage of the \emph{composition} theorem
in SUC to obtain secure, concrete instantiations of cryptographic
protocols.

\newcommand{\SUC}[1]{\partition{#1}^{\mathsf{SUC}}}

\paragraph{Simplified UC}

Let $\s$ be a choreography with partitioning $\partition{\s}$.
We construct a corresponding SUC protocol $\SUC{\s}$
which behaves identically to the partitioned choreography, with minor differences due
to the differing computational models.
Each host in $s$ is either a \emph{local} host (e.g., \alice), or an \emph{idealized}
host standing in for cryptography, such as \MPCalicebob. Local hosts
map onto SUC parties, while idealized hosts map onto \emph{ideal functionalities} in SUC.

Protocol execution in SUC happens through
\emph{activations} scheduled by the adversary: a party runs for some steps,
delivers messages to a central \emph{router}, and cedes execution to the
adversary. To faithfully capture the behavior of host \h in $\partition{\s}$,
the party/functionality for \h in $\SUC{\s}$ is essentially a \emph{wrapper}
around the projected host $\project{\s}$, who steps $\project{s}$ accordingly
and forwards correct messages to the router.

Each wrapper needs to explicitly model \emph{corruption}, which our
framework captures by labels: if host $\h$ is semi-honest
($\semihonest{\h}$), the wrapper for $\h$ allows the adversary to query
$\h$ for its current message transcript so far. Similarly, if $\h$ is malicious
($\malicious{\h}$), the wrapper for $\h$ should enable the adversary to
take complete control over $\h$.
By using labels to model corruption, we model \emph{static} security in SUC.

\paragraph{Communication Model}

In SUC, all messages between local hosts are fully public, while messages
between hosts and functionalities contain \emph{public headers} (e.g., the
source/destination addresses) and \emph{private content} (the message payload).
In our system, we do not stratify message privacy along the
party/functionalities axis, but rather along the information flow lattice:
the adversary can read the messages intended for semi-honest hosts, and can forge
messages from malicious hosts.
Indeed, information flow policies allow
more flexible security policies for communication.

However, we can encode our communication model into SUC with the aid of
additional functionalities.
To do so, we make use of a secure channel functionality $\mathcal{R}_\mathsf{sec}$,
which guarantees in-order message delivery and enables secret communication between
honest hosts.
We can realize $\mathcal{R}_\mathsf{sec}$ in SUC via a standard subprotocol using
a public key infrastructure.

For ideal functionalities in $\SUC{\s}$, we need to
ensure that they only communicate with local hosts, and not with other ideal
functionalities.
This property is preserved by compilation, so we only need to
ensure that host selection produces a choreography $\s$ that has this property.
Indeed, our synchronization judgment $\ssynched{\s}$ makes it possible for
choreographies to stay well-synchronized, even when the ideal hosts do not
communicate with each other.

\paragraph{Adversaries and Environments}

In our framework, we prove perfect security against non-probabilistic
adversaries.
However, allowing the adversary to use probability (as in SUC) does not weaken
our simulation result.\footnote{
  The dummy adversary theorem~\cite{UC} implies that security against
  non-probabilistic adversaries guarantees security against probabilistic
  adversaries.}
Additionally, in UC/SUC, the environment is given by a concurrently running
process that outputs a \emph{decision bit}, whereas our model uses a trace
semantics to model the environment.
Security for the latter easily implies the
former, since our simulation result proves equality of environment views between
the two worlds.

Finally, UC (and SUC) require all parties---the adversary,
environment, simulator, and hosts---to run in polynomial time since
cryptographic schemes can be broken given enough time.
Since our simulators query the adversary
and emulate source and target programs in a straightforward manner,
all simulators we define run in time bounded by a polynomial
in the run times of the adversary and the source and target programs.

\subsection{Secure Instantiation of Cryptography}
\label{sec:secureinst}
To securely instantiate cryptographic mechanisms, we appeal
to the \emph{composition} theorem in SUC, which says that ideal
SUC\hyp{}functionalities $\mathcal{F}$ may be substituted for SUC protocols that
securely realize them. Concrete cryptographic protocols are
obtained by applying this theorem iteratively to each ideal host.

Ideal hosts in our model correspond closely to the broad class of
\emph{reactive, deterministic
straight-line} functionalities in SUC, including MPC~\cite{SUC,mpc-sok} and
Zero-Knowledge Proofs (ZKP)~\cite{mpcinthehead}.
The main difference is that our model allows the
adversary to corrupt ideal functionalities (both semi-honestly and maliciously),
while SUC functionalities are incorruptible.
However, we guarantee that the adversary does not gain more power in our model by
restricting the possible corruption models via authority labels for ideal hosts.

For example, we have \MPCalicebob has label $\iflabel{A \wedge B}$,
meaning that
\MPCalicebob is semi-honest (resp. malicious) only if \emph{both} \alice and
\bob are semi-honest (resp. malicious). Thus, any power the adversary gains in
corrupting \MPCalicebob can be instead achieved using \alice and \bob alone.
Similar security concerns for label-based host
selection have been discussed for Viaduct~\cite{viaduct-pldi21}.
We can formalize this intuition via a simulation of the form
$\using{\config}{\idealstepsto{}} \simulates \using{\config}{\idealstepsto{}'}$,
where $\config$ uses $\MPCalicebob$, and $\idealstepsto{}'$ is modified so that
corruption of $\MPCalicebob$ is impossible.
 
\section{Security Preservation}
\label{sec:security-preservation}

We use simulation to define the correctness of compilation, and show
that it corresponds to a well-studied correctness criterion,
\emph{robust hyperproperty preservation} (RHP)~\cite{abateB0HPT19}.
RHP states that hyperproperties~\cite{cs08} satisfied by source programs
under any context are also satisfied by target programs under any context.
RHP is important because common notions of information-flow
security such as termination-insensitive noninterference, observational
determinism~\cite{zm03}, and nonmalleable information flow control~\cite{nmifc}
are hyperproperties.
With a compiler that satisfies RHP, one only needs to prove
security of source programs; security of target programs immediately
follows.

\begin{definition}[Robust Hyperproperty Preservation (RHP)]
  \label{def:rhp}
  Let $\compile{}$ be a compiler from a source program to a target program,
  $\progcompose$ be an operator that composes a program with its context,
  and $\behavior$ be a behavior function that returns the set of possible
  traces generated from a whole program (i.e., a program composed with a
  context).
Then $\compile{}$ satisfies RHP
  over source program set $\prog!$, source context set \srcctx!,
  and target context set \targetctx! when,
  given program $\prog \in \prog!$,
  for all $\targetctx \in \targetctx!$
  there exists $\srcctx \in \srcctx!$ such that
  $\behavior(\srcctx \progcompose \prog)
  = \behavior(\targetctx \progcompose \compile{\prog})$.\footnote{
    This is the ``property-free'' definition of RHP as given by \citet{abateB0HPT19}.
An equivalent but more direct definition is that $\compile{}$ satisfies RHP
    given that if for some hyperproperty HP it is the case that
    $\behavior(\srcctx \progcompose \prog) \in \mathrm{HP}$
    for any $\srcctx \in \srcctx!$,
    then $\behavior(\targetctx \progcompose \compile{\prog}) \in \mathrm{HP}$
    for any $\targetctx \in \targetctx!$.
  }
\end{definition}

\Citet{rhp-uc, rhp-uc-extended} previously observed a correspondence between
UC simulation and robust hyperproperty preservation;
it also holds for our notion of simulation.

\begin{theorem}[Simulation Implies RHP]
  \label{thm:sim-implies-rhp}
  Define \(\adv \progcompose \config = \adv \parallel \config\).
  Then given behavior function $\behavior(\cdot) = \envtraces{\cdot}$
  and an operator \(\compile{}\) between configurations such that
  \(\compile{\config} \simulates \config\) for any configuration
  \(\config \in \config!\), we have that
  \(\compile{}\) satisfies RHP over source program set \config! and
  source and target context set \(\set{\adv}\).
  
\end{theorem}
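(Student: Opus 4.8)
The plan is to prove this by directly specializing and then unfolding the two definitions involved, \cref{def:rhp} and \cref{def:simulation}, and observing that the resulting statements coincide. First I would instantiate the RHP definition with the data supplied by the theorem: the program set is $\config!$, both the source and target context sets are the set of all adversaries $\set{\adv}$, composition is parallel composition $\adv \progcompose \config = \adv \parallel \config$, and the behavior function is $\behavior(\cdot) = \envtraces{\cdot}$. Under this instantiation, the equality $\behavior(\srcctx \progcompose \prog) = \behavior(\targetctx \progcompose \compile{\prog})$ becomes $\envtraces{\simulator \parallel \config} = \envtraces{\adv \parallel \compile{\config}}$, so the obligation to discharge is that for every $\config \in \config!$ and every target context $\adv$ there exists a source context $\simulator$ with $\envtraces{\simulator \parallel \config} = \envtraces{\adv \parallel \compile{\config}}$.

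Next I would unfold the hypothesis $\compile{\config} \simulates \config$. Since $\simulates$ is defined as the converse of $\simulatedBy$, this is precisely $\config \simulatedBy \compile{\config}$, which by \cref{def:simulation} states $\forall \adv \centerdot \exists \simulator \centerdot \envtraces{\simulator \parallel \config} = \envtraces{\adv \parallel \compile{\config}}$. This matches the instantiated RHP obligation verbatim after renaming: the universally quantified adversary $\adv$ plays the role of the target context $\targetctx$, and the existentially quantified simulator $\simulator$ is exactly the witnessing source context $\srcctx$. Because a simulator is itself an adversary with the same interface, it is a legitimate element of the source context set $\set{\adv}$, so the witness demanded by RHP is precisely the one produced by simulation, and the argument is complete.

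The bulk of the work is therefore bookkeeping rather than new mathematics; the one place to be careful is aligning the directions of the two relations. I expect the main (if modest) obstacle to be confirming this orientation: that in $\compile{\config} \simulates \config$ the compiled, target configuration $\compile{\config}$ is the one the quantified adversary runs against, while the source configuration $\config$ is the one the simulator runs against. This is exactly what makes ``for all target contexts there exists a source context'' in RHP line up with ``for all adversaries there exists a simulator'' in \cref{def:simulation}. Once this correspondence is pinned down, the behavioral equality required by RHP is literally the trace equality guaranteed by simulation, so no further reasoning about traces is needed.
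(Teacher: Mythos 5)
Your proposal is correct and matches the paper's proof, which is literally the one-line observation that the theorem "follows from \cref{def:simulation,def:rhp}" --- i.e., exactly the unfolding and renaming you describe. You also correctly pin down the only subtle point, the orientation of $\simulates$ versus $\simulatedBy$, so that the quantified adversary attacks the compiled configuration and the simulator serves as the source context.
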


\begin{corollary}[Partitioning Satisfies RHP]
  \label{thm:partitioning-satisfies-rhp}
  The function
  $\lambda \process. \using{\corrupt{\partition{\process}}}{\realstepsto{}}$
  satisfies RHP over source and target context set $\set{\adv}$ and
  source program set
  \[
   \{ \using{\source{\process}}{\idealstepsto{}} \mid
   \ssecure[\emptylist]{\process}, \ssynched[\sctx]{\process} \}
  \]
\end{corollary}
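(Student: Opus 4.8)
The plan is to reduce the corollary to the end-to-end simulation already
established for the pipeline and then appeal to \cref{thm:sim-implies-rhp},
which mechanically turns a family of simulations into robust hyperproperty
preservation. Concretely, for each configuration in the stated source program
set I must exhibit the hypothesis of \cref{thm:sim-implies-rhp}---that the
compiled configuration \emph{simulates} the source configuration---and I obtain
this by transitively chaining the two principal correctness theorems of
compilation.

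First I would fix an arbitrary choreography \( \process \) with
\( \ssecure[\emptylist]{\process} \) and \( \ssynched[\sctx]{\process} \) for
some \( \sctx \); these are exactly the side conditions attached to the source
program set, so every member of that set is
\( \using{\source{\process}}{\idealstepsto{}} \) for such a \( \process \).
\cref{thm:correctness-of-partitioning} then yields
\[
  \using{\source{\process}}{\idealstepsto{}}
  \simulatedBy
  \using{\corrupt{\process}}{\realstepsto!{}},
\]
and \cref{thm:correctness-of-projection}, whose sole hypothesis
\( \ssecure[\emptylist]{\process} \) is likewise in force, yields
\[
  \using{\corrupt{\process}}{\realstepsto!{}}
  \simulatedBy
  \using{\corrupt{\partition{\process}}}{\realstepsto{}}.
\]
Invoking the transitivity of simulation on which the whole proof architecture
relies (\cref{fig:compilation-overview}), these compose to
\[
  \using{\source{\process}}{\idealstepsto{}}
  \simulatedBy
  \using{\corrupt{\partition{\process}}}{\realstepsto{}},
\]
which, reading \( \simulates \) as the converse of \( \simulatedBy \), is
precisely \( \compile{\config} \simulates \config \) for
\( \config = \using{\source{\process}}{\idealstepsto{}} \) and the compiler
\( \compile{} = \lambda \process.\, \using{\corrupt{\partition{\process}}}{\realstepsto{}} \).

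Having discharged the simulation hypothesis for every
\( \config \) in the source program set, I would finish by applying
\cref{thm:sim-implies-rhp} verbatim, instantiating its configuration set with
that source program set and both its source and target context sets with
\( \set{\adv} \); since that theorem already performs the translation from the
trace-equality condition of \cref{def:simulation} to the quantifier structure of
\cref{def:rhp} (under \( \adv \progcompose \config = \adv \parallel \config \)
and \( \behavior(\cdot) = \envtraces{\cdot} \)), no further work remains. The one
point that demands care is bookkeeping rather than a real obstacle: the compiler
is indexed by the choreography \( \process \) rather than by the source program
\( \source{\process} \), and because \( \source{\cdot} \) is many-to-one a single
source program may be realized by several choreographies. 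I would resolve this by
reading each source program in the set together with a fixed witnessing
\( \process \), so that \( \compile{} \) is well defined as presented. The only
remaining thing to verify is that the intermediate semantics
\( \realstepsto!{} \) produced at the output of protocol synthesis coincides with
the one consumed at the input of endpoint projection, which
\cref{fig:proof-overview} confirms.
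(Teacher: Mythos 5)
Your proposal is correct and follows essentially the same route as the paper: the paper's one-line proof likewise discharges the simulation hypothesis of \cref{thm:sim-implies-rhp} via \cref{thm:correctness-of-partitioning} and concludes. If anything, you are slightly more careful than the printed proof, which cites only \cref{thm:correctness-of-partitioning} even though reaching the target configuration $\using{\corrupt{\partition{\process}}}{\realstepsto{}}$ also requires \cref{thm:correctness-of-projection} and transitivity of $\simulatedBy$, exactly as you spell out.
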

\begin{proof}
  \Cref{thm:sim-implies-rhp} follows from \cref{def:simulation,def:rhp}.
\Cref{thm:partitioning-satisfies-rhp} follows from
  \cref{thm:correctness-of-partitioning,thm:sim-implies-rhp}.
  
\end{proof}
 
\section{Related Work}
\label{sec:related-work}

\subsubsection*{Secure Program Partitioning and Compilers for Secure Computation}

Prior work on secure program partitioning~\cite{jorchestra,zznm02,zcmz03,viaduct-pldi21}
focuses largely on the engineering effort of compiling security-typed
source programs to distributed code with the aid of cryptography.
\Citet{scvm} does give an informal UC simulation proof of a compiler, but it
is limited to two party semi-honest MPC and oblivious RAM, and they do not
consider integrity.

There is also a large literature on compilers for secure computation
that target specific cryptographic mechanisms such as
MPC~\cite{fairplay, ScaleMamba, hycc, wysteria, silph},
ZKP~\cite{OzdemirBW22, pinocchio, geppetto},
and homomorphic encryption~\cite{EVA, CHET, porcupine, heco, coyote}.
Again, the focus of this literature is efficient implementations, not formal
guarantees.

A long line of work~\cite{fr-popl08, fgr09, AskarovHS08, Laud08, KustersTG12, KustersTBBKM15}
focuses on enforcing computational noninterference for information-flow typed
programs by using standard cryptographic mechanisms, such as encryption.
However, computational noninterference guarantees little
in the presence of downgrading.
In contrast, our compiler enjoys \emph{simulation-based security},
which guarantees preservation of \emph{all} hyperproperties, even
for programs using declassification and endorsement.

Our compilation model is closest to that of Viaduct~\cite{viaduct-pldi21},
which like this work also approximates security guarantees of cryptographic
mechanisms with information-flow labels.
However, this work differs from Viaduct, and from most of the literature
on secure program partitioning and compilers for secure computation, in that our
goal is to provide a \emph{model} of a secure end-to-end compilation process,
not to provide an implementation of an actual compiler.
Our model currently does not analyze a source language with functions, loops, or
mutable arrays, which Viaduct supports.
Additionally, our model does not capture some minor subtleties of Viaduct, such
as allowing secret-dependent conditionals whenever they may be eliminated via
multiplexing the two branches.
An interesting research direction would be to close the gap between our model
and existing compilers by providing a verified compiler implementation akin to
CompCert~\cite{leroy2009}.

\subsubsection*{Simulation-based Security}

Simulation-based cryptographic frameworks, such as
Universal Composability~\cite{UC},
Reactive Simulatability~\cite{RSIM}, and
Constructive Cryptography~\cite{constructivecrypto},
allow modular proofs of distributed cryptographic protocols,
and \citet{ilc} give a core language for formalizing UC protocols.
We abstract away concrete cryptography, so
we do not explicitly model some subtleties of these systems:
probability, computational complexity, and cryptographic
hardness assumptions. But
our approach should be compatible with these frameworks.

Prior verification efforts~\cite{EasyUC,cryptholcsf19,BarbosaBGKS21}
show simulation-based security for concrete
cryptographic mechanisms.
Our work is orthogonal: simulation-based
security for compiler correctness, rather than
proofs for individual protocols.

\subsubsection*{Secure Compilation}

Standard notions of compiler correctness
are derived from full abstraction and hyperproperty preservation~\cite{abateB0HPT19}.
\Citet{rhp-uc, rhp-uc-extended} argue that robust hyperproperty preservation
and Universal Composability are directly analogous.
We affirm this hypothesis by proving that our simulation-based security result
guarantees RHP.
To our knowledge, we are the first to make this connection formally.

\subsubsection*{Choreographies}

The use of choreographies is central to our compilation process
and to the proof of its correctness. The primary concern in the
extensive literature on choreographies~\cite{Montesi23, Montesi13, FilipeM20, FilipeMP22, HirschG22},
is proving deadlock freedom;
very little prior work considers security~\cite{Lluch-LafuenteN15}.
Our extension of choreographies with an information-flow type system,
modeling semi-honest and malicious corruption, is novel.
 
\section{Conclusion and Future Work}
\label{sec:conclusion}

This work presents a novel simulation-based security result for
a compiler from sequential source programs to distributed programs,
using idealizations of cryptographic mechanisms.
Our simulation result guarantees that the security properties
of source programs are preserved in the compiled protocols.

We believe this work opens up many opportunities for future research;
in particular, it gives a clear path toward building a fully verified
cryptographic compiler.
Aside from adding more language features (e.g., subroutines, loops, and mutable arrays)
to our source language to better match real-world cryptographic compilers,
the remaining verification effort largely consists of verifiably instantiating
all abstract components we assume.
For example, we model protocol selection via abstract judgments on
choreographies; in turn, outputs of a concrete protocol selection algorithm must satisfy these
judgments.
Additionally, we assume that parties communicate over secure channels, and use
ideal functionalities for performing cryptography; these
assumptions can be eliminated using cryptographic instantiations verified in the Simplified UC framework.

Our security result holds for a strong attacker that knows when
communication occurs between hosts and controls its scheduling,
but secret control flow is therefore precluded.
To improve expressive power, weaker attacker models should be explored.

We have focused on confidentiality and integrity properties; however,
some protocols also explicitly address availability~\cite{zm14-plas},
which would be another interesting direction for exploration.
  
\ifacknowledgments
  \section{Acknowledgments}
  This work was supported by the
National Science Foundation under
NSF grant 1704788. Any opinions, findings, conclusions or recommendations
are those of the authors and do not necessarily reflect the views of
any of these funders.
We thank the reviewers for their thoughtful suggestions.
We also thank
Silei Ren,
Ethan Cecchetti,
Drew Zagieboylo,
and Suraaj Kannawadi
for discussions and feedback.
 \fi

\iffinalformat
\relax
\else
\finalpage
\fi

\begingroup
\small
  \bibliography{override, bibtex/pm-master}

\begin{thebibliography}{85}
\providecommand{\natexlab}[1]{#1}
\providecommand{\url}[1]{#1}
\csname url@samestyle\endcsname
\providecommand{\newblock}{\relax}
\providecommand{\bibinfo}[2]{#2}
\providecommand{\BIBentrySTDinterwordspacing}{\spaceskip=0pt\relax}
\providecommand{\BIBentryALTinterwordstretchfactor}{4}
\providecommand{\BIBentryALTinterwordspacing}{\spaceskip=\fontdimen2\font plus
\BIBentryALTinterwordstretchfactor\fontdimen3\font minus
  \fontdimen4\font\relax}
\providecommand{\BIBforeignlanguage}[2]{{%
\expandafter\ifx\csname l@#1\endcsname\relax
\typeout{** WARNING: IEEEtranN.bst: No hyphenation pattern has been}%
\typeout{** loaded for the language `#1'. Using the pattern for}%
\typeout{** the default language instead.}%
\else
\language=\csname l@#1\endcsname
\fi
#2}}
\providecommand{\BIBdecl}{\relax}
\BIBdecl

\bibitem[Lamport(1998)]{paxos}
\BIBentryALTinterwordspacing
L.~Lamport, ``The part-time parliament,'' \emph{ACM Trans.\@ on Computer
  Systems}, vol.~16, no.~2, pp. 133--169, May 1998. Available:
  \url{http://doi.acm.org/10.1145/279227.279229}
\BIBentrySTDinterwordspacing

\bibitem[Castro and Liskov(2002)]{pbft}
M.~Castro and B.~Liskov, ``Practical {B}yzantine fault tolerance and proactive
  recovery,'' \emph{ACM Trans.\@ on Computer Systems}, vol.~20, p. 2002, 2002.

\bibitem[McKeen et~al.(2013)McKeen, Alexandrovich, Berenzon, Rozas, Shafi,
  Shanbhogue, and Savagaonkar]{intelsgx1}
F.~McKeen, I.~Alexandrovich, A.~Berenzon, C.~Rozas, H.~Shafi, V.~Shanbhogue,
  and U.~Savagaonkar, ``Innovative instructions and software model for isolated
  execution,'' in \emph{Workshop on Hardware and Architectural Support for
  Security and Privacy}, R.~B. Lee and W.~Shi, Eds.\hskip 1em plus 0.5em minus
  0.4em\relax {ACM}, 2013, p.~10.

\bibitem[Gollamudi et~al.(2019)Gollamudi, Chong, and Arden]{dflate}
\BIBentryALTinterwordspacing
A.~Gollamudi, S.~Chong, and O.~Arden, ``Information flow control for
  distributed trusted execution environments,'' in \emph{32\textsuperscript{nd}
  {IEEE} Computer Security Foundations Symp. (CSF)}.\hskip 1em plus 0.5em minus
  0.4em\relax {IEEE}, 2019, pp. 304--318. Available:
  \url{https://doi.org/10.1109/CSF.2019.00028}
\BIBentrySTDinterwordspacing

\bibitem[Costan et~al.(2016)Costan, Lebedev, and Devadas]{sanctum}
V.~Costan, I.~Lebedev, and S.~Devadas, ``{Sanctum}: Minimal hardware extensions
  for strong software isolation,'' in \emph{USENIX Security Symposium}, 2016.

\bibitem[Egele et~al.(2013)Egele, Brumley, Fratantonio, and
  Kruegel]{Egele:2013}
\BIBentryALTinterwordspacing
M.~Egele, D.~Brumley, Y.~Fratantonio, and C.~Kruegel, ``An empirical study of
  cryptographic misuse in {A}ndroid applications,'' in \emph{ACM Conf.\@~on
  Computer and Communications Security (CCS)}, 2013, pp. 73--84. Available:
  \url{http://doi.acm.org/10.1145/2508859.2516693}
\BIBentrySTDinterwordspacing

\bibitem[Georgiev et~al.(2012)Georgiev, Iyengar, Jana, Anubhai, Boneh, and
  Shmatikov]{georgievIJABS12}
\BIBentryALTinterwordspacing
M.~Georgiev, S.~Iyengar, S.~Jana, R.~Anubhai, D.~Boneh, and V.~Shmatikov, ``The
  most dangerous code in the world: validating {SSL} certificates in
  non-browser software,'' in \emph{19\textsuperscript{th} ACM Conf.\@~on
  Computer and Communications Security (CCS)}, T.~Yu, G.~Danezis, and V.~D.
  Gligor, Eds.\hskip 1em plus 0.5em minus 0.4em\relax {ACM}, 2012, pp. 38--49.
  Available: \url{https://doi.org/10.1145/2382196.2382204}
\BIBentrySTDinterwordspacing

\bibitem[Decker and Wattenhofer(2014)]{decker2014bitcoin}
C.~Decker and R.~Wattenhofer, ``Bitcoin transaction malleability and {MtGox},''
  in \emph{19\textsuperscript{th} European Symposium on Research in Computer
  Security}.\hskip 1em plus 0.5em minus 0.4em\relax Springer, 2014, pp.
  313--326.

\bibitem[Malkhi et~al.(2004)Malkhi, Nisan, Pinkas, and Sella]{fairplay}
\BIBentryALTinterwordspacing
D.~Malkhi, N.~Nisan, B.~Pinkas, and Y.~Sella, ``Fairplay---a secure two-party
  computation system,'' in \emph{13\textsuperscript{th} Usenix Security
  Symposium}, Aug. 2004, pp. 287--302. Available:
  \url{http://www.usenix.org/publications/library/proceedings/sec04/tech/malkhi.html}
\BIBentrySTDinterwordspacing

\bibitem[Liu et~al.(2015)Liu, Wang, Nayak, Huang, and Shi]{oblivm}
\BIBentryALTinterwordspacing
C.~Liu, X.~S. Wang, K.~Nayak, Y.~Huang, and E.~Shi, ``{ObliVM}: {A} programming
  framework for secure computation,'' in \emph{25\textsuperscript{th} {ACM}
  Symp.~on Operating System Principles (SOSP)}.\hskip 1em plus 0.5em minus
  0.4em\relax {IEEE}, 2015, pp. 359--376. Available:
  \url{https://doi.org/10.1109/SP.2015.29}
\BIBentrySTDinterwordspacing

\bibitem[Aly et~al.(2019)Aly, Cozzo, Keller, Orsini, Rotaru, Scholl, Smart, and
  Wood]{ScaleMamba}
\BIBentryALTinterwordspacing
A.~Aly, D.~Cozzo, M.~Keller, E.~Orsini, D.~Rotaru, P.~Scholl, N.~P. Smart, and
  T.~Wood, \emph{{SCALE–MAMBA} {v}1.6 : Documentation}, 2019. Available:
  \url{https://homes.esat.kuleuven.be/~nsmart/SCALE}
\BIBentrySTDinterwordspacing

\bibitem[Rastogi et~al.(2014)Rastogi, Hammer, and Hicks]{wysteria}
A.~Rastogi, M.~A. Hammer, and M.~Hicks, ``Wysteria: {A} programming language
  for generic, mixed-mode multiparty computations,'' in \emph{IEEE Symp.~on
  Security and Privacy}, May 2014, pp. 655--670.

\bibitem[Parno et~al.(2013)Parno, Howell, Gentry, and Raykova]{pinocchio}
B.~Parno, J.~Howell, C.~Gentry, and M.~Raykova, ``Pinocchio: Nearly practical
  verifiable computation,'' in \emph{IEEE Symp.~on Security and Privacy}.\hskip
  1em plus 0.5em minus 0.4em\relax IEEE, 2013, pp. 238--252.

\bibitem[Costello et~al.(2015)Costello, Fournet, Howell, Kohlweiss, Kreuter,
  Naehrig, Parno, and Zahur]{geppetto}
C.~Costello, C.~Fournet, J.~Howell, M.~Kohlweiss, B.~Kreuter, M.~Naehrig,
  B.~Parno, and S.~Zahur, ``Geppetto: Versatile verifiable computation,'' in
  \emph{IEEE Symp.~on Security and Privacy}.\hskip 1em plus 0.5em minus
  0.4em\relax IEEE, 2015, pp. 253--270.

\bibitem[Wahby et~al.(2015)Wahby, Setty, Ren, Blumberg, and Walfish]{buffet}
\BIBentryALTinterwordspacing
R.~S. Wahby, S.~Setty, Z.~Ren, A.~J. Blumberg, and M.~Walfish, ``Efficient
  {RAM} and control flow in verifiable outsourced computation,'' in
  \emph{Network and Distributed System Security Symp.}\hskip 1em plus 0.5em
  minus 0.4em\relax The Internet Society, 2015. Available:
  \url{https://www.ndss-symposium.org/ndss2015/efficient-ram-and-control-flow-verifiable-outsourced-computation}
\BIBentrySTDinterwordspacing

\bibitem[Kosba et~al.(2018)Kosba, Papamanthou, and Shi]{xjsnark}
A.~Kosba, C.~Papamanthou, and E.~Shi, ``{xJsnark}: {A} framework for efficient
  verifiable computation,'' in \emph{IEEE Symp.~on Security and Privacy}.\hskip
  1em plus 0.5em minus 0.4em\relax IEEE, 2018, pp. 944--961.

\bibitem[Dathathri et~al.(2019)Dathathri, Saarikivi, Chen, Laine, Lauter,
  Maleki, Musuvathi, and Mytkowicz]{CHET}
R.~Dathathri, O.~Saarikivi, H.~Chen, K.~Laine, K.~E. Lauter, S.~Maleki,
  M.~Musuvathi, and T.~Mytkowicz, ``{CHET:} an optimizing compiler for
  fully-homomorphic neural-network inferencing,'' in
  \emph{40\textsuperscript{th} {ACM SIGPLAN} Conf.~on Programming Language
  Design and Implementation (PLDI)}, K.~S. McKinley and K.~Fisher, Eds.\hskip
  1em plus 0.5em minus 0.4em\relax {ACM}, 2019, pp. 142--156.

\bibitem[Dathathri et~al.(2020)Dathathri, Kostova, Saarikivi, Dai, Laine, and
  Musuvathi]{EVA}
R.~Dathathri, B.~Kostova, O.~Saarikivi, W.~Dai, K.~Laine, and M.~Musuvathi,
  ``{EVA:} an encrypted vector arithmetic language and compiler for efficient
  homomorphic computation,'' in \emph{41\textsuperscript{st} {ACM SIGPLAN}
  Conf.~on Programming Language Design and Implementation (PLDI)}, A.~F.
  Donaldson and E.~Torlak, Eds.\hskip 1em plus 0.5em minus 0.4em\relax {ACM},
  2020, pp. 546--561.

\bibitem[Cowan et~al.(2021)Cowan, Dangwal, Alaghi, Trippel, Lee, and
  Reagen]{porcupine}
M.~Cowan, D.~Dangwal, A.~Alaghi, C.~Trippel, V.~T. Lee, and B.~Reagen,
  ``Porcupine: A synthesizing compiler for vectorized homomorphic encryption,''
  in \emph{42\textsuperscript{nd} {ACM SIGPLAN} Conf.~on Programming Language
  Design and Implementation (PLDI)}, 2021, pp. 375--389.

\bibitem[Zdancewic et~al.(2002)Zdancewic, Zheng, Nystrom, and Myers]{zznm02}
\BIBentryALTinterwordspacing
S.~Zdancewic, L.~Zheng, N.~Nystrom, and A.~C. Myers, ``Secure program
  partitioning,'' \emph{ACM Trans.\@ on Computer Systems}, vol.~20, no.~3, pp.
  283--328, Aug. 2002. Available:
  \url{http://www.cs.cornell.edu/andru/papers/sosp01/spp-tr.pdf}
\BIBentrySTDinterwordspacing

\bibitem[Zheng et~al.(2003)Zheng, Chong, Myers, and Zdancewic]{zcmz03}
\BIBentryALTinterwordspacing
L.~Zheng, S.~Chong, A.~C. Myers, and S.~Zdancewic, ``Using replication and
  partitioning to build secure distributed systems,'' in \emph{IEEE Symp.~on
  Security and Privacy}, May 2003, pp. 236--250. Available:
  \url{http://www.cs.cornell.edu/andru/papers/sp03.pdf}
\BIBentrySTDinterwordspacing

\bibitem[Chong et~al.(2007)Chong, Liu, Myers, Qi, Vikram, Zheng, and
  Zheng]{swift07}
\BIBentryALTinterwordspacing
S.~Chong, J.~Liu, A.~C. Myers, X.~Qi, K.~Vikram, L.~Zheng, and X.~Zheng,
  ``Secure web applications via automatic partitioning,'' in
  \emph{21\textsuperscript{st} {ACM} Symp.~on Operating System Principles
  (SOSP)}, Oct. 2007, pp. 31--44. Available:
  \url{http://www.cs.cornell.edu/andru/papers/swift-sosp07.pdf}
\BIBentrySTDinterwordspacing

\bibitem[Fournet and Rezk(2008)]{fr-popl08}
\BIBentryALTinterwordspacing
C.~Fournet and T.~Rezk, ``Cryptographically sound implementations for typed
  information-flow security,'' in \emph{35\textsuperscript{th} {ACM} Symp. on
  Principles of Programming Languages (POPL)}, Jan. 2008, pp. 323--335.
  Available: \url{https://doi.org/10.1145/1328438.1328478}
\BIBentrySTDinterwordspacing

\bibitem[Fournet et~al.(2009)Fournet, le~Guernic, and Rezk]{fgr09}
\BIBentryALTinterwordspacing
C.~Fournet, G.~le~Guernic, and T.~Rezk, ``A security-preserving compiler for
  distributed programs: From information-flow policies to cryptographic
  mechanisms,'' in \emph{16\textsuperscript{th} ACM Conf. on Computer and
  Communications Security (CCS)}, Nov. 2009, pp. 432--441. Available:
  \url{https://doi.org/10.1145/1653662.1653715}
\BIBentrySTDinterwordspacing

\bibitem[Acay et~al.(2021)Acay, Recto, Gancher, Myers, and Shi]{viaduct-pldi21}
C.~Acay, R.~Recto, J.~Gancher, A.~Myers, and E.~Shi, ``Viaduct: An extensible,
  optimizing compiler for secure distributed programs,'' in
  \emph{42\textsuperscript{nd} {ACM SIGPLAN} Conf.~on Programming Language
  Design and Implementation (PLDI)}.\hskip 1em plus 0.5em minus 0.4em\relax
  {ACM}, Jun. 2021, pp. 740--755.

\bibitem[Canetti(2001)]{UC}
R.~Canetti, ``Universally composable security: {A} new paradigm for
  cryptographic protocols,'' in \emph{42\textsuperscript{nd} Symposium on
  Foundations of Computer Science {(FOCS)}}.\hskip 1em plus 0.5em minus
  0.4em\relax {IEEE} Computer Society, 2001, pp. 136--145.

\bibitem[Sabelfeld and Myers(2003)]{sm-jsac}
\BIBentryALTinterwordspacing
A.~Sabelfeld and A.~C. Myers, ``Language-based information-flow security,''
  \emph{IEEE Journal on Selected Areas in Communications}, vol.~21, no.~1, pp.
  5--19, Jan. 2003. Available:
  \url{http://www.cs.cornell.edu/andru/papers/jsac/sm-jsac03.pdf}
\BIBentrySTDinterwordspacing

\bibitem[Montesi(2023)]{Montesi23}
F.~Montesi, \emph{Introduction to Choreographies}.\hskip 1em plus 0.5em minus
  0.4em\relax Cambridge: Cambridge University Press, 2023.

\bibitem[Bakst et~al.(2017)Bakst, von Gleissenthall, K\i{}c\i{}, and
  Jhala]{canonical-sequentialization}
A.~Bakst, K.~von Gleissenthall, R.~G. K\i{}c\i{}, and R.~Jhala, ``Verifying
  distributed programs via canonical sequentialization,'' \emph{Proc. {ACM}
  Program. Lang.}, vol.~1, no. {OOPSLA}, Oct. 2017.

\bibitem[Canetti et~al.(2014)Canetti, Cohen, and Lindell]{SUC}
\BIBentryALTinterwordspacing
R.~Canetti, A.~Cohen, and Y.~Lindell, ``A simpler variant of universally
  composable security for standard multiparty computation,'' Cryptology ePrint
  Archive, Report 2014/553, 2014. Available:
  \url{https://eprint.iacr.org/2014/553}
\BIBentrySTDinterwordspacing

\bibitem[Abate et~al.(2019)Abate, Blanco, Garg, Hritcu, Patrignani, and
  Thibault]{abateB0HPT19}
\BIBentryALTinterwordspacing
C.~Abate, R.~Blanco, D.~Garg, C.~Hritcu, M.~Patrignani, and J.~Thibault,
  ``Journey beyond full abstraction: Exploring robust property preservation for
  secure compilation,'' in \emph{32\textsuperscript{nd} {IEEE} Computer
  Security Foundations Symp. (CSF)}.\hskip 1em plus 0.5em minus 0.4em\relax
  {IEEE} Computer Society, 2019, pp. 256--271. Available:
  \url{https://doi.org/10.1109/CSF.2019.00025}
\BIBentrySTDinterwordspacing

\bibitem[Acay et~al.(2024)Acay, Gancher, Recto, and Myers]{viaduct-formal}
C.~Acay, J.~Gancher, R.~Recto, and A.~C. Myers, ``Secure synthesis of
  distributed cryptographic applications,'' in \emph{{IEEE} Computer Security
  Foundations Symp. (CSF)}, Jul. 2024.

\bibitem[Yao(1982)]{yao82}
\BIBentryALTinterwordspacing
A.~C. Yao, ``Protocols for secure computations,'' in
  \emph{23\textsuperscript{rd} annual IEEE Symposium on Foundations of Computer
  Science}, 1982, pp. 160--164. Available:
  \url{https://doi.org/10.1109/SFCS.1982.38}
\BIBentrySTDinterwordspacing

\bibitem[Myers et~al.(2004)Myers, Sabelfeld, and Zdancewic]{msz04}
\BIBentryALTinterwordspacing
A.~C. Myers, A.~Sabelfeld, and S.~Zdancewic, ``Enforcing robust
  declassification,'' in \emph{17\textsuperscript{th} {IEEE} Computer Security
  Foundations Workshop (CSFW)}, Jun. 2004, pp. 172--186. Available:
  \url{http://www.cs.cornell.edu/andru/papers/csfw04.pdf}
\BIBentrySTDinterwordspacing

\bibitem[Cecchetti et~al.(2017)Cecchetti, Myers, and Arden]{nmifc}
\BIBentryALTinterwordspacing
E.~Cecchetti, A.~C. Myers, and O.~Arden, ``Nonmalleable information flow
  control,'' in \emph{24\textsuperscript{th} ACM Conf.\@~on Computer and
  Communications Security (CCS)}.\hskip 1em plus 0.5em minus 0.4em\relax {ACM},
  Oct. 2017, pp. 1875--1891. Available:
  \url{http://www.cs.cornell.edu/andru/papers/nmifc}
\BIBentrySTDinterwordspacing

\bibitem[Montesi(2013)]{Montesi13}
\BIBentryALTinterwordspacing
F.~Montesi, ``Choreographic programming,'' Ph.D. dissertation, IT University of
  Copenhagen, 2013. Available:
  \url{https://www.fabriziomontesi.com/files/choreographic-programming.pdf}
\BIBentrySTDinterwordspacing

\bibitem[Cruz{-}Filipe and Montesi(2020)]{FilipeM20}
L.~Cruz{-}Filipe and F.~Montesi, ``A core model for choreographic
  programming,'' \emph{Theoretical Computer Science}, vol. 802, pp. 38--66,
  2020.

\bibitem[Clarkson and Schneider(2008)]{cs08}
M.~R. Clarkson and F.~B. Schneider, ``Hyperproperties,'' in
  \emph{21\textsuperscript{st} {IEEE} Computer Security Foundations Symp.
  (CSF)}, Jun. 2008, pp. 51--65.

\bibitem[Patrignani et~al.(2019)Patrignani, Wahby, and K{\"{u}}nneman]{rhp-uc}
M.~Patrignani, R.~S. Wahby, and R.~K{\"{u}}nneman, ``Universal composability is
  secure compilation,'' arXiv ePrint 1910.08634, 2019.

\bibitem[Patrignani et~al.(2022)Patrignani, Künnemann, and
  Wahby]{rhp-uc-extended}
M.~Patrignani, R.~Künnemann, and R.~S. Wahby, ``Universal composability is
  robust compilation,'' arXiv ePrint 1910.08634, 2022.

\bibitem[Lindell(2017)]{Lindell17}
Y.~Lindell, ``How to simulate it --- {A} tutorial on the simulation proof
  technique,'' in \emph{Tutorials on the Foundations of Cryptography}.\hskip
  1em plus 0.5em minus 0.4em\relax Springer International Publishing, 2017, pp.
  277--346.

\bibitem[Dierks and Rescorla(2006)]{TLS}
T.~Dierks and E.~Rescorla, ``The transport layer security ({TLS}) protocol
  version 1.1,'' Internet RFC-4346, Apr. 2006.

\bibitem[Backes et~al.(2007)Backes, Pfitzmann, and Waidner]{RSIM}
M.~Backes, B.~Pfitzmann, and M.~Waidner, ``The reactive simulatability {(RSIM)}
  framework for asynchronous systems,'' \emph{Information and Computation},
  vol. 205, no.~12, pp. 1685--1720, 2007.

\bibitem[Cruz{-}Filipe et~al.(2022)Cruz{-}Filipe, Montesi, and
  Peressotti]{FilipeMP22}
L.~Cruz{-}Filipe, F.~Montesi, and M.~Peressotti, ``A formal theory of
  choreographic programming,'' \emph{CoRR}, vol. abs/2209.01886, 2022.

\bibitem[Hirsch and Garg(2022)]{HirschG22}
A.~K. Hirsch and D.~Garg, ``Pirouette: Higher-order typed functional
  choreographies,'' \emph{Proc. {ACM} Program. Lang.}, vol.~6, no. {POPL}, pp.
  1--27, Jan. 2022.

\bibitem[Lynch and Tuttle(1987)]{LynchT87}
N.~A. Lynch and M.~R. Tuttle, ``Hierarchical correctness proofs for distributed
  algorithms,'' in \emph{6\textsuperscript{th} {ACM} Symp.~on Principles of
  Distributed Computing}, F.~B. Schneider, Ed.\hskip 1em plus 0.5em minus
  0.4em\relax {ACM}, 1987, pp. 137--151.

\bibitem[Rafnsson and Sabelfeld(2014)]{rafnssonS14}
\BIBentryALTinterwordspacing
W.~Rafnsson and A.~Sabelfeld, ``Compositional information-flow security for
  interactive systems,'' in \emph{27\textsuperscript{th} {IEEE} Computer
  Security Foundations Symp. (CSF)}.\hskip 1em plus 0.5em minus 0.4em\relax
  {IEEE} Computer Society, 2014, pp. 277--292. Available:
  \url{https://doi.org/10.1109/CSF.2014.27}
\BIBentrySTDinterwordspacing

\bibitem[Myers and Liskov(2000)]{ml-tosem}
\BIBentryALTinterwordspacing
A.~C. Myers and B.~Liskov, ``Protecting privacy using the decentralized label
  model,'' \emph{ACM Transactions on Software Engineering and Methodology},
  vol.~9, no.~4, pp. 410--442, Oct. 2000. Available:
  \url{http://www.cs.cornell.edu/andru/papers/iflow-tosem.pdf}
\BIBentrySTDinterwordspacing

\bibitem[Stefan et~al.(2011)Stefan, Russo, Mazi{\`{e}}res, and
  Mitchell]{dclabels}
D.~Stefan, A.~Russo, D.~Mazi{\`{e}}res, and J.~C. Mitchell, ``Disjunction
  category labels,'' in \emph{Proceedings of the 16th Nordic conference on
  Information Security Technology for Applications}, ser. Lecture Notes in
  Computer Science, P.~Laud, Ed., vol. 7161.\hskip 1em plus 0.5em minus
  0.4em\relax Springer, 2011, pp. 223--239.

\bibitem[Arden et~al.(2015)Arden, Liu, and Myers]{flam}
\BIBentryALTinterwordspacing
O.~Arden, J.~Liu, and A.~C. Myers, ``Flow-limited authorization,'' in
  \emph{28\textsuperscript{th} {IEEE} Computer Security Foundations Symp.
  (CSF)}, Jul. 2015, pp. 569--583. Available:
  \url{http://www.cs.cornell.edu/andru/papers/flam}
\BIBentrySTDinterwordspacing

\bibitem[Zagieboylo et~al.(2019)Zagieboylo, Suh, and Myers]{zsm19}
\BIBentryALTinterwordspacing
D.~Zagieboylo, G.~E. Suh, and A.~C. Myers, ``Using information flow to design
  an {ISA} that controls timing channels,'' in \emph{32\textsuperscript{nd}
  {IEEE} Computer Security Foundations Symp. (CSF)}, Jun. 2019. Available:
  \url{https://www.cs.cornell.edu/andru/papers/hyperisa}
\BIBentrySTDinterwordspacing

\bibitem[O'Neill et~al.(2006)O'Neill, Clarkson, and Chong]{oneillCC06}
\BIBentryALTinterwordspacing
K.~R. O'Neill, M.~R. Clarkson, and S.~Chong, ``Information-flow security for
  interactive programs,'' in \emph{19\textsuperscript{th} {IEEE} Computer
  Security Foundations Workshop (CSFW)}.\hskip 1em plus 0.5em minus 0.4em\relax
  {IEEE} Computer Society, 2006, pp. 190--201. Available:
  \url{https://doi.org/10.1109/CSFW.2006.16}
\BIBentrySTDinterwordspacing

\bibitem[Clark and Hunt(2008)]{clarkH08}
\BIBentryALTinterwordspacing
D.~Clark and S.~Hunt, ``Non-interference for deterministic interactive
  programs,'' in \emph{5\textsuperscript{th} Workshop on Formal Aspects in
  Security and Trust ({FAST})}, ser. Lecture Notes in Computer Science, vol.
  5491.\hskip 1em plus 0.5em minus 0.4em\relax Springer, 2008, pp. 50--66.
  Available: \url{https://doi.org/10.1007/978-3-642-01465-9\_4}
\BIBentrySTDinterwordspacing

\bibitem[Cruz{-}Filipe and Montesi(2017)]{async-choreo}
L.~Cruz{-}Filipe and F.~Montesi, ``On asynchrony and choreographies,'' in
  \emph{{ICE@DisCoTec}}, ser. {EPTCS}, M.~Bartoletti, L.~Bocchi, L.~Henrio, and
  S.~Knight, Eds., vol. 261, 2017, pp. 76--90.

\bibitem[Ishaq et~al.(2019)Ishaq, Milanova, and Zikas]{efficient-mpc}
M.~Ishaq, A.~Milanova, and V.~Zikas, ``Efficient {MPC} via program analysis:
  {A} framework for efficient optimal mixing,'' in \emph{26\textsuperscript{th}
  ACM Conf.\@~on Computer and Communications Security (CCS)}.\hskip 1em plus
  0.5em minus 0.4em\relax {ACM}, 2019, pp. 1539--1556.

\bibitem[Nain and Vardi(2007)]{NainV07}
S.~Nain and M.~Y. Vardi, ``Branching vs. linear time: Semantical perspective,''
  in \emph{{ATVA}}, ser. Lecture Notes in Computer Science, K.~S. Namjoshi,
  T.~Yoneda, T.~Higashino, and Y.~Okamura, Eds., vol. 4762.\hskip 1em plus
  0.5em minus 0.4em\relax Springer, 2007, pp. 19--34.

\bibitem[Hennessy and Milner(1985)]{HennessyM85}
M.~Hennessy and R.~Milner, ``Algebraic laws for nondeterminism and
  concurrency,'' \emph{J. {ACM}}, vol.~32, no.~1, pp. 137--161, 1985.

\bibitem[Newman(1942)]{Newman42}
M.~H.~A. Newman, ``On theories with a combinatorial definition of
  ``equivalence'','' \emph{Annals of Mathematics}, vol.~43, no.~2, pp.
  223--243, 1942.

\bibitem[Church and Rosser(1936)]{CRosser36}
A.~Church and J.~B. Rosser, ``Some properties of conversion,''
  \emph{Transactions of the American Mathematical Society}, vol.~39, no.~3, pp.
  472--482, 1936.

\bibitem[Baader and Nipkow(1998)]{BNipkow98}
F.~Baader and T.~Nipkow, \emph{Term rewriting and all that}.\hskip 1em plus
  0.5em minus 0.4em\relax Cambridge University Press, 1998.

\bibitem[Pohjola et~al.(2022)Pohjola, G{\'{o}}mez{-}Londo{\~{n}}o, Shaker, and
  Norrish]{PohjolaGSN22}
J.~{\AA}. Pohjola, A.~G{\'{o}}mez{-}Londo{\~{n}}o, J.~Shaker, and M.~Norrish,
  ``Kalas: {A} verified, end-to-end compiler for a choreographic language,'' in
  \emph{{ITP}}, ser. LIPIcs, J.~Andronick and L.~de~Moura, Eds., vol.
  237.\hskip 1em plus 0.5em minus 0.4em\relax Schloss Dagstuhl -
  Leibniz-Zentrum f{\"{u}}r Informatik, 2022, pp. 27:1--27:18.

\bibitem[Ben{-}Or et~al.(1988)Ben{-}Or, Goldwasser, and Wigderson]{bgw88}
\BIBentryALTinterwordspacing
M.~Ben{-}Or, S.~Goldwasser, and A.~Wigderson, ``Completeness theorems for
  non-cryptographic fault-tolerant distributed computation,'' in
  \emph{20\textsuperscript{th} Annual {ACM} Symposium on Theory of Computing},
  J.~Simon, Ed., 1988, pp. 1--10. Available:
  \url{https://doi.org/10.1145/62212.62213}
\BIBentrySTDinterwordspacing

\bibitem[Canetti et~al.(2019)Canetti, Stoughton, and Varia]{EasyUC}
R.~Canetti, A.~Stoughton, and M.~Varia, ``{EasyUC}: Using {EasyCrypt} to
  mechanize proofs of universally composable security,'' in
  \emph{32\textsuperscript{nd} {IEEE} Computer Security Foundations Symp.
  (CSF)}.\hskip 1em plus 0.5em minus 0.4em\relax {IEEE}, 2019, pp. 167--183.

\bibitem[Barbosa et~al.(2021)Barbosa, Barthe, Gr{\'{e}}goire, Koutsos, and
  Strub]{BarbosaBGKS21}
\BIBentryALTinterwordspacing
M.~Barbosa, G.~Barthe, B.~Gr{\'{e}}goire, A.~Koutsos, and P.~Strub,
  ``Mechanized proofs of adversarial complexity and application to universal
  composability,'' in \emph{{CCS} '21: 2021 {ACM} {SIGSAC} Conference on
  Computer and Communications Security, Virtual Event, Republic of Korea,
  November 15 - 19, 2021}, Y.~Kim, J.~Kim, G.~Vigna, and E.~Shi, Eds.\hskip 1em
  plus 0.5em minus 0.4em\relax {ACM}, 2021, pp. 2541--2563. Available:
  \url{https://doi.org/10.1145/3460120.3484548}
\BIBentrySTDinterwordspacing

\bibitem[Gancher et~al.(2023)Gancher, Sojakova, Fan, Shi, and
  Morrisett]{gancher2023core}
J.~Gancher, K.~Sojakova, X.~Fan, E.~Shi, and G.~Morrisett, ``A core calculus
  for equational proofs of cryptographic protocols,'' \emph{Proc. {ACM}
  Program. Lang.}, vol.~7, no. {POPL}, pp. 866--892, 2023.

\bibitem[Hastings et~al.(2019)Hastings, Hemenway, Noble, and
  Zdancewic]{mpc-sok}
\BIBentryALTinterwordspacing
M.~Hastings, B.~Hemenway, D.~Noble, and S.~Zdancewic, ``{SoK:} general purpose
  compilers for secure multi-party computation,'' in \emph{IEEE Symp.~on
  Security and Privacy}, 2019, pp. 1220--1237. Available:
  \url{https://doi.org/10.1109/SP.2019.00028}
\BIBentrySTDinterwordspacing

\bibitem[Ishai et~al.(2007)Ishai, Kushilevitz, Ostrovsky, and
  Sahai]{mpcinthehead}
\BIBentryALTinterwordspacing
Y.~Ishai, E.~Kushilevitz, R.~Ostrovsky, and A.~Sahai, ``Zero-knowledge from
  secure multiparty computation,'' in \emph{Proceedings of the Thirty-ninth
  Annual ACM Symposium on Theory of Computing}, ser. STOC '07.\hskip 1em plus
  0.5em minus 0.4em\relax New York, NY, USA: ACM, 2007, pp. 21--30. Available:
  \url{http://doi.acm.org/10.1145/1250790.1250794}
\BIBentrySTDinterwordspacing

\bibitem[Zdancewic and Myers(2003)]{zm03}
\BIBentryALTinterwordspacing
S.~Zdancewic and A.~C. Myers, ``Observational determinism for concurrent
  program security,'' in \emph{16\textsuperscript{th} {IEEE} Computer Security
  Foundations Workshop (CSFW)}, Jun. 2003, pp. 29--43. Available:
  \url{http://www.cs.cornell.edu/andru/papers/csfw03.pdf}
\BIBentrySTDinterwordspacing

\bibitem[Tilevich and Smaragdakis(2009)]{jorchestra}
\BIBentryALTinterwordspacing
E.~Tilevich and Y.~Smaragdakis, ``{J-Orchestra}: Enhancing {Java} programs with
  distribution capabilities,'' \emph{ACM Trans. Softw. Eng. Methodol.},
  vol.~19, no.~1, pp. 1:1--1:40, Aug. 2009. Available:
  \url{http://doi.acm.org/10.1145/1555392.1555394}
\BIBentrySTDinterwordspacing

\bibitem[Liu et~al.(2014)Liu, Huang, Shi, Katz, and Hicks]{scvm}
C.~Liu, Y.~Huang, E.~Shi, J.~Katz, and M.~Hicks, ``Automating efficient
  {RAM}-model secure computation,'' in \emph{IEEE Symp.~on Security and
  Privacy}.\hskip 1em plus 0.5em minus 0.4em\relax IEEE, 2014, pp. 623--638.

\bibitem[B\"{u}scher et~al.(2018)B\"{u}scher, Demmler, Katzenbeisser, Kretzmer,
  and Schneider]{hycc}
N.~B\"{u}scher, D.~Demmler, S.~Katzenbeisser, D.~Kretzmer, and T.~Schneider,
  ``{HyCC}: {C}ompilation of hybrid protocols for practical secure
  computation,'' in \emph{25\textsuperscript{th} ACM Conf.\@~on Computer and
  Communications Security (CCS)}.\hskip 1em plus 0.5em minus 0.4em\relax New
  York, NY, USA: ACM, 2018, p. 847–861.

\bibitem[Chen et~al.(2023)Chen, Zhu, Ozdemir, Wahby, Brown, and Zheng]{silph}
E.~Chen, J.~Zhu, A.~Ozdemir, R.~S. Wahby, F.~Brown, and W.~Zheng, ``Silph: A
  framework for scalable and accurate generation of hybrid {MPC} protocols,''
  \emph{Cryptology ePrint Archive}, 2023.

\bibitem[Ozdemir et~al.(2022)Ozdemir, Brown, and Wahby]{OzdemirBW22}
A.~Ozdemir, F.~Brown, and R.~S. Wahby, ``{CirC}: Compiler infrastructure for
  proof systems, software verification, and more,'' in \emph{IEEE Symp.~on
  Security and Privacy}, 2022, pp. 2248--2266.

\bibitem[Viand et~al.(2022)Viand, Jattke, Haller, and Hithnawi]{heco}
\BIBentryALTinterwordspacing
A.~Viand, P.~Jattke, M.~Haller, and A.~Hithnawi, ``{HECO:} automatic code
  optimizations for efficient fully homomorphic encryption,'' \emph{CoRR}, vol.
  abs/2202.01649, 2022. Available: \url{https://arxiv.org/abs/2202.01649}
\BIBentrySTDinterwordspacing

\bibitem[Malik et~al.(2023)Malik, Sheth, and Kulkarni]{coyote}
R.~Malik, K.~Sheth, and M.~Kulkarni, ``Coyote: A compiler for vectorizing
  encrypted arithmetic circuits,'' in \emph{Proceedings of the 28th ACM
  International Conference on Architectural Support for Programming Languages
  and Operating Systems, Volume 3}, 2023, pp. 118--133.

\bibitem[Askarov et~al.(2008)Askarov, Hedin, and Sabelfeld]{AskarovHS08}
A.~Askarov, D.~Hedin, and A.~Sabelfeld, ``Cryptographically-masked flows,''
  \emph{Theor. Comput. Sci.}, vol. 402, no. 2-3, pp. 82--101, 2008.

\bibitem[Laud(2008)]{Laud08}
P.~Laud, ``On the computational soundness of cryptographically masked flows,''
  in \emph{35\textsuperscript{th} {ACM} Symp. on Principles of Programming
  Languages (POPL)}, G.~C. Necula and P.~Wadler, Eds.\hskip 1em plus 0.5em
  minus 0.4em\relax {ACM}, 2008, pp. 337--348.

\bibitem[K{\"{u}}sters et~al.(2012)K{\"{u}}sters, Truderung, and
  Graf]{KustersTG12}
R.~K{\"{u}}sters, T.~Truderung, and J.~Graf, ``A framework for the
  cryptographic verification of {J}ava-like programs,'' in
  \emph{25\textsuperscript{th} {IEEE} Computer Security Foundations Symp.
  (CSF)}, S.~Chong, Ed.\hskip 1em plus 0.5em minus 0.4em\relax {IEEE} Computer
  Society, 2012, pp. 198--212.

\bibitem[K{\"{u}}sters et~al.(2015)K{\"{u}}sters, Truderung, Beckert, Bruns,
  Kirsten, and Mohr]{KustersTBBKM15}
R.~K{\"{u}}sters, T.~Truderung, B.~Beckert, D.~Bruns, M.~Kirsten, and M.~Mohr,
  ``A hybrid approach for proving noninterference of {J}ava programs,'' in
  \emph{28\textsuperscript{th} {IEEE} Computer Security Foundations Symp.
  (CSF)}, C.~Fournet, M.~W. Hicks, and L.~Vigan{\`{o}}, Eds.\hskip 1em plus
  0.5em minus 0.4em\relax {IEEE} Computer Society, 2015, pp. 305--319.

\bibitem[Leroy(2009)]{leroy2009}
X.~Leroy, ``Formal verification of a realistic compiler,'' \emph{Commun. ACM},
  vol.~52, pp. 107--115, 2009.

\bibitem[Maurer(2011)]{constructivecrypto}
U.~Maurer, ``Constructive cryptography -- {A} new paradigm for security
  definitions and proofs,'' in \emph{Theory of Security and Applications}, ser.
  Lecture Notes in Computer Science, S.~M{\"{o}}dersheim and C.~Palamidessi,
  Eds., vol. 6993.\hskip 1em plus 0.5em minus 0.4em\relax Springer, 2011, pp.
  33--56.

\bibitem[Liao et~al.(2019)Liao, Hammer, and Miller]{ilc}
K.~Liao, M.~A. Hammer, and A.~Miller, ``{ILC:} a calculus for composable,
  computational cryptography,'' in \emph{40\textsuperscript{th} {ACM SIGPLAN}
  Conf.~on Programming Language Design and Implementation (PLDI)}, Jun. 2019,
  pp. 640--654.

\bibitem[Lochbihler et~al.(2019)Lochbihler, Sefidgar, Basin, and
  Maurer]{cryptholcsf19}
A.~Lochbihler, S.~R. Sefidgar, D.~Basin, and U.~Maurer, ``Formalizing
  constructive cryptography using {CryptHOL},'' in \emph{32\textsuperscript{nd}
  {IEEE} Computer Security Foundations Symp. (CSF)}.\hskip 1em plus 0.5em minus
  0.4em\relax {IEEE}, 2019, pp. 152--166.

\bibitem[Lluch{-}Lafuente et~al.(2015)Lluch{-}Lafuente, Nielson, and
  Nielson]{Lluch-LafuenteN15}
A.~Lluch{-}Lafuente, F.~Nielson, and H.~R. Nielson, ``Discretionary information
  flow control for interaction-oriented specifications,'' in \emph{Logic,
  Rewriting, and Concurrency}, ser. Lecture Notes in Computer Science,
  N.~Mart{\'{\i}}{-}Oliet, P.~C. {\"{O}}lveczky, and C.~L. Talcott, Eds., vol.
  9200.\hskip 1em plus 0.5em minus 0.4em\relax Springer, 2015, pp. 427--450.

\bibitem[Zheng and Myers(2014)]{zm14-plas}
\BIBentryALTinterwordspacing
L.~Zheng and A.~C. Myers, ``A language-based approach to secure quorum
  replication,'' in \emph{9\textsuperscript{th} ACM SIGPLAN Workshop on
  Programming Languages and Analysis for Security (PLAS)}, Aug. 2014.
  Available: \url{http://www.cs.cornell.edu/andru/papers/plas14}
\BIBentrySTDinterwordspacing

\end{thebibliography}
\endgroup

\iffinalformat
\finalpage
\else
\relax
\fi

\end{document}